%% file: main.tex
\documentclass{article}

\usepackage[preprint]{neurips_2025}
\setcitestyle{numbers,square,comma}
\usepackage[utf8]{inputenc} 
\usepackage[T1]{fontenc}    

\usepackage{url}            
\usepackage{booktabs}       
\usepackage{graphicx}
\usepackage{amsfonts}       
\usepackage{nicefrac}       
\usepackage{microtype}      
\usepackage{xcolor}         
\usepackage{amsmath,amssymb,amsthm,mathtools}
\usepackage[mono=false]{libertinus}
\usepackage{libertinust1math}
\usepackage{url}
\usepackage{setspace}
\usepackage{bm}
\usepackage{tocloft}
\usepackage{etoc}
\usepackage{csquotes}
\usepackage[backref=page]{hyperref}       
\usepackage{cleveref}
\usepackage[normalem]{ulem}
\usepackage{enumitem}
\usepackage{comment}

\usepackage{subcaption}
\usepackage[most]{tcolorbox}
\usepackage{array}
\setlist[enumerate]{itemsep=2pt, parsep=0pt, topsep=0pt}

\definecolor{royalblue}{RGB}{60,100,200}
\definecolor{royalred}{RGB}{160,40,40}

\hypersetup{
    colorlinks=true,
    linkcolor=royalred,   
    citecolor=royalblue,   
    urlcolor=royalblue
}

\crefname{appendix}{Appendix}{Appendices}
\crefformat{equation}{\textcolor{royalred}{#2Equation~#1#3}}
\crefformat{figure}{\textcolor{royalblue}{#2Figure~#1#3}}
\crefformat{table}{\textcolor{royalred}{#2Table~#1#3}}
\crefformat{section}{\textcolor{royalblue}{#2Section~#1#3}}
\crefformat{appendix}{\textcolor{royalblue}{#2Appendix~#1#3}}
\crefmultiformat{equation}{\textcolor{royalred}{(#2#1#3)}}
  {, \textcolor{royalred}{(#2#1#3)}}
  {, \textcolor{royalred}{(#2#1#3)}}
  {, \textcolor{royalred}{(#2#1#3)}}

\usepackage{tikz}

\newtheorem{theorem}{Theorem}

\numberwithin{equation}{section}

\newtcolorbox{summarybox}{
  enhanced,
  breakable,
  colback=white,
  colframe=black!75,
  boxrule=0.5pt,
  arc=2.5mm,
  left=4mm,
  right=4mm,
  top=2.5mm,
  bottom=2.5mm,
  width=0.93\linewidth
}

\makeatletter
\newcount\versionhours
\newcount\versionminutes
\newcount\versiontmp
\newcommand{\versiontimestamp}{%
  \versionhours=\time
  \divide\versionhours by 60
  \versionminutes=\time
  \versiontmp=\versionhours
  \multiply\versiontmp by 60
  \advance\versionminutes by -\versiontmp
  \number\year-\two@digits{\month}-\two@digits{\day}\space
  \two@digits{\the\versionhours}:\two@digits{\the\versionminutes}%
}
\makeatother

\title{A Brief History of Fr\'echet Distances: \\From Curves and Probability Laws to FID}

\author{%
\textit{Y u l i  \; W u}\\
  \texttt{mail@wuyuli.com} \\
}

\begin{document}
\maketitle
\begingroup
\renewcommand\thefootnote{}
\NoHyper
\footnotetext{Preprint.}
\endNoHyper
\endgroup
\setcounter{footnote}{0}

\begin{abstract}
This note provides a chronological account of Fréchet distances, starting with Maurice Fréchet's 1906 doctoral thesis on distances in abstract sets and tracing the Fréchet distance between polygonal curves and its algorithmic computation in the 1990s. 
It then continues with his 1957 paper on a coupling-based distance between probability laws with a brief glimpse of Wasserstein distance and optimal transport. 
We further attempt to draw connections between the distributional, coupling-based facet of Fréchet distances on probability laws and the geometric facet on curves. 
The note ends with a modern use case, the Fréchet Inception Distance (FID) in the era of deep generative model evaluation, interpretable as the Wasserstein-2 distance between multivariate Gaussians in a learned feature space. An appendix includes \TeX{}ified faithful English translations of Fréchet's 1906 thesis and 1957 paper, and Lévy's 1950 note for reader convenience.
\end{abstract}
\vspace{1cm}
{
\etocsetstyle{section}%
  {\par\fontsize{10}{15}\selectfont}%
  {\noindent}%
  {%
    \etocthelinkednumber\enspace%
    {\etocthelinkedname}%
    \leavevmode\leaders\hbox to .5em{\hss.\hss}\hfill\kern0pt\textcolor{royalblue}{\etocthepage}%
    \par%
  }%
  {\par}

\tableofcontents
}

\newpage

\begin{figure}[t]
    \raggedleft
    \vspace{-1em}
    \includegraphics[width=0.5\linewidth]{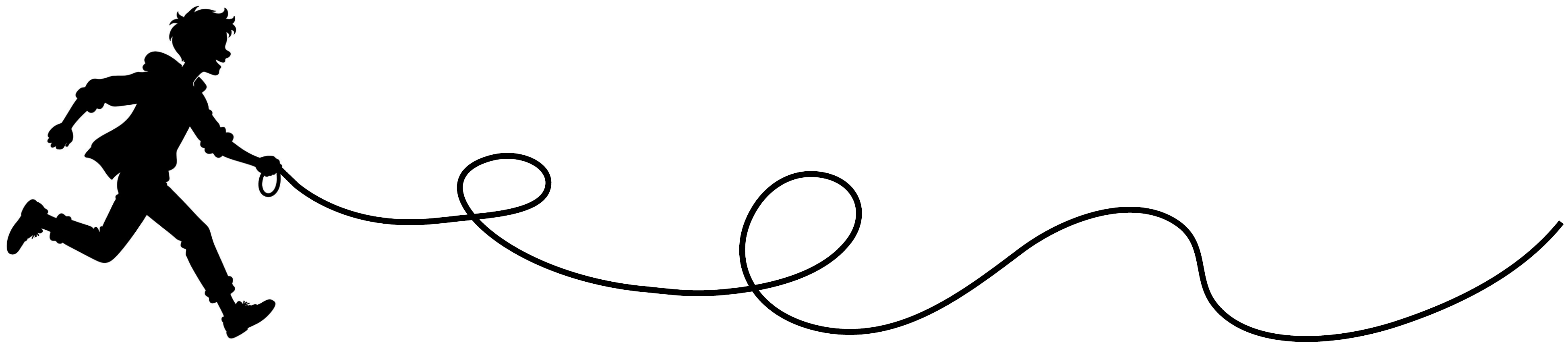}
    \vspace{-2em}
\end{figure}

\section{Introduction}
\label{sec:intro}

The term \emph{Fr\'{e}chet distance} is now often encountered first in one of two late descendants:
the Fr\'{e}chet distance between curves in computational geometry (e.g., \cite{alt1992measuring,alt1995computing,bringmann2014walking,buchin2017four,buchin2019seth,gutschlag2022generalized,cheng2025constant,conradi2025revisiting}), which is the primary entry on Wikipedia \cite{enwiki:1331641232}, or the Fr\'{e}chet Inception Distance (FID) in deep generative modeling \cite{heusel2017gans}.
Both uses can create the impression that Fr\'{e}chet's original concern was already narrowly centered on geometric path comparison or on the comparison of probability laws.
That impression is not entirely false, but it is too narrow for the 1906 thesis \cite{frechet1906quelques} and too compressed for the larger historical development.

The most useful corrective supplied by Fréchet's own writings is a historical one. In 1906, the 28-year-old Maurice Fréchet published his doctoral thesis, supervised by Jacques Hadamard at the École Normale Supérieure, titled \textit{Sur quelques points du calcul fonctionnel (On some points of functional calculus)} \cite{frechet1906quelques}. 
The 1906 thesis is broader and more ambitious.
Its aim is not to define one distinguished distance, but to construct a general language for analysis on sets whose elements may be \emph{``de nature quelconque'' (of any nature)}: numbers, points, curves, functions, surfaces, and so forth.
Half a century later, many years after retiring in 1949 as Chair Professor of the Calculus of Probabilities and Mathematical Physics at the Université de Paris, Fréchet published a paper in 1957, titled \textit{Sur la distance de deux lois de probabilité (On the distance between two probability laws)} \cite{frechet1957distance}. The 1957 paper is narrower and more concrete.
It asks how one should define a distance between two probability laws, and it studies in detail a coupling-based proposal due to his academic sibling and friend Paul L\'{e}vy \cite{barbut2013paul}.
Yet the two papers are deeply continuous in spirit.
The later paper may be read as an application, within probability theory, of the same methodological instinct already visible in the thesis: one should compare complex mathematical objects not by their presentation alone, but by placing them in a common abstract structure in which distance, limit, and continuity make sense. An extensive study and critique on Fréchet's work can be found in \cite{taylor1982study,taylor1985study,taylor1987study}.

Although the term \emph{Fr\'{e}chet distance} is in general not ambiguous because of its polymorphic definitions, I have nevertheless thought it worthwhile to write the present note. 
The motive is rather practical: despite my extensive reliance on large language models in preparing this work, I have often found their answers on this subject less satisfactory and have sometimes had the impression that certain primary sources, Fr\'{e}chet's 1906 thesis \cite{frechet1906quelques} for example, were not part of the training data.
I therefore hope that the note may be of some use, if not per se, then at least through the literature it gathers and the \TeX{}ified versions of Fr\'{e}chet's 1906 thesis \cref{app:frechet1906}, 1957 paper \cref{app:frechet1957} and L\'{e}vy's 1950 note in the collection from Fr\'{e}chet \cref{app:levy1950} that accompany it. I have had especially in mind those who may wish to know more of the background of Fr\'{e}chet distance. If it helps them (including myself), speak of it with greater confidence and exactness, it will have served its purpose.

This note is organized around the two principal historical branches of the subject. We begin with the curve-theoretic line in \cref{sec:curves}, starting from Fr\'{e}chet's 1906 thesis and proceeding to the algorithmic treatment of polygonal curves in the 1990s. We then turn to the probabilistic line in \cref{sec:laws}, from Fr\'{e}chet's 1957 paper on distance between probability laws to the modern Wasserstein framework and the Gaussian formulae later used in FID. The relation between these two lines is taken up in \cref{sec:laws-as-curves}, the modern use case of FID is discussed in \cref{sec:fid}, a few neighboring notions are collected in \cref{sec:other-distances}, and naming questions are discussed last in \cref{sec:naming}. 
The appendix includes English translations from Fr\'{e}chet's 1906 thesis \cite{frechet1906quelques}, 1957 paper \cite{frechet1957distance} and L\'{e}vy's 1950 note in the collection from Fr\'{e}chet \cite{frechet1950recherches}\,\footnote{\,I bought this book second-hand after it had been discarded from EPFL's library, since it was difficult to find both online and locally.}, in addition to two proofs in \cref{app:proof}.
We quote French terms when we worry that English translations may introduce ambiguity or they read cooler.

\section{Fr\'{e}chet Distance on Curves}
\label{sec:curves}

\subsection{Distance, \textit{voisinage} and \textit{\'ecart}}

Fr\'{e}chet's 1906 thesis \cite{frechet1906quelques} concerns functional calculus, a topic that lies beyond the scope of this note. Nevertheless, we quote a comment by Angus E. Taylor in 1982 \cite{taylor1982study} to summarize its significance:

\begin{quote}
  `` It seems to me that the most significant general aspects of Fr\'{e}chet's thesis are two in number. The first is his initiation of abstract point set topology and his development of the abstract theory to an extent that demonstrated conclusively the feasibility of such a procedure. It was Fr\'{e}chet who decisively opened the theory to classes of functions and curves. His examples of application were few and the results he harvested in his thesis were of limited extent. Nevertheless, he called attention to a new point of view in analysis: a perspective on classes of functions as metric spaces. ''
\end{quote}

In this note, we focus on his definitions of \textit{voisinage (neighborhood)} and \textit{\'ecart (gap, distance)}\,\footnote{\,Fr\'echet uses \textit{\'ecart} in his 1906 thesis and \textit{distance} in his 1957 paper. We follow this usage here without imposing a strict distinction between the two terms. Note that A. J. Ward, for example, translates \textit{\'ecart} as \textit{quasi-distance} in \cite{ward1954generalization}.}. Recall the modern definition of a metric (or a distance function) with four axioms: 

A function $d : \mathcal X \times \mathcal X \to \mathbb{R}$ is a metric (or a distance function) if for all $x,y,z \in \mathcal X$:
\begin{equation}
  \label{eq:distance_axioms}
  \begin{aligned}
    d(x,y) &\ge 0,  && \qquad \text{\textit{(non-negativity)}}, \\
    d(x,y) &= 0 \iff x = y,  &&  \qquad \text{\textit{(identity of indiscernibles)}}, \\
    d(x,y) &= d(y,x),  &&  \qquad \text{\textit{(symmetry)}}, \\
    d(x,z) &\le d(x,y) + d(y,z),  &&  \qquad \text{\textit{(triangle inequality)}}.
  \end{aligned}
\end{equation}

Fr\'echet first introduces the notion of \textit{voisinage} in \cite[\S\,27]{frechet1906quelques}:
A \textit{voisinage} between two elements $A$ and $B$ is a number $(A,B)=(B,A)\ge 0$ such that:
\begin{enumerate}
\item $(A,B)=0$ if and only if $A$ and $B$ are identical;
\item there exists a positive function $f(\varepsilon)$ tending to $0$ with $\varepsilon$ such that, whenever $(A,B)\le \varepsilon$ and $(B,C)\le \varepsilon$, one has $(A,C)\le f(\varepsilon)$.
\end{enumerate}
\textit{Voisinage} is a symmetric, positive function defining proximity, but weaker than a metric because it replaces the triangle inequality with a general \textit{small implies small} condition. Related notions in later topology include uniform spaces and neighborhood systems.

Later in his thesis \cite[\S\,49]{frechet1906quelques}, Fr\'echet defines an \textit{\'ecart} of a pair of elements $A,B$ as a number $(A,B)\ge 0$, which enjoys the following two properties: 
\begin{enumerate}
  \item $(A,B)=0$ only if $A$ and $B$ are identical;
  \item if $A,B,C$ are any three elements, one always has $(A,B)\le (A,C)+(C,B)$.
\end{enumerate}
Although Fr\'echet does not specifically claim the symmetry of \textit{\'ecart}, which is part of the defining clause of \textit{voisinage}, he implies that by stating ``Ainsi, \textit{\'ecart est un voisinage} [...]'' (Thus, \textit{\'ecart is a voisinage} [...])\,\footnote{\,The italic emphasis is that of Fr\'echet's thesis.} only after claiming that \textit{\'ecart} satisfies two conditions in the definition of \textit{voisinage}.
Despite the assertion\,\footnote{\,Anecdotally, Fr\'echet's teacher Jacques Hadamard was once displeased that young Fr\'echet made assertion without giving the proof. Hadamard wrote the letter to Fr\'echet: ``[...] mais vous ne devez \uline{en aucune cas} faire de déduction fausse, \underline{\underline{jamais}} \uline{sous quelque prétexte et en quelque circonstance que ce soit}.'' ([...] but you must \uline{under no circumstances} make a false inference, \underline{\underline{never}} \uline{under any pretext and in any circumstance whatsoever}.) The underlining (and the double underlining) is that of Hadamard. \cite{taylor1982study}}, we consider that \textit{\'ecart} is inherited from the properties of \textit{voisinage}, yielding precisely the modern axioms of a distance function. In the curve case, discussed in \cref{sec:curve}, the point is not that definiteness fails, but that the objects being compared are ordered curve-arcs considered up to Fr\'{e}chet's coincidence relation rather than raw parametrizations.

\begin{figure}[t]
     \centering
     \begin{subfigure}[t]{0.3\textwidth}
         \centering
         \includegraphics[width=\textwidth]{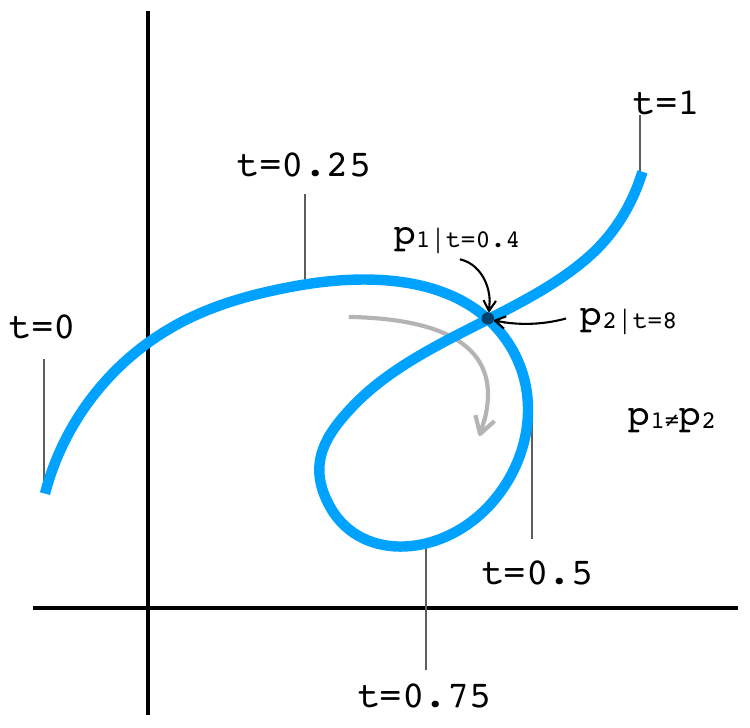}
         \caption{}
         \label{fig:1a}
     \end{subfigure}
     \hfill
     \begin{subfigure}[t]{0.3\textwidth}
         \centering
         \includegraphics[width=\textwidth]{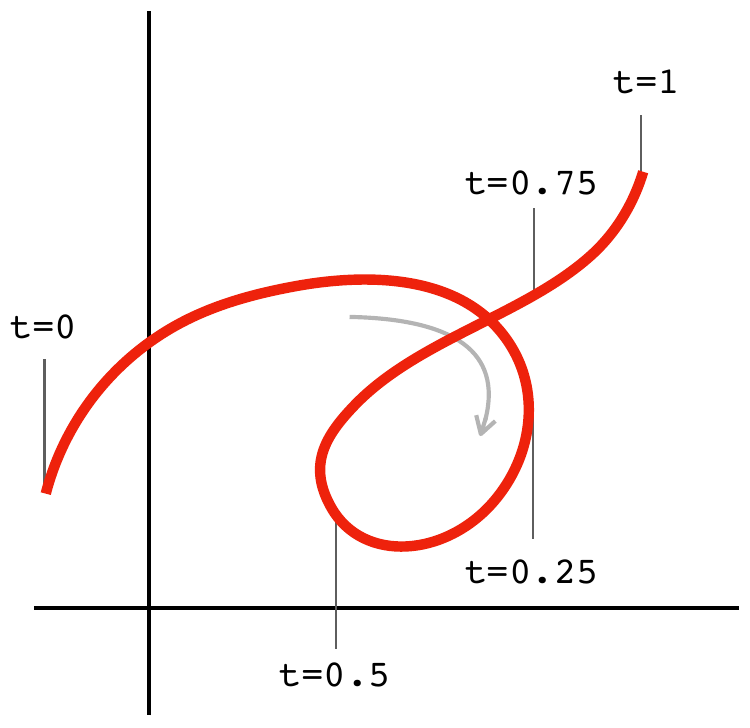}
         \caption{}
         \label{fig:1b}
     \end{subfigure}
     \hfill
     \begin{subfigure}[t]{0.3\textwidth}
         \centering
         \includegraphics[width=\textwidth]{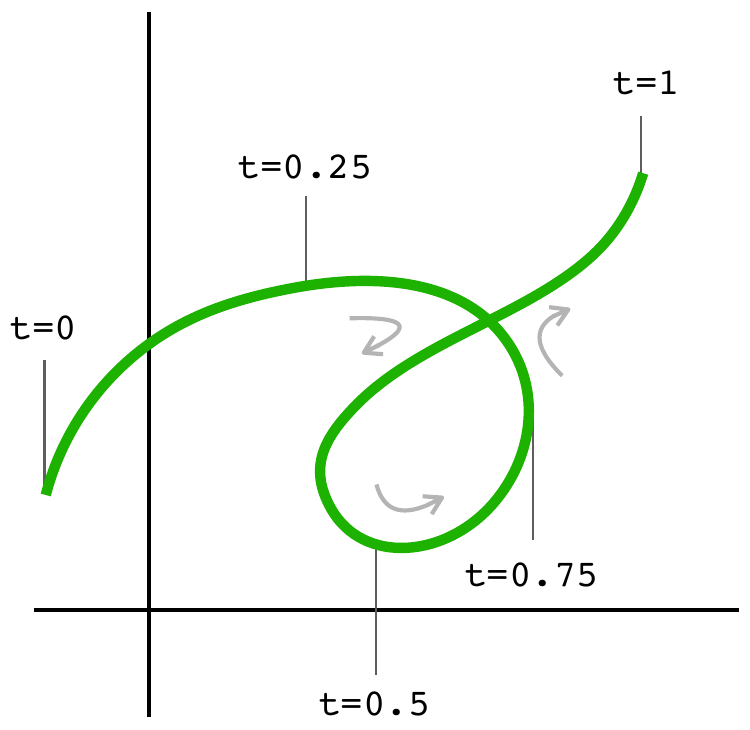}
         \caption{}
         \label{fig:1c}
     \end{subfigure}
        \caption{Fréchet's curve-arcs are ordered and the parameterized trajectories are normalized with $t\in[0,1]$. For instance, curves (a) and (b) coincide despite having different speeds, whereas curves (a) and (c) are distinct. Moreover, two points corresponding to different times $p_1,p_2$ on curve (a) are considered different, even if they share the same spatial coordinate.}
        \label{fig:1}
\end{figure}

\subsection{Curve and \emph{\'ecart} between two curves}
\label{sec:curve}

In \cite[\S\,76]{frechet1906quelques}, Fr\'{e}chet defines a curve-arc by a system
\begin{equation}
x=f(t), \qquad y=g(t), \qquad z=h(t), \qquad a\le t\le b,
\end{equation}
where $f,g,h$ are continuous and are not simultaneously constant on any interval contained in $[a,b]$, unless they are simultaneously constant on the whole interval. A curve-arc is therefore not merely the set of points determined by these equations (see \cref{fig:1}). In Fr\'{e}chet's formulation, the order induced by the parameter is part of the object. Thus two systems
\begin{equation}
x=f(t),\ y=g(t),\ z=h(t), \qquad a\le t\le b,
\end{equation}
and
\begin{equation}
x=F(u),\ y=G(u),\ z=H(u), \,\quad A\le u\le B,
\end{equation}
define the same curve-arc only when they have the same points and these points are encountered in the same order. If the same geometric point is met for two distinct parameter values, then these are counted as two distinct points of the curve (see \cref{fig:1a}).

Accordingly, two curve-arcs coincide if and only if one can establish between their points a one-to-one reciprocal correspondence such that corresponding points coincide and their order is preserved; see \cite[\S\,76]{frechet1906quelques} and \cref{fig:1}. In \cite[\S\,77]{frechet1906quelques}, Fr\'{e}chet shows that this geometric definition agrees with the parametric one. Namely, the two representations define the same curve if and only if there exists a continuous increasing function
$u=\varphi(t)$
from $[a,b]$ onto $[A,B]$ such that
\begin{equation}
f(t)=F(\varphi(t)), \qquad g(t)=G(\varphi(t)), \qquad h(t)=H(\varphi(t)).
\end{equation}

After this reduction, Fr\'{e}chet assumes the parameter interval has been normalized to $[0,1]$.

In \cite[\S\,78]{frechet1906quelques}, he defines the \emph{curve \`ecart}. Given two representations
\begin{equation}
\gamma:\quad x=f(t),\ y=g(t),\ z=h(t), \qquad 0\le t\le 1,
\end{equation}
\begin{equation}
\Gamma:\quad x=F(u),\ y=G(u),\ z=H(u), \quad 0\le u\le 1,
\end{equation}
he forms
\begin{equation}
\delta(t)=\sqrt{[f(t)-F(t)]^{2}+[g(t)-G(t)]^{2}+[h(t)-H(t)]^{2}},
\qquad 0\le t\le 1,
\end{equation}
which is continuous in $t$ and therefore has a maximum
\begin{equation}
d=\max_{0\le t\le 1}\delta(t)\ge 0.
\end{equation}

The distance of the two curves is then defined as the \emph{lower limit} $e$ of the set of values $d$ obtained by taking all possible representations of $\gamma$ and $\Gamma$.

Fr\'{e}chet immediately observes that one may simplify the definition by fixing one arbitrary representation of the first curve and varying only the representation of the second.

In \cite[\S\S\,79--80]{frechet1906quelques}, Fr\'{e}chet proves that this \emph{\'ecart} satisfies the required properties of a distance on curves in his sense. In particular,
\begin{equation}
(\gamma,\Gamma)=0 \iff \gamma=\Gamma,
\end{equation}
and for three curves $\gamma,\Gamma,C$,
\begin{equation}
(\gamma,\Gamma)\le (C,\gamma)+(C,\Gamma).
\end{equation}

In \cite[\S\,81]{frechet1906quelques}, he identifies the induced convergence. A sequence of curves $\gamma_n$ converges to a curve $\gamma$ if and only if there exist representations
\begin{equation}
\gamma_n:\quad x=f_n(t),\ y=g_n(t),\ z=h_n(t), \qquad 0\le t\le 1,
\end{equation}

\begin{equation}
\gamma:\quad x=f(t),\ y=g(t),\ z=h(t), \qquad 0\le t\le 1,
\end{equation}
such that
\begin{equation}
f_n\to f,\qquad g_n\to g,\qquad h_n\to h
\end{equation}
uniformly on $[0,1]$.

Finally, Fr\'{e}chet comments explicitly on effective computation in \cite[Note II, \S\,100]{frechet1906quelques}. He remarks there that, for the purposes of the general theory, the exact numerical value of the distance is less important than the existence of a suitable distance. He also states that the definition itself does not provide an effective method of calculation, and that such a calculation would in general be difficult. His proposed scheme is to show first that, in every continuous correspondence between the two curves, the maximum distance between corresponding points is at least some fixed value $e$, and then to exhibit at least one correspondence for which this maximum is exactly $e$. In modern terms, this already has the form of a lower-bound argument together with a correspondence attaining the bound. He illustrates the procedure on simple examples, including line segments.

\subsection{Comparison with Morse, Ward, and Ewing}

The later literature retains Fr\'{e}chet's basic pattern of minimizing a worst-case discrepancy between corresponding points, but changes the class of curves, the admissible correspondences, or the ambient spaces in which the curves live.

Morse's 1938 paper remains closest to the thesis \cite{morse1938functional}. For parametrized curves $x:[a,b]\to\mathbb R^n$ and $y:[c,d]\to\mathbb R^n$, Morse takes \emph{homeomorphisms} $T:[a,b]\to[c,d]$, i.e., continuous bijections with continuous inverse, preserving endpoints, forms the maximal pointwise distance under $T$, and then takes the greatest lower bound over all such $T$. Relative to the thesis, the main shift is one of presentation: Fr\'{e}chet begins from ordered curve-arcs and only then derives the reparameterization criterion, whereas Morse begins directly with homeomorphisms of the parameter intervals. 

Ward's two 1954 papers generalize the setting more substantially \cite{ward1954generalization,ward1954second}. In the first paper \cite{ward1954generalization}, the points of the curve no longer lie in Euclidean space equipped with its ordinary metric. Instead, Ward starts with a space $E$ carrying a separated uniform structure and with an \emph{\'ecart} or quasi-distance on $E$, and from this data he derives a Fr\'{e}chet-type \emph{\'ecart} on curves. Thus Fr\'{e}chet's 1906 construction for curves in $\mathbb R^3$ with Euclidean point distance becomes, in Ward, a construction for curves in a much more general ambient space. The second Ward paper \cite{ward1954second} enlarges the class of admissible correspondences themselves. Rather than working only with homeomorphisms between parameter intervals, Ward introduces a broader structure based on pairings of intervals, allowing parameter bases that need not all be homeomorphic and may even be disconnected. In this respect Ward is not merely restating Fr\'{e}chet's construction; he is transporting it into a genuinely broader category.

Ewing's textbook presentation \cite[\S\,6]{ewing1969calculus} is closer to the language now standard in analysis and topology. He first defines the Fr\'{e}chet distance between continuous mappings
\begin{equation}
x:[a,b]\to \mathbb{R}^n,\qquad y:[c,d]\to \mathbb{R}^n
\end{equation}
by
\begin{equation}
\rho(x,y)=\inf_h \sup_{t\in[a,b]} \|x(t)-y(h(t))\|,
\end{equation}
where $h:[a,b]\to[c,d]$ ranges over sense-preserving homeomorphisms. Because of this restriction, the curves under discussion are explicitly oriented. Ewing then lets $\mathcal C$ denote the class of continuous curves in $E_n$, each such curve being a class $\{x\}$ of continuous mappings with the property that any two members $x_1,x_2$ of the class satisfy $\rho(x_1,x_2)=0$. For curves $C_1=\{x\}$ and $C_2=\{y\}$ he sets
\begin{equation}
d(C_1,C_2)=\rho(x,y).
\end{equation}

Compared with Fr\'{e}chet's thesis, the geometric content is the same, but the order of presentation is reversed: Fr\'{e}chet begins with the curve and then defines the distance, whereas Ewing defines the distance on mappings first and only afterwards passes to curves as equivalence classes. Taken together, Morse, Ward, and Ewing show that the developments after 1906 are primarily developments of framework. The basic geometric idea remains recognizably the same, while the surrounding formalism becomes progressively more parametric, more topological, or more abstract.

\subsection{Polygonal curves and the Alt--Godau formulation}

Fr\'{e}chet himself already remarks in the 1906 thesis \cite[Note~II, \S\,100]{frechet1906quelques} that \textit{``La définition de l'écart que nous avons donnée ne fait pas prévoir comment on pourrait le calculer effectivement, et ce calcul serait certainement très difficile en général'' (The definition of distance we gave does not indicate how one could compute it effectively, and this computation would certainly be very difficult in general)}. The 1990s computational-geometry literature changes the problem in a different way \cite{alt1992measuring,alt1995computing,godau1991natural}. Here the continuous Fr\'{e}chet principle is retained, but the class of curves is restricted and the main emphasis becomes algorithmic. Alt and Godau also at least popularized if not proposed the now-standard \textit{dog leash} intuition\,\footnote{\,This explains why many papers on curve distances include \textit{dog} in their titles. Interestingly, the acknowledgements of \cite{buchin2017four} (jokingly) remark that \textit{``Fréchet-related papers require a witty title involving a dog''.} Regrettably, I failed to follow the advice.}, already present in the measuring framework of \cite{alt1992measuring}: one imagines a person and a dog walking along the two curves without backtracking, and asks for the shortest leash length that allows both to move from start to finish. In Alt and Godau's 1995 paper \cite{alt1995computing}, a polygonal curve is given by a finite sequence of vertices
\begin{equation}
P=(p_0,p_1,\dots,p_m), \qquad Q=(q_0,q_1,\dots,q_n),
\end{equation}
with the understanding that the curve is the piecewise-linear map obtained by linear interpolation of consecutive vertices. In the notation of the paper, if $P:[0,p]\to V$ and $Q:[0,q]\to V$ are polygonal curves, then the Fr\'{e}chet distance is defined by
\begin{equation}
\delta_F(P,Q)
=
\inf_{\alpha,\beta}
\max_{t\in[0,1]}
\|P(\alpha(t))-Q(\beta(t))\|,
\end{equation}

where $\alpha:[0,1]\to[0,p]$ and $\beta:[0,1]\to[0,q]$ are continuous increasing functions satisfying
\begin{equation}
\alpha(0)=0,\ \alpha(1)=p,\qquad
\beta(0)=0,\ \beta(1)=q.
\end{equation}

Thus the difference from the thesis lies not in the geometric principle but in the model of input: Fr\'{e}chet works with arbitrary continuous curve-arcs in $\mathbb R^3$, while Alt and Godau work with polygonal curves in $\mathbb R^d$ given by finite vertex data.
The distance under discussion is still the continuous Fr\'{e}chet distance, but it is now computed for a discrete representation of the input curves.

\begin{figure}[t]
     \centering
     \begin{subfigure}[t]{0.48\textwidth}
         \centering
         \includegraphics[width=\textwidth]{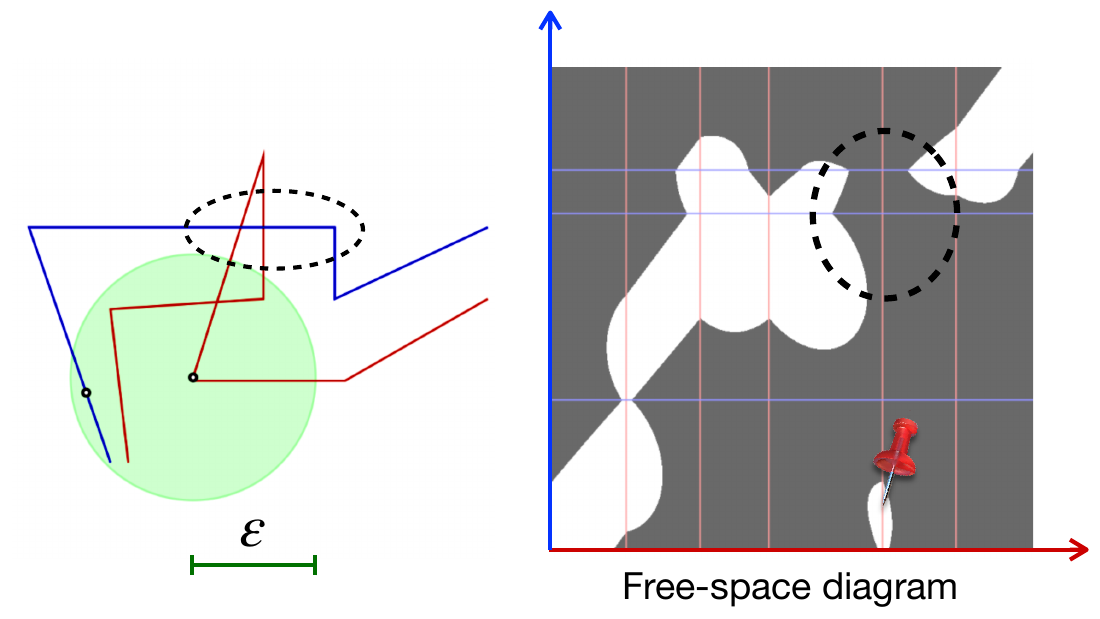}
         \caption{}
         \label{fig:2a}
     \end{subfigure}
     \hfill
     \begin{subfigure}[t]{0.48\textwidth}
         \centering
         \includegraphics[width=\textwidth]{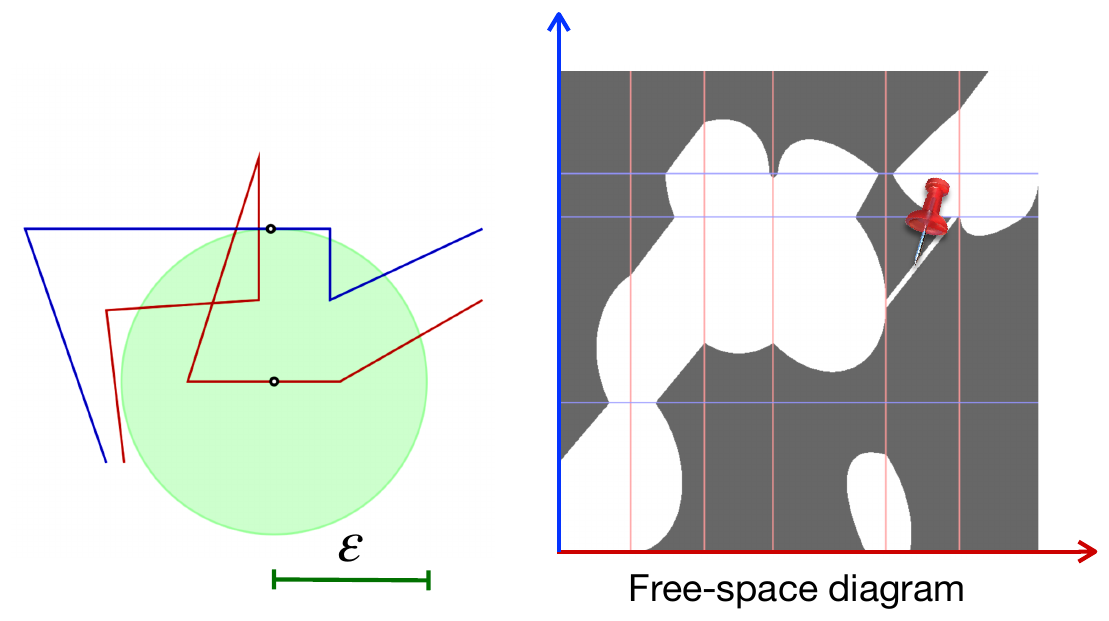}
         \caption{}
         \label{fig:2b}
     \end{subfigure}
        \caption{Two curves together with their free-space diagrams for increasing thresholds $\varepsilon$. The corresponding locations on the curves are indicated by pins in the free-space diagrams. In the free-space diagrams, the $x$-axis represents the normalized red polygonal curve, and the $y$-axis the blue one. The white regions denote the reachable free space for a given $\varepsilon$. In (a), the regions highlighted by dashed circles are disconnected, whereas in (b), a passage emerges as $\varepsilon$ increases. Adapted from \cite{dijk_frechet_diagram}.}
        \label{fig:2}
\end{figure}

Alt and Godau's key contribution is the free-space formulation. For a threshold $\varepsilon>0$, they consider
\begin{equation}
\mathcal F_{\varepsilon}(P,Q)
:=
\{(s,t)\in[0,p]\times[0,q]:\|P(s)-Q(t)\|\le \varepsilon\}.
\end{equation}

The parameter rectangle is decomposed into cells
\begin{equation}
[i-1,i]\times[j-1,j], \qquad 1\le i\le p,\ 1\le j\le q,
\end{equation}
corresponding to pairs of edges of $P$ and $Q$. For polygonal curves, the free space in each cell is convex; in the Euclidean case it is the intersection of the cell with an ellipse (possibly degenerated). The decisive criterion of the paper is:
\begin{equation}
\delta_F(P,Q)\le \varepsilon
\quad\Longleftrightarrow\quad
\text{there exists a continuous path in }\mathcal F_{\varepsilon}(P,Q)
\end{equation}
from $(0,0)$ to $(p,q)$ that is monotone in both coordinates. The distance problem is thereby converted into a reachability problem in parameter space. See \cref{fig:2} for two critical moments of increasing thresholds $\varepsilon$.
For an interactive visualization of this free-space diagram viewpoint online, we refer to \cite{dijk_frechet_diagram}.

This yields two main algorithmic consequences. First, the decision problem of determining whether $\delta_F(P,Q)\le \varepsilon$ can be solved in time $O(pq)$ by propagating reachable intervals along cell boundaries. Second, by combining this decision procedure with a search over critical values, Alt and Godau obtain an algorithm of runtime
\begin{equation}
O(pq\log(pq))
\end{equation}
for computing the Fr\'{e}chet distance of two polygonal curves. This is the point at which Fr\'{e}chet distance becomes a fully algorithmic object.

\section{Fr\'{e}chet Distance on Probability Laws}
\label{sec:laws}

\subsection{Measure-theoretic setting and notation}

To discuss probability laws and optimal transport \cite{peyre2019computational,santambrogiooptimal,villani2009optimal}, it is convenient to fix some standard measure-theoretic notation. Although the present subsection may look somewhat dense at first sight, it is intended as a compact and self-contained reference for the basic definitions used throughout the remainder of the note. This gives a single setting in which one may treat discrete laws, continuous laws, mixtures, pushforwards, and couplings without having to separate special cases at each step. 

Let $(\mathcal X,d)$ be a \emph{metric space}. This means that $\mathcal X$ is a set and 
$d:\mathcal X\times\mathcal X\to\mathbb R$
is a distance function satisfying the four axioms given in \cref{eq:distance_axioms}.
Thus it is more accurate to speak of the distance between any two elements of $\mathcal X$; when $\mathcal X$ is a geometric space, these elements are often called points. For $x\in\mathcal X$ and $r>0$, the open ball $B(x,r)$ of radius $r$ around $x$ is
\begin{equation}
B(x,r):=\{y\in\mathcal X:d(x,y)<r\}.
\end{equation}

A subset $U\subseteq\mathcal X$ is called \emph{open} if, for every $x\in U$, there exists $r>0$ such that $B(x,r)\subseteq U$. A subset $F\subseteq\mathcal X$ is called \emph{closed} if its complement $\mathcal X\setminus F$ is open. The collection of all open subsets is called the topology induced by the metric. In this sense, the metric determines which elements are close, hence which sets count as open neighborhoods, and therefore the topology of the space. Historically, this abstraction is very much in the spirit of Fr\'echet's 1906 thesis \cite{frechet1906quelques}, which is among the first works to study notions such as openness, compactness, and convergence systematically on abstract sets endowed with a distance-like structure, rather than only on subsets of Euclidean space.

To do probability on $\mathcal X$, one needs a class of subsets to which probabilities may be assigned. This is the role of a $\sigma$-algebra. A $\sigma$-algebra $\mathcal A$ on $\mathcal X$ is a collection of subsets of $\mathcal X$ such that $\mathcal X\in\mathcal A$, whenever $A\in\mathcal A$ one also has $\mathcal X\setminus A\in\mathcal A$, and whenever $A_1,A_2,\dots\in\mathcal A$ one also has $\bigcup_{n=1}^\infty A_n\in\mathcal A$. From these properties it follows that $\mathcal A$ is also closed under countable intersections. These are the set-theoretic operations that appear throughout probability: one needs complements to speak about events of the form ``not $A$'', countable unions to speak about ``$A_1$ or $A_2$ or \dots'', and countable intersections for ``$A_1$ and $A_2$ and \dots''. The closure properties of a $\sigma$-algebra are therefore exactly those needed to state the usual rules of probability in a stable way.

The Borel $\sigma$-algebra $\mathcal B(\mathcal X)$ is the $\sigma$-algebra generated by the open subsets of $\mathcal X$, that is, the smallest $\sigma$-algebra containing every open set. Since the open sets encode the topology, the Borel $\sigma$-algebra is the natural measurable structure associated with the metric space. A \emph{measure} on $\mathcal B(\mathcal X)$ is a function $\mu:\mathcal B(\mathcal X)\to[0,\infty]$ that assigns size to Borel sets \footnote{\,Borel sets are named after Émile Borel, a mathematician and the Minister of the Navy of the French Third Republic in 1925. Borel also played an important role in Fr\'{e}chet's academic career. He encouraged Fr\'{e}chet to come back to Paris from Strasbourg to seek positions and supported Fr\'{e}chet's candidacy. In 1941, Fr\'{e}chet succeeded Borel as chair in the Calculus of Probabilities and Mathematical Physics at the Université de Paris.} and is countably additive on pairwise disjoint families. A \emph{probability measure} is a measure with total mass $1$. Thus a \emph{Borel probability measure} is a function $\mu:\mathcal B(\mathcal X)\to[0,1]$ such that $\mu(\mathcal X)=1$ and
\begin{equation}
\mu\!\left(\bigcup_{n=1}^\infty A_n\right)=\sum_{n=1}^\infty \mu(A_n)\,,
\end{equation}
whenever $A_1,A_2,\dots$ are pairwise disjoint Borel sets.

We write
\begin{equation}
\mathcal P(\mathcal X)
:=
\{\mu:\mu \text{ is a Borel probability measure on }\mathcal X\}\,,
\end{equation}
and call \(\mathcal P(\mathcal X)\) the space of Borel probability measures on \(\mathcal X\), that is, the collection of all probability laws defined on the Borel \(\sigma\)-algebra of \(\mathcal X\).

Many results in probability and optimal transport are most naturally formulated when the underlying metric space has two additional regularity properties. A sequence \((x_n)_{n\ge1}\) in \(\mathcal X\) is called a \emph{Cauchy sequence} if, for every \(\varepsilon>0\), there exists \(N\) such that \(d(x_n,x_m)<\varepsilon\) whenever \(n,m\ge N\). The space \((\mathcal X,d)\) is called \emph{complete} if every Cauchy sequence converges to some element of \(\mathcal X\). It is called \emph{separable} if it contains a countable dense subset. A metric space that is both complete and separable is called a \emph{Polish space}. This class of spaces is standard in modern probability theory and provides a convenient setting for much of optimal transport as well \cite{peyre2019computational}. Important examples include \(\mathbb R^d\), separable Banach spaces, and many function spaces. A Banach space is a vector space equipped with a norm and complete w.r.t. the metric induced by that norm.

The reason for using the term \emph{probability law} rather than \emph{probability distribution} is partly historical and partly conceptual. Historically, Fr\'{e}chet himself writes \emph{loi}; translating this as \emph{law} preserves the terminology of the 1957 paper. Conceptually, \emph{distribution} can mean either a random object's law or a concrete representation such as a density, mass function, or cumulative distribution function. The word \emph{law} helps emphasize that the object is the measure itself, independent of any particular representation. This is especially useful once one moves beyond one-dimensional densities to pushforwards, couplings, or probability measures on abstract spaces.

If $(\Omega,\mathcal F,\mathbb P)$ is a \emph{probability space}, meaning that $\Omega$ is a sample space, $\mathcal F$ is a $\sigma$-algebra of events, and $\mathbb P$ is a probability measure on $\mathcal F$, then a measurable map $X:\Omega\to\mathcal X$
is called an \emph{$\mathcal X$-valued random variable}. Measurability means that $X^{-1}(A)\in\mathcal F$ for every Borel set $A\in\mathcal B(\mathcal X)$. The law of $X$ is then the pushforward measure
\begin{equation}
\mathcal L(X)=X_{\#}\mathbb P\in\mathcal P(\mathcal X),
\qquad
X_{\#}\mathbb P(A)=\mathbb P\bigl(X^{-1}(A)\bigr)=\mathbb P(X\in A),
\quad A\in\mathcal B(\mathcal X)\,.
\end{equation}
Here \(\mathbb P\) is a probability measure on the sample space \((\Omega,\mathcal F)\), whereas \(\mathcal L(X)=X_{\#}\mathbb P\) is a probability measure on the target space \(\mathcal X\), and therefore an element of \(\mathcal P(\mathcal X)\).

The notation $X_{\#}\mathbb P$ means that the probability measure $\mathbb P$ on the sample space $\Omega$ is carried through the map $X$ and thereby becomes a measure on the target space $\mathcal X$. In plain terms, the pushforward measure keeps track of how much mass arrives in each subset of the target space after the map has been applied. The law of a random variable is therefore the original probability measure on outcomes, viewed after those outcomes have been converted into values of the random variable. More generally, let \((S,\mathcal A)\) be a measurable space, let \(T:S\to\mathcal Y\) be a measurable map into a metric space \(\mathcal Y\), and let \(\eta\) be a probability measure on \((S,\mathcal A)\). The pushforward measure \(T_{\#}\eta\) is defined by
\begin{equation}
T_{\#}\eta(A)=\eta\bigl(T^{-1}(A)\bigr),
\qquad A\in\mathcal B(\mathcal Y),
\end{equation}
and is a probability measure on the target space \(\mathcal Y\). In the special case where \(S=\mathcal X\) and \(\eta\in\mathcal P(\mathcal X)\), it is again a Borel probability measure on \(\mathcal Y\). In other words, the mass assigned to a set \(A\) after applying \(T\) is exactly the mass that was originally sitting in the preimage \(T^{-1}(A)\). 

For example, let $\eta$ be the uniform probability measure on $[0,1]$ and let $T(x)=x^2$. If we choose the target set \(A=[0,\tfrac14]\subseteq[0,1]\), then 
\begin{equation}
T^{-1}(A)=\{x\in[0,1]:x^2\in[0,\tfrac14]\}=[0,\tfrac12],
\qquad
T_{\#}\eta(A)=\eta([0,\tfrac12])=\tfrac12\,.
\end{equation}
Thus the pushforward $T_{\#}\eta$ is the law of $x^2$ when $x$ is uniformly distributed on $[0,1]$. This is the basic mechanism by which a random variable produces a probability law: one starts with a probability measure on the underlying sample space and pushes it forward to the target space of values. Accordingly, one may describe a probability law either abstractly, as a measure $\mu\in\mathcal P(\mathcal X)$, or concretely, as the law $\mathcal L(X)$ of an $\mathcal X$-valued random variable $X$.

If $\mu,\nu\in\mathcal P(\mathcal X)$, a \emph{coupling} of $\mu$ and $\nu$ is a Borel probability measure $\pi$ on $\mathcal X\times\mathcal X$ whose first marginal is $\mu$ and whose second marginal is $\nu$. Here the first marginal is the measure obtained by forgetting the second coordinate, and the second marginal is obtained by forgetting the first. We denote the set of all such couplings by
\begin{equation}
\Pi(\mu,\nu)
:=
\bigl\{\pi\in\mathcal P(\mathcal X\times\mathcal X):
(\mathrm{pr}_1)_{\#}\pi=\mu,\ (\mathrm{pr}_2)_{\#}\pi=\nu\bigr\}\,,
\end{equation}
where $\mathrm{pr}_1,\mathrm{pr}_2$ are the coordinate projections. In probabilistic language, a coupling is a joint law of a pair $(X,Y)$ whose marginal laws are $\mu$ and $\nu$.

In a nutshell, all this notation is meant to make clear what a probability law is, and to place it in a setting in which distances between probability laws can be defined in a natural and workable way.

\subsection{Fr\'{e}chet's 1957 paper}
\label{sec:frechet1957}

Fr\'{e}chet's 1957 paper \cite{frechet1957distance} is narrower in scope than the 1906 thesis \cite{frechet1906quelques} and is concerned specifically with one-dimensional probability laws represented by their cumulative distribution functions. Before turning to Fr\'{e}chet's own construction, it is helpful to separate it from Paul L\'evy's 1950 note \cite{frechet1950recherches}. In that note, L\'evy gives two \emph{direct} definitions of distance between laws by representing a law as a cumulative-distribution curve \(\Gamma\): first, a max-distance comparison of the intersection points of \(\Gamma\) and \(\Gamma'\) with the lines \(x+y=c\); second, a point-to-curve construction based on the larger of two one-sided maximal distances. He then discusses separately how a distance between random variables might be related to a distance between laws through the gap variable \(Z=|X-Y|\). This is not yet the admissible-joint-law framework used by Fr\'{e}chet\,\footnote{\label{fn:note}\,Fr\'{e}chet writes in 1957 as though L\'evy's 1950 note supplied three definitions of distance between laws. In the 1950 note itself, however, one finds two direct curve-based definitions of law-distance, followed by a separate discussion of how distances between random variables and distances between laws might be related. Moreover, the admissible-bivariate-law scheme that Fr\'{e}chet develops in 1957 is not stated there in the same form.}.

Fr\'{e}chet begins instead by distinguishing the distance between two random numbers in a single trial from what he calls a \emph{global distance}, that is, a single quantity intended to measure their discrepancy across the whole \emph{ensemble} of trials. He then considers two simultaneously realized real-valued random variables \(X\) and \(Y\), with laws \(L\) and \(L'\), and writes \(F\) and \(G\) for their cumulative distribution functions \textit{(les fonctions de répartition)}. His first definition takes the distance between the laws \(L,L'\) to be the lower bound of the chosen global distance over all admissible bivariate laws \(H(x,y)\) having margins \(F\) and \(G\) \cite{frechet1957distance}. In this respect, Fr\'{e}chet's point of departure is closer to L\'evy's comparison between distances on random variables and distances on laws than to L\'evy's two direct curve-geometric definitions.

In modern schematic notation, one may summarize Fr\'{e}chet's first definition as
\begin{equation}
(L,L')=\inf_H ([X],[Y])_H,
\end{equation}
where \(H\) ranges over the admissible bivariate cumulative distribution functions with the prescribed margins. Here \emph{admissible} means exactly that the pair \(X,Y\) has the given separate laws \(L,L'\), or equivalently that the bivariate cumulative distribution function \(H\) has margins \(F\) and \(G\). Fr\'{e}chet explicitly says that one may call \(F\) and \(G\) the \emph{margins} of the correlation table defined by \(H\).
In modern optimal transport language, these admissible bivariate laws are precisely \emph{couplings} of the given marginal laws.

The paper insists that this first definition depends on the prior choice of global distance between random numbers, and Fr\'{e}chet lists several possibilities \cite{frechet1957distance}. In this way the paper begins not with one canonical distance between laws, but with a general scheme that turns a previously chosen global distance between simultaneous random numbers into a distance between their laws.

At the same time Fr\'{e}chet recalls the extremal bounds for admissible bivariate laws. The set of admissible \(H(x,y)\) is exactly the set of bivariate cumulative distribution functions lying between
\begin{equation}
H_0(x,y)=\sup\{F(x)+G(y)-1,0\}
\end{equation}
and
\begin{equation}
H_1(x,y)=\inf\{F(x),G(y)\},
\end{equation}
with endpoints included \cite{frechet1957distance}. A short proof of these bounds is included in \cref{app:frechet-bounds}.

The most detailed part of the article treats the case where the chosen global distance is the quadratic mean deviation\,\footnote{\,Fr\'{e}chet adds (originally in French): ``To be precise, we write $\mathfrak M_H$ to recall that it is the mean \textit{(la moyenne)} when the cumulative distribution function \textit{(la fonction de répartition)} of the pair $(X,Y)$ is $H(x,y)$.''}
\begin{equation}
\sqrt{\mathfrak M_H(X-Y)^2},
\end{equation}
where \(\mathfrak M_H\) indicates the mean taken under the admissible joint law \(H\). In words, this is the root mean square discrepancy between the two simultaneously realized random numbers. The choice is natural because it turns the pointwise discrepancy \(X-Y\) over individual trials into one global quantity over the whole ensemble of trials, and because the square leads to a decomposition in terms of means, quadratic mean deviations, and correlation. Assuming finite second moments, Fr\'{e}chet writes the means of \(X\) and \(Y\) as \(a\) and \(a'\), their quadratic mean deviations as \(\sigma\) and \(\sigma'\), and introduces the reduced variables
\begin{equation}
Z=\frac{X-a}{\sigma},
\qquad
T=\frac{Y-a'}{\sigma'}.
\end{equation}
He then derives the identities
\begin{equation}
\mathfrak M_H(X-Y)^2=(a-a')^2+\sigma^2-2\sigma\sigma' r+\sigma'^2,
\end{equation}
where \(r\) is the linear correlation coefficient of \(X\) and \(Y\). Since \(a,a',\sigma,\sigma'\) are fixed by the separate laws, minimizing the quadratic mean deviation over admissible \(H\) reduces to maximizing \(r\). If \(\rho\) denotes the upper bound of that correlation coefficient over all admissible bivariate laws, then
\begin{equation}
(L,L')=\sqrt{(a-a')^2+\sigma^2-2\sigma\sigma'\rho+\sigma'^2}.
\end{equation}
These correspond to \cite[Eqs.~(5)--(6)]{frechet1957distance}.
This is the route by which the paper passes from a coupling problem to an explicit distance formula.

Fr\'{e}chet then recalls the work of Bass and Dall'Aglio showing that, in the quadratic case, the extremum is attained for \(H\equiv H_1\) under suitable hypotheses. Under the maximal admissible law \(H_1\), the pair \((X,Y)\) is no longer arbitrary. In the continuous strictly monotone case, if \(G(\lambda(x))=F(x)\), then under \(H_1\) one has almost surely (i.e., with probability \(1\)), the functional relation \(Y=\lambda(X)\), and Fr\'{e}chet sets
\begin{equation}
\Delta^2=\mathfrak M_{H_1}(X-Y)^2.
\end{equation}
The quantity \(\Delta\) is therefore the value of the chosen global distance at the distinguished admissible law \(H_1\). He then obtains
\begin{equation}
\Delta^2=\int_{-\infty}^{+\infty}[\lambda(x)-x]^2\,\mathrm{d}F(x),
\end{equation}
and, equivalently,
\begin{equation}
\Delta^2=\int_{-\infty}^{+\infty}[\mu(y)-y]^2\,\mathrm{d}G(y).
\end{equation}
He then rewrites the same quantity as the distance between the two laws \cite[Eq.~(12)]{frechet1957distance}:
\begin{equation}
(L,L')^2=\int_{-\infty}^{+\infty}[\lambda(x)-x]^2\,\mathrm{d}F(x).
\end{equation}
If the laws are written at common probability levels by increasing functions \(l,m:[0,1]\to\mathbb R\), he writes \cite[Eq.~(13)]{frechet1957distance}:
\begin{equation}
\label{eq:frechet1957_13}
(L,L')^2=\int_0^1 [l(t)-m(t)]^2\,\mathrm{d}t.
\end{equation}
The same quantity is thus expressed directly in terms of common probability levels. Later the paper records the corresponding formula for general exponents \(\alpha\ge1\),
\begin{equation}
\label{eq:frechet1957_17}
(L,L')^\alpha=\int_0^1 |l(t)-m(t)|^\alpha\,\mathrm{d}t.
\end{equation}
This is the corresponding \(\alpha\)-version \cite[Eq.~(17)]{frechet1957distance})
At this point one can already see a formal resemblance to the curve-theoretic branch discussed earlier. In both cases one compares two continuous traversals and measures their discrepancy along a parameter. The crucial difference lies in the role of that parameter. For curves in the 1906 sense \cite{frechet1906quelques}, the parametrizations are auxiliary and may be varied independently through admissible reparametrizations; the distance is then obtained by optimizing over those choices. Here, by contrast, the parameter \(t\in[0,1]\) represents a common probability level, so the two functions \(l\) and \(m\) are compared at the same value of \(t\). The coupling information is already encoded in this shared parametrization, rather than being optimized away by separate reparametrizations. This observation will become important in \cref{sec:laws-as-curves}, where probability laws are viewed more explicitly through their cumulative distribution and quantile curves.
These are already the familiar one-dimensional quantile formulae, though reached through the language of admissible bivariate laws and extremal correlation.

Fr\'{e}chet is, however, not fully satisfied with L\'evy's first definition. He calls it attractive (in his words \emph{``très séduisante''}), but remarks that its drawback is the need to solve a minimization problem anew for each chosen global distance. He therefore proposes a new definition, his ``fourth definition'' (see the previous footnote\,\textsuperscript{\ref{fn:note}}), obtained by evaluating the global distance directly at the maximal admissible law \(H_1\):
\begin{equation}
(L,L')=([X],[Y])_{H_1}.
\end{equation}
This is from \cite[Eq.~(15)]{frechet1957distance}. Its advantage is that one no longer has to solve a fresh minimization problem for each chosen global distance: the law-distance is defined directly by evaluating that distance at the distinguished extremal law \(H_1\). In the quadratic case, this recovers exactly the quantity \(\Delta\) above.
The final part of the section proves that this quantity is indeed a distance in the sense stated at the beginning of the paper. Symmetry and definiteness are inherited from the corresponding properties of the underlying global distance. For the triangle inequality, Fr\'{e}chet introduces three laws \(L,L',L''\) with cumulative distribution functions \(F,G,K\) and the trivariate distribution
\begin{equation}
E(x,y,z)=\min\{F(x),G(y),K(z)\},
\end{equation}
whose two-dimensional margins are the relevant \(H_1\)-type bounds. The key comparison is from \cite[Eq.~(16)]{frechet1957distance}. In this way he proves the triangle inequality for the new definition.

Historically, the importance of the 1957 paper is therefore twofold. On the one hand, it derives explicit one-dimensional formulae by studying extremal admissible correlations, especially in the quadratic case. On the other hand, it formulates a general coupling-based pattern: start from a global distance between simultaneous random variables, restrict attention to admissible joint laws with prescribed margins, and then either minimize over all such laws or evaluate directly at the extremal law \(H_1\). In that sense, the paper already contains the essential one-dimensional structure that modern optimal transport later reorganizes in a more systematic form.

\subsection{Wasserstein distance}

The modern framework for these ideas is Wasserstein distance and optimal transport \cite{peyre2019computational,santambrogiooptimal,vaserstein1969markov,villani2009optimal}. The basic question is the following: if one law $\mu$ is thought of as a mass distribution on $\mathcal X$ and another law $\nu$ as a target mass distribution, what is the least possible cost of rearranging the first into the second?

Historically, the older formulation is Monge's problem. One looks for a measurable transport map
\begin{equation}
T:\mathcal X\to\mathcal X
\end{equation}
such that
\begin{equation}
T_{\#}\mu=\nu,
\end{equation}
meaning that the image of the law \(\mu\) under \(T\) (i.e., the law obtained after transporting $\mu$ through the map $T$) is exactly the law $\nu$. If moving one unit of mass from $x$ to $T(x)$ costs $c(x,T(x))$, then the total cost of the map $T$ is
\begin{equation}
\int_{\mathcal X} c(x,T(x))\,\mathrm{d}\mu(x).
\end{equation}

Monge's problem is therefore to minimize this integral over all measurable maps $T$ satisfying $T_{\#}\mu=\nu$. This formulation is geometrically direct, but it is rigid: each point $x$ must be sent to a single destination $T(x)$. In this sense, transport is deterministic.

Kantorovich's formulation is a \emph{relaxation} because it enlarges the admissible class from deterministic maps to arbitrary couplings \cite[\S\,2.3]{peyre2019computational}. If one already has a Monge map $T$, then one can form a coupling by taking the joint law of the pair $(x,T(x))$ when $x$ is distributed according to $\mu$; this is the measure $(\mathrm{id},T)_{\#}\mu$, where $\mathrm{id}$ denotes the identity map. A general coupling, however, need not arise from a single map. The passage from Monge to Kantorovich is therefore a passage from \emph{deterministic} transport to \textit{probabilistic (or fuzzy)} transport \cite[\S\,2.3]{peyre2019computational}: instead of assigning each source point one destination, one allows the mass near $x$ to be distributed across several target locations $y$. In this sense, a coupling encodes not a single motion rule $y=T(x)$ but a joint law describing how source mass and target mass are matched. This is the phenomenon Peyr\'e and Cuturi emphasize as \emph{mass splitting} \cite[\S\,2.3]{peyre2019computational}. Instead of deciding where each point must go, one specifies only a joint law
\begin{equation}
\pi\in\Pi(\mu,\nu)
\end{equation}
on $\mathcal X\times\mathcal X$. Such a coupling may be interpreted as a transport plan: the amount of mass moved from a region near $x$ to a region near $y$ is encoded by the mass that $\pi$ places near $(x,y)$. If moving a unit mass from $x$ to $y$ costs $c(x,y)$, then the total cost of the plan $\pi$ is
\begin{equation}
\int_{\mathcal X\times\mathcal X} c(x,y)\,\mathrm{d}\pi(x,y).
\end{equation}

The Kantorovich problem is to minimize this quantity over all couplings of $\mu$ and $\nu$.

For Wasserstein distance, the cost is the $p$th power of the metric:
\begin{equation}
c(x,y)=d(x,y)^p,\qquad 1\le p<\infty.
\end{equation}

To ensure that this cost is finite, one restricts to laws having finite $p$th moment. Thus, for a fixed point $x_0\in\mathcal X$, one sets
\begin{equation}
\mathcal P_p(\mathcal X)
:=
\left\{
\mu\in\mathcal P(\mathcal X):
\int_{\mathcal X} d(x,x_0)^p\,\mathrm{d}\mu(x)<\infty
\right\}.
\end{equation}

The particular choice of $x_0$ does not matter: if the integral is finite for one point, then it is finite for every other point as well, by the triangle inequality. For $\mu,\nu\in\mathcal P_p(\mathcal X)$, the Wasserstein-$p$ distance is then defined by
\begin{equation}
W_p(\mu,\nu)
=
\left(
\inf_{\pi\in\Pi(\mu,\nu)}
\int_{\mathcal X\times\mathcal X} d(x,y)^p\,\mathrm{d}\pi(x,y)
\right)^{1/p}.
\end{equation}

This is the Kantorovich primal formulation. When $\mathcal X=\mathbb R$ and $d(x,y)=|x-y|$, it is the modern version of the coupling-based construction studied by Fr\'{e}chet in 1957.
It is called \emph{primal} because this formula optimizes directly over transport plans. It asks for the cheapest admissible way to match the mass of $\mu$ with the mass of $\nu$. In discrete problems this becomes the classical linear transport problem: one distributes mass from source bins to target bins and minimizes the total shipping cost.

The same quantity also admits a dual description. For a lower semicontinuous cost $c$, meaning in particular that the cost behaves well under limits of minimizing sequences, Kantorovich duality states that
\begin{equation}
\inf_{\pi\in\Pi(\mu,\nu)}\int c(x,y)\,\mathrm{d}\pi(x,y)
=
\sup_{\substack{f\in L^1(\mu),\,g\in L^1(\nu)\\ f(x)+g(y)\le c(x,y)}}
\left(
\int_{\mathcal X} f\,\mathrm{d}\mu+\int_{\mathcal X} g\,\mathrm{d}\nu
\right).
\end{equation}

The infimum on the left ranges over transport plans; the supremum on the right ranges over pairs of integrable functions constrained by
\begin{equation}
f(x)+g(y)\le c(x,y)\qquad \text{for all }x,y\in\mathcal X.
\end{equation}

The Lebesgue\,\footnote{\,Henri Lebesgue, one of Borel's PhD students, corresponded with Fr\'{e}chet on mathematical questions while Fr\'{e}chet was working on his doctoral thesis; four of their letters were later published \cite{taylor1981quatre}.} integrability conditions $f\in L^1(\mu)$ and $g\in L^1(\nu)$ mean that the first powers of $|f|$ and $|g|$ are integrable with respect to the marginal laws, so that the quantities $\int f\,\mathrm{d}\mu$ and $\int g\,\mathrm{d}\nu$ are well defined. The pair $(f,g)$ is often called a pair of Kantorovich potentials. The constraint means that the two functions may never jointly assign a value larger than the transport cost between $x$ and $y$. The dual problem gives a function-theoretic perspective on the same transport question, replacing transport plans by admissible functions.

For $p=1$, the dual becomes especially concrete. In this case the cost is the ambient metric \(c(x,y)=d(x,y)\). The general dual starts from two functions constrained by
\(
f(x)+g(y)\le d(x,y).
\)
Because the cost is now the metric itself, this constraint can be encoded by a single test function: instead of keeping track of two separate potentials, one may describe the same optimization in terms of one function whose variation between \(x\) and \(y\) is never larger than the distance \(d(x,y)\). That is exactly the \(1\)-Lipschitz condition. The resulting one-function form is the Kantorovich--Rubinstein formula
\begin{equation}
W_1(\mu,\nu)
=
\sup_{\|f\|_{\mathrm{Lip}}\le 1}
\left(
\int_{\mathcal X} f\,\mathrm{d}\mu-\int_{\mathcal X} f\,\mathrm{d}\nu
\right),
\end{equation}
where the supremum ranges over all $1$-Lipschitz functions on $\mathcal X$. A function $f:\mathcal X\to\mathbb R$ is called $1$-Lipschitz if
\begin{equation}
|f(x)-f(y)|\le d(x,y)\qquad \text{for all }x,y\in\mathcal X.
\end{equation}

Thus such a function is allowed to vary, but never faster than the metric itself permits. The dual formula says that \(W_1\) is the largest possible discrepancy between the expectations of a test function whose oscillation is controlled in exactly this way. We refer to \cite[\S6.1]{peyre2019computational} for a detailed derivation.

For $p=2$, the dual is less elementary, but in Euclidean settings it may be written in terms of $c$-convex potentials \cite{peyre2019computational}. Here $c$-convexity means convexity adapted to the cost function \(c(x,y)\), just as ordinary convexity is adapted to the linear cost \(x\cdot y\). The precise formalism is more elaborate, but the guiding point is the same: the transport problem can be studied either by plans or by suitable potentials.

The dual point of view is useful because it replaces an optimization over couplings on $\mathcal X\times\mathcal X$ by an optimization over functions on $\mathcal X$. It also reveals which observables distinguish the two laws. Here a \emph{test function} means a real-valued function \(f\) used to probe the laws through the quantities \(\int f\,\mathrm{d}\mu\) and \(\int f\,\mathrm{d}\nu\). By contrast, an indicator function of an event \(A\subseteq\mathcal X\) takes only the values \(0\) and \(1\), so that \(\int \mathbf 1_A\,\mathrm{d}\mu=\mu(A)\) records only the probability of that event. In particular, the dual for \(W_1\) shows that Wasserstein distance measures how differently the two laws integrate all Lipschitz test functions, and therefore captures not only differences in event probabilities but also the geometry of how mass is arranged in the space.

In one dimension the theory becomes especially explicit. If $F_\mu,F_\nu$ are the cumulative distribution functions of $\mu,\nu$ and $Q_\mu,Q_\nu$ their quantile functions, then following \cite[Remark 2.29]{peyre2019computational}
\begin{equation}
W_p(\mu,\nu)^p
=
\int_0^1 |Q_\mu(t)-Q_\nu(t)|^p\,\mathrm{d}t.
\end{equation}

We revisit \cref{sec:frechet1957}:
\begin{equation*}\tag*{(\ref{eq:frechet1957_13}, revisited)}
(L,L')^2=\int_0^1 [l(t)-m(t)]^2\,\mathrm{d}t,
\end{equation*}
and
\begin{equation*}\tag*{(\ref{eq:frechet1957_17}, revisited)}
(L,L')^\alpha=\int_0^1 |l(t)-m(t)|^\alpha\,\mathrm{d}t.
\end{equation*}
Thus Fr\'{e}chet's formulae from \cite{frechet1957distance}, written here as \cref{eq:frechet1957_13} and \cref{eq:frechet1957_17}, are, in modern notation, precisely the one-dimensional Wasserstein identities for $p=2$ and for general $p=\alpha$. The same family also has a limiting endpoint at \(p=\infty\): one may define
\begin{equation}
W_\infty(\mu,\nu)
:=
\inf_{\pi\in\Pi(\mu,\nu)}
\operatorname*{ess\,sup}_{(x,y)\sim\pi} d(x,y),
\end{equation}
which is the smallest possible worst-case displacement among all couplings. Unlike the finite-\(p\) distances, this endpoint quantity has some distinctive features of its own and is often treated on bounded-support classes such as \(\mathcal P_\infty(\mathcal X)\).
We refer to \cite{champion2008wasserstein,santambrogiooptimal} for more details.

In summary, Wasserstein distance places Fr\'{e}chet's 1957 formulae into a broader framework: probability laws are compared by optimizing over couplings, the one-dimensional quantile identities become special cases of a general transport theory, and the quadratic expression studied by Fr\'{e}chet appears as the \(p=2\) member of this family. The comparison with curve-theoretic Fr\'{e}chet distance will be taken up separately in \cref{sec:laws-as-curves}.

\subsection{Special case of Gaussians}
\label{sec:gaussian-laws}
The Gaussian case is the most important special case for later applications, but it is best understood as one point on a longer line of development. In Fr\'{e}chet's 1957 paper, the quadratic global distance already yields particularly simple formulas whenever the two laws have the same reduced probability law and finite quadratic mean deviation. 
Fr\'{e}chet gives as examples not only the normal law but also the Laplace law and the uniform law; see Eq. (14) and the discussion immediately following it in \cite{frechet1957distance}. Thus the underlying phenomenon is already broader than the Gaussian family: once the laws differ only by location and scale inside one common standardized family, the quadratic distance is determined by those first two parameters.

In Wasserstein's notation, Fr\'{e}chet's one-dimensional quadratic identity for two Gaussian laws
\begin{equation}
\mu=\mathcal N(m_1,\sigma_1^2),\qquad \nu=\mathcal N(m_2,\sigma_2^2)
\end{equation}
reduces to
\begin{equation}
W_2^2(\mu,\nu)=(m_1-m_2)^2+(\sigma_1-\sigma_2)^2.
\end{equation}
This is fully consistent with the quantile formula, since
\begin{equation}
Q_\mu(t)=m_1+\sigma_1\Phi^{-1}(t),\qquad
Q_\nu(t)=m_2+\sigma_2\Phi^{-1}(t),
\end{equation}
and the $L^2([0,1])$ norm of their difference gives the same expression. Peyr\'e and Cuturi summarize this in Remark~2.30 by observing that the one-dimensional Gaussian family may be identified with the half-plane \(\{(m,\sigma):\sigma>0\}\) endowed with the Euclidean metric \cite{peyre2019computational}. In this sense, the quadratic transport geometry of one-dimensional Gaussians is already as flat and explicit as possible.

Dowson and Landau extend this picture from Fr\'{e}chet's 1957 one-dimensional setting to multivariate normal laws \cite{dowson1982frechet}. If
$
\mu=\mathcal N(m_1,\Sigma_1),\, \nu=\mathcal N(m_2,\Sigma_2)
$
are Gaussian laws on $\mathbb R^d$, then the quadratic Wasserstein distance admits the closed form
\begin{equation}
W_2^2(\mu,\nu)
=
\|m_1-m_2\|^2
+
\operatorname{Tr}\!\left(
\Sigma_1+\Sigma_2-2\left(\Sigma_1^{1/2}\Sigma_2\Sigma_1^{1/2}\right)^{1/2}
\right).
\end{equation}
In dimensions \(d>1\), this closed form comes from optimal transport geometry and the special affine-radial structure of the Gaussian family, not from any direct higher-dimensional analogue of the one-dimensional CDF/quantile identities discussed later in \cref{sec:laws-as-curves}.
Dowson and Landau present this explicitly as the Fr\'echet distance between multivariate normal distributions. Their starting point is again Fr\'{e}chet's quadratic definition
$
d(F,G)^2=\inf \mathbb E\|X-Y\|^2
$
over all couplings with the prescribed laws, now interpreted for random vectors. They show that the mean part separates cleanly from the covariance part, and they emphasize that
\begin{equation}
d_\Sigma(\Sigma_1,\Sigma_2)^2
:=
\operatorname{Tr}\!\left(
\Sigma_1+\Sigma_2-2\left(\Sigma_1^{1/2}\Sigma_2\Sigma_1^{1/2}\right)^{1/2}
\right)
\end{equation}
is itself a natural metric on covariance matrices. In modern optimal transport language, this covariance term is the Bures metric; see again \cite[Remark~2.30]{peyre2019computational}.

Why does a closed form appear here? The decisive point is not merely the finiteness of the quadratic mean deviation. Finite second moments are necessary in order for the quadratic transport problem to make sense, but they are not by themselves enough to force an explicit formula in general. What makes the Gaussian and related cases tractable is that one is not comparing arbitrary laws with given moments, but laws inside one fixed \textit{affine-radial} family. In one dimension, Fr\'{e}chet already sees this for laws with the same reduced distribution. In higher dimensions, Dowson and Landau show that the same phenomenon persists for multivariate normal laws.

Their paper also points beyond the Gaussian case. \cite[Eq. (16)]{dowson1982frechet} remains valid for wider multivariate families of \emph{elliptically contoured} laws, and they mention in particular uniform distributions on ellipsoids. These examples are important here because they are bounded. The closed form is therefore not a peculiarity of Gaussian tails: it already appears for compactly supported laws whose level sets are ellipsoids.

Gelbrich's 1990 paper places this formula in the broader Wasserstein framework \cite{gelbrich1990formula}. One of its central messages is that the Gaussian expression is not only an exact formula in special families, but also a universal lower bound controlled only by first and second moments. More precisely, if $\mu,\nu$ are arbitrary laws on $\mathbb R^d$ with means $m_1,m_2$ and covariance matrices $\Sigma_1,\Sigma_2$, then
\begin{equation}\label{eq:gelbrich-lower-bound}
W_2^2(\mu,\nu)
\ge
\|m_1-m_2\|^2
+
\operatorname{Tr}\!\left(
\Sigma_1+\Sigma_2-2\left(\Sigma_1^{1/2}\Sigma_2\Sigma_1^{1/2}\right)^{1/2}
\right).
\end{equation}
Thus among all laws with prescribed means and covariances, the Gaussian pair yields the smallest possible quadratic transport cost. This gives the Gaussian formula a double role: it is exact for Gaussians, and extremal for all laws with the same first two moments.

Gelbrich then proves something stronger for related elliptically contoured families. If two laws share a common radial profile and differ only by center and affine deformation, then equality holds in \cref{eq:gelbrich-lower-bound}. Concretely, one may think of such a family as consisting of densities of the form
\begin{equation}
\rho_i(x)
=
\frac{1}{\sqrt{\det S_i}}\,
h\!\left((x-m_i)^\top S_i^{-1}(x-m_i)\right),
\qquad i=1,2,
\end{equation}
for one fixed non-negative profile \(h:[0,\infty)\to[0,\infty)\) and positive definite shape matrices \(S_1,S_2\). The function \(h\) is the radial profile: it specifies how mass is distributed as a function of generalized radius. The quadratic form \((x-m_i)^\top S_i^{-1}(x-m_i)\) is a generalized squared radius, so the level sets are ellipsoids centered at \(m_i\). Gaussian laws correspond to the exponential profile \(h(t)=c\,e^{-t/2}\), while elliptically bounded laws such as uniform measures on ellipsoids arise from compactly supported profiles \(h\). An elliptically contoured law is therefore not determined by mean and covariance alone; the profile \(h\) matters as well. The closed form is exact when the two laws share the same profile \(h\), in which case the remaining variation is entirely through center and covariance. If the profiles differ, the same expression should in general be regarded only as the Gelbrich lower bound.

This is the precise sense in which finite quadratic mean deviation leads to a closed form here: not for arbitrary distributions, but for families in which second moments interact with a common affine-radial structure. Fr\'{e}chet observes this in one dimension through the common reduced law, Dowson and Landau \cite{dowson1982frechet} work it out for multivariate normals and related elliptic families, Gelbrich \cite{gelbrich1990formula} proves the general lower bound and the corresponding equality cases, and Peyr\'e--Cuturi \cite{peyre2019computational} summarize the result in modern Wasserstein language.
Thus the Gaussian formula belongs to the probability-law branch rather than to the classical Fr\'{e}chet distance on curves. Its connection with Fr\'{e}chet is through the 1957 framework: probability laws are compared via admissible couplings, and in special families the resulting distance admits explicit formulae. This same Gaussian specialization later underlies the Fr\'{e}chet Inception Distance (FID) used in generative-model evaluation \cite{heusel2017gans}; see \cref{sec:fid}.

\section{Probability Laws as Curves}
\label{sec:laws-as-curves}

The most direct way to compare the curve-theoretic and law-theoretic branches is to ask whether a probability law can itself be regarded as a curve. This viewpoint is already explicit in L\'evy's 1950 note, where a law is represented by the curve \(\Gamma\) of its total probability function, that is, its cumulative distribution function, and the first direct definitions compare such curves \cite{frechet1950recherches}. In one dimension, another natural candidate is the quantile function. If $\mu\in\mathcal P(\mathbb R)$ has cumulative distribution function $F_\mu$, define
\begin{equation}
Q_\mu(t):=\inf\{x\in\mathbb R:F_\mu(x)\ge t\},\qquad 0<t<1,
\end{equation}
which is the usual left-continuous quantile. One may then ask whether distances between laws can be recovered from distances between the corresponding quantile curves.

To answer this cleanly, one must keep three distinct constructions separate. The first is the classical Fr\'{e}chet distance on curves, which treats parametrization as auxiliary and optimizes over admissible increasing reparameterizations. The second is Wasserstein distance on laws, which compares quantiles at common probability levels and therefore treats the parameter \(t\) as part of the data. The third is the Fr\'{e}chet distance between planar graphs built from those quantiles. These are related constructions, but they are not generally the same.

The basic obstruction already appears if one treats quantile functions merely as curves in \(\mathbb R\). Let \(\mu=\mathrm{Unif}[0,1]\) and let \(\nu\) be the law with quantile \(Q_\nu(t)=\sqrt t\). Then \(Q_\mu(t)=t\) and \(Q_\nu(t)=\sqrt t\) trace the same ordered image after the increasing reparameterization \(t\mapsto t^2\). As curves in Fr\'{e}chet's 1906 sense, they therefore coincide, and their classical curve Fr\'{e}chet distance is \(0\). The underlying laws, however, are different, and
\begin{equation}
W_p(\mu,\nu)^p=\int_0^1 |t-\sqrt t|^p\,\mathrm{d}t>0.
\end{equation}
Thus the raw curve Fr\'{e}chet distance on quantile functions is too coarse to define a distance on probability laws: on the curve side the parameter may be optimized away, whereas on the law side it is the probability level itself.

\begin{center}
\begin{summarybox}
\textbf{Summary.}
Treating \(Q_\mu\) and \(Q_\nu\) as ordinary curves in \(\mathbb R\) is too coarse for probability laws. Classical Fr\'{e}chet distance may identify them after an increasing reparameterization even when \(\mu\neq\nu\), because on the curve side the parameter is auxiliary, while on the law side it is the probability level.
\end{summarybox}
\end{center}

Once this is made explicit, the exact one-dimensional bridge is straightforward. If the probability level is kept fixed, then the Wasserstein distance from \cref{sec:laws} is precisely the \(L^p\) distance between quantile functions:
\begin{equation}\label{eq:Wp-quantile-curves}
W_p(\mu,\nu)^p
=
\int_0^1 |Q_\mu(t)-Q_\nu(t)|^p\,\mathrm{d}t,
\qquad 1\le p<\infty.
\end{equation}
This is the same one-dimensional quantile identity already stated in \cref{sec:laws}; here the point is that it may be read as a statement about curves, provided one does not allow reparameterization. In this sense, one-dimensional optimal transport compares laws by comparing curves pointwise at common probability levels rather than modulo reparameterization. The quantity in \cref{eq:Wp-quantile-curves} is therefore not a Fr\'{e}chet distance in the 1906 sense, but an \(L^p\) discrepancy on a fixed parameter space.

\begin{figure}[t]
\centering
\includegraphics[width=0.8\linewidth]{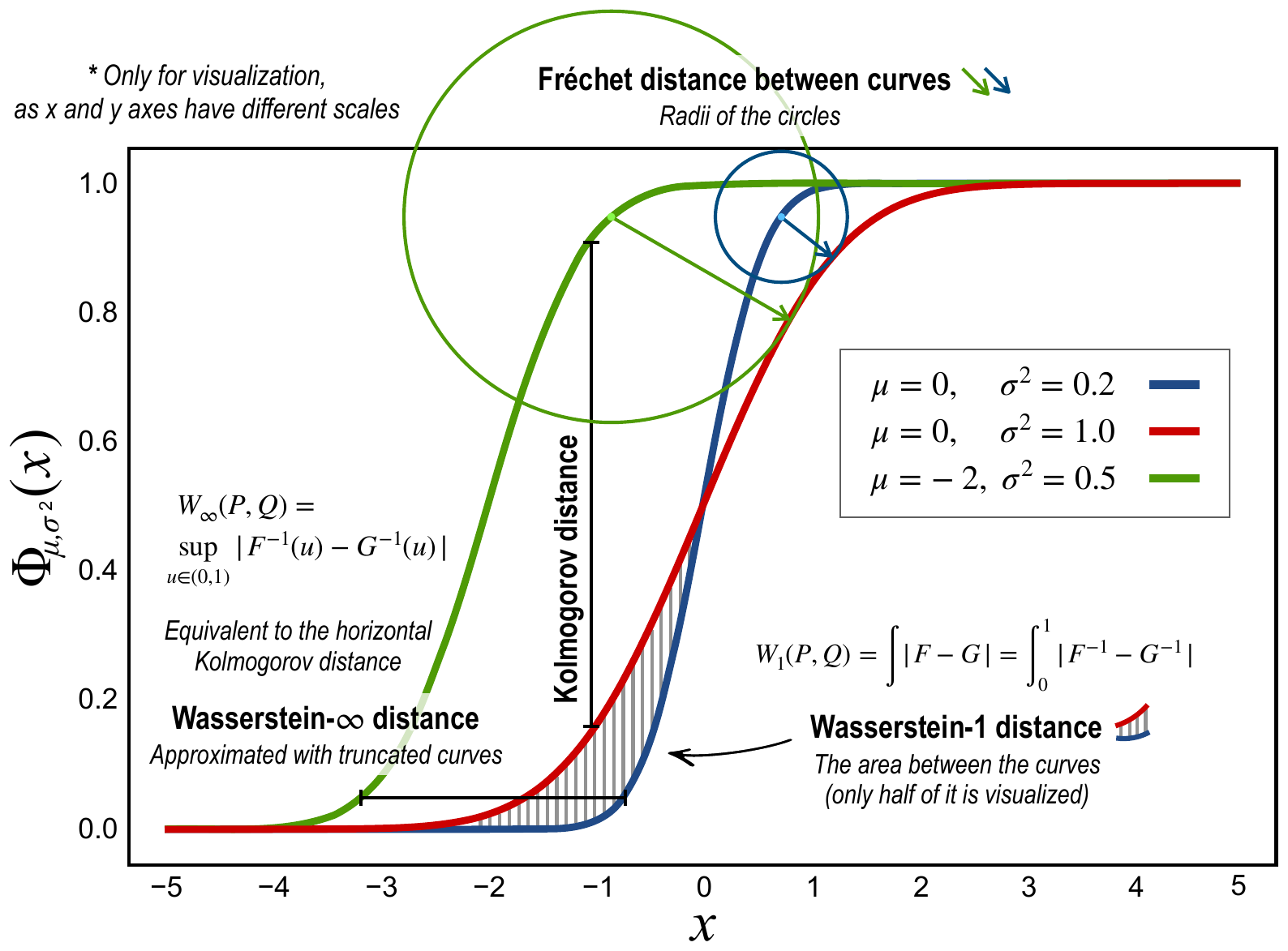}
\caption{Geometric illustration of the CDFs of three one-dimensional Gaussian distributions, together with the Fr\'{e}chet and Kolmogorov distances between the curves, and the Wasserstein-\(1\) and Wasserstein-\(\infty\) distances between the corresponding probability measures. The Kolmogorov distance is shown as the maximum vertical gap between two CDFs, while the Wasserstein-\(1\) distance is illustrated by the area between them, of which only half is visualized. The Wasserstein-\(\infty\) distance is depicted as the horizontal Kolmogorov distance, equivalently the supremum distance between quantile functions, and is approximated here using truncated curves. The Fr\'{e}chet distance between curves is indicated by circles whose radii equal the distance. The figure is schematic only, since the \(x\)- and \(y\)-axes are plotted on different scales.}
\label{fig:3}
\end{figure}

In the limiting case, whenever a bounded-displacement coupling exists, one likewise has
\begin{equation}\label{eq:Winfty-quantile-curves}
W_\infty(\mu,\nu)=\|Q_\mu-Q_\nu\|_{L^\infty(0,1)}.
\end{equation}
On the line, this reflects the fact that the monotone coupling remains optimal for the \(L^\infty\) problem as well \cite{santambrogiooptimal}; in higher dimension, by contrast, the \(W_\infty\) problem has its own specific structure \cite{champion2008wasserstein}. Thus \(W_\infty\) is the closest law-theoretic analogue of a worst-case curve comparison, but it is still not the classical Fr\'{e}chet distance, because the optimization is over couplings of laws rather than over reparameterizations of one and the same curves.

The case \(p=1\) is especially suggestive geometrically. In one dimension,
\begin{equation}\label{eq:W1-cdf-quantile}
W_1(\mu,\nu)
=
\int_{\mathbb R}|F_\mu(x)-F_\nu(x)|\,\mathrm{d}x
=
\int_0^1 |Q_\mu(t)-Q_\nu(t)|\,\mathrm{d}t.
\end{equation}
The first integral is the area between the two cumulative distribution functions when plotted against \(x\), while the second is the area between the two quantile curves when plotted against probability level \(t\). These are not merely analogous pictures: in one dimension they are equal expressions for the same transport distance \cite{peyre2019computational,santambrogiooptimal}. A primal-only proof of \cref{eq:W1-cdf-quantile}, based on a coupling lower bound and the monotone quantile coupling, is recorded in \cref{app:W1-quantile-primal}.

This is also where the one-dimensional picture stops. In higher dimensions there is no comparably direct identity expressing \(W_1\) as an area between multivariate cumulative distribution surfaces, and there is no single scalar quantile function restoring the same monotone matching. Beyond one dimension, Wasserstein distance is governed instead by optimal transport couplings, or equivalently by the Kantorovich--Rubinstein dual formulation.

If one nevertheless wishes to preserve a genuine curve distance, the natural modification is to embed the quantile function as a graph in the plane. Thus, for a law \(\mu\) on \(\mathbb R\), define
\begin{equation}
\Gamma_\mu(t):=(t,Q_\mu(t)),\qquad 0<t<1.
\end{equation}
The first coordinate records the probability level, so the parametrization is now built into the geometry. This brings the two theories much closer: \(\Gamma_\mu\) is an ordered planar curve, while its second coordinate still carries the quantile information relevant for transport.

Under the identity matching one obtains
\begin{equation}
d_F(\Gamma_\mu,\Gamma_\nu)
\le
\sup_{0<t<1}\|(t,Q_\mu(t))-(t,Q_\nu(t))\|,
\end{equation}
where the norm is whichever norm is chosen on the ambient plane. This is only an upper bound coming from the specific matching \(t\mapsto t\); it is not an exact formula for the graph Fr\'{e}chet distance in general. If the ambient metric is the product sup metric
\begin{equation}
d_\infty\bigl((s,x),(t,y)\bigr):=\max\{|s-t|,|x-y|\},
\end{equation}
then the right-hand side reduces to \(\|Q_\mu-Q_\nu\|_{L^\infty(0,1)}=W_\infty(\mu,\nu)\). In this sense, graph embedding places Fr\'{e}chet distance on curves and \(W_\infty\) on laws very close to one another, but only after the probability parameter has been incorporated into the geometric object itself.

Even then, one must keep the distinction clear. The classical Fr\'{e}chet distance between \(\Gamma_\mu\) and \(\Gamma_\nu\) still permits reparameterization of the planar graphs, whereas the Wasserstein comparison keeps the probability level fixed. For that reason, graph Fr\'{e}chet distance is best viewed as a geometric surrogate or companion quantity rather than as an exact restatement of Wasserstein distance. Exact equality occurs only in the degenerate case where the parameter is not optimized away at all: if the graphs are compared pointwise at common \(t\), then the vertical discrepancy is exactly \(W_\infty(\mu,\nu)\), and the corresponding \(L^p\) discrepancy is exactly \(W_p(\mu,\nu)\).

The three viewpoints may therefore be summarized as follows. Raw quantile functions treated as curves in \(\mathbb R\) are too coarse, because increasing reparameterizations can identify distinct laws. Quantile functions compared at fixed probability levels recover Wasserstein distance exactly. Planar graph embeddings produce honest curve objects to which Fr\'{e}chet distance may be applied again, but now only as a nearby geometric construction rather than as an exact transport identity.

\cref{fig:3} illustrates these distinctions on a concrete one-dimensional Gaussian family. As CDF graphs, the three laws may be compared by curve-based quantities such as Fr\'{e}chet distance or by the Kolmogorov sup norm. As probability laws, the same family may be compared by \(W_1\) and \(W_\infty\), which record transport discrepancy rather than only geometric separation between graphs. These quantities can agree qualitatively on which pair is farther apart, but they measure different structures: Fr\'{e}chet and Kolmogorov compare curves, whereas Wasserstein compares laws.

For laws with unbounded support, this distinction becomes even sharper. For Gaussian laws, discussed already in \cref{sec:gaussian-laws}, the quantile functions
\begin{equation}
Q_\mu(t)=m_1+\sigma_1\Phi^{-1}(t),\qquad
Q_\nu(t)=m_2+\sigma_2\Phi^{-1}(t)
\end{equation}
are unbounded as \(t\to0\) or \(t\to1\), and their difference is unbounded unless the two laws have sufficiently closely matched tails, as in the equal-variance pure mean-shift case. The full quantile graphs are therefore often unbounded, and naive curve distances on those graphs are not the right objects. A more reasonable comparison is obtained by truncating to a compact probability interval \([\varepsilon,1-\varepsilon]\), with \(0<\varepsilon<\tfrac12\), and considering
\begin{equation}
\Gamma_\mu^\varepsilon(t):=(t,Q_\mu(t)),\qquad
\Gamma_\nu^\varepsilon(t):=(t,Q_\nu(t)),
\qquad t\in[\varepsilon,1-\varepsilon].
\end{equation}
On this truncated interval,
\begin{equation}
\|Q_\mu-Q_\nu\|_{L^p([\varepsilon,1-\varepsilon])}
\end{equation}
is the truncated quantile discrepancy, while
\begin{equation}
\sup_{t\in[\varepsilon,1-\varepsilon]}|Q_\mu(t)-Q_\nu(t)|
\end{equation}
is its worst-case version. These are truncated comparison quantities on \([\varepsilon,1-\varepsilon]\), not the full Wasserstein distances themselves; they become useful geometric surrogates only after one has separately decided to discard or control the tails.

\begin{center}
\begin{summarybox}
\textbf{Summary.}
For laws on \(\mathbb R\), the exact identity is
\begin{equation}
W_p(\mu,\nu)^p=\int_0^1 |Q_\mu(t)-Q_\nu(t)|^p\,\mathrm{d}t.
\end{equation}

Thus Wasserstein distance may be read as a distance between quantile curves only when the common probability parameter is kept fixed. Embedding quantiles as planar graphs \(\Gamma_\mu(t)=(t,Q_\mu(t))\) brings the law and curve viewpoints closer, but the resulting graph Fr\'{e}chet distance is, in general, a related geometric surrogate rather than an exact restatement of Wasserstein distance.
\end{summarybox}
\end{center}

The overall conclusion is that the curve and law viewpoints meet most naturally in one dimension, through quantiles. At the most faithful level, Wasserstein distance already gives an exact comparison of quantile curves, provided they are compared at the same probability level. At a more geometric level, embedding quantiles as planar graphs produces objects to which the classical Fr\'{e}chet distance can be applied, but then one has moved from an exact law-distance identity to a related curve-distance construction. The two branches therefore coincide exactly when the parameter \(t\) is retained as part of the data, and they diverge as soon as one allows it to be optimized away in the manner of the classical Fr\'{e}chet distance on curves.

\section{Fr\'{e}chet Inception Distance}
\label{sec:fid}

The most visible contemporary use of Fr\'{e}chet's name in machine learning is the Fr\'{e}chet Inception Distance (FID), introduced in \cite{heusel2017gans}. In its standard form, one begins with a feature extractor $\varphi$ defined on images and maps both real and generated samples into a Euclidean feature space $\mathbb R^d$. If the resulting feature laws are denoted by $\mu$ and $\nu$, then FID replaces these laws by Gaussian approximations with matching means and covariances,
\begin{equation}
\mu_G=\mathcal N(m_\mu,\Sigma_\mu),\qquad
\nu_G=\mathcal N(m_\nu,\Sigma_\nu),
\end{equation}
and evaluates the closed-form Gaussian $W_2$ distance
\begin{equation}
\operatorname{FID}(\mu,\nu)
=
\|m_\mu-m_\nu\|_2^2
+
\operatorname{Tr}\!\left(
\Sigma_\mu+\Sigma_\nu
-2(\Sigma_\mu^{1/2}\Sigma_\nu\Sigma_\mu^{1/2})^{1/2}
\right).
\end{equation}

Thus FID belongs to the probability-law branch rather than to the classical curve-theoretic branch. What it inherits from the preceding sections is not the geometric Fr\'{e}chet distance on curves, but the coupling-based and Wasserstein-like viewpoint in which one measures discrepancy between probability laws through a metric on a space of distributions; see in particular \cref{sec:laws}.

This observation matters historically as well as technically. The word ``Fr\'{e}chet'' in FID does not arise from polygonal curves, reparameterizations, or free-space diagrams. It arrives instead through the Gaussian specialization of the distance between laws, as later recognized in the modern optimal transport framework. At the same time, this use case imposes additional approximations that are absent from the mathematical definition of $W_2$: the feature map is fixed externally, the true feature laws are replaced by empirical laws, and these empirical laws are then summarized by Gaussian moments alone. FID is therefore best viewed as a computationally convenient Gaussian proxy for a Wasserstein-type distance in feature space, rather than as an exact law-level distance between the underlying image distributions. This explains both its usefulness and its limitations: the Gaussian formula is explicit and inexpensive, but it retains only second-order information in a chosen representation space, and its numerical value depends strongly on preprocessing, feature extraction, and finite-sample estimation \cite{binkowski2018demystifying,parmar2022aliased}. From the historical perspective of the present note, however, FID still marks a genuine continuation of Fr\'{e}chet's 1957 line, especially as developed in \cref{sec:gaussian-laws}: a distance between laws, expressed through couplings in principle and by explicit formulae in special families.

The later critique literature shows that these limitations are not cosmetic, but structural. First, FID replaces the full feature laws by Gaussians, so it can only see discrepancies that survive at the level of the first two moments. This is mathematically convenient, but it means that distinct non-Gaussian feature laws may receive the same score, and it also helps explain why kernel alternatives such as KID were proposed as nonparametric substitutes with unbiased finite-sample estimators \cite{binkowski2018demystifying}. Second, the empirical FID computed from finitely many generated samples is itself biased: the expected finite-sample score can differ from the population value, the bias can depend on the model being evaluated, and raw FID comparisons at a fixed sample count can therefore reverse model rankings \cite{chong2020effectively}. Third, FID is unexpectedly sensitive to low-level implementation details that are ostensibly outside the intended semantic comparison, including image resizing kernels, antialiasing conventions, and JPEG compression; in extreme cases, these choices can even improve the reported FID after additional image degradation \cite{parmar2022aliased}. Fourth, FID compresses fidelity and diversity into a single scalar, which is one reason later work introduced precision--recall-style diagnostics and density--coverage variants in order to distinguish mode dropping from sample quality more explicitly \cite{kynkaanniemi2019improved,naeem2020reliable,sajjadi2018assessing}. Finally, recent work argues that the very feature space on which FID depends has become less adequate for modern text-to-image models: Inception features can be poorly aligned with the richer semantic content of current models, the Gaussianity assumption is often violated, and FID can disagree with human judgments, fail to track gradual improvements, and behave inconsistently across sample sizes \cite{jayasumana2024rethinking}. Similar practical cautions also arise in domain-specific settings such as retinal image synthesis, where the interpretability of FID depends strongly on the medical relevance of the chosen feature space and on whether clinically meaningful variation is reflected by the score at all \cite{wu2026pragmatic}. Taken together, these critiques do not make FID useless; rather, they clarify its proper status. FID is best regarded as a historically successful, computationally cheap Gaussian proxy whose conclusions are most trustworthy when the feature extractor, preprocessing pipeline, sample regime, and comparison task are all carefully controlled.

\section{Related Distances and Divergences}
\label{sec:other-distances}

The distances discussed so far are not the only ways to compare curves or probability laws. It is therefore useful to place Fr\'{e}chet and Wasserstein distances among a few nearby notions. The point of the present section is not to survey this wider landscape exhaustively, but to indicate which mathematical structures are preserved and which are lost when one moves to neighboring comparisons.
We summarize the related distances and divergences in \cref{tab:related-distances}. We refer to \cite[Fig. 1]{gibbs2002choosing} and \cite[Fig. 8.2]{peyre2019computational} for further diagrams of relationship w.r.t. their bounds between them.

\subsection{Dynamic Time Warping}

Dynamic Time Warping (DTW) is another natural near-neighbor on the curve side \cite{sakoe1978dynamic,hu2023spatio}. For two sampled sequences $(x_1,\dots,x_m)$ and $(y_1,\dots,y_n)$ in a metric space, DTW minimizes an accumulated matching cost over monotone alignments of the index sets. One standard form is
\begin{equation}
\operatorname{DTW}(x,y)
:=
\min_{W\in\mathcal W_{m,n}}
\sum_{(i,j)\in W} d(x_i,y_j),
\end{equation}
where \(\mathcal W_{m,n}\) denotes the set of admissible warping paths from \((1,1)\) to \((m,n)\), usually built from the monotone steps \((1,0)\), \((0,1)\), and \((1,1)\) in the index grid. In contrast with Fr\'{e}chet distance, the objective is additive rather than worst-case: one sums local discrepancies along an admissible warping path instead of minimizing the maximal discrepancy along a continuous matching. DTW is therefore often more forgiving of local oscillations, but it is also less geometric in the sense emphasized by Fr\'{e}chet's \emph{\'ecart}. It is best viewed as a discrete alignment cost for time series, whereas Fr\'{e}chet distance remains closer to a metric on curves as ordered geometric objects.

\subsection{Kullback-Leibler divergence}

For probability laws $\mu$ and $\nu$ on the same measurable space, with $\mu$ absolutely continuous w.r.t. $\nu$, the Kullback--Leibler divergence is
\begin{equation}
\mathrm{KL}(\mu\|\nu)
:=
\int \log\!\left(\frac{\mathrm{d}\mu}{\mathrm{d}\nu}\right)\,\mathrm{d}\mu.
\end{equation}

Introduced by Kullback and Leibler in 1951 \cite{kullback1951information}, this is a \(\phi\)-divergence rather than an optimal transport distance in the terminology of Peyr\'e and Cuturi \cite{peyre2019computational}. Unlike Wasserstein distance, KL divergence is not symmetric and does not satisfy the triangle inequality. It is therefore a divergence rather than a metric. Its strengths are different: it is local in density space, closely tied to likelihood theory, and central in statistics and variational inference. Its weakness is equally clear: it is undefined when $\mu$ is not absolutely continuous w.r.t. $\nu$, and it does not encode a geometric notion of transport or displacement. In that sense, KL divergence occupies almost the opposite end of the spectrum from Wasserstein distance.

One often symmetrizes KL by replacing \(\mathrm{KL}(\mu\|\nu)\) with
\[
\mathrm{KL}(\mu\|\nu)+\mathrm{KL}(\nu\|\mu),
\]
or with its average. This removes the directional asymmetry, but it still does not produce a metric: the triangle inequality generally fails, and the quantity still reflects density mismatch rather than transport geometry.

\subsection{Jensen--Shannon divergence}

A more stable symmetrization of KL is the Jensen--Shannon divergence. Writing
\[
\eta:=\tfrac12(\mu+\nu),
\]
one defines
\begin{equation}
\mathrm{JS}(\mu,\nu)
:=
\frac12\,\mathrm{KL}(\mu\|\eta)
+\frac12\,\mathrm{KL}(\nu\|\eta).
\end{equation}

Introduced by Lin in 1991 \cite{lin1991divergence}, the Jensen--Shannon divergence is always finite as soon as the two laws are dominated by a common reference measure. It is symmetric, bounded, and much less singular when the supports are poorly matched \cite{peyre2019computational}. For this reason it is often preferred when one wants a density-based discrepancy but wishes to avoid the one-sided blow-up of KL. Its square root is in fact a metric, but the comparison remains density-based rather than geometric: as with KL, one does not ask how far mass must move in the ambient space.

\begin{table}[t]
  \captionsetup{skip=5pt}
\centering
\fontsize{8.5}{10}\selectfont
\caption{A summary of the distance and divergence notions discussed in this section.}
\renewcommand{\arraystretch}{1.15}
\begin{tabular}{@{}>{\raggedright\arraybackslash}p{0.13\linewidth}
                >{\raggedright\arraybackslash}p{0.17\linewidth}
                >{\raggedright\arraybackslash}p{0.35\linewidth}
                >{\raggedright\arraybackslash}p{0.27\linewidth}@{}}
\toprule
\textbf{Notion} & \textbf{Domain} & \textbf{How it compares} & \textbf{What it keeps} \\
\midrule
Hausdorff & sets, curve images & worst-case proximity of sets & geometry only \\
DTW & sampled sequences & additive monotone alignment cost & order, only weak geometry \\
Fr\'{e}chet & ordered curves & worst-case monotone alignment & order and geometry \\
Skorokhod & c\`adl\`ag paths & value mismatch plus time distortion & order, with time penalty \\
Wasserstein & probability laws & optimal transport cost & geometry of the ambient space \\
Prokhorov & probability laws & enlargement of sets; weak convergence & weak geometry only \\
KL / JS / TV & probability laws & density or event discrepancy & no transport geometry \\
Hellinger / BC & probability laws & overlap of square-root densities & no transport geometry \\
MMD / Energy & probability laws & test-function or semimetric discrepancy & kernel / semimetric structure \\
Sinkhorn & probability laws & entropically regularized transport & geometry, but smoothed \\
\bottomrule
\end{tabular}
\label{tab:related-distances}
\end{table}

\subsection{Hausdorff distance}

For compact sets $A,B$ in a metric space $(\mathcal X,d)$, the Hausdorff distance is
\begin{equation}
d_H(A,B)
:=
\max\!\left\{
\sup_{a\in A}\inf_{b\in B} d(a,b),
\sup_{b\in B}\inf_{a\in A} d(a,b)
\right\}.
\end{equation}
This set-based metric goes back to Hausdorff's 1914 treatment of metric geometry \cite{hausdorff1914grundzuge}; for a modern discussion in relation to transport, see also \cite{peyre2019computational}. 
When applied to curves via their images, Hausdorff distance forgets parametrization and even order along the curve. This is precisely the feature Fr\'{e}chet sought to avoid in 1906. Two traversals of the same geometric image may have Hausdorff distance zero while representing different ordered curve-arcs, whereas the curve \emph{\'ecart} in Fr\'{e}chet's sense distinguishes them unless they coincide after an admissible increasing reparameterization. Hausdorff distance is therefore a natural point-set comparison, but not a substitute for Fr\'{e}chet distance when temporal or ordered progression matters.

\subsection{Total variation distance}

For probability laws $\mu$ and $\nu$ on the same measurable space, the total variation distance is
\begin{equation}
d_{\mathrm{TV}}(\mu,\nu)
:=
\sup_{A} |\mu(A)-\nu(A)|.
\end{equation}

If $\mu$ and $\nu$ admit densities $f$ and $g$ w.r.t. a common reference measure $\rho$, this becomes
\begin{equation}
d_{\mathrm{TV}}(\mu,\nu)=\frac12\int |f-g|\,\mathrm{d}\rho,
\end{equation}

This is a classical measure-theoretic distance rather than a notion tied to one single founding paper, so it is more natural to cite standard modern references for the present formulation \cite{billingsley1999convergence,peyre2019computational}. Total variation measures the largest discrepancy over measurable events. In that sense it is stronger and more local than Wasserstein distance: it is sensitive to small mismatches in mass allocation, but it ignores the geometry of how far mass would need to move. Two laws can be close in Wasserstein distance because their mass is only slightly displaced, while still being far in total variation if they are supported on disjoint sets. This makes total variation important in probability and statistics, but conceptually quite different from the transport-based comparisons emphasized in \cref{sec:laws,sec:laws-as-curves}.

\subsection{Hellinger distance and Bhattacharyya distance}

If \(\mu\) and \(\nu\) admit densities \(f\) and \(g\) with respect to a common reference measure \(\rho\), the Hellinger distance is
\begin{equation}
H(\mu,\nu)
:=
\left(\frac12\int \bigl(\sqrt{f}-\sqrt{g}\bigr)^2\,\mathrm{d}\rho\right)^{1/2}.
\end{equation}
Equivalently,
\begin{equation}
H(\mu,\nu)^2
=
1-\int \sqrt{fg}\,\mathrm{d}\rho.
\end{equation}

The overlap term
\begin{equation}
\operatorname{BC}(\mu,\nu):=\int \sqrt{fg}\,\mathrm{d}\rho
\end{equation}
is the Bhattacharyya coefficient, and one often defines the Bhattacharyya distance by
\begin{equation}
D_B(\mu,\nu):=-\log \operatorname{BC}(\mu,\nu).
\end{equation}

The Hellinger integral goes back to Hellinger's 1909 work \cite{hellinger1909neue}, while the Bhattacharyya coefficient and distance go back to Bhattacharyya's 1943 paper \cite{bhattacharyya1943measure}. These quantities measure closeness through overlap of square-root densities. They are symmetric and well behaved for many parametric families; for a modern summary in the present context see \cite{peyre2019computational}. Conceptually, however, they still belong to the density-overlap side of the landscape rather than to the transport side: they record how much the two laws overlap pointwise, not how expensive it would be to rearrange one into the other through the ambient geometry.

\subsection{Kolmogorov distance}

For laws $\mu,\nu$ on $\mathbb R$ with CDFs $F_\mu,F_\nu$, the Kolmogorov distance\,\footnote{\,The Soviet mathematician Andrey Kolmogorov corresponded with Fr'{e}chet and met him to discuss various mathematical topics \cite{barbut2013paul}. They are, for example, jointly associated with the Fréchet--Kolmogorov theorem in functional analysis.} is
\begin{equation}
d_K(\mu,\nu):=\sup_{x\in\mathbb R}|F_\mu(x)-F_\nu(x)|.
\end{equation}

This sup-norm comparison is closely tied to Kolmogorov's 1933 work on the empirical determination of a distribution law \cite{Stephens1992,kolmogorov1933sulla}; for the standard modern formulation, see also \cite{Stephens1992,billingsley1999convergence}. In geometric terms, it measures the maximum vertical deviation between the two CDF graphs. Compared with Wasserstein distance, it ignores how far mass must move horizontally; compared with Fr\'{e}chet distance on curves, it is tied to one distinguished representation of the law, namely the CDF. It is therefore highly useful, but it captures a different kind of discrepancy from either transport or curve-reparameterization metrics.

\subsection{Prokhorov distance}

On a metric space $(\mathcal X,d)$, the Prokhorov distance metrizes weak convergence of probability laws and is defined by
\begin{equation}
d_P(\mu,\nu)
:=
\inf\bigl\{\varepsilon>0:\mu(A)\le \nu(A^\varepsilon)+\varepsilon
\text{ and }\nu(A)\le \mu(A^\varepsilon)+\varepsilon \text{ for all Borel }A\bigr\},
\end{equation}
where
\begin{equation}
A^\varepsilon:=\{x\in\mathcal X:\inf_{a\in A} d(x,a)<\varepsilon\}
\end{equation}
is the $\varepsilon$-enlargement of $A$. Introduced by Prokhorov in 1956 \cite{prokhorov1956convergence}, this distance occupies an intermediate conceptual position; for the present formulation as a standard metric on laws, see also \cite{billingsley1999convergence}. Like Wasserstein distance, it is defined on probability laws over a metric space and remembers the underlying geometry through the enlargement $A^\varepsilon$. Unlike Wasserstein distance, it does not measure transport cost directly and does not require moment assumptions. For historical purposes, it is one of the standard metrics used in modern probability theory to discuss convergence in law, whereas Wasserstein adds quantitative displacement information beyond weak convergence alone.

\subsection{Maximum Mean Discrepancy}

Maximum Mean Discrepancy (MMD) compares probability laws through a reproducing-kernel Hilbert space. Given a positive definite kernel $k$ with RKHS $\mathcal H_k$, where \(\mathcal H_k\) is a Hilbert space of functions in which point evaluation is represented by the kernel, one defines
\begin{equation}
\operatorname{MMD}_k(\mu,\nu)
:=
\sup_{\|f\|_{\mathcal H_k}\le 1}
\left(\int f\,\mathrm{d}\mu-\int f\,\mathrm{d}\nu\right).
\end{equation}

Equivalently, it is the RKHS norm of the difference between kernel mean embeddings of $\mu$ and $\nu$ \cite{gretton2006kernel}. Peyr\'e and Cuturi place MMD among integral probability metrics and contrast its sample behavior with Wasserstein distance \cite{peyre2019computational}. MMD is especially attractive in machine learning because it is easy to estimate from samples and depends on a chosen kernel rather than on an ambient transport geometry. It therefore plays a role complementary to Wasserstein distance: one compares expectations of test functions in a function class, while the other compares laws by the cost of moving mass. Kernel Inception Distance (KID) is built on this idea and was proposed partly to avoid some of the finite-sample difficulties of FID \cite{binkowski2018demystifying}.

\subsection{Energy distance}

Another nearby law-distance, emphasized in \cite{peyre2019computational}, is the energy distance. In the modern statistical form relevant here, due to Sz\'ekely and Rizzo \cite{szekely2013energy}, it may be written in Euclidean space as
\begin{equation}
D_E(\mu,\nu)^2
:=
2\,\mathbb E\|X-Y\|
-\mathbb E\|X-X'\|
-\mathbb E\|Y-Y'\|,
\end{equation}

where \(X,X'\sim\mu\) are independent and \(Y,Y'\sim\nu\) are independent. Unlike Wasserstein distance, this quantity is not defined through a transport plan; unlike MMD, it does not begin from a positive definite kernel. Instead, it is built from a negative-type semimetric and can be viewed as a special case of an MMD-type norm. Conceptually it sits between the two: it still compares laws through expectations of pairwise distances, but it does so without solving an optimal matching problem. This makes it attractive in statistics, especially because its empirical behavior is often simpler than that of Wasserstein distance in high dimension.

\subsection{Sinkhorn divergence}

Peyr\'e and Cuturi also emphasize the importance of entropically regularized transport and the associated Sinkhorn divergence \cite{peyre2019computational}. In the form relevant here, introduced by Genevay, Peyr\'e, and Cuturi \cite{genevay2018learning}, one starts from a transport cost \(W_{p,\varepsilon}\) regularized by relative entropy and defines
\begin{equation}
S_\varepsilon(\mu,\nu)
:=
W_{p,\varepsilon}(\mu,\nu)
-\frac12 W_{p,\varepsilon}(\mu,\mu)
-\frac12 W_{p,\varepsilon}(\nu,\nu).
\end{equation}

This correction removes the entropic self-bias and yields a quantity that interpolates between optimal transport and kernel-type discrepancies. In practical terms, Sinkhorn divergences preserve much of the geometry of Wasserstein distance while being far cheaper to compute numerically. For a survey such as the present one, they are worth mentioning because they show how the transport viewpoint can be softened computationally without giving it up entirely.

\subsection{Skorokhod distance}

The Skorokhod distance is a particularly interesting near-neighbor of Fr\'{e}chet distance because it also balances value discrepancy against controlled distortion of the time parameter. It is usually defined for c\`adl\`ag paths, meaning functions that are right-continuous and have left limits at every time. Such path spaces arise naturally in probability theory when one allows jump processes rather than only continuous trajectories. The distance goes back to Skorokhod's 1956 work \cite{skorokhod1956limit}; for the standard modern formulation used on path spaces, see also \cite{billingsley1999convergence}. For c\`adl\`ag paths $x,y:[0,1]\to\mathbb R^d$, one standard form is
\begin{equation}
d_S(x,y)
:=
\inf_{\lambda}
\max\!\left\{
\sup_{t\in[0,1]}\|\lambda(t)-t\|,
\sup_{t\in[0,1]}\|x(t)-y(\lambda(t))\|
\right\},
\end{equation}
where the infimum runs over increasing homeomorphisms $\lambda$ of $[0,1]$. This is not the same as classical curve Fr\'{e}chet distance, but the family resemblance is unmistakable: both optimize over monotone changes of parameter and measure a worst-case mismatch. The difference is that Skorokhod distance penalizes the time distortion itself, whereas Fr\'{e}chet distance quotients it out. This relation has been made algorithmically explicit in later work connecting Fr\'{e}chet-type reasoning with Skorokhod distance \cite{majumdar2015computing}.

\section{Naming Convention}
\label{sec:naming}

Although the naming of a distance is scientifically secondary to its definition and properties, it is still informative. The names \emph{Fr\'{e}chet distance}, \emph{Wasserstein distance}, and \emph{Fr\'{e}chet Inception Distance} do not refer to one single construction. They mark successive stages in which ideas about distance moved between ordered curves, probability laws, optimal transport, Gaussian formulae, and modern generative-model evaluation.

For curves, the name \emph{Fr\'{e}chet distance} is historically well justified. As discussed in \cref{sec:curves}, Fr\'{e}chet's 1906 thesis already contains the essential continuous idea: a curve-arc is an ordered object, admissible increasing reparameterizations preserve that order, and the distance is obtained by minimizing the maximal pointwise discrepancy over such reparameterizations \cite{frechet1906quelques}. Later work changed the setting and the computational status of the notion. In particular, Alt and Godau formulated the polygonal version that became central in computational geometry and gave the free-space algorithmic viewpoint \cite{alt1992measuring,alt1995computing}. Their contribution is therefore indispensable for the modern algorithmic subject, but it does not displace Fr\'{e}chet's role in the underlying continuous notion.

For probability laws, the naming is more delicate. L\'evy's 1950 note already gives direct distances between laws by representing a law as a cumulative-distribution curve \(\Gamma\) and then comparing such curves \cite{frechet1950recherches}. Fr\'{e}chet's 1957 paper starts from a different but related direction: it considers global distances between simultaneously realized random variables and then passes to distances between their laws by ranging over admissible bivariate laws with prescribed margins \cite{frechet1957distance}. Fr\'{e}chet explicitly presents this discussion as connected to L\'evy's definitions, but the 1950 note and the 1957 paper should not be collapsed into the same construction. Modern Wasserstein language makes the coupling structure more systematic, but it can also make the historical argument look heavier than it was in Fr\'{e}chet's one-dimensional setting. Thus the phrase \emph{Fr\'{e}chet distance on probability laws} is defensible when it refers to the coupling-based and \(H_1\)-based development of the 1957 paper, especially the explicit one-dimensional formulae in \cref{sec:frechet1957}; it is less clear-cut if it is used as though Fr\'{e}chet alone originated the whole law-distance problem.

The name \emph{Wasserstein distance} is also a convention with a layered history. R\"uschendorf emphasizes that the quantity now denoted \(W_p\) arose from several directions rather than from a single source \cite{rueschendorfwasserstein}. Kantorovich introduced a minimal \(L^1\)-transport metric in connection with the Monge transportation problem, Kantorovich and Rubinstein developed the dual viewpoint, Gini and Dall'Aglio obtained one-dimensional distribution-function formulae, Fr\'{e}chet studied metric properties of distances between laws, Mallows used the \(L^2\) version in statistics, and Vasershtein introduced the metric in the context from which the English spelling ``Wasserstein'' became common. R\"uschendorf therefore remarks: \textit{``Maybe historically the notion Gini--Dall'Aglio--Kantorovich--Vasersthein--Mallows metric would be correct for this class of metrics''} \cite{rueschendorfwasserstein}. The modern name is convenient and standard, but historically it abbreviates a broader line of work.

Finally, \emph{Fr\'{e}chet Inception Distance} inherits its name from the probability-law branch, not from the curve branch. The formula used in FID is the closed-form quadratic distance between Gaussian laws in an Inception feature space; equivalently, in modern optimal transport language, it is the Gaussian specialization of \(W_2\) \cite{dowson1982frechet,heusel2017gans,peyre2019computational}. Calling it \emph{Fr\'{e}chet} is therefore understandable through the line from Fr\'{e}chet's 1957 paper to Gaussian law distances. Calling it \emph{Wasserstein} would also be mathematically natural, because the same expression is now standardly read as a Wasserstein-2 formula. The established name is thus a matter of lineage and convention rather than a unique mathematical necessity.

\section{Final Remarks}
\label{sec:final}

The historical picture that emerges is broader than the modern terminology sometimes suggests. Fr\'{e}chet's 1906 thesis supplies a metric theory of curves inside a general theory of abstract sets equipped with an \emph{\'ecart}; his 1957 paper supplies a coupling-based distance between probability laws; later computational geometry turns the curve branch into an algorithmic subject; and modern optimal transport provides a unified language in which the law branch becomes part of a wider theory of distances between measures. The contemporary name ``Fr\'{e}chet distance'' now travels across all of these settings, but not always with the same meaning.

What binds the branches together is a shared structural instinct. In each case one begins with objects that are richer than points---curves, functions, random variables, laws---and asks how to compare them in a way that preserves the aspects of structure one cares about. For curves, this is the ordered progression along the path. For laws, it is the possibility of realizing two marginals jointly and measuring the discrepancy there. For FID, it is the comparison of feature-space laws through a Gaussian $W_2$ formula. These are not identical constructions, but they belong to a recognizable family of ideas.

In contemporary applications, this distinction is worth keeping clear. FID is historically closer to Fr\'{e}chet's 1957 distance between laws than to the geometric Fr\'{e}chet distance between curves. Yet the curve-theoretic branch remains valuable, not only in its own right, but also because it clarifies what is special about parametrization, monotonicity, and worst-case alignment. The hope of this note is that reading these branches together makes the name ``Fr\'{e}chet distance'' less confusing, and at the same time more mathematically interesting.

{\small
\bibliographystyle{abbrvnat}
\bibliography{bib.bib}
}

\newpage
\appendix
\crefalias{section}{appendix}
\include{appendix/appendix}

\section{On Some Points of Functional Calculus (translated)}
\label{app:frechet1906}

{
\small

The English translation of
\begin{center}
  \vspace{-.2em}
  \linespread{0.95}\selectfont
\textbf{\textsc{Sur Quelques Points du Calcul Fonctionnel.} }

Memorie e Comunicazioni. Rendiconti del Circolo Matematico di Palermo (1884-1940) 22.1 (1906): 1--72.

By Mr. \textsc{Maurice Fréchet} (Paris)\,\footnote{\,Thesis submitted to the Faculty of Sciences of Paris in order to obtain the degree of Doctor of Science.}

Meeting of April 22, 1906.
\vspace{-1em}
\end{center}

\etocsettocstyle{\subsection*{Contents}}{}

\etocsetstyle{subsection}%
  {\par}%
  {\par\noindent}%
  {\fontsize{9}{9}\selectfont \etocname\hfill\textcolor{royalblue}{\etocthepage}}%
  {\par}

\etocsetstyle{subsubsection}%
  {\par}%
  {\par\hspace{1em}}%
  {\etocname\leavevmode\leaders\hbox to .5em{\hss.\hss}\hfill\kern0pt\textcolor{royalblue}{\etocthepage}}%
  {\par}

\etocsettocdepth{subsubsection}
\localtableofcontents
\clearpage
}
\input{appendix/frechet1906_en_body.tex}

\newpage
\section{Paul Lévy's Note (translated)}
\label{app:levy1950}

{
\small

The English translation of
\begin{center}
  \linespread{0.95}\selectfont
\textbf{\textsc{Distance de deux Variables Aléatoires et Distance entre deux Lois de Probabilité} }

as \textsc{Note B.} in

Fréchet, Maurice. Recherches théoriques modernes sur le calcul des probabilités: livre. Méthode des fonctions arbitraires. Théorie des événements en chaîne dans le cas d'un nombre fini d'états possibles. Avec supplément nouveau et une note de P. Lévy. Gauthier-Villars, 1950.

By \textsc{Paul Lévy}

\end{center}
}
\bigskip

\textit{[From page 333]}
\input{appendix/levy1950_en_body.tex}

\clearpage
\section{On the Distance Between Two Probability Laws (translated)}
\label{app:frechet1957}

{
\small

The English translation of
\begin{center}
\textbf{\textsc{Sur la Distance de deux Lois de Probabilité.} }

Annales de l'Institut de Statistique de l'Université de Paris. Vol. 6. No. 3. 1957.

By \textsc{Maurice Fréchet}
\end{center}

\etocsettocstyle{\subsection*{Contents}}{}

\etocsetstyle{subsection}%
  {\par}%
  {\par\noindent}%
  {\etocname\leavevmode\leaders\hbox to .5em{\hss.\hss}\hfill\kern0pt\textcolor{royalblue}{\etocthepage}}%
  {\par}

\etocsettocdepth{subsection}
\localtableofcontents

}
\bigskip

\input{appendix/frechet1957_en_body.tex}

\clearpage

\begin{figure}[h!]
    \raggedright
    \vspace*{6cm}
    \includegraphics[width=0.5\linewidth]{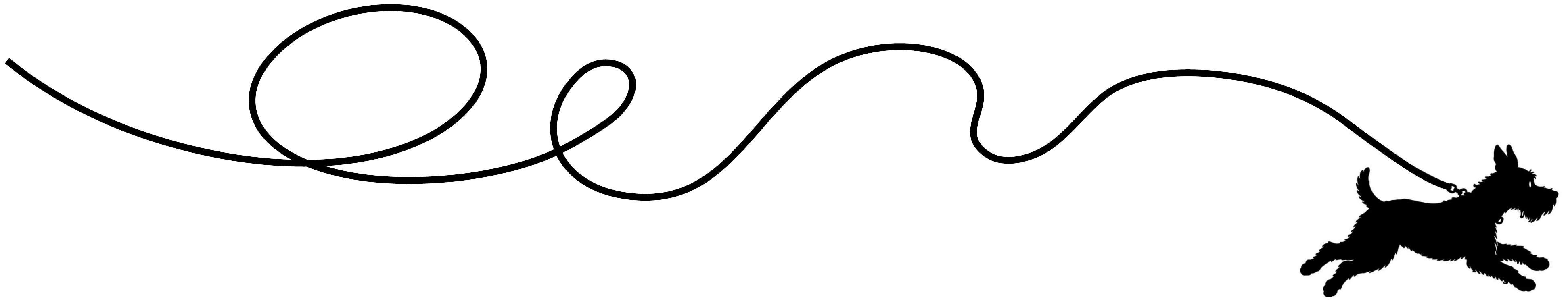}
\end{figure}

{\vspace*{2cm}
\section*{Afterword}

This note comes to an end exactly 120 years after Maurice Fr\'{e}chet published his doctoral thesis. While reading many historical articles for this note, I was repeatedly struck by how naturally many mathematicians moved across languages. Fr\'{e}chet, a native French speaker, knew at least English and German. He was also deeply engaged with Esperanto, having published mathematical papers in Esperanto \cite{frechet1954es} and later served as president of the \textit{Internacia Scienca Asocio Esperantista} (\textit{International Scientific Esperantist Association}) from 1950 to 1953. When he was still a young student, his professor Hadamard wrote to him: \textit{``travaillez l'allemand; une insuffisance en cette langue vous serait une grande gêne plus tard''} (\textit{work on your German; a deficiency in this language would be a great hindrance to you later}) \cite{taylor1982study}. Reading it more than a hundred years later, I definitely can relate.
In Angus E. Taylor's trilogy of articles, \textit{A Study of Maurice Fr\'{e}chet} \cite{taylor1982study,taylor1985study,taylor1987study}, published over several years in the 1980s, the mathematics professor from Berkeley wove French passages from Fr\'{e}chet's letters and German terms into his English prose, often without translation, as though such linguistic openness could simply be taken for granted. These glimpses also suggest how much broader their expertise and intellectual interests were than anything touched on in this note, and how naturally they turned to one another for questions, advice, and encouragement \cite{taylor1981quatre,barbut2013paul}. Reading these works offers not only insight into mathematical history, but also a glimpse of a scholarly world marked by breadth, seriousness, and quiet elegance. And I shouldn't ignore notifications from Duolingo anymore.

\vspace{0.5cm}

\textit{Yuli Wu}

\textit{Sydney, 22 April 2026}

\vspace{1cm}
}

\end{document}

%% file: appendix/appendix.tex

\section{Proofs}
\label{app:proof}

\subsection{Extremal bounds for admissible bivariate laws}
\label{app:frechet-bounds}

Let $F$ and $G$ be one-dimensional cumulative distribution functions, and let $H$ be the cumulative distribution function of a pair $(X,Y)$ whose marginals are $F$ and $G$. Then for every $(x,y)\in\mathbb R^2$ one has
\begin{equation}
\sup\{F(x)+G(y)-1,0\}\le H(x,y)\le \inf\{F(x),G(y)\}.
\end{equation}
Moreover, both endpoint functions
\begin{equation}
H_0(x,y)=\sup\{F(x)+G(y)-1,0\},
\qquad
H_1(x,y)=\inf\{F(x),G(y)\}
\end{equation}
are themselves admissible bivariate cumulative distribution functions with marginals $F$ and $G$.

\begin{proof}
Since $H(x,y)=\mathbb P(X\le x,Y\le y)$, one immediately has
\begin{equation}
H(x,y)\le \mathbb P(X\le x)=F(x),
\qquad
H(x,y)\le \mathbb P(Y\le y)=G(y),
\end{equation}
and therefore
\begin{equation}
H(x,y)\le \inf\{F(x),G(y)\}.
\end{equation}

For the lower bound, observe that
\begin{equation}
\{X\le x,Y\le y\}^{c}=\{X>x\}\cup\{Y>y\}.
\end{equation}
Hence, by the union bound,
\begin{equation}
1-H(x,y)
=\mathbb P(X>x\ \text{or}\ Y>y)
\le \mathbb P(X>x)+\mathbb P(Y>y)
=1-F(x)+1-G(y),
\end{equation}
which gives
\begin{equation}
H(x,y)\ge F(x)+G(y)-1.
\end{equation}
Since also $H(x,y)\ge 0$, we obtain
\begin{equation}
H(x,y)\ge \sup\{F(x)+G(y)-1,0\}.
\end{equation}

It remains to check that the endpoint functions are admissible. For $H_1$, let $U\sim\mathrm{Unif}(0,1)$ and define
\begin{equation}
X=Q_F(U),
\qquad
Y=Q_G(U),
\end{equation}
where $Q_F,Q_G$ are generalized quantile functions. Then $X$ has cumulative distribution function $F$, $Y$ has cumulative distribution function $G$, and
\begin{equation}
\mathbb P(X\le x,Y\le y)
=\mathbb P\bigl(U\le F(x),\ U\le G(y)\bigr)
=\mathbb P\bigl(U\le \inf\{F(x),G(y)\}\bigr)
=\inf\{F(x),G(y)\}.
\end{equation}
Thus the joint cumulative distribution function is exactly $H_1$.

For $H_0$, use the antimonotone coupling
\begin{equation}
X=Q_F(U),
\qquad
Y=Q_G(1-U).
\end{equation}
Then
\begin{equation}
\mathbb P(X\le x,Y\le y)
=\mathbb P\bigl(U\le F(x),\ 1-U\le G(y)\bigr)
=\mathbb P\bigl(1-G(y)\le U\le F(x)\bigr).
\end{equation}
The latter interval has length $F(x)+G(y)-1$ when this quantity is positive, and is empty otherwise. Therefore
\begin{equation}
\mathbb P(X\le x,Y\le y)=\sup\{F(x)+G(y)-1,0\}=H_0(x,y).
\end{equation}
So $H_0$ and $H_1$ are both admissible bivariate cumulative distribution functions with the required marginals.
\end{proof}

\subsection{The one-dimensional \(W_1\) identity}
\label{app:W1-quantile-primal}

Let $\mu,\nu\in\mathcal P_1(\mathbb R)$ with CDFs $F_\mu,F_\nu$ and generalized quantiles $Q_\mu,Q_\nu$.
Recall
\begin{equation}
W_1(\mu,\nu)=\inf_{\pi\in\Pi(\mu,\nu)}\int_{\mathbb R\times\mathbb R}|x-y|\,\mathrm{d}\pi(x,y).
\end{equation}

\begin{theorem}[Quantile and CDF formulas for $W_1$ on $\mathbb R$]
\label{thm:W1-1d-primal}
For $\mu,\nu\in\mathcal P_1(\mathbb R)$,
\begin{equation}
W_1(\mu,\nu)=\int_0^1 |Q_\mu(u)-Q_\nu(u)|\,\mathrm{d}u
=\int_{-\infty}^{+\infty}|F_\mu(x)-F_\nu(x)|\,\mathrm{d}x.
\end{equation}

\end{theorem}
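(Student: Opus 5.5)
The plan is to prove the two equalities separately, following the paper's announced ``primal-only'' route: a universal lower bound valid for every coupling, plus the monotone quantile coupling attaining it.

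\emph{Step 1: the CDF integral equals the quantile integral.} I would write both as the Lebesgue measure of the same planar region. Using $Q(u)\le x \iff u\le F(x)$ for the left-continuous quantile, one has
\[
|F_\mu(x)-F_\nu(x)| = \int_0^1 \bigl|\mathbf 1\{u\le F_\mu(x)\}-\mathbf 1\{u\le F_\nu(x)\}\bigr|\,\mathrm du,
\]
and for each fixed $u$,
\[
|Q_\mu(u)-Q_\nu(u)| = \int_{\mathbb R} \bigl|\mathbf 1\{Q_\mu(u)\le x\}-\mathbf 1\{Q_\nu(u)\le x\}\bigr|\,\mathrm dx .
\]
The two integrands coincide pointwise on $\mathbb R\times(0,1)$ by the Galois relation. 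Tonelli's theorem, applicable since the integrand is nonnegative, then gives the equality. Finiteness follows from $\mu,\nu\in\mathcal P_1$.

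\emph{Step 2: lower bound.} For any $\pi\in\Pi(\mu,\nu)$, I use the layer-cake identity
\[
|x-y|=\int_{\mathbb R}\bigl|\mathbf 1\{x\le t\}-\mathbf 1\{y\le t\}\bigr|\,\mathrm dt .
\]
Integrating against $\pi$ and applying Tonelli gives
\[
\int |x-y|\,\mathrm d\pi=\int_{\mathbb R}\pi\bigl(\{x\le t<y\}\cup\{y\le t<x\}\bigr)\,\mathrm dt .
\]
For each $t$, the mass of this symmetric difference is at least
\[
\bigl|\pi(x\le t)-\pi(y\le t)\bigr|=|F_\mu(t)-F_\nu(t)|,
\]
since $\mathbb P(A\triangle B)\ge|\mathbb P(A)-\mathbb P(B)|$. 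Hence every coupling costs at least $\int|F_\mu-F_\nu|\,\mathrm dx$, and so does $W_1$.

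\emph{Step 3: attainment.} Take $U\sim\mathrm{Unif}(0,1)$ and the comonotone coupling $(Q_\mu(U),Q_\nu(U))$. As in the proof of the extremal bounds in \cref{app:frechet-bounds}, this has marginals $\mu,\nu$ (indeed joint CDF $H_1$). Its cost is exactly $\int_0^1|Q_\mu-Q_\nu|\,\mathrm du$, which by Step 1 equals the lower bound. So the infimum is attained, and all three quantities coincide.

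The main obstacle is Step 1's measure-theoretic bookkeeping: the pointwise equivalence $Q(u)\le x\iff u\le F(x)$ must hold for all $u\in(0,1)$, including at atoms and flat parts of $F$. I would verify it first from right-continuity of $F$ and the definition of $Q$ as an infimum. Everything else is Tonelli plus the elementary inequality in Step 2.
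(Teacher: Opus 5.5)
Your proof is correct and follows essentially the paper's route: the same layer-cake lower bound using $\mathbb P(A\triangle B)\ge|\mathbb P(A)-\mathbb P(B)|$, and the same comonotone coupling $(Q_\mu(U),Q_\nu(U))$ to show the bound is attained. The only difference is bookkeeping. You prove $\int_{\mathbb R}|F_\mu-F_\nu|\,\mathrm dx=\int_0^1|Q_\mu-Q_\nu|\,\mathrm du$ as a standalone Tonelli identity on $\mathbb R\times(0,1)$, while the paper gets the same double integral by evaluating that coupling's cost in two ways.
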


This is the same identity as \cref{eq:W1-cdf-quantile} in the main text. The proof below stays entirely on the primal side: it gives a lower bound for every coupling and then shows that the monotone quantile coupling attains it.

\begin{proof}
\noindent\textbf{Step 1 (a lower bound for every coupling).}
Let $(X,Y)$ be any pair of real-valued random variables with laws $\mu$ and $\nu$.
For every fixed pair of real numbers $x,y$,
\begin{equation}
|x-y|
=\int_{\mathbb R}
\bigl|
\mathbf 1_{\{x\le t\}}-\mathbf 1_{\{y\le t\}}
\bigr|\,\mathrm{d}t.
\end{equation}
Indeed, the integrand is equal to $1$ exactly for $t$ between $x$ and $y$, and is otherwise $0$.
Applying this identity to $(X,Y)$ and using Fubini's theorem gives
\begin{equation}
\mathbb E|X-Y|
=
\int_{\mathbb R}
\mathbb E\bigl|
\mathbf 1_{\{X\le t\}}-\mathbf 1_{\{Y\le t\}}
\bigr|\,\mathrm{d}t.
\end{equation}
For each $t$, the random variables $\mathbf 1_{\{X\le t\}}$ and $\mathbf 1_{\{Y\le t\}}$ take only the values $0$ and $1$, and hence
\begin{equation}
\mathbb E\bigl|
\mathbf 1_{\{X\le t\}}-\mathbf 1_{\{Y\le t\}}
\bigr|
\ge
\bigl|
\mathbb E\mathbf 1_{\{X\le t\}}
-
\mathbb E\mathbf 1_{\{Y\le t\}}
\bigr|
=|F_\mu(t)-F_\nu(t)|.
\end{equation}
Therefore every coupling satisfies
\begin{equation}
\mathbb E|X-Y|
\ge
\int_{\mathbb R}|F_\mu(t)-F_\nu(t)|\,\mathrm{d}t.
\end{equation}
Taking the infimum over all couplings yields
\begin{equation}
\label{eq:W1-cdf-lower-bound}
W_1(\mu,\nu)
\ge
\int_{\mathbb R}|F_\mu(t)-F_\nu(t)|\,\mathrm{d}t.
\end{equation}

\medskip
\noindent\textbf{Step 2 (the quantile coupling attains the bound).}
Now let $U\sim\mathrm{Unif}(0,1)$ and define
\begin{equation}
X:=Q_\mu(U),
\qquad
Y:=Q_\nu(U),
\end{equation}
where $Q_\mu(u)=\inf\{x:F_\mu(x)\ge u\}$ and similarly for $Q_\nu$.
Then $X\sim\mu$, $Y\sim\nu$, so the law of $(X,Y)$ defines a coupling $\pi_\uparrow\in\Pi(\mu,\nu)$.
For every $t\in\mathbb R$, the generalized inverse property gives, up to null endpoint events,
\begin{equation}
\{X\le t\}=\{U\le F_\mu(t)\},
\qquad
\{Y\le t\}=\{U\le F_\nu(t)\}.
\end{equation}
Thus these two events are intervals in the same unit interval. Their symmetric difference has length exactly
$|F_\mu(t)-F_\nu(t)|$, and hence
\begin{equation}
\mathbb E\bigl|
\mathbf 1_{\{X\le t\}}-\mathbf 1_{\{Y\le t\}}
\bigr|
=
|F_\mu(t)-F_\nu(t)|.
\end{equation}
Using the first identity in the proof again,
\begin{equation}
\mathbb E|X-Y|
=
\int_{\mathbb R}|F_\mu(t)-F_\nu(t)|\,\mathrm{d}t.
\end{equation}
Since this value is attained by the coupling $\pi_\uparrow$, the lower bound \eqref{eq:W1-cdf-lower-bound} is sharp:
\begin{equation}
\label{eq:W1-cdf-attained}
W_1(\mu,\nu)
=
\int_{\mathbb R}|F_\mu(t)-F_\nu(t)|\,\mathrm{d}t.
\end{equation}

\medskip
\noindent\textbf{Step 3 (identify the same value with the quantile integral).}
For the same quantile coupling,
\begin{equation}
\mathbb E|X-Y|
=
\mathbb E|Q_\mu(U)-Q_\nu(U)|
=
\int_0^1|Q_\mu(u)-Q_\nu(u)|\,\mathrm{d}u.
\end{equation}
Combining this with \eqref{eq:W1-cdf-attained} completes the proof:
\begin{equation}
W_1(\mu,\nu)=\int_0^1|Q_\mu(u)-Q_\nu(u)|\,\mathrm{d}u=\int_{\mathbb R}|F_\mu(t)-F_\nu(t)|\,\mathrm{d}t.
\end{equation}

\end{proof}

\textbf{Remark}: Vallender~\cite{vallender1974calculation} is a classical reference for explicit Wasserstein computations on the line.

%% file: appendix/frechet1906_en_body.tex
\subsection*{INTRODUCTION.}
\phantomsection\addcontentsline{toc}{subsection}{INTRODUCTION.}

\textbf{1.}
 It is fairly generally agreed that a number $y$ is to be considered a function of the variable $x$ when to every numerical value of $x$ there corresponds a well-determined value of $y$. When one imposes on this correspondence certain conditions of continuity, differentiation, etc., one obtains classes of functions enjoying precise properties. The study of these properties constitutes the principal object of modern Analysis.

The notion of function thus understood has progressively been extended in several directions (for example, with regard to uniformity), and in particular from the point of view of what one is to take as variable. For a long time, functions of two, three, or even $n$ numerical variables have been considered. The other generalizations are more recent. Thus, Mr. \textsc{Le Roux} was led to study functions whose value depends no longer on $n$, but on an infinite sequence of independent variables (\textsc{xviii})\,\footnote{\,Roman numerals refer to the bibliography placed at the end of the Memoir.}. Mr. \textsc{Volterra} (\textsc{xv}) and Mr. \textsc{Arzelà} (\textsc{v}) seem to have been the first to study systematically functions whose value depends on the position and form of a variable line. Mr. \textsc{Hadamard} (\textsc{xi}) considered a particular class of functions whose variable is the form of an ordinary function.

In this Memoir we shall place ourselves at a wholly general point of view encompassing these different cases.

For this purpose, we shall say that a \textit{functional operation} (\textit{op\'eration fonctionnelle}) $U$ is defined on a set $E$ of elements \textit{of any nature} (numbers, curves, points, etc.) when to every element $A$ of $E$ there corresponds a determined numerical value of $U$: $U(A)$. The search for properties of these operations constitutes the object of \textit{Functional Calculus} (\textit{Calcul Fonctionnel}).

\textit{The present work is a first attempt to establish systematically some fundamental principles of Functional Calculus and then apply them to certain concrete examples.}\\

\textbf{2.}
To this end, it was first necessary to generalize the theory of linear sets, which has brought so much progress to the theory of functions of one variable. One might object that for a long time function theory could do without the consideration of point sets. But prior study of sets imposes itself with much greater force in Functional Calculus. Indeed, nothing (at least at first) plays, in Functional Calculus, the role of the interval, whose consideration sufficed for so long for analysts in function theory.

Granted the utility of this prior study of sets, a difficulty arose. The first generalization it was natural to attempt is that of the notion of continuous function. Now, if one wishes to extend it to operations whose variable is an element \textit{of any nature}, one must first know what is to be understood by neighboring elements or by the limit of a sequence of elements. This seems impossible: one is accustomed to giving a special definition of limit for each category of elements considered up to now: points, curves, etc. I overcame the difficulty by a method analogous to the one that permits, in the theory of \textit{abstract} groups, reasoning on a mode of composition not explicitly defined.

After developing this point of view, I observe that almost all classical definitions of limit (but not all) can be translated as follows: for the category of elements considered, one can associate to every pair of elements $A, B$ a number $(A, B) \ge 0$, having properties very analogous to those of the distance between two points, and such that: $\mathrm{1}^{\circ}\, B$ coincides with $A$ if $(A, B)=0$ and $2^{\circ}\, B$ tends to $A$ if $(A, B)$ tends to zero. By adopting this hypothesis, less general but still very broad, one obtains results that are more precise and more numerous.

\textit{The path we have just indicated led us to the generalization of almost all theorems on linear sets and on continuous functions (at least those that can be stated in a way independent of the nature of the elements considered).}

It was first necessary to see how to transform theorem statements so that they retain meaning in the general case. One then had either to transcribe proofs in a more general language, or, when that was not possible, to give new and more general proofs. It turned out that the proofs thus obtained are often as simple, and sometimes even simpler, than the particular proofs they replaced. This no doubt comes from the fact that the way the question is posed compels one to use only truly essential particularities. Besides, in this \textsc{First Part} we needed only the most elementary notions (apart from the definition and properties of the power of a set).

In the \textsc{Second Part}, we applied the general results thus obtained to various categories of elements, each of a determined nature. The only difficulty consisted in showing that each of these categories satisfies the general conditions necessary for validity of the theorems of the \textsc{First Part}. That done, all that remained was to state these theorems in each case. We call attention in particular to the definition we gave of the ``distance'' (\textit{écart}) of two curves, two holomorphic functions, etc. It seems this is the only means by which one can (thus generalizing the notion of interval) free oneself somewhat from consideration of sets.

Along the way, we sharpened the definition of two curves and completed an important theorem of Mr. \textsc{Arzelà} on ``compact'' sets of curves.

Moreover, a table of contents indicates in detail the adopted divisions.

I dare to hope that the new propositions I have been able to obtain, for example in the case of curves, will justify the usefulness of the general considerations from which I deduced them. But if proof of this usefulness for applications remained to be given, it would manifestly follow from the use already made of the classical particular cases; for example by Mr. \textsc{Arzelà} with regard to sets of functions (\textsc{vi, viii}) and by Mr. \textsc{Le Roux} for space of infinitely many dimensions (\textsc{xix}).\\

\textbf{3.}
For greater clarity, the propositions I believe to be new are the only ones printed in italics. Some of them were moreover already presented to the Academy of Sciences on November 21, 1904; January 2, February 27, March 20, and November 27, 1905; or published in another form in a Note in the \textit{Transactions of the American Mathematical Society} (\textsc{xxvii}). Finally, in order not to be too long, I have left aside researches relating to other parts of Functional Calculus that I had published elsewhere (\textsc{xxvii}).

I do not wish to end this \textsc{Introduction} without expressing the gratitude I owe my teachers for the encouragement they have never ceased to give me. I must thank in particular Mr. \textsc{Hadamard}, who kindly took an interest in the development of this work and whose valuable remarks led me to improve its writing in many points.

\newpage
\subsection*{FIRST PART. INTRODUCTION OF THE NOTION OF LIMIT IN ABSTRACT SETS.}
\phantomsection\addcontentsline{toc}{subsection}{FIRST PART. INTRODUCTION OF THE NOTION OF LIMIT IN ABSTRACT SETS.}

\etocsettocstyle{\subsection*{Chapter I.}}{}
\phantomsection\addcontentsline{toc}{subsection}{Chapter I.}
\subsubsection*{General notions on sets of elements of a class ($L$).}
\phantomsection\addcontentsline{toc}{subsubsection}{General notions on sets of elements of a class \texorpdfstring{($L$)}.}

\textbf{4.}
\textit{\textbf{Object of Functional Calculus.}}---Let us consider a set $E$ formed of arbitrary elements (numbers, points, functions, lines, surfaces, etc.), but such that one can distinguish distinct elements. To every element $A$ of this set let us associate a determined number $U(A)$; we thus define what we shall call a uniform \textit{functional operation} on $E$.

The study of these operations is the object of \textit{Functional Calculus}. It is to be presumed that the properties of each operation will depend, on the one hand, on properties of the set on which it is defined, and on the other hand on the nature of the correspondence defining the operation, that is, of the (non-univocal) correspondence between elements of $E$ and certain numbers that are values of the operation. One thus sees that functional calculus necessarily includes a preliminary theory of sets.\\

\textbf{5.}
\textit{\textbf{Set Theory.}}---The results obtained in such a theory will be all the more extensive as one addresses more general sets. To obtain the greatest possible generality, one should therefore first seek results applicable to abstract sets, that is, sets for which one does not specify the nature of their elements.

The most important results obtained so far in this direction are those concerning the notions (introduced by Mr. G. \textsc{Cantor}) of the power of a set (\textit{puissance d'un ensemble}) and of an ordered set.

For the \textsc{First Part} of this work, I shall assume known (besides the most elementary mathematical notions) only these results, and even only the simplest among them\,\footnote{\,See, for example, \textsc{ii}, pages 1, 2; and \textsc{iii}, pages 27, 28.}. In particular, I shall not use the theory of linear sets and continuous functions, which will appear, on the contrary, as a very particular corollary of our general theory.\\

\textbf{6.}
\textit{\textbf{The notion of limit in abstract sets.}}---The best-known results, and in fact the most important ones in set theory, are those deduced from the notion of the limit of a sequence of elements. Now this notion is never introduced except after one has abandoned the study of abstract sets and clearly specified the nature of the elements considered. The same phenomenon occurred when one developed separately theories of groups of motions, substitutions, transformations, etc., where each category of elements gave rise to a mode of composition perfectly defined, but whose definition varied from one category to another. One could arrive at a common theory only by \textit{refraining} from giving a general definition of this mode of composition, while searching for conditions common to particular definitions and retaining only those independent of the nature of the elements considered (\textsc{xxiv}). It is also from the same point of view that one was led to simplify the study of forces, velocities, rotations, etc., by preceding it with a theory of vectors. This idea reappears, somewhat modified, in the increasingly widespread use of ``descriptive'' definitions, especially advocated by Mr. \textsc{Drach} (\textsc{xxi}). Such are, for example, the definitions of the measure of a set proposed by Mr. \textsc{Borel}\,\footnote{\,``In every case, it proceeds from the same fundamental idea: defining the new elements one introduces by means of their essential properties, that is, those strictly indispensable for the reasonings that are to follow'' (I, page 48).} and Mr. \textsc{Lebesgue} (\textsc{iv}, page 102); the definition of area introduced by Mr. \textsc{Hadamard} in his Leçons de Géométrie Élémentaire (\textsc{Armand Colin}, 1898), etc. But there is this difference: in a ``descriptive'' definition one states \textit{characteristic} properties of the being one \textit{wants to define} (\textsc{iv}). By contrast, in the theory of abstract groups, the mode of composition is assumed defined in advance in each particular case; but one voluntarily ignores this definition in order to retain only certain general conditions that it satisfies \textit{but that do not determine it}.

Proceeding in this way, some proofs become more difficult, since one deprives oneself of a more concrete representation. But what is thus lost is largely regained by dispensing with repeating several times, under different forms, the same arguments. One often also gains a clearer view of what in proofs was truly essential, and thus simplifies them by stripping away what depended only on the proper nature of the elements considered. This is what we shall try to do for Functional Calculus, and in particular for the theory of abstract sets.\\

\textbf{7.}
If one examines carefully the various classical definitions of the limit of a sequence of numbers, or of points, or of functions, etc., one immediately notices that all these definitions satisfy two conditions I and II that can be stated independently of the nature of the elements considered. These are the conditions that will first suffice us to generalize certain propositions of Function Theory.

\textit{Henceforth, we shall limit ourselves to the study of sets drawn from a class ($L$) of elements of arbitrary nature but satisfying the following conditions: one knows how to distinguish whether two elements of class ($L$) are distinct or not. In addition, one has been able to give a definition of the limit of a sequence of elements of class ($L$). We therefore assume that, a random infinite sequence of elements (distinct or not) of class ($L$) being chosen, one can state with certainty whether this sequence $A_{1}, A_{2}, \ldots, A_{n}, \ldots$ has or does not have a limit $A$ (moreover unique). The procedure that permits the answer (in other words, the definition of limit) is absolutely arbitrary, subject only to satisfying conditions I and II just mentioned, namely:}

I) \textit{If each element of the infinite sequence $A_{1}, A_{2}, \ldots, A_{n}, \ldots$ is identical to one same element $A$, the sequence certainly has limit $A$.}

II) \textit{If an infinite sequence $A_{1}, A_{2}, \ldots, A_{n}, \ldots$ has limit $A$, every sequence of elements taken from the first in the same order: $A_{n_{1}}, A_{n_{2}}, \ldots, A_{n_{p}}, \ldots$ (thus integers $n_{1}, n_{2}, \ldots, n_{p}$ increase) has limit also equal to $A$.}\\

\textbf{8.}
\textit{\textbf{Usual definitions of point-set theory extended to abstract sets.}}---We are now able, under the hypothesis in which we have placed ourselves, to generalize definitions ordinarily adopted in point-set theory.

So let us consider an arbitrary set $E$ of elements of a class ($L$), in which one knows how to define, in conformity with conditions I and II, the limit of a sequence of elements.

We shall say that an element $A$ of class ($L$) is a \textit{limit element} (\textit{\'el\'ement limite}) of $E$ when there exists an infinite sequence of elements of $E$: $A_{1}, A_{2}, \ldots, A_{n}, \ldots$ which are distinct and tend to $A$. We shall call the \textit{derived set} (\textit{ensemble d\'eriv\'e}) of a set $E$, and denote it by $E^{\prime}$, the set of limit elements of $E$. We shall say that a set is \textit{closed} when it contains its derived set; that it is \textit{perfect} when it coincides with its derived set. Finally, considering a given set $H$ as fundamental set, we shall say that an element $A$ of any set $E$, formed of elements of $H$, is \textit{interior} to $E$ \textit{in the strict sense} when $A$ is an element of $E$ that is not the limit of any sequence of distinct elements $A_{1}, A_{2}, \ldots, A_{n}, \ldots$ belonging to $H$ without belonging to $E$.

These definitions are usual in the particular case of point sets. We shall also introduce, with Mr. \textsc{Lindelöf} (\textsc{ii}, page 4), the notion of condensation element (\textit{élément de condensation}) of a set. But we shall give a definition different from his (which does not lend itself to the present generalization), though equivalent in the case he considers, that of point sets. We shall call a \textit{condensation element} (\textit{élément de condensation}) of a set $E$ a limit element of $E$ that is also a limit element of every set obtained by deleting from $E$ a \textit{countable} infinity (\textsc{ii}, page 2) of elements. One sees that a set can give rise to a limit element only if it possesses an infinity of distinct elements, and that a set can give rise to a condensation element only if it possesses an uncountable infinity of elements.\\

\textbf{9.}
\textit{\textbf{Compact sets.}}---We shall see later that it is useful to introduce a new notion, that of a \textit{compact} set, which plays in the theory of abstract sets the same role as that of a bounded set in point-set theory. We shall say that a set is compact when it contains only a finite number of elements, or when every infinity of its elements gives rise to at least one limit element\,\footnote{\,See below an example of a compact set ($\mathrm{n}^{\circ}$ 86) and a non-compact set ($\mathrm{n}^{\circ}$ 85).}. When a set is both compact and closed, we shall call it an \textit{extremal} set (\textit{ensemble extrémal}); this designation will be justified later ($\mathrm{n}^{\circ}$ 11). The role of an extremal set in abstract set theory is fairly similar to that of an interval in the theory of linear sets. There is also a close analogy between compact sets and bounded point sets. This analogy is made evident by the following remarks, which follow immediately from the definition: \textit{Every set containing a non-compact set is non-compact. Every part of a compact set is compact. Every set formed by a finite number of compact sets is a compact set.} This analogy will moreover reappear constantly below.

\textit{If one considers a sequence of sets $E_{1}, E_{2}, \ldots, E_{n}, \ldots$ formed of elements of one same compact set $E$, each closed, contained in the preceding one, and each possessing at least one element, there is necessarily an element common to all these sets.}

Indeed, there is at least one element $M_{1}$ in $E_{1}$, $M_{2}$ in $E_{2}$, etc. If there is an infinity of distinct elements in the sequence $M_{1}, M_{2}, \ldots$, let these be $M_{n_{1}}, M_{n_{2}}, \ldots$. Since set $E$ is compact, this latter sequence has at least one limit element $M$; as it is contained in $E_{n_{p}}$ from $M_{n_{p}}$ onward, $M$ will be a limit element of $E_{n_{p}}$, and consequently will belong to the closed set $E_{n_{p}}$. Thus $M$ belongs to $E_{n_{p}}$, and therefore to $E_{1}, E_{2}, \ldots, E_{n_{p}}$. As this holds for every $p$, it follows that $M$ is common to all sets $E_{1}, E_{2}, \ldots$ If, on the contrary, the sequence $M_{1}, M_{2}, \ldots$ has only a finite number of distinct elements, then at least one element $M$ is repeated infinitely many times in this sequence, say $M \equiv M_{n_{1}} \equiv M_{n_{2}} \equiv \ldots$. Then $M$ is contained, for every $p$, in $E_{n_{p}}$ and therefore in $E_{1}, E_{2}, \ldots, E_{n_{p}}$. Hence it is common to all sets $E_{1}, E_{2}, \ldots, E_{n}, \ldots$\,\footnote{\,The preceding proof is a characteristic example of the type of proof we shall constantly use below. Every time we deal with an extremal set, we shall always have to extract a sequence of elements of the set tending to an element of the set. To see the transformation needed in ordinary proofs concerning point sets, see the corresponding proof by Mr. \textsc{Baire} (\textsc{iii}, page 18), where he uses subdivision of an interval into half-intervals, which is not generalizable.}.

\subsubsection*{Continuous operations.}
\phantomsection\addcontentsline{toc}{subsubsection}{Continuous operations.}

\textbf{10.}
We can now define continuity of a function on an abstract set. We shall say that a uniform functional operation $V$ on a set $E$ of elements of a class ($L$) is continuous on $E$ if, for any element $A$ of $E$ that is limit of a sequence of elements $A_{1}, A_{2}, \ldots, A_{n}, \ldots$ of $E$, one always has:
$$
V(A)=\lim _{n=\infty} V\left(A_{n}\right) .
$$

By contrast, we still cannot generalize the notion of uniform continuity. We shall see later how it can be introduced by considering classes of elements less general than class ($L$) (see $\mathrm{n}^{\circ}$ 47).\\

\textbf{11.}
We shall first generalize a result independent of the notion of continuity, stated first by \textsc{Weierstrass} for point sets, then by Mr. \textsc{Arzelà} for sets of curves\,\footnote{\,See v, page 347. Mr. \textsc{Arzelà} moreover assumes there that $E$ is a \textit{continuous} set in the sense indicated a little further on ($\mathrm{n}^{\circ}$ 12), a condition that is not necessary.}.

\textsc{\textbf{Theorem.}}---\textit{Given a uniform operation $U$ on an extremal set $E$, there exists at least one element $A$ of $E$ such that the upper limit (\textit{limite sup\'erieure})\,\footnote{\,One calls the upper limit (\textit{limite sup\'erieure}) of a set of numbers a quantity $\mu$ such that every number of the set is at most equal to $\mu$, and such that, whatever $\varepsilon>0$, there is at least one number of the set greater than $\mu-\varepsilon$. When no such number $\mu$ exists, one says the upper limit is infinite.} $\mu$ (finite or not) of $U$ on $E$ is equal to the upper limit of $U$ on every set $K$ of elements of $E$ to which $A$ is interior in the strict sense} [\textit{considering $E$ as the fundamental set} ($\mathrm{n}^{\circ}$ 8)].

The theorem is evident if $E$ contains only a finite number of elements, or if $U$ attains its upper limit at a determined element of $E$.

In the opposite case, one can find, for every $n$, an element $A_{n}$ of $E$ such that $U\left(A_{n}\right)>\alpha_{n}$, where $\alpha_{n}$ equals $\displaystyle \mu-\cfrac{1}{n}$ if $\mu$ is finite, or equals $n$ if $\mu$ is infinite.

One and the same element $B$ cannot satisfy $U(B)>\alpha_{n}$ for an infinity of values of $n$, for otherwise one would have $U(B)=\mu$, contrary to our hypothesis. Therefore there is an infinity of distinct elements in the sequence $A_{1}, A_{2}, \ldots, A_{n}, \ldots$, and since $E$ is compact, this infinity has at least one limit element $A$. In conclusion, there is a sequence $A_{n_{1}}, A_{n_{2}}, \ldots$, extracted from the first and tending to an element $A$ of $E$, since $E$ is closed.

I say that $A$ satisfies the announced condition.

Indeed, if $K$ is a set to which $A$ is interior in the strict sense, the elements of the sequence $A_{n_{1}}, A_{n_{2}}, \ldots$ are all elements of $K$ from a certain rank $q$ onward (otherwise $A$ would be the limit of a sequence of elements of $E$ not belonging to $K$). But then the upper limit $\mu_{\mathrm{1}}$ of $U$ on $K$ is at least equal to $U\left(A_{n_{p}}\right)>\alpha_{n_{p}}$ for every $p>q$. Hence $\mu \ge \mu_{\mathrm{1}}>\alpha_{n_{p}}$, with $\alpha_{n_{p}}$ tending to $\mu$. Therefore $\mu_{\mathrm{1}}=\mu$.

The preceding theorem would be stated in the same way for the lower limit.

One immediately deduces the following important corollary:

\textsc{\textbf{Corollary.}}---\textit{Every operation continuous on an extremal set $E$: $\mathrm{1}^{\circ}$ is bounded on this set; $2^{\circ}$ attains, at at least one element of this set, its upper limit, and at at least one element of this set, its lower limit.}

Thus, when a set $E$ is extremal, every continuous operation attains each of its extrema there. It is this circumstance that led us to adopt the designation extremal set, which will be further justified below ($\mathrm{n}^{\circ}$ 51).

One may give a corollary broader in a certain sense than the preceding one by considering upper semicontinuous operations\,\footnote{\,This definition is a generalization of that proposed by Mr. \textsc{Baire} (\textsc{iii}, page 71) for point sets.}. We shall say that a uniform operation $U$ on a set $E$ is upper semicontinuous on $E$ if, whatever element $A$ of $E$ is the limit of a sequence of distinct elements of $E$: $A_{1}, A_{2}, A_{3}, \ldots, A_{n}, \ldots$, $U(A)$ is at least equal to the greatest limit (\textit{plus grande des limites})\,\footnote{\,Given an infinite sequence of numbers $u_{1}, u_{2}, \ldots, u_{n}, \ldots$, the greatest limit (\textit{plus grande des limites}) of this sequence is a number $\lambda$ such that, whatever $\varepsilon$, one can find an integer $p$ such that $u_{n}<\lambda+\varepsilon$ for $n>p$, and there exists an integer $m>p$ for which $u_{m}>\lambda-\varepsilon$. If there is no number $\lambda$ satisfying these conditions, one says that the greatest limit is infinite.} of the sequence $U\left(A_{1}\right), U\left(A_{2}\right), U\left(A_{3}\right), \ldots, U\left(A_{n}\right), \ldots$.

\textsc{\textbf{Corollary.}}---\textit{Every operation upper semicontinuous on an extremal set $E$: $1^{\circ}$ is bounded above on $E$; $2^{\circ}$ attains, at at least one element of $E$, its upper limit\,\footnote{\,This theorem is used, for example, in the search for geodesics, since the length of a curve is a lower semicontinuous function of the curve itself ($\mathrm{n}^{\circ}$ 95). (See also \textsc{xvii}).}.}

The proof is the same as above; but one can no longer say anything about the lower limit.\\

\textbf{12.}
\textsc{\textbf{Theorem.}} --- \textit{If $U$ is a continuous operation on a continuous set (\textit{ensemble continu}) $E$, $U$ takes, at at least one element of $E$, every value between any two values taken by $U$.}

I call a set continuous (\textit{ensemble continu}) if, given any two of its elements $A, B$, one can extract from $E$ a set $F$ whose every element corresponds to a point of interval $(0,1)$ on an axis $0t$, and conversely. The correspondence is supposed such that if two elements $A_{1}, A_{2}$ of $F$ correspond to two points $t_{1}, t_{2}$, $A_{1}$ tends to $A_{2}$ when $t_{1}$ tends to $t_{2}$. The preceding theorem was proved for line functions by Mr. \textsc{Arzelà} (\textsc{v}, page 348). His proof generalizes immediately to the present case and amounts to considering, on $F$, operation $U$ as a function of $t$.

\subsubsection*{Series of continuous operations.}
\phantomsection\addcontentsline{toc}{subsubsection}{Series of continuous operations.}

\textbf{13.}
\textit{\textbf{Uniform and quasi-uniform convergence.}} --- Let us consider a sequence of continuous operations on a same set $E$; namely:
$$
U_{1}, U_{2}, U_{3}, \ldots, U_{n}, \ldots
$$

When this sequence converges at every point of $E$, the limit defines a uniform operation on $E$, say $U$. It is often useful to know whether this limit is or is not a continuous operation on $E$. There is no reason why this should always be so.

A very general case where continuity of $U$ is guaranteed is that in which convergence of the sequence is uniform. We shall say that a sequence of arbitrary uniform operations on $E$ \textit{converges uniformly} on $E$ to an operation $U$ if, whatever number $\varepsilon>0$, one can find an integer $p$ such that $n>p$ implies $\left|U_{n}(A)-U(A)\right|<\varepsilon$ at every element $A$ of $E$.

It is easily shown that if a sequence of operations $U_{1}, U_{2}, \ldots$, continuous on an arbitrary set $E$, converges uniformly to an operation $U$, then $U$ is continuous on $E$.\\

\textbf{14.}
But the limit may be continuous without convergence being uniform. One was therefore led to seek a less restrictive condition for convergence. Mr. \textsc{Arzelà} succeeded in solving the problem by introducing quasi-uniform convergence. His rather complicated proof applies only to functions of one variable (\textsc{ix}). But Mr. \textsc{Borel} then gave a simpler proof (\textsc{ii}, page 42), which generalizes immediately to sets of elements of a class ($L$), with a slight modification required by our more general hypothesis.

We shall say that a sequence of operations $U_{1}, U_{2}, \ldots, U_{n}, \ldots$, uniform on a set $E$, \textit{converges quasi-uniformly} to an operation $U$ when, given a number $\varepsilon>0$ and an arbitrary integer $N$, one can determine, once and for all, an integer $N^{\prime} \ge N$ such that to each element $A$ of $E$ one can assign an integer $n_{A}$ satisfying
$$
N \le n_{A} \le N^{\prime}, \quad\left|U(A)-U_{n_{A}}(A)\right|<\varepsilon\,.
$$

When convergence is uniform, there is also quasi-uniform convergence. Hence the theorem stated above is a consequence of the following:

\textit{When a sequence of operations $U_{1}, U_{2}, \ldots, U_{n}, \ldots$ continuous on an arbitrary set formed of elements of a class ($L$) converges quasi-uniformly on $E$ to an operation $U$, this operation is continuous on $E$.} The proof is the direct generalization of that of Mr. \textsc{Borel} (\textsc{ii}, page 42). This theorem has a converse, but the converse applies only to an extremal set $E$.

It suffices to generalize Mr. \textsc{Borel}'s second proof (\textsc{ii}, pages 43--44); however, to prove that a certain quantity $n_{A}$ determined at every element $A$ of $E$ is bounded on $E$, one cannot divide, as he does, into intervals of decreasing lengths. But if $n_{A_i}$ is not bounded, one can form, \textit{since $E$ is extremal}, a sequence of elements of $E$: $A_{1}, A_{2}, A_{3}, \ldots$ tending to an element of $E$ and such that $n_{A_{p}}$ tends to infinity with $p$. The rest of the proof follows.

One can nevertheless state a converse applicable to an arbitrary set. Combining it with the direct theorem gives the following proposition:

\textit{For a sequence of operations $U_{1}, U_{2}, \ldots, U_{n}, \ldots$ continuous on one same set $E$ formed of elements of a class ($L$) to converge to an operation continuous on $E$, it is necessary and sufficient that this sequence converge quasi-uniformly on every extremal set formed of elements of $E$.}

It suffices to show that if an element $A$ of $E$ is the limit of a sequence of elements of $E$: $A_{1}, A_{2}, \ldots, A_{n}, \ldots$, one has:
$$
U(A)=\lim _{n=\infty} U\left(A_{n}\right)
$$

Now the set $F$ formed by elements $A, A_{1}, A_{2}, \ldots$ is extremal. One can therefore apply to it the converse already proved in this case.\\

\textbf{15.}
\textit{\textbf{Equally continuous operations.}} --- Let us consider a family $\mathfrak{F}$ of operations continuous on one same set $E$ formed of elements of a class ($L$). In a large number of questions in Analysis, it would be useful to know whether family $\mathfrak{F}$ is such that from every infinity of distinct operations of $\mathfrak{F}$ one can extract a sequence $U_{1}, U_{2}, \ldots$, $U_{n}, \ldots$ that converges uniformly to a limit $U$, necessarily continuous on $E$. We shall say that such a family $\mathfrak{F}$ is \textit{compact}.

Mr. \textsc{Arzelà}, who showed how knowing that a family is compact would simplify many proofs in Analysis, succeeded in solving this question in the case where elements of $E$ are numbers (\textsc{v}, page 1).

For this he uses the notion of \textit{equal continuity} introduced by Mr. \textsc{Ascoli} (\textsc{xxii}). Since we place ourselves at a much more general point of view than those two authors, we shall give a definition of equal continuity independent of the notion of interval. We shall then recognize that it coincides with Mr. \textsc{Arzelà}'s definition in the sets considered by him. Mr. \textsc{Arzelà}'s proof is complicated and does not generalize to our present point of view. We shall give a different proof whose general course is basically the same as Mr. \textsc{Hilbert}'s for existence of geodesics (\textsc{xx}). But Mr. \textsc{Hilbert} had not seen that his \textit{method} could be generalized much farther than that of Mr. \textsc{Arzelà}, whereas his \textit{result} is, on the contrary, an immediate consequence of Mr. \textsc{Arzelà}'s theorem.

We shall say that uniform operations on one same set $E$ formed of elements of a class $(L)$ constitute a family $\mathfrak{F}$ of operations equally continuous at $A$ on $E$ if, given a number $\varepsilon>0$ and any sequence of elements of $E$: $A_{1}, A_{2}, \ldots, A_{n}, \ldots$, having as limit an element $A$ of $E$, one can find an integer $p$ such that inequality $n>p$ implies
$$
\left|U(A)-U\left(A_{n}\right)\right|<\varepsilon,
$$
whatever operation $U$ of family $\mathfrak{F}$.

It is of course understood that number $p$, independent of $U$, may instead vary with the sequence $A_{1}, A_{2}, \ldots, A_{n}, \ldots$. It follows immediately from the definition, as it should, that operations \textit{equally continuous} (\textit{également continues}) on $E$ are in particular each continuous on $E$. Conversely, if operations \textit{in finite number} are continuous on $E$, they are \textit{equally continuous} on $E$. But this remark is no longer true for an infinity of distinct operations.

To reach the goal we have set ourselves, we shall prove a series of propositions interesting in themselves, but involving for $E$ and $\mathfrak{F}$ different hypotheses. Combining them in the case where all these hypotheses are simultaneously satisfied yields the desired result.\\

\textbf{16.}
\textsc{\textbf{$1^{\text {st }}$ Lemma.}} --- The limit $U$ of a convergent sequence $U_{1}, U_{2}, \ldots, U_{n}, \ldots$ of operations equally continuous on a set $E$ is a continuous operation on $E$. Moreover, if $E$ is extremal, convergence is necessarily uniform on $E$.

Indeed, if one considers an arbitrary element $A$ of $E$, limit of a sequence of elements of $E$: $A_{1}, A_{2}, \ldots, A_{n}, \ldots$, one has
$$
\left|U_{q}(A)-U_{q}\left(A_{n}\right)\right|<\varepsilon, \text { for } n>p,
$$
whatever $q$ may be. If one lets $q$ increase indefinitely, one has in the limit
$$
\left|U(A)-U\left(A_{n}\right)\right| \le \varepsilon, \text { for } n>p \, .
$$

This proves that $U$ is continuous at every element $A$ of $E$.

Moreover, one sees that if operations of a family $\mathfrak{F}$ are equally continuous on $E$, so too are operations of family $\mathfrak{F}_1$, formed by the set of limit operations on $E$ of an infinite sequence of operations of $\mathfrak{F}$ (if there are any).

Now suppose $E$ is an extremal set. If equally continuous operations $U_{1}, U_{2}, \ldots$ converge to a limit $U$ on an extremal set $E$, convergence is necessarily uniform. Otherwise one could find a number $\varepsilon>0$ such that, whatever $n$, there exist an element $A_{n}$ of $E$ and an integer $p_{n}>n$ for which
$$
\left|U_{p_{n}}\left(A_{n}\right)-U\left(A_{n}\right)\right|>\varepsilon ;
$$
and since $E$ is extremal, one may assume $A_{1}, A_{2}, \ldots$ tend to an element $A$ of $E$. One then has:
$$\left|U_{p_{n}}\left(A_{n}\right)-U\left(A_{n}\right)\right| \le\left|U_{p_{n}}\left(A_{n}\right)-U_{p_{n}}(A)\right|+\left|U(A)-U\left(A_{n}\right)\right|+\left|U(A)-U_{p_{n}}(A)\right|.$$
Since $U_{p_{n}}(A)$ tends to $U(A)$ and operations $U_{1}, U_{2}, \ldots$ are equally continuous, one can find $k^{\prime}$ and $k^{\prime\prime}$ such that the last term is less than $\displaystyle \cfrac{\varepsilon}{3}$ for $n>k^{\prime}$, and the first two are each less than $\displaystyle \cfrac{\varepsilon}{3}$ for $n>k^{\prime\prime}$. Thus for $n>k^{\prime}+k^{\prime\prime}$:
$$
\left|U_{p_{n}}\left(A_{n}\right)-U\left(A_{n}\right)\right| \le \varepsilon .
$$

Hence we indeed arrive at a contradiction.\\

\textbf{17.}
\textsc{\textbf{$2^{\text {nd }}$ Lemma.}} --- Consider a sequence of operations $U_{1}, U_{2}, \ldots$ convergent on a set $D$. If these operations are equally continuous on set $F$ formed by elements of $D$ and of its derived set $D^{\prime}$, the sequence considered is also convergent on $F$.

It suffices to prove that the sequence is convergent at every element $A$ that is limit of a sequence of elements of $D$: $A_{1}, A_{2}, \ldots, A_{n}, \ldots$ Now, given $\varepsilon>0$, one can find $p$ such that:
$$
\left|U_{q}(A)-U_{q}\left(A_{n}\right)\right|<\cfrac{\varepsilon}{3},
$$
for $n>p$, whatever $q$ may be, since the operations are equally continuous. Take a determined value of $n>p$, for instance $n=p+1$. Since sequence $U_{1}\left(A_{p+1}\right), U_{2}\left(A_{p+1}\right), \ldots$ is convergent, one can find an integer $r$ such that:
$$
\left|U_{r+k}\left(A_{p+1}\right)-U_{r}\left(A_{p+1}\right)\right|<\cfrac{\varepsilon}{3},
$$
whatever $k$ may be.

Then, whatever integer $k$,
$$
\begin{aligned}
\left|U_{r+k}(A)-U_{r}(A)\right| \le & \left|U_{r+k}(A)-U_{r+k}\left(A_{p+1}\right)\right|+\left|U_{r+k}\left(A_{p+1}\right)-U_{r}\left(A_{p+1}\right)\right| \\
& +\left|U_{r}\left(A_{p+1}\right)-U_{r}(A)\right|<\varepsilon .
\end{aligned}
$$

By a theorem of \textsc{Cauchy}, this proves that sequence $U_{1}(A), U_{2}(A), \ldots$ is convergent.\\

\textbf{18. \textsc{$3^{\text {rd }}$ Lemma.}} --- For operations continuous on one same extremal set $E$ to form a compact family $\mathfrak{F}$, they must be equally continuous and collectively bounded.

If these operations were not collectively bounded, there would be one, $U_{n}$, for every $n$, whose absolute value would exceed $n$ at some element $A_{n}$ of $E$.

Since family $\mathfrak{F}$ is compact, one can suppose that $U_{1}, U_{2}, \ldots$ converge uniformly to an operation $U$ continuous on $E$, and that $U_{n}\left(A_{n}\right)$ still tends to infinity. Then one has
$$
\left|U_{n}\left(A_{n}\right)\right| \le\left|U_{n}\left(A_{n}\right)-U\left(A_{n}\right)\right|+\left|U\left(A_{n}\right)\right| .
$$

As $n$ increases indefinitely, the first term of the right-hand side tends to zero; the second remains finite since $U$ is a continuous operation on an extremal set ($\mathrm{n}^{\circ}$ 11). Therefore the left-hand side cannot increase indefinitely.

Likewise, if operations of $\mathfrak{F}$ were not equally continuous, one could find a sequence of elements of $E$: $A_{1}, A_{2}, \ldots$, tending to an element $A$ of $E$, and a number $\varepsilon$ such that, whatever $n$, there exists an operation of $\mathfrak{F}$, $U_{n}$, and an integer $p_{n}>n$ for which:
$$
\left|U_{n}(A)-U_{n}\left(A_{p_{n}}\right)\right|>\varepsilon .
$$

Since family $\mathfrak{F}$ is compact, this amounts to saying that there exists a sequence of elements of $E$: $A_{1}^{\prime}, A_{2}^{\prime}, \ldots$ tending to an element $A$ of $E$, and a sequence of operations of $\mathfrak{F}$: $U_{1}^{\prime}, U_{2}^{\prime}, \ldots$ converging uniformly to an operation $U$, in such a way that:
$$
\left|U_{n}^{\prime}(A)-U_{n}^{\prime}\left(A_{n}^{\prime}\right)\right|>\varepsilon .
$$
(It is enough to extract suitably these two sequences from sequences $A_{p_{1}}, A_{p_{2}}, \ldots$ and $U_{1}, U_{2}, \ldots$.)

But then:
$$
\left|U_{n}^{\prime}(A)-U_{n}^{\prime}\left(A_{n}^{\prime}\right)\right| \le\left|U\left(A_{n}^{\prime}\right)-U_{n}^{\prime}\left(A_{n}^{\prime}\right)\right|+\left|U_{n}^{\prime}(A)-U(A)\right|+\left|U(A)-U\left(A_{n}^{\prime}\right)\right|
$$
and, taking $n$ large enough, one sees as before that one can make the right-hand side less than $\varepsilon$, hence the announced contradiction.

To reach the general theorem, I shall state one last proposition, proved later in another form ($\mathrm{n}^{\circ}$ 66).

If for each integer value of $n$ one considers an infinite sequence $S_{n}$ of numbers $x_{1}^{(n)}, x_{2}^{(n)}, \ldots x_{p}^{(n)}, \ldots$, each, for fixed $p$ and whatever $n$, lying between two fixed numbers $\lambda_{p}$ and $\mu_{p}$, one can extract from sequence $S_{1}, S_{2}, \ldots$ a sequence $S_{n_{1}}, S_{n_{2}}, \ldots$ of increasing indices such that $x_{p}^{\left(n_{q}\right)}$ tends to a limit $x_{p}$, for every fixed number $p$, when $n_{q}$ increases indefinitely.\\

\textbf{19.}
We are now in a position to state the following general theorem:

\textsc{\textbf{Theorem.}} --- \textit{For operations continuous on one same extremal set $E$, formed of elements of a class ($L$), and belonging to one same countable set $D$ or to its derived set $D^{\prime}$, to form a compact family $\mathfrak{F}$, it is necessary and sufficient that operations of $\mathfrak{F}$ be equally continuous and collectively bounded at every element of $E$.}

The condition is necessary by the last Lemma. To prove it is sufficient, it suffices to prove that from every infinite sequence of operations of $\mathfrak{F}$: $U_{1}, U_{2}, \ldots$, one can extract a sequence converging uniformly on $E$ to an operation that will necessarily be continuous on $E$.

Indeed, let $A_{1}, A_{2}, \ldots$ denote sequence of elements of countable set $D$, and consider sequence $S_{n}$ of numbers $x_{1}^{(n)}=U_{n}\left(A_{1}\right), \ldots, x_{p}^{(n)}=U_{n}\left(A_{p}\right), \ldots$ Since operations of $\mathfrak{F}$ are collectively bounded at every element of $D$, these numbers $x_{p}^{(n)}$ are, for each value of $p$ and whatever $n$, between two fixed numbers $\lambda_{p}$ and $\mu_{p}$. By the last proposition recalled above, one can therefore extract from sequence $S_{1}, S_{2}, \ldots$ a sequence $S_{n_{1}}, S_{n_{2}}, \ldots$ of increasing indices such that $x_{p}^{\left(n_{q}\right)}=U_{n_{q}}\left(A_{p}\right)$ converges to a certain limit $x_{p}$, for every $p$. This means that one can extract from sequence $U_{1}, U_{2}, \ldots$ a sequence of operations $U_{n_{1}}, U_{n_{2}}, \ldots, U_{n_{q}}, \ldots$ convergent at every element of $D$.

By the second Lemma, this sequence of operations is also convergent at every element of $E$; by the first Lemma, convergence is uniform, and hence the limit is continuous on $E$.

\textsc{\textbf{1$^\text{st}$ Remark.}} --- If one no longer assumes that set $E$ is extremal, the preceding proof still proves something. It proves that if operations of $\mathfrak{F}$ are equally continuous and collectively bounded at every element of $E \equiv D+D^{\prime}$, then from every infinity of operations of $\mathfrak{F}$ one can extract a sequence tending to an operation continuous on $E$. But one does not know whether convergence is uniform. Indeed, examples can be given where under these conditions convergence is non-uniform. It will always be uniform on every extremal set contained in $E$.

\textsc{\textbf{$2^{\text {nd }}$ Remark.}} --- It might seem that condition $E \equiv D+D^{\prime}$, where $D$ is countable, is a condition introduced artificially for the sole purpose of ensuring correctness of the result. We shall see later that this is on the contrary the general case occurring in applications ($\mathrm{n}^{\mathrm{os}}$ 45, 52).\\

\textbf{20.}
\textbf{\textit{Upper-limit operation of a family of operations.}} --- Let us consider a family $\mathfrak{F}$ of uniform operations on a set $E$. At each element $A$ of $E$, we may define a number $L(A)$ to be the upper limit of values at $A$ of operations of $\mathfrak{F}$. If operations of $\mathfrak{F}$ are collectively bounded at every element of $E$, we thus determine a uniform operation on $E$, which we shall call the upper limit of $\mathfrak{F}$.

\textsc{\textbf{Theorem.}} --- \textit{Given a family $\mathfrak{F}$ of operations collectively bounded and equally continuous on an arbitrary set $E$, the upper limit of $\mathfrak{F}$ is a continuous operation on $E$.}

This theorem was proved by Mr. \textsc{Arzelà} (\textsc{v}, page 9) in the case where elements are points on a line. His proof generalizes immediately to the present case, remembering that our definition of continuity renders use of intervals unnecessary.\\

\textbf{21.}
\textsc{\textbf{Remark.}} --- We have seen above how one can generalize certain known theorems in the classes so broad that we called them classes $(L)$. Nevertheless, the small number of hypotheses we imposed on classes $(L)$ will not permit us to continue this extension. One quickly sees that certain fundamental propositions of the theory of linear sets and continuous functions, whose statement, suitably phrased, can retain a meaning for classes $(L)$, are no longer true for every class $(L)$. Thereby every attempt to generalize the later propositions that rely on them as a basis fails.

\subsubsection*{Derived sets.}
\phantomsection\addcontentsline{toc}{subsubsection}{Derived sets.}

\textbf{22.}
We shall give as an example one of the most important among them: the derived set of a linear set is closed. Is this proposition applicable to sets of elements of an arbitrary class ($L$)? \textit{The answer is negative.}

Let us indeed consider class ($C$) of functions of one variable $x$ defined on a fixed interval $J: 0 \le x \le 1$, and let us agree to say that a sequence of such functions $f_{1}(x), f_{2}(x), \ldots$ tends to a function $f(x)$ if, for each value $x$ in interval $J$, $f_{n}(x)$ has a finite limit $f(x)$. This definition does satisfy conditions I and II of no. 7. Therefore this class ($C$) is a class ($L$). However, the derived set of an arbitrary set of elements of class ($C$) is not necessarily closed.\\

{
\small

\,\footnote{\,In the \textsc{First Part}, subjects requiring somewhat special knowledge are printed in small type.} It is enough to consider set $E$ of continuous functions on $J$. Set $E^{\prime}$ of limit functions of continuous functions is set of functions of classes 0 or 1 in Mr. \textsc{Baire}'s sense (\textsc{iii}, page 126). Now it is known that there are class-2 functions (same reference), that is, there are limit functions of functions of $E^{\prime}$ that do not belong to $E$.

\textbf{23.}
\textit{\textbf{Limits of limit functions.}} --- We shall move away for a moment from our subject to point out a proposition to which the preceding example gives a certain interest.

Consider a set of functions $f_{n}^{(p)}(x)$ defined on interval $J$ and such that: $1^{\circ}$ sequence
$$
f_{n}^{(1)}(x), \quad f_{n}^{(2)}(x), \quad \ldots \quad f_{n}^{(p)}(x), \quad \ldots
$$
has a determined limit $f_{n}(x)$, and this whatever $n$; $2^{\circ}$ sequence
$$
f_{1}(x), f_{2}(x), \ldots, f_{n}(x), \ldots
$$
has determined limit $f(x)$.

From the above, it will not always be possible to find two sequences of integers $n_{1}, n_{2}, \ldots, p_{1}, p_{2}, \ldots$ such that one has $\displaystyle f(x)=\lim _{r=\infty} f_{n_{r}}^{\left(p_{r}\right)}(x)$.

This remark justifies interest in the following theorem:

\textsc{\textbf{Theorem.}} --- Consider functions $f_{n}^{(p)}(x), f_{n}(x), f(x)$ defined on one same interval $J$, whatever integers $n, p$, and such that:
\begin{equation*}
f(x)=\lim _{n=\infty} f_{n}(x), \quad f_{n}(x)=\lim _{p=\infty} f_{n}^{(p)}(x) . \tag{ \textsc{i}}
\end{equation*}

If in general it is impossible to choose integers $n_{r}, p_{r}$ such that equality
$$
f(x)=\lim _{r=\infty} f_{n_{r}}^{\left(p_{r}\right)}(x)
$$
holds at every point of $J$, it is at least always possible to determine them so that this equality holds at every point of $J$ except on a set of points of measure zero\,\footnote{\,For the meaning of this expression, see \textsc{ii}, page 16.}.

To prove this, I shall rely on two preliminary remarks: $1^{\circ}$ Assume, to simplify, that interval $J$ is interval $(0,1)$. If one considers a sequence of sets of points of $J$: $E_{1}, E_{2}, \ldots$, having measures $1-m_{1}, 1-m_{2}, \ldots$, the set of points common to all these sets has measure at least equal to $1-\left(m_{1}+m_{2}+\ldots\right)$. $2^{\circ}$ Mr. \textsc{Borel} proved (\textsc{ii}, page 37) that, whatever fixed number $\varepsilon>0$, the set of points where the remainder of rank $n$ of a convergent series of functions of $x$ on $J$ is in absolute value greater than $\varepsilon$ is a set whose measure tends to zero with $\cfrac{1}{n}$.

Let $\varepsilon$ and $\sigma$ be two arbitrary numbers between 0 and 1; one can find an integer $q$ such that the measure of the set of points where $\left|f-f_{n}\right|<\cfrac{\varepsilon}{2}$ is greater than $1-\cfrac{\sigma}{2}$ for $n>q-1$. Then, with $q$ fixed, one can find a number $r$ such that the measure of the set of points where $\left|f_{q}-f_{q}^{(r)}\right|<\cfrac{\varepsilon}{2}$ is also greater than $1-\cfrac{\sigma}{2}$.
At every point common to these two sets one has $\left|f-f_{q}^{(r)}\right|<\varepsilon$, and set of these common points has measure at least equal to $1-\left(\cfrac{\sigma}{2}+\cfrac{\sigma}{2}\right)=1-\sigma$. A fortiori, the same is true of the set of points such that $\left|f-f_{q}^{(r)}\right|<\varepsilon$.

This being so, let $\varepsilon$ and $\sigma$ take two successive sequences of values $\varepsilon_{1}, \varepsilon_{2}, \ldots ; \sigma_{1}, \sigma_{2}, \ldots$ Whatever $r$, one can determine integers $n_{r}, p_{r}$ such that set $G_{r}$ of points where one has $\left|f-f_{n_{r}}^{\left(p_{r}\right)}\right|<\varepsilon_{r}$ has measure greater than $1-\sigma_{r}$.
Now call $H_{r}$ the set of points common to $G_{r}, G_{r+1}, G_{r+2}, \ldots, G_{r+n}, \ldots$.
Its measure is at least equal to $1-\left(\sigma_{r}+\sigma_{r+1}+\ldots\right)$, and at each of its points one has $\left|f-f_{n_{r+q}}^{\left(p_{r+q}\right)}\right|<\varepsilon_{r+q}$, whatever integer $q$.

Now suppose one has chosen, as is possible, numbers $\varepsilon_{r}, \sigma_{r}$ so that $\varepsilon_{r}$ decrease to zero and the series $\sigma_{1}+\sigma_{2}+\ldots$ converges. For example, suppose $\varepsilon_{r}=\sigma_{r}=\cfrac{1}{r^{2}}$. Then one sees that at all points of $H_{r}$ one has $\left|f-f_{n_{r+q}}^{\left(p_{r+q}\right)}\right|<\cfrac{1}{(r+q)^{2}}$, whatever $q$; therefore at every point of $H_{r}$ sequence $f_{n_{1}}^{\left(p_{1}\right)}, f_{n_{2}}^{\left(p_{2}\right)}, \ldots$ converges to $f$. Hence set $H$ of points where this sequence converges has measure at least equal to that of $H_{r}$ for every $r$, that is, at least equal to $1-\left(\cfrac{1}{r^{2}}+\cfrac{1}{(r+1)^{2}}+\cdots\right)$ for every $r$. In other words, set $H$ has the same measure as $J$.\\

\textbf{24.}
\textsc{\textbf{Remarks.}} --- 1$^{\circ}$ There is an important special case where one can arrange that set $H$ \textit{coincides} with interval $J$. It is the case where $f_{n}^{(p)}(x)$ tends \textit{uniformly} to $f_{n}(x)$, whatever $n$. Indeed, in this case, for every value of $n$ one can choose an integer $p_{n}$ such that:
$$
f_{n}(x)=f_{n}^{\left(p_{n}\right)}(x)+\cfrac{\varepsilon_{n}(x)}{n} \, \text { with } \, \left|\varepsilon_{n}(x)\right|<1 \, ,
$$
whatever $x$. Then one has:
$$
f(x)=\lim _{n=\infty} f_{n}(x) \quad \text { and } \quad 0=\lim _{n=\infty} \cfrac{\varepsilon_{n}(x)}{n},
$$
whence:
$$
f(x)=\lim _{n=\infty} f_{n}^{\left(p_{n}\right)}(x)
$$

$2^{\circ}$ If equalities (\textsc{i}) were not necessarily verified everywhere, but each held on a set of measure $\ge m$ ($m$ fixed), one could arrange that set $H$ also had measure $\ge m$. In particular, if equalities (\textsc{i}) each held except on a set of measure zero, the preceding proof would still show that one can arrange for $H$ to coincide with $J$ except on a set of measure zero.

Apply remark $1^{\circ}$ taking as $f_{n}(x)$ arbitrary continuous functions.
It is known one may take as $f_{n}^{(p)}(x)$ polynomials with uniform convergence of $f_{n}^{(p)}$ to $f_{n}$. Therefore, by $1^{\circ}$, limit functions of continuous functions are also limits of polynomials. Now apply our theorem taking for $f_{n}$ a function of first class; from what we have just said, one may take polynomials for $f_{n}^{(p)}$. Hence class-2 functions are limits of polynomials except on a set of measure zero. Now apply remark $2^{\mathrm{o}}$ taking for $f_{n}$ a class-2 function. From the above, one may take polynomials for $f_{n}^{(p)}$, convergence occurring on $J$ except on a set of measure zero. Hence class-3 functions are also limits of polynomials except on a set of measure zero. More generally, \textit{every function $f(x)$ included in Mr. \textsc{Baire}'s classification (\textsc{iii}, page 126) can be regarded as the limit of a sequence of polynomials except on a set of measure zero}\,\footnote{\,I stated this theorem in the Comptes Rendus of March 20, 1905: \textit{Sur la notion d'écart dans le calcul fonctionnel}. Mr. \textsc{Lebesgue} obtained this same theorem as an application of his definition of the integral.}. If the sequence of polynomials converges everywhere, $f(x)$ is of class 0 or 1. If the polynomial sequence converges everywhere uniformly, $f(x)$ is class 0, that is, continuous. It is even enough that convergence hold everywhere quasi-uniformly.
}

\subsection*{Chapter II.}
\phantomsection\addcontentsline{toc}{subsection}{Chapter II.}
\subsubsection*{Definition of limit by neighborhood \textit{(voisinage)}.}
\phantomsection\addcontentsline{toc}{subsubsection}{Definition of limit by neighborhood \textit{(voisinage)}.}

\textbf{25.}
We already noted above that the extreme generality of classes ($L$) does not allow extension to them of a large number of properties of linear sets. To obtain more numerous properties, one should impose new restrictions on the conception of classes ($L$). They should be chosen so as to satisfy the following conditions: $1^{\circ}$ these restrictions should be stateable independently of the nature of the elements considered; $2^{\circ}$ they should be satisfied by the classes of elements most frequently occurring in applications; $3^{\circ}$ they should provide the sought generalization of theorems on linear sets and continuous functions.\\

\textbf{26.}
Now we observed ($\mathrm{n}^{\circ}$ 22) that if one considers an arbitrary class ($L$), the derived set of a set of elements of this class is not always closed. Yet this is a property it would have been quite natural to assume, translating it as follows: when elements $A_{1}, A_{2}, \ldots$ of a class ($L$) are each limit of a sequence of elements (for example $\displaystyle A_{n}=\lim _{p=\infty} A_{n}^{(p)}$) and tend to a limit $A$, this element $A$ is itself a limit of the latter elements (for example $\displaystyle A=\lim _{q=\infty} A_{n_{q}}^{\left(p_{q}\right)}$). The example we gave ($\mathrm{n}^{\circ}$ 22) shows this property is not a consequence of the definition of classes ($L$). It also proves that it may fail even among classes ($L$) that arise naturally in Function Theory.\\

\textbf{27.}
However, the preceding property is compatible with a large number of definitions of limit (particularly so-called uniform limits). One should therefore either study what happens when one imposes this property as an additional condition on the definition of limit, or introduce an additional condition that would imply this property as a consequence. Now examination of already known cases where this property holds shows that it is the latter circumstance that occurs, in the following form:

Consider a class $(V)$ of elements of arbitrary nature, but such that one can tell whether two of them are identical or not, and such that, moreover, to any two of them $A, B$, one can associate a number $(A, B)=(B, A) \ge 0$ having the two following properties: $1^{\circ}$ The necessary and sufficient condition for $(A, B)$ to be zero is that $A$ and $B$ be identical. $2^{\circ}$ There exists a determined positive function $f(\varepsilon)$ tending to zero with $\varepsilon$, such that inequalities $(A, B) \le \varepsilon$, $(B, C) \le \varepsilon$ imply $(A, C) \le f(\varepsilon)$, whatever elements $A, B, C$. In other words, it is enough that $(A, B)$ and $(B, C)$ be small for $(A, C)$ to be small as well. We shall call \textit{neighborhood (voisinage)} of $A$ and $B$ the number $(A, B)$.

This being so, we may say that a sequence of elements of class ($V$), $A_{1}, A_{2}, \ldots$, tends to an element $A$ if neighborhood $\left(A_{n}, A\right)$ tends to zero with $\displaystyle\cfrac{1}{n}$. If a sequence $A_{1}, A_{2}, \ldots$ has limit $A$, it can have no other, for if $B$ were limit of the same sequence, numbers $\left(A, A_{n}\right)$ and $\left(B, A_{n}\right)$ would be infinitely small with $\displaystyle\cfrac{1}{n}$, hence so would $(A, B)$ (2nd condition). Then $(A, B)$ would be zero and therefore elements $A, B$ would not be distinct (1st condition).

Moreover, this definition of limit does satisfy conditions I and II that we imposed in general on every definition of limit ($\mathrm{n}^{\circ}$ 7), thanks to conditions $1^{\circ}$ and $2^{\circ}$ imposed on definition of neighborhood (voisinage).\\

\textbf{28.}
Nevertheless, not every definition of limit satisfying conditions I and II can be deduced from notion of neighborhood (voisinage). To prove this, it will suffice to prove the following theorem.

\textsc{\textbf{Theorem.}} --- \textit{The derived set of a set of elements of a class ($V$) is a closed set.}

Indeed, returning to the notation used above: since here limit is deduced from neighborhood (voisinage), one can associate to each integer $n$ an integer $q_{n}$ such that $\displaystyle\left(A, A_{q_{n}}\right)<\cfrac{1}{n}$, and an integer $p_{n} \ge n$ such that $\displaystyle\left(A_{q_{n}}, A_{q_{n}}^{\left(p_{n}\right)}\right)<\cfrac{1}{n}$. One therefore has:
$$
\left(A, A_{q_{n}}^{\left(p_{n}\right)}\right)<f\left(\cfrac{1}{n}\right)
$$
and consequently $A_{q_{n}}^{\left(p_{n}\right)}$ tends to $A$ when $n$ tends to infinity.

If now we return to example of class ($C$) given above, we see it provides a class ($L$) where it is impossible to construct any definition of neighborhood (voisinage) such that the definition of limit deduced from it would coincide with the adopted one. For, starting from the latter, one obtains a non-closed derived set.

It is therefore genuinely restrictive to consider in particular classes ($V$) among classes $(L)$, and we may hope thus to obtain new generalizations.

We shall therefore now limit ourselves to the study of classes ($V$), that is, those classes ($L$) where definition of limit is deduced from that of neighborhood (voisinage).\\

\textbf{29.} \textsc{\textbf{Theorem.}} --- \textit{Set $P$ of condensation elements\,\footnote{\,See definitions given above ($\mathrm{n}^{\circ}$ 8). We shall see later ($\mathrm{n}^{\circ}$ 43) a very general case where an arbitrary set is always \textit{condensed}, i.e., where any uncountable infinity of its elements always gives rise to at least one condensation element.} of a condensed set $E$ formed of elements of a class} ($V$) \textit{is a perfect set or null set\,\footnote{\,The following proof is the generalization of the proof recently given by Mr. \textsc{Lindelöf} for the case where elements of $E$ are points of space. See \textsc{Borel} (\textsc{ii}, pages 5, 6). I have insisted only on points of his proof requiring some precautions to be generalized.}}.

Indeed, one can first show that $P$ is a closed set without assuming $E$ is condensed, and even for an arbitrary class ($L$). For if $A$ is any element that is limit of elements $A_{1}, A_{2}, \ldots$ of $P$, it is first a limit element of $E$ since $P$ is contained in $E^{\prime}$, which is closed by the preceding theorem. Moreover, if one removes from $E$ a countable set, elements $A_{1}, A_{2}, \ldots$ remain limit elements of remaining set $E_{1}$ (without necessarily belonging to it), since they are condensation elements of $E$. Therefore $A$ is also a limit element of $E_{1}^{\prime}$, which is also closed. Hence $A$ is again a limit element of $E_{1}$; it is indeed a condensation element.

Now show that every element of $P$ belongs to $P^{\prime}$. Indeed, let $A$ be an element of $P$, and let $E_{n}$ be the set of elements $B$ of $E$ such that $\displaystyle\cfrac{1}{n+1}<(B, A) \le \cfrac{1}{n}$. There are infinitely many such sets that are uncountable; otherwise there would be an integer $q$ such that set of elements $B$ of $E$ satisfying $\displaystyle(A, B) \le \cfrac{1}{q+1}$ is countable, and consequently $A$ would not be a condensation element. Thus among sets $E_{n}$ there are infinitely many that are uncountable: $E_{n_{1}}, E_{n_{2}}, \ldots$ taken in increasing index order. \textit{Since $E$ is condensed}, uncountable set $E_{n_{p}}$ gives rise to at least one condensation element $A_{n_{p}}$. Moreover, $A_{n_{p}}$ is distinct from $A$, since it is limit of elements $B$ such that $(A, B)$ is greater than $\displaystyle\cfrac{1}{n_{p}+1}$. Now one easily sees that $(A_{n_{p}}, A)$ tends to zero, hence $A$ belongs to $P^{\prime}$.\\

\textbf{30.}
\textsc{\textbf{Theorem.}} --- \textit{Let $E$ be a condensed set formed of elements of a class} ($V$). \textit{Set $D$ of elements of $E$ that are not condensation elements of $E$ is countable\,\footnote{\,Same remark as for the preceding theorem.}.}\\

\textbf{31.} \textsc{\textbf{Corollary.}} --- \textit{Every closed and condensed set formed of elements of a class} ($V$) \textit{is the sum of a countable set and a perfect set (or null set) with no common elements.}\\

\textbf{32.} \textsc{\textbf{Theorem.}} --- \textit{Set $D$ of those elements of a compact set $E$ of elements of a class} ($V$) \textit{that do not belong to derived set $E^{\prime}$ of $E$ is countable}\,\footnote{\,For the proof in the case where $E$ is a point set, see for example i, page 36. The second part of that proof relies on the fact that intervals of bounded lengths covering segment $(0,1)$ and whose midpoints do not belong to two intervals are necessarily finite in number. It therefore cannot be generalized here.}.

Indeed, if $A$ is an element of $D$, let $\rho_{A}$ denote the lower limit $\ge 0$ of the neighborhood of $A$ with elements of $E^{\prime}$. Number $\rho_{A}$ is positive; otherwise $A$ would be a limit element of $E^{\prime}$ and therefore would belong to $E^{\prime}$.

I claim that number of elements of $D$ for which one has $\displaystyle\rho_{A}>\cfrac{1}{n}$ is finite. Indeed, otherwise one could take an infinity of them, $A_{1}, A_{2}, \ldots, A_{p}, \ldots$, all distinct and tending to a limit $B$. Then, for $p$ large enough, one would have $\displaystyle\left(A_{p}, B\right)<\cfrac{1}{n}<\rho_{A_{p}}$, which is impossible, since $B$, limit of elements of $E$, would belong to $E^{\prime}$, contrary to the definition of $\rho_{A_{p}}$.

Thus there is a finite number of elements $A$ of $D$: $A_{n}^{(1)}, A_{n}^{(2)}, \ldots, A_{n}^{\left(p_{n}\right)}$ such that $\displaystyle\rho_{A} \ge \cfrac{1}{n}$.

If one considers the countable sequence:
$$
A_{1}^{(1)}, \ldots, A_{1}^{\left(p_{1}\right)}, A_{2}^{(1)}, \ldots, A_{2}^{\left(p_{2}\right)}, \ldots, A_{n}^{(1)}, \ldots, A_{n}^{\left(p_{n}\right)}, \ldots,
$$
one sees that all elements of $D$ are listed in it, each at least once. Therefore $D$ is countable.

In particular, one sees that if $E$ is closed, $E$ decomposes into its derived set $E^{\prime}$ and a countable set $D$.

\textsc{\textbf{Corollary I.}} --- \textit{If derived set $E^{\prime}$ of a compact set $E$ formed of elements of a class} ($V$) \textit{is countable, set $E$ itself is countable.} \\

\textbf{33.}
\textsc{\textbf{Corollary II.}} --- \textit{Any extremal set $F$ formed of elements of a class $(V)$ may be regarded as the derived set of a set of elements of class $(V)$.}
At least, this proposition will be true when class ($V$) is such that one can always find in it an element whose neighborhood with an arbitrarily given element is smaller than any given positive number. This very natural condition is satisfied in all concrete examples we shall study below.

Assume therefore this condition is satisfied; one has $F=D+F^{\prime}$, with $D$ a countable set of elements:
$$
A_{1}, A_{2}, \ldots, A_{n}, \ldots
$$

To each of them, $A_{n}$, corresponds as above a positive number $\rho_{A_{n}}$. Whatever $p$, one can by hypothesis find an element of class ($V$), $B_{n}^{(p)}$, such that $\left(A_{n}, B_{n}^{(p)}\right)<\cfrac{1}{n p}$. We thus have elements $B_{n}^{(p)}$ determined for every integer value of $n$ and $p$. Let $E$ denote set of elements of $F^{\prime}$ and of elements $B_{n}^{(p)}$. I say that $F$ is derived set of $E$. Indeed, every element of $F$ is obviously a limit element of $E$. Conversely, consider a sequence $\Sigma$ of elements of $E$ tending to a limit $C$. I say that $C$ belongs to $F$. Indeed, if this sequence contains an infinity of elements $B_{n}^{(p)}$ corresponding to one same value of $n$, this is evident. Likewise if there is an infinity of elements of $F^{\prime}$, since $F^{\prime}$ is closed and contained in $F$. There remains the case where $\Sigma$ has only finitely many elements of $F^{\prime}$ and only finitely many elements $B_{n}^{(p)}$ for each value of $n$. Then there would be an infinity of distinct elements $B_{n}^{(p)}$ corresponding to increasing values of $n$:
$$
B_{n_{1}}^{\left(p_{1}\right)}, \ldots, B_{n_{q}}^{\left(p_{q}\right)}, \ldots
$$

For $q$ sufficiently large one would have $\left(B_{n_{q}}^{\left(p_{q}\right)}, C\right)<\varepsilon$ and $\left(A_{n_{q}}, B_{n_{q}}^{\left(p_{q}\right)}\right)<\varepsilon$ $\left[\text{since } B_{n_{q}}^{\left(p_{q}\right)} \text{ tends to } C \text{ and one has } \left(A_{n_{q}}, B_{n_{q}}^{\left(p_{q}\right)}\right)<\cfrac{1}{n_{q} p_{q}}<\cfrac{1}{n q}\right]$;
whence:
$\left(A_{n_{q}}, C\right)<f(\varepsilon)$.

Since $f(\varepsilon)$ tends to zero with $\varepsilon$, one sees that $C$ would be a limit of elements of $D$, that is to say in $F^{\prime}$, hence again in $F$.\\

\textbf{34.} Let us call spheroid of center $A$ and radius $\rho$ the set of all elements $B$ such that $(A, B)<\rho$. We shall say that an element $C$ is \textit{interior} to this spheroid if one has $(A, C)<\rho$.

\textsc{\textbf{Theorem.}} --- \textit{If a closed set $F$ belongs to a compact set $E$ formed of elements of a class $(V)$, one obtains $F$ by removing from $E$ elements interior to each spheroid of a certain countable set of spheroids}\,\footnote{\,The following proof is a generalization of the one given for the case where $E$ is a set of points in the plane by Mr. \textsc{Zoretti} (xxiii, page 5).}.

Indeed, let $G$ be complement of $F$, that is, set of elements of $E$ that are not part of $F$. To every element $A$ of $G$, we may associate a number $\rho_{A}$, namely the lower limit of neighborhood of $A$ with each element of $F$. Number $\rho_{A}>0$ cannot be zero; otherwise $A$ would belong to $F$ as a limit of elements of $F$.

We now form a countable sequence of elements of $G$:
$$
A_{1}^{(1)}, \ldots, A_{1}^{\left(p_{1}\right)}, A_{2}^{(1)}, \ldots, A_{2}^{\left(p_{2}\right)}, \ldots, A_{n}^{(1)}, \ldots, A_{n}^{\left(p_{n}\right)}, \ldots
$$
such that $1^{\circ}$ one has $\rho_{A_{n}^{(1)}} \ge \cfrac{1}{n}, \ldots, \rho_{A_{n}^{\left(p_{n}\right)}} \ge \cfrac{1}{n}$ for every $n$; $2^{\circ}$ if $A$ is an element of $G$ such that $\rho_{A} \ge \cfrac{1}{n}$, one has one of inequalities $\left(A, A_{1}^{(1)}\right)<\cfrac{1}{n}, \ldots,\left(A, A_{n}^{\left(p_{n}\right)}\right)<\cfrac{1}{n}$.
To show this is possible, assume sequence formed up to $A_{n-1}^{\left(p_{n-1}\right)}$, and denote in general by $I_{A_{n}^{(q)}}$ set of elements $B$ of $E$ such that $\left(B, A_{n}^{(q)}\right)<\cfrac{1}{n}$. Then take an element $A$ of $G$ (if any exists) such that $\rho_{A} \ge \cfrac{1}{n}$ and that belongs to none of sets $I_{A_{1}^{(1)}}, \ldots, I_{A_{n-1}^{\left(p_{n-1}\right)}}$.
We may take it as $A_{n}^{(1)}$. Then take, if one exists, an element $A$ of $G$ such that $\rho_{A} \ge \cfrac{1}{n}$ and that belongs to none of sets $I_{A_{1}^{(1)}}, \ldots, I_{A_{n-1}^{\left(p_{n-1}\right)}}, I_{A_{n}^{(1)}}$.
We call it $A_{n}^{(2)}$, and so on.
Thus, step by step, we form a sequence $A_{n}^{(1)}, A_{n}^{(2)}, \ldots$ that may contain no term at all, but surely contains only finitely many. Indeed, by construction they are all distinct.
If there were infinitely many, one could extract a sequence $B_{1}, B_{2}, \ldots$ having a limit $B$, and then for $q$ large enough one would have $\left(B_{q}, B_{q+p}\right)<\cfrac{1}{n}$ whatever $p$, that is to say $B_{q+p}$ would be interior to $I_{B_{q}}$, contrary to the hypothesis.

With sequence thus formed as announced, it is then easy to see that $F$ is set of elements of $E$ belonging to none of sets
$$
I_{A_{1}^{(1)}}, I_{A_{1}^{(2)}}, \ldots, I_{A_{1}^{\left(p_{1}\right)}}, I_{A_{2}^{(1)}}, \ldots, I_{A_{n}^{\left(p_{n}\right)}}, I_{A_{n+1}^{(1)}}, \ldots,
$$
which proves proposition.\\

\textbf{35.}
Let us call a \textit{bounded set} a set drawn from a class ($V$) such that neighborhood of any two elements of this set remains below a fixed number. In the case of linear sets, we shall see this definition coincides with that of compact set. In the general case, we can only state the following proposition:

\textsc{\textbf{Theorem.}} --- \textit{Every compact set formed of elements of a class $(V)$ is bounded.}

Indeed, in the opposite case one could find, for every $n$, two elements $A_{n}, B_{n}$ of $E$ such that $\left(A_{n}, B_{n}\right)>n$, and since $E$ is compact one may suppose $A_{n}, B_{n}$ tend to two respective limits $A, B$. Now for $n$ large enough one has:
$$
\left(A_{n}, A\right)<2(A, B), \quad(A, B)<2(A, B)\,,
$$
whence:
$$
\left(A_{n}, B\right)<f[2(A, B)]=k\,.
$$
Thus, for $n$ large enough one has:
$$
\left(A_{n}, B\right)<k \text { and }\left(B_{n}, B\right)<k\,,
$$
whence:
$$
\left(A_{n}, B_{n}\right)<f(k)\,,
$$
which leads to impossible inequality $n<f(k)$.\\

\textbf{36.}
\textsc{\textbf{Theorem.}} --- \textit{Let $E$ be an extremal set formed of elements of a class $(V)$. If there exists an infinite sequence $G$ of sets $I_{1}, I_{2}, \ldots$ such that every element of $E$ is interior in the strict sense to at least one of these sets $I_{n}$, one can extract from $G$ a finite number of these sets forming a family $H$ having the same property as $G$\,\footnote{\,This theorem is a generalization of a theorem by Mr. \textsc{Borel}, stated for linear sets (\textsc{i}, page 42), then extended by him to sets of points in $n$-dimensional space (\textsc{xiv}, page 357). But his proofs do not generalize to the present case. We shall prove this theorem later for certain classes ($V$), even in the case where $G$ is not countable ($\mathrm{n}^{\circ}$ 42).}.}

Indeed, suppose theorem false. Then there would exist at least one element of $E$, say $A_{1}$, that is not interior to $I_{1}$ in the strict sense.
But $A_{1}$ is interior in the strict sense to one of sets $I_{2}, I_{3}, \ldots$; let $I_{q_{1}}$ be first among them ($q_{1}>1$).
There is at least one element of $E$, say $A_{2}$, that is interior in the strict sense to none of sets $I_{1}, I_{2}, \ldots, I_{q_{1}}$.
Call $I_{q_{2}}$ first among sets $I_{q_{1}+1}, I_{q_{1}+2}, \ldots$ to which $A_{2}$ is interior in the strict sense, and so on.
In this way we form an infinite sequence of elements $A_{1}, A_{2}, \ldots$ belonging to $E$, all distinct, and such that $A_{r+1}$ is interior in the strict sense to none of sets $I_{1}, I_{2}, \ldots, I_{q_{1}}, \ldots, I_{q_{r}}$.
This sequence has at least one limit element $A$ belonging to $E$. Let $\displaystyle A \equiv \lim _{p=\infty} A_{n_{p}}$.
Now $A$ is interior in the strict sense to at least one set $I_{k}$ in sequence $G$.
It is therefore impossible that sequence $A_{n_{1}}, A_{n_{2}}, \ldots$ contain an infinity of elements $A_{n_{1}}^{\prime}, A_{n_{2}}^{\prime}, \ldots$ that are not interior to $I_{k}$ in the strict sense. Otherwise one would call $B_{p}$ an element of $E$ equal to $A_{n_{p}}^{\prime}$ if the latter does not belong to $I_{k}$, or else an element of $E$ not belonging to $I_{k}$ and such that $\left(B_{p}, A_{n_{p}}^{\prime}\right)<\cfrac{1}{p}$ in the other case. Then $A$ would be limit of a sequence $B_{1}, B_{2}, \ldots$ of elements of $E$ not in $I_{k}$, which is impossible.
Thus elements of sequence $A_{n_{1}}, A_{n_{2}}, \ldots$ are all interior to $I_{k}$ in the strict sense from a certain rank onward. This is manifestly contradictory with the way we formed sequence $A_{1}, A_{2}, \ldots$.

\subsubsection*{Normal classes ($V$).}
\phantomsection\addcontentsline{toc}{subsubsection}{Normal classes \texorpdfstring{($V$)}.}

\textbf{37.}
After studying properties of sets and operations on classes $(L)$, it proved useful to restrict ourselves to case of classes ($V$) in order to continue our generalizations. Likewise, we shall now once again restrict the scope of our investigations without yet specifying nature of considered elements.

Let us first give a few definitions. We shall say that a sequence of elements $A_{1}, A_{2}, \ldots$ of a class ($V$) satisfies \textsc{Cauchy} conditions when to every number $\varepsilon>0$ one can associate an integer $n$ such that inequality $\left(A_{n}, A_{n+p}\right)<\varepsilon$ holds for every $p$. It is evident that if a sequence tends to a limit, it satisfies \textsc{Cauchy} conditions. On the other hand, if a sequence satisfies \textsc{Cauchy} conditions \textit{and belongs to a compact set}, it has an obviously unique limit element. But if one only knows that sequence satisfies \textsc{Cauchy} conditions, one cannot assert that it tends to a limit; all one can say is that if it has a limit element, it has only one.

We shall then say \textit{that a class $(V)$ admits a generalization of \textsc{Cauchy}'s theorem if every sequence of elements of this class that satisfies \textsc{Cauchy} conditions has a limit element (necessarily unique).}

Next, we shall call separable class a class that can be regarded in at least one way as derived set of a countable set of its own elements.

Finally, it is useful, as we already did ($\mathrm{n}^{\circ}$ 33), to consider among classes ($V$) those such that, near a given element, there exist elements whose neighborhood with it can be made as small as desired without being zero. In other words, every element of class is then a limit element. Since conversely this is true by definition of the class, we shall call such classes perfect\,\footnote{\,A separable class is necessarily perfect, but converse is not obviously true. Hence the two hypotheses should be kept separate, all the more since certain theorems apply to perfect classes without assuming separability.}.

This being so, we shall henceforth restrict ourselves to study of \textsc{normal} classes ($V$), \textit{that is, perfect, separable, and admitting a generalization of \textsc{Cauchy}'s theorem}. This restriction is by no means artificial; it follows directly from comparison of classes ($V$) with linear sets, and we shall see later that all particular classes we examine belong to this general category of normal classes ($V$). This fact might even suggest asking whether every class ($V$) is necessarily normal. It is not, as shown by following examples:\\

{\small
\textbf{38.}
Take a class of elements represented by points on a line, and call neighborhood of two of them length of arc of circle bounded by two corresponding points in an inversion transformation. We thus obtain a perfect class ($V$). However, if one takes a sequence of elements corresponding to points receding to infinity, one clearly gets a \textsc{Cauchy} sequence with no limit elements. One can give an example where neighborhood is not defined in such an artificial way.
It suffices to take class of numbers from viewpoint of arithmeticians who exclude consideration of irrational numbers.
Taking as neighborhood of two rational numbers absolute value of their difference, one again obtains a perfect class ($V$) for which there exists an infinity of sequences satisfying \textsc{Cauchy} conditions without having limit elements.

Now pass to separable classes. One may so qualify linear sets by considering infinite line as derived set of set of points with rational abscissas.
But this is not the case for every perfect class ($V$).

Take, for example, class of functions (continuous or discontinuous) of one variable $x$ on interval $(0,1)$. Calling neighborhood of two elements $f(x), g(x)$ the upper limit (finite or not) of $|f(x)-g(x)|$ on interval $(0,1)$, one sees this gives a perfect class ($V$). This class is not separable. Indeed, if it were, one could form once and for all a sequence of functions $f_{1}(x), f_{2}(x), \ldots$ such that every element of class is uniform limit of a sequence $f_{n_{1}}(x), f_{n_{2}}(x), \ldots$ extracted from the preceding one. Then, class elements being defined by countable conditions, power of class would be at most equal to that of continuum (\textsc{i}, page 126).
Now it is known the contrary holds (\textsc{i}, page 125).
One may note that preceding examples prove that power of a \textit{perfect} set drawn from a class ($V$) is not necessarily power of continuum as for linear sets, but may be smaller (first example) or larger (second example).
}\\

\textbf{39.}
These remarks made, let us now restrict ourselves to study of normal classes ($V$).

First, to understand structure of such classes, we shall generalize a question proposed by Mr. \textsc{Hadamard} (\textsc{xii}). The question proper will be studied later (see $\mathrm{n}^{\circ}$ 58). We extend it to present general case as follows:

Consider a class ($V$). Suppose one has been able to form partial sets $k_{\varepsilon}$ each containing at least one element and such that: $1^{\circ}$ every element of class belongs to one of sets $k_{\varepsilon}$; $2^{\mathrm{o}}$ any two elements of one same set $k_{\varepsilon}$ have neighborhood less than $\varepsilon$. Considering the $k_{\varepsilon}$ as so many individuals, one may say with Mr. \textsc{Hadamard} that set $\mathfrak{K}_{\varepsilon}$, which has these $k_{\varepsilon}$ as elements, ``enumerates'' the class considered. We aim to determine power of $\mathfrak{K}_{\varepsilon}$.

We find here a new analogy with linear sets. If one takes class of points on a line, problem consists in finding power of set of intervals of lengths $<\varepsilon$ that cover infinite line. Now it is evident they can be chosen so that this set is countable, though infinite.\\

\textbf{40.}
\textit{Likewise, given a separable class $(V)$, one can, whatever $\varepsilon$, choose sets $k_{\varepsilon}$ so that $\mathfrak{K}_{\varepsilon}$ is countable.}

Indeed, let $\omega$ and $\eta$ be two numbers such that $f(\omega)<\varepsilon$ and $f(\eta)<\omega$.
Since class is separable, one can extract from it a sequence of elements $A_{1}, A_{2}, \ldots$ whose derived set is the whole class. Let $K_{\varepsilon}^{(p)}$ be set of elements $A$ such that $\left(A, A_{p}\right)<\omega$. Sequence $K_{\varepsilon}^{(1)}, K_{\varepsilon}^{(2)}, \ldots$ is countable; moreover, for any two elements $A, B$ of $K_{\varepsilon}^{(p)}$, one has $(A, B)<\varepsilon$; finally every element $A$ of class belongs, for a given value of $\varepsilon$, to at least one of the $K_{\varepsilon}^{(p)}$. Set of $K_{\varepsilon}^{(p)}$ therefore constitutes desired set $\mathfrak{K}_{\varepsilon}$. One may even add that every element $A$ is interior in strict sense to at least one of sets $K_{\varepsilon}^{(1)}, K_{\varepsilon}^{(2)}, \ldots$. It is enough to take in $A_{1}, A_{2}, \ldots$ an element $A_{n_{1}}$ such that $\left(A, A_{n_{1}}\right)<\eta$; then $A$ is interior in strict sense to $K_{\varepsilon}^{\left(n_{1}\right)}$.\\

\textbf{41.}
The same method proves that, given a \textit{set} $E$ belonging to a separable class $(V)$, one can form a countable infinity $\mathfrak{K}_{\varepsilon}$ of sets $K_{\varepsilon}$ of elements of $E$ such that every element of $E$ is interior in the strict sense to at least one of the $K_{\varepsilon}$, and such that the neighborhood of any two elements of $K_{\varepsilon}$ remains less than $\varepsilon$. One may ask whether there exist sets $E$ such that the set $\mathfrak{K}_{\varepsilon}$ is not only countable, but finite.

\textsc{\textbf{Theorem.}} --- \textit{In order that the set $\mathfrak{K}_{\varepsilon}$ $[$corresponding to a set $E$ taken from a \textsc{normal} class $(V)$$]$ may be chosen, whatever $\varepsilon$, so as to contain only a finite number of elements, it is necessary and sufficient that $E$ be compact.}

The condition is necessary. Indeed, in the opposite case, one could extract from $E$ an infinite sequence $S$ of distinct elements $B_{1}, B_{2}, \ldots$ with no limit elements. If $\mathfrak{K}_{\varepsilon}$ is finite, there will necessarily be a partial set $K_{\varepsilon}$ to which infinitely many elements of the sequence $S$ belong. One then sees that one can form successively an infinity of sequences $S_{n}$ of elements of $S$, such that $S_{n+1}$ is contained in $S_{n}$ and such that the distance between two elements of $S_{n}$ is less than $\cfrac{1}{n}$. If therefore I call $B_{1}^{(p)}, B_{2}^{(p)}, \ldots$ the elements of the sequence $S_{n}$ taken in the same order as in $S$, one sees that the sequence $B_{1}^{(1)}, B_{2}^{(2)}, \ldots, B_{p}^{(p)}, \ldots$ will be a sequence of distinct elements of $S$ such that one has $\left(B_{p}^{(p)}, B_{n+p}^{(n+p)}\right)<\cfrac{1}{p}$ whatever $n$. Since $E$ is taken from a normal class $(V)$, it would therefore be necessary that the sequence $S$ had a limit element.

Conversely, suppose $E$ compact. Then the set $F \equiv E+E^{\prime}$ is extremal. For a given value of $\varepsilon$, we know that one can form a sequence of partial sets $K_{\varepsilon}^{(1)}, K_{\varepsilon}^{(2)}, \ldots$ such that their totality satisfies the conditions of a $\mathfrak{K}_{\varepsilon}$ relative to $F$. In particular, every element of $F$ is interior in the strict sense to one of the sets $K_{\varepsilon}^{(1)}, K_{\varepsilon}^{(2)}, \ldots$. Hence one can extract from this sequence $K_{\varepsilon}^{(1)}, K_{\varepsilon}^{(2)}, \ldots$ a finite number of terms $K_{\varepsilon}^{\left(q_{1}\right)}, \ldots, K_{\varepsilon}^{\left(q_{r}\right)}$ enjoying the same property (see $\mathrm{n}^{\circ}$ 36).
If we now call $H_{\varepsilon}^{\left(q_{k}\right)}$ the set of elements of $E$ that belong to $K_{\varepsilon}^{\left(q_{k}\right)}$ (which is in $F$), one sees that we obtain a finite number of sets $H_{\varepsilon}^{\left(q_{1}\right)}, \ldots, H_{\varepsilon}^{\left(q_{r}\right)}$ formed with elements of $E$ and such that: 1$^{\circ}$ every element of $E$ is interior in the strict sense to at least one of them, 2$^{\circ}$ any two elements of one of them have neighborhood $<\varepsilon$.

It is now easy to generalize the theorem of $\mathrm{n}^{\circ}$ 36.\\

\textbf{42.}
\textsc{\textbf{Theorem.}} --- \textit{Let $E$ be a set of elements of a normal class $(V)$. In order that from every \textsc{countable or non-countable} family $H$\,\footnote{\,In the case where $E$ is a linear set, one obtains the generalization of Mr. \textsc{Borel}'s theorem (see the note in $\mathrm{n}^{\circ}$ 36) extended by Mr. \textsc{Lebesgue} to the case where the family $H$ is non-countable. His proof (\textsc{iv}, page 105) does not generalize to the present case.} of sets $I$ such that every element of $E$ is interior in the strict sense to at least one of them, one can extract a finite number of sets $I$ forming a family $G$ enjoying the same property as $H$, it is necessary and sufficient that $E$ be extremal.}

We have proved ($\mathrm{n}^{\circ}$ 36) that the condition is sufficient in the case where $H$ is countable.

For the general case, let us remark that, if $E$ is extremal, one can form, whatever $\varepsilon$, finitely many sets $K_{\varepsilon}$ satisfying the conditions indicated above. If the theorem is not true for $E$, there is at least one of these sets such that its elements are not interior in the strict sense to a finite number of sets $I$. For $\varepsilon=\cfrac{1}{n}$ let us call this one $K^{(n)}$, and let $A_{n}$ be one of its elements. Since $E$ is extremal, one can extract from the sequence $A_{1}, A_{2}, \ldots$ a sequence $A_{n_{1}}, A_{n_{2}}, \ldots$ that tends to an element $A$ of $E$. $A$ is interior in the strict sense to an interval $I: I_{0}$, and one easily proves that for $p$ large enough every element of $K^{(n_p)}$ is also interior in the strict sense to $I_{0}$, which indeed leads to a contradiction.

The condition is also necessary. Indeed, suppose that $E$ is not closed; there will exist a sequence of elements $B_{1}, B_{2}, \ldots$ of $E$ tending to an element $B$ not belonging to $E$. Now consider the sets $I_{n}$ each formed by the elements $A$ of $E$ such that $(A,B)>\cfrac{1}{n}$. Every element of $E$ is interior in the strict sense to at least one of the $I_{n}$. But if one considers a finite number of these sets $I_{n}$, there will certainly be terms of the sequence $B_{1}, B_{2}, \ldots$ that do not belong to them.

Likewise, suppose that $E$ is not compact; there will exist a sequence $S$ of elements of $E: B_{1}, B_{2}, \ldots$ with no limit elements. If we consider an arbitrary element $A$ of $E$, we can define a number $\rho_{A}>0$ such that $\left(A, B_{n}\right) \ge \rho_{A}$, whatever element $B_{n}$ distinct from $A$ and taken from the sequence $S$; then $A$ will be interior in the strict sense to the set $I_{A}$ of elements $C$ of $E$ such that $(C,A)<\rho_{A}$. It follows that every element $B$ of $E$ is interior in the strict sense to at least one of the intervals $I_{A}$, namely $I_{B}$. If from the family $H$ of intervals $I_{A}$ one extracts a finite number forming a family $G$, it will be impossible that $G$ contain $E$, for in every interval $I_{A}$ there is at most one element of the sequence $S$.\\

\textbf{43.}
\textsc{\textbf{Theorem.}} --- \textit{Every non-countable set $E$ formed of elements of a normal class $(V)$ is condensed.}

Let $E_{1}$ be a non-countable infinity of elements of $E$; it is a matter of proving that $E_{1}$ gives rise to at least one condensation element. Indeed, whatever $\varepsilon$, one can, as we have seen, find a countable sequence of sets $K_{\varepsilon}^{(1)}, K_{\varepsilon}^{(2)}, \ldots, K_{\varepsilon}^{(n)}, \ldots$, such that every element of $E_{1}$ is in one of them and such that the neighborhood of two elements of one of them is $<\varepsilon$. There is therefore at least one of these sets that contains a non-countable infinity of elements of $E_{1}$.
Taking successively $\varepsilon=1, \cfrac{1}{2}, \ldots, \cfrac{1}{n}, \ldots$, we can thus form a sequence of sets $H^{(1)}, H^{(2)}, \ldots$ such that $H^{(n)}$ contains a non-countable infinity of elements of $E_{1}$, is contained in $H^{(n-1)}$, and such that any two elements of $H^{(n)}$ have neighborhood less than $\cfrac{1}{n}$.
Then, if one takes any two elements $A_{n}, A_{n}^{\prime}$ in $H^{(n)}$, each of the sequences $A_{1}, A_{2}, \ldots ; A_{1}^{\prime}, A_{2}^{\prime}, \ldots$ has a limit $A$, and the same one.
Therefore $E_{1}$ gives rise to a limit element $A$, and it is a condensation element, for if one removes from $E_{1}$ a countable infinity of elements, there will still remain elements in each of the $H^{(n)}$, and $A$ will still be a limit element of the remaining part of $E_{1}$.\\

\textbf{44.}
\textsc{\textbf{Remark.}} --- By combining the preceding theorem with that of $\mathrm{n}^{\circ}$ 31, one obtains the following statement:

\textit{
Every closed set $F$ formed of elements of a normal class $(V)$ is the sum of a countable set and of a perfect set or the null set, without common elements.}\\

\textbf{45.}
\textsc{\textbf{Theorem.}} --- \textit{Let $E$ be any set formed of elements of a separable class $(V)$; there exists a countable set of elements of $E$ such that every element of $E$ belongs either to this set $D$ or to its derived set $D^{\prime}$. When $E$ is closed, one has: $E \equiv D+D^{\prime}$. When $E$ is perfect, $E \equiv D^{\prime}$.}

Indeed, whatever $n$, one can form a countable infinity of sets $H_{n}^{(1)}, H_{n}^{(2)}, \ldots, H_{n}^{(p)}, \ldots$ formed of elements of $E$ such that in each of them the neighborhood remains less than $\cfrac{1}{n}$ and such that every element of $E$ is in one of the sets of this sequence. In each of these sets $H_{n}^{(p)}$ there is at least one element $A_{n}^{(p)}$. When one gives $n$ and $p$ arbitrary integer values, one obtains a countable set $D$ of elements $A_{n}^{(p)}$ belonging to $E$. Now it is evident that every element of $E$ belongs to $D$ or to $D^{\prime}$.

The converse is not in general exact; but if $E$ is closed, every element of $D+D^{\prime}$ belongs to $E$; therefore in this case $E \equiv D+D^{\prime}$. Moreover $E^{\prime} \equiv D^{\prime}+D^{\prime\prime} \equiv D^{\prime}$. If even $E$ is perfect, one has $E \equiv E^{\prime}$, whence $E \equiv D^{\prime}$.

Let us moreover observe that, conversely, if $E, D$ are two sets formed of elements of a class $(V)$ such that one has $E \equiv D+D^{\prime}$, the set $E$ is necessarily closed. If $E \equiv D^{\prime}$ ($D$ being a part of $E$), $E$ is perfect.

\subsubsection*{Continuity defined by means of neighborhood.}
\phantomsection\addcontentsline{toc}{subsubsection}{Continuity defined by means of neighborhood.}

\textbf{46.}
\textsc{\textbf{Theorem.}} --- \textit{Let us consider an operation $U$ defined on a set $E$ formed of elements of a class $(V)$. The necessary and sufficient condition for $U$ to be a continuous operation on $E$ is that, if $A$ is any element common to $E$ and $E^{\prime}$, one can make correspond to every number $\varepsilon>0$ a number $\eta_{A}$ such that the inequality $(A,B)<\eta_{A}$ entails $|U(A)-U(B)|<\varepsilon$ for every element $B$ of $E$.}

1$^{\circ}$ Indeed, according to the definition we gave of continuity of $U$, if $U$ is continuous at $A$ and if $A_{1}, A_{2}, \ldots, A_{n}, \ldots$ tend to $A$, $U\left(A_{n}\right)$ tends to $U(A)$. Now let $\varepsilon$ be any positive number; if one could not determine a number $\eta_{A}$ such that the inequality $(A,B)<\eta_{A}$ entails $|U(A)-U(B)|<\varepsilon$, one could determine, whatever $n$, an element $B_{n}$ of $E$ such that:
$$
\left(A, B_{n}\right)<\cfrac{1}{n}, \quad\left|U(A)-U\left(B_{n}\right)\right| \ge \varepsilon .
$$

As $n$ increases indefinitely, the first inequality shows that $B_{n}$ tends to $A$ and, by hypothesis, $U\left(B_{n}\right)$ will tend to $U(A)$, which leads to a contradiction with the second inequality.

$2^{\circ}$ If, whatever $\varepsilon$, one can determine $\eta_{A}$ such that $(A,B)<\eta_{A}$ entails $|U(A)-U(B)|<\varepsilon$, one sees that one can take $p$ large enough so that, if the sequence $B_{1}, B_{2}, \ldots$ tends to $A$, the inequality $n>p$ entails: $\left|U(A)-U\left(B_{n}\right)\right|<\varepsilon$. It suffices to take $p$ large enough so that $n>p$ entails $\left(A,B_{n}\right)<\eta_{A}$.

This proves that $U\left(B_{n}\right)$ tends to $U(A)$.\\

\textbf{47.}
This theorem provides us with a second definition of continuity of an operation. This definition is the one generally adopted for the usual concrete examples. It is moreover less general than the first, since it has meaning only when one can define a neighborhood.

But the introduction of neighborhood is absolutely necessary if one wishes to extend the notion of uniform continuity.

We shall say that an operation $U$ is uniformly continuous on a set $E$ formed of elements of a class $(V)$ if, given a positive number $\varepsilon$, one can choose a positive number $\eta$ such that, for any two elements of $E: A, B$, the inequality
$$
(A, B)<\eta \quad \text { entails } \quad|U(A)-U(B)|<\varepsilon .
$$

It obviously follows from this definition that every operation uniformly continuous on a set $E$ is continuous on $E$. The converse is true in a very general case.

\textsc{\textbf{Theorem.}} --- \textit{Every operation continuous on an \textsc{extremal} set $E$ formed of elements of a class $(V)$ is uniformly continuous on $E$.}

Indeed, in the opposite case, one could find a number $\varepsilon>0$ such that, whatever $n$, there are two elements $A_{n}, B_{n}$ of $E$ satisfying:
$$
\left(A_{n}, B_{n}\right)<\cfrac{1}{n}, \quad\left|U\left(A_{n}\right)-U\left(B_{n}\right)\right| \ge \varepsilon .
$$

From the sequence $A_{1}, A_{2}, \ldots$ one could extract a sequence $A_{p_{1}}, A_{p_{2}}, \ldots$ having as limit an element $A$ of $E$. Then $B_{p_{n}}$ would have the same limit $A$, since, $\left(A_{p_{n}}, A\right)$ and $\left(A_{p_{n}}, B_{p_{n}}\right)$ being infinitesimal with $\cfrac{1}{n}$, so is $\left(A, B_{p_{n}}\right)$. Now one has:
$$
\left|U\left(A_{p_{n}}\right)-U\left(B_{p_{n}}\right)\right| \le\left|U\left(A_{p_{n}}\right)-U(A)\right|+\left|U(A)-U\left(B_{p_{n}}\right)\right| .
$$

The two quantities on the right-hand side tend to zero with $\cfrac{1}{n}$. It is therefore impossible that the left-hand side remain $\ge \varepsilon$.\\

\textbf{48.}
\textbf{\textit{Equally continuous operations.}} --- When one considers continuous operations on a set $E$ formed of elements of a class $(V)$, one can state the conditions defining their equal continuity (\textit{égale continuité}) ($\mathrm{n}^{\circ}$ 15) in a way that brings in neighborhood and is sometimes more convenient.

Let $\mathfrak{F}$ be a family of continuous operations on an arbitrary set $E$ formed of elements of a class $(V)$. If this family is such that, to every number $\varepsilon>0$, one can make correspond a number $\eta>0$ so that one has:
$$
|U(A)-U(B)|<\varepsilon
$$
for every operation $U$ of $\mathfrak{F}$ and for every pair of elements $A, B$ of $E$ satisfying $(A,B)<\eta$, the operations of $\mathfrak{F}$ are equally continuous on $E$. This follows immediately from the definition of $\mathrm{n}^{\circ}$ 15; one sees at the same time that each operation of $\mathfrak{F}$ will be \textit{uniformly} continuous.

Conversely, this condition will be satisfied by every family $\mathfrak{F}$ of equally continuous operations on an \textit{extremal} set $E$ formed of elements of a class $(V)$. Indeed, in the opposite case, one could find a number $\varepsilon>0$ such that, whatever $n$, there exist two elements $A_{n}, B_{n}$ of $E$ and an operation $U_{n}$ of $\mathfrak{F}$ for which:
$$
\left(A_{n}, B_{n}\right)<\cfrac{1}{n}, \quad\left|U_{n}\left(A_{n}\right)-U_{n}\left(B_{n}\right)\right| \ge \varepsilon .
$$

One may even suppose that the sequence $A_{1}, A_{2}, \ldots, A_{n}, \ldots$ has a limit $A$ in $E$, since $E$ being extremal it would suffice otherwise to replace $A_{1}, A_{2}, \ldots$ by a suitably extracted sequence from the preceding one. Then $\left(A,A_{n}\right)$ and $\left(A_{n},B_{n}\right)$ will be infinitesimal with $\cfrac{1}{n}$, hence also $\left(A,B_{n}\right)$. It follows that the sequence $B_{1}, B_{2}, \ldots$ also tends to $A$. But, since the operations of $\mathfrak{F}$ are equally continuous, one can determine two fixed numbers $p, p^{\prime}$ such that one has:
$$
\begin{array}{ll}
\left|U_{q}\left(A_{n}\right)-U_{q}(A)\right|<\cfrac{\varepsilon}{2} & \text { for } n>p \,,\\
\left|U_{q}\left(B_{n}\right)-U_{q}(A)\right|<\cfrac{\varepsilon}{2} & \text { for } n>p^{\prime}\,,
\end{array}
$$
and this whatever the integer $q$. Hence one has, for $n>p+p^{\prime}$,
$$
\left|U_{n}\left(A_{n}\right)-U_{n}\left(B_{n}\right)\right| \le\left|U_{n}\left(A_{n}\right)-U_{n}(A)\right|+\left|U_{n}(A)-U_{n}\left(B_{n}\right)\right|<\cfrac{\varepsilon}{2}+\cfrac{\varepsilon}{2}=\varepsilon,
$$
whence the announced contradiction.

The conditions I have just indicated are those used by Mr. \textsc{Arzelà} to define equal continuity in the cases he considered. One sees that our results are somewhat more extensive, even when considering the same elements as he does, since our definition of $\mathrm{n}^{\circ}$ 15 includes cases where there is no equal continuity in Mr. \textsc{Arzelà}'s sense, and which arise when $E$ is not extremal (see an example in $\mathrm{n}^{\circ}$ 54).

In the case where $E$ is formed of elements of a separable class, the condition imposed on $E$ in the general theorem ($\mathrm{n}^{\circ}$ 19), namely that one has $E \equiv D+D^{\prime}$, $D$ being countable, is satisfied automatically. That theorem therefore becomes:

\textsc{\textbf{Theorem.}} --- \textit{In order that continuous operations on one and the same extremal set $E$, formed of elements of a separable class $(V)$, form a compact family $\mathfrak{F}$, it is necessary and sufficient that the operations of $\mathfrak{F}$ be, at every element of $E$, equally continuous and bounded.
}\\

\textbf{49.}
\textbf{\textit{Introduction of distance (\'ecart).}} --- When we apply the general results of the \textsc{First Part} to concrete examples, we shall first recognize that, in each case, one can associate to every pair of elements $A, B$ a number $(A,B)\ge 0$, which we shall call \textit{the distance (\'ecart) of the two elements}, and which enjoys the following two properties: \textit{a)} The distance (\'ecart) $(A,B)$ is zero only if $A$ and $B$ are identical. \textit{b)} If $A, B, C$ are any three elements, one always has $(A,B)\le (A,C)+(C,B)$.

When one can define the distance (\'ecart) of any two elements of a certain class, we shall say that this is a class $(E)$.

It is easy to see that the number thus defined satisfies the conditions imposed in the definition of neighborhood. Indeed, condition 1$^{\circ}$ of $\mathrm{n}^{\circ}$ 27 is automatically satisfied, and condition 2$^{\circ}$ is satisfied, by taking for example $f(\varepsilon)=2\varepsilon$, if one takes into account the present condition \textit{b)}.

Thus, \textit{distance (\'ecart) is a neighborhood} enjoying a particular property, or, if one prefers, every class $(E)$ is a class $(V)$. In most proofs of known theorems, property \textit{b)} of distance (\'ecart) enters the reasoning. However, the theory developed in this Chapter shows that it is not indispensable and that it is enough to use neighborhood, without thereby having to complicate the reasoning notably.\\

\textbf{50.}
One exception must nevertheless be made for the theorem we are now about to establish; the hypothesis that one is operating on a class $(E)$ does in fact enter essentially into the proof. In spite of this, it is nevertheless likely that the statement remains true for all classes $(V)$.

This theorem is the converse of the theorem we gave at the beginning ($\mathrm{n}^{\circ}$ 11) concerning the maximum of a continuous operation. I call attention both to the mode of proof and to the converse itself, which seems to me in certain cases to give precise form to a general principle stated somewhat vaguely by Mr. \textsc{Hilbert} (\textsc{xx}, page 137):

``Every problem of the calculus of variations possesses a solution, provided certain suitably chosen restrictive hypotheses concerning the nature of the given boundary conditions are fulfilled, and necessarily also provided that what one understands by the word \textit{solution} undergoes a generalization conformable to the meaning, to the nature of things.''\\

\textbf{51.}
\textsc{\textbf{Theorem.}} --- \textit{The necessary and sufficient condition for every operation continuous on a set $E$ of elements of a class $(E)$, $1^{\circ}$ to be bounded on this set, $2^{\circ}$ to attain there its upper limit, is that this set $E$ be extremal.}

We have already proved that the condition is sufficient.

We shall now show that if the set $E$ were not extremal, one could form at least two operations continuous on $E$: one unbounded, the other bounded but not attaining its upper limit. Observe that it is enough to obtain the first $U$, for then one may take for the second:
$$
V \equiv \cfrac{U^{2}}{1+U^{2}},
$$
which never attains its upper limit: $1$.

1$^{\circ}$ First suppose that $E$ is not closed. There is then an element $A$, limit of elements of $E$ and not belonging to $E$. Then it will suffice to take, for the value of $U(B)$ at each element $B$ of $E$:
$$
U(B)=\cfrac{1}{(A, B)}\,.
$$

This operation is obviously unbounded. It is well defined at every element $B$ of $E$; moreover it is continuous, for if $B$ and $C$ are two elements of $E$, one has:
$$
|U(B)-U(C)|=\left|\cfrac{(A, B)-(A, C)}{(A, B)(A, C)}\right| .
$$

Now, taking $B$ fixed and $\varepsilon$ as small as one wishes, and in particular smaller than $(A,B)$, since one has
$$
|(A, C)-(A, B)|<(B, C),
$$
the inequality $(B,C)<\varepsilon$ entails
$$
|(A, B)-(A, C)|<\varepsilon, \quad (A,B)\cdot (A,C)>(A,B)[(A,B)-\varepsilon] .
$$

Therefore, if $\omega$ is a number $>0$ as small as one wishes, it suffices to choose $\varepsilon$ so that
$$
\varepsilon<(A, B) \quad \text { and } \quad \cfrac{\varepsilon}{(A, B)[(A, B)-\varepsilon]}<\omega\,,
$$
for the inequality
$$
(B, C)<\varepsilon\, \text { to entail }\, |U(B)-U(C)|<\omega .
$$

$2^{\circ}$ Let us likewise consider the case where $E$ is non-compact. Then one could find in $E$ an infinite sequence $S$ of distinct elements of $E$
$$
A_{1}, A_{2}, \ldots, A_{n}, \ldots
$$
such that the sequence $S$ has no limit element. To form an operation $U$ continuous and unbounded on this set $E$, we first make a few remarks on the sequence $S$.

Let $\varepsilon_{p}$ be the lower limit of the distances from $A_{p}$ to all the other elements of the sequence $S$. The number $\varepsilon_{p}\ge 0$ cannot be zero, otherwise one could extract from $S$ a sequence tending to $A_{p}$.

We then decompose $E$ into partial sets in the following way. Let $E_{p}$ be the set of elements $B$ of $E$ such that $(B,A_{p})\le \alpha_{p}$, where $\alpha_{p}$ denotes the smaller of the two numbers $\cfrac{1}{p}$ and $\cfrac{\varepsilon_{p}}{3}$. Let now $B$ be an element of $E_{p}$ and $C$ an element of $E_{q}$; one has
\begin{equation*}
\left(A_{p}, A_{q}\right) \le\left(A_{p}, B\right)+\left(C, A_{q}\right)+(B, C) . \tag{I}
\end{equation*}

Now, whatever $p$ and $q$, by the very definition of $\varepsilon_{p}$ and $\varepsilon_{q}$,
$$
\left(A_{p}, A_{q}\right) \ge \varepsilon_{p} \quad \text { and } \quad\left(A_{p}, A_{q}\right) \ge \varepsilon_{q},
$$
whence
$$
\left(A_{p}, A_{q}\right) \ge \cfrac{\varepsilon_{p}+\varepsilon_{q}}{2} .
$$
On the other hand, by the definition of $E_{p}$ and $E_{q}$:
$$
\left(A_{p}, B\right) \le \cfrac{\varepsilon_{p}}{3}, \quad\left(A_{q}, C\right) \le \cfrac{\varepsilon_{q}}{3} ;
$$
the inequality therefore becomes
$$
\cfrac{\varepsilon_{p}+\varepsilon_{q}}{2} \le(B, C)+\cfrac{\varepsilon_{p}+\varepsilon_{q}}{3}.
$$
Hence
$$
(B, C) \ge \cfrac{\varepsilon_{p}+\varepsilon_{q}}{6}.
$$

Thus, the distance between any two elements, one in $E_{p}$, the other in $E_{q}$, remains greater than a fixed positive number. This proves that: $1^{\circ}$ $E_{p}$ and $E_{q}$ have no common element; $2^{\circ}$ no limit element of $E_{p}$ belongs to $E_{q}$, nor conversely. Finally, if one considers a sequence of elements $B_{1}, B_{2}, \ldots$ belonging respectively to sets $E_{p_{1}}, E_{p_{2}}, \ldots$ all distinct, it is impossible for such a sequence to have a limit. Indeed, in that case one could first, if necessary considering only a subsequence extracted from the given sequence, suppose that the indices $p_{1}, p_{2}, \ldots$ increase. Now, calling $C$ the limit of this sequence of elements, one would have
$$
\left(C, A_{p_{n}}\right) \le\left(C, B_{n}\right)+\left(B_{n}, A_{p_{n}}\right)\,,
$$
and since by hypothesis $\left(B_{n}, A_{p_{n}}\right) \le \cfrac{1}{p_{n}}$, one sees that both terms on the right-hand side would tend to zero with $\cfrac{1}{n}$. But this is impossible, since then there would be a sequence $A_{p_{1}}, A_{p_{2}}, \ldots$ extracted from $S$ and tending to a limit $C$.

This being so, we form the required operation $U$ as follows. We take
$$
U(B)=\left[\alpha_{p}-\left(B, A_{p}\right)\right] \cfrac{p}{\alpha_{p}}\,,
$$
when $B$ is any element of $E_{p}$, and we take $U(B)=0$ when $B$ is in none of the sets $E_{1}, E_{2}, \ldots$, i.e. is in the set $G \equiv E-E_{1}-E_{2}-\cdots$. One sees first that $U$ is a well-determined operation at every element of $E$. It is unbounded, since
$$
U\left(A_{p}\right)=p .
$$

Yet it is continuous on $E$. It suffices to show that if distinct elements of $E$: $C_{1}, C_{2}, \ldots$ tend to an element $C$ of $E$, one has
$$
U(C)=\lim _{n=\infty} U\left(C_{n}\right) .
$$

Indeed, observe first that there are not, by what precedes, infinitely many elements of the sequence $C_{1}, C_{2}, \ldots$ belonging to distinct sets $E_{p}$. Then:

$1^{\circ}$ Either $C$ is in $G$, i.e. one has, whatever $p$:
$$
\left(C, A_{p}\right)>\alpha_{p} .
$$
Now:
$$
\left(C_{n}, A_{p}\right) \ge\left(C, A_{p}\right)-\left(C, C_{n}\right)=\alpha_{p}+\left[\left(C, A_{p}\right)-\alpha_{p}-\left(C, C_{n}\right)\right] ;
$$
on the other hand, for $n$ large enough:
$$
\left(C, C_{n}\right)<\left(C, A_{p}\right)-\alpha_{p},
$$
whence, combining these two inequalities:
$$
\left(C_{n}, A_{p}\right)>\alpha_{p} .
$$

Therefore, from some index onward, $C_{n}$ is in $G$. Then one has
$$
U\left(C_{n}\right)=0, \quad U(C)=0,
$$
hence
$$
U(C)=\lim _{n=\infty} U\left(C_{n}\right) ;
$$

$2^{\circ}$ Or else $C$ is in $E_{p}$ and one has $\left(C, A_{p}\right) \le \alpha_{p}$. Then two cases occur:

\textit{a)} $\left(C, A_{p}\right)<\alpha_{p}$; one sees as before that from some index onward $C_{n}$ is constantly in $E_{p}$. Then:
$$
\left|U(C)-U\left(C_{n}\right)\right|=\left|\left(C_{n}, A_{p}\right)-\left(C, A_{p}\right)\right| \cfrac{p}{\alpha_{p}} \le\left(C, C_{n}\right) \cfrac{p}{\alpha_{p}} \,.
$$
Therefore:
$$
U(C)=\lim _{n=\infty} U\left(C_{n}\right) \,.
$$

\textit{b)} $\left(C, A_{p}\right)=\alpha_{p}$. Then, from some index onward, there are in the sequence $C_{1}, C_{2}, \ldots$ only elements of $G$: $C_{n_{1}}, C_{n_{2}}, \ldots$ and elements of $E_{p}$: $C_{q_{1}}, C_{q_{2}}, \ldots$. One sees as before that:
$$
U(C)=\lim _{r=\infty} U\left(C_{q_{r}}\right) \,.
$$
On the other hand:
$$
U(C)=0, \quad U\left(C_{n_{r}}\right)=0,
$$
hence:
$$
U(C)=\lim _{r=\infty} U\left(C_{n_{r}}\right) .
$$
Whence, combining:
$$
U(C)=\lim _{n=\infty} U\left(C_{n}\right) .
$$
Thus $U$ is indeed an unbounded operation, continuous at every element of $E$.

\newpage
\subsection*{SECOND PART. APPLICATIONS OF THE GENERAL THEORY.}
\phantomsection\addcontentsline{toc}{subsection}{SECOND PART. APPLICATIONS OF THE GENERAL THEORY.}

\textbf{52.}
The method to be employed for applying the general theory to particular cases obtained by specifying the nature of the elements indicates itself. In all the examples that follow, we begin by defining the class of elements on which we shall operate; in particular, we give a rule allowing one to distinguish distinct elements, which is not always as simple as one might think (see $\mathrm{n}^{\circ}$ 76--77). Then we show that the ordinary definition of the limit of these elements may be regarded as falling under the definition we gave by means of distance (a particular case of neighborhood (voisinage), $\mathrm{n}^{\circ}$ 49). Finally we prove that the class $(E)$ thus obtained is normal ($\mathrm{n}^{\circ}$ 37). \textit{This done, there remains only to repeat the general theorems of the \textsc{First Part}, stated in the language corresponding to the class considered.} Finally, in each case, we give a necessary and sufficient condition for a set to be compact ($\mathrm{n}^{\circ}$ 9). This amounts to replacing the general definition of compact set by a definition of less abstract appearance, but particular to each case.

\subsection*{Chapter III.}
\phantomsection\addcontentsline{toc}{subsection}{Chapter III.}
\subsubsection*{Linear sets and functions of one variable.}
\phantomsection\addcontentsline{toc}{subsubsection}{Linear sets and functions of one variable.}

\textbf{53.}
Let us take as elements the points of a line (or the numbers representing their abscissas). One knows how to distinguish distinct elements of this class. If one calls the distance between two of them their metric distance, one sees that this definition indeed satisfies conditions \textit{a)}, \textit{b)}, imposed on the notion of distance, $\mathrm{n}^{\circ}$ 49. Moreover, the ordinary definition of the limit of a sequence of points coincides with that deduced from this definition of distance.

The class $(E)$ thus defined is normal. Indeed, Cauchy's theorem ($\mathrm{n}^{\circ}$ 37) was proved precisely in this case. On the other hand, one knows that the set of points of a line may be considered as the derived set of the countable set of points with rational abscissas.

One also knows that every bounded linear set containing infinitely many points gives rise to at least one limit point. Conversely, we proved ($\mathrm{n}^{\circ}$ 35) that every compact set taken from a class $(V)$ is bounded. Hence, in the case of linear sets, the notion of compact set coincides with that of bounded set.

\textit{If now we apply our general theorems, we recover most of the classical theorems of the theory of linear sets and continuous functions of one variable.} (See for example I, II).\\

\textbf{54.}
We should even remark that we thereby obtain slightly more. Indeed, in classical theory one is naturally led to restrict oneself to sets of points of one same interval, and one forgets that unbounded sets sometimes have different properties. On the contrary, we have seen in the general theory that the particular properties of compact sets intervene constantly in proofs, and in an explicit manner.

Thus, for example, consider functions $f_{1}(x), f_{2}(x), \ldots$ continuous \textit{for every value} of $x$ and converging, for every $x$, to $f(x)$. Under what condition is $f(x)$ continuous? Mr. \textsc{Arzelà} ($\mathrm{n}^{\circ}$ 14) did not treat this case; if one generalizes his result without precaution, one might be tempted to say: for $f(x)$ to be continuous, it is necessary and sufficient that, given two positive numbers $\varepsilon$ and $N$, one can find a third $N^{\prime}\ge N$ such that, for every value of $x$, there exists an integer $n_{x}$ satisfying
$$
N \le n_{x} \le N^{\prime}, \quad\left|f_{n_{x}}(x)-f(x)\right|<\varepsilon\,.
$$

Now that would be an error; to see this, it suffices to take $f_{n}(x)=e^{-\frac{n}{1+x^{2}}}$. The difficulty is removed exactly as in the general case. For $f(x)$ to be continuous, it is necessary and sufficient that convergence be quasi-uniform ($\mathrm{n}^{\circ}$ 14) \textit{on every bounded interval}.

Likewise, one might be tempted to generalize another result of Mr. \textsc{Arzelà} ($\mathrm{n}^{\circ}$ 15) by stating thus the necessary and sufficient condition for continuous functions $f(x)$, for all $x$, to be such that from any infinite subcollection one can extract a convergent sequence: it is necessary and sufficient that, $1^{\circ}$ these functions be bounded as a whole, $2^{\circ}$ to every number $\varepsilon>0$ one can make correspond a number $\eta$ such that the inequality $\left|x^{\prime}-x^{\prime\prime}\right|<\eta$ entails, for all functions $f(x)$:
$$
\left|f\left(x^{\prime}\right)-f\left(x^{\prime\prime}\right)\right|<\varepsilon .
$$

Now neither of these two conditions is necessary, as shown by the example of the functions $f_{n}(x)=x^{2}+\cfrac{1}{n}$, where $n$ takes all positive integer values. On the contrary, from our general theorems we can conclude that the necessary and sufficient condition is that the functions be bounded and equally continuous at each point $x$ taken separately (see $\mathrm{n}^{\circ}$ 15). Moreover, if these conditions are satisfied, convergence is uniform on every bounded segment.\\

\textbf{55.}
\textsc{\textbf{Remark.}} --- Most classical theorems on linear sets and functions of one variable extend immediately to sets of points in space of $2,3,\ldots,n,\ldots$ dimensions and to functions of $2,3,\ldots,n,\ldots$ variables. For most of them, this was in fact the only generalization that had been made. The analogy thus observed is much deeper from our general point of view.

To obtain these theorems, it suffices to proceed as above by calling the distance of two elements $\left(x_{1}, x_{2}, \ldots, x_{n}\right)$ and $\left(x_{1}^{\prime}, x_{2}^{\prime}, \ldots, x_{n}^{\prime}\right)$ the quantity
$$
\sqrt{\left(x_{1}-x_{1}^{\prime}\right)^{2}+\cdots+\left(x_{n}-x_{n}^{\prime}\right)^{2}} \,.
$$

\subsection*{Chapter IV.}
\phantomsection\addcontentsline{toc}{subsection}{Chapter IV.}
\subsubsection*{Sets of continuous functions and functionals.}
\phantomsection\addcontentsline{toc}{subsubsection}{Sets of continuous functions and functionals.}

\textbf{56.}
Let us take as variable elements the functions of $x$ uniformly continuous on a fixed interval $J$. We shall of course consider as distinct two functions whose difference is not everywhere zero on $J$.

Here there are two classical definitions of limit; we shall consider only the so-called uniform limit\,\footnote{\,One says that a function $f_n(x)$ tends uniformly to $f(x)$ on $J$ if to every number $\varepsilon>0$ one can make correspond an integer $p$ such that the inequality $n>p$ entails: $|f_n(x)-f(x)|<\varepsilon$ \textit{for every value} of $x$ in the interval $J$ (see $\mathrm{n}^{\circ}$ 13).}. We may regard this definition as a consequence of the following definition of the distance between two functions $f(x), g(x)$ uniformly continuous on $J$: \textit{This distance $(f,g)$ is the maximum of $|f(x)-g(x)|$ on $J$}. This very natural definition seems to have been used systematically for the first time by \textsc{Weierstrass}. It satisfies the general conditions \textit{a)}, \textit{b)} of $\mathrm{n}^{\circ}$ 49. The class $(E)$ thus defined is normal. Indeed, one first knows that it admits a generalization of Cauchy's theorem relative to convergence of a sequence. On the other hand, one can form, once and for all, a sequence $S$ of functions uniformly continuous on $J$:
$$
f_{1}(x), f_{2}(x), \ldots
$$
such that every continuous function on $J$ is its uniform limit. For example, one may take to form the sequence $S$ the evidently countable set of continuous functions that take rational values at each division point of $J$ into $q$ equal parts ($q$ arbitrary) and are linear in each of these divisions.
Then, given a continuous function $f(x)$ on $J$, if one takes in this set the function $f_{n_q}(x)$ that takes at each division point a value equal to the lower approximation of $f(x)$ within $\cfrac{1}{q}$, one easily sees that the sequence $f_{n_{1}}(x), f_{n_{2}}(x), \ldots, f_{n_{q}}(x), \ldots$ extracted from $S$ converges uniformly to $f(x)$\,\footnote{\,One could also proceed as in $\mathrm{n}^{\circ}$ 72 by using the expansion of a continuous function into a series of polynomials (\textsc{ii}, page 50).}.

This remark may be stated as follows, setting
$$
u_{1}(x)=f_{1}(x), \ldots, u_{q}(x)=f_{q}(x)-f_{q-1}(x), \ldots:
$$

\textit{
One can form, once and for all, a series of functions uniformly continuous on $J$:
$$
u_{1}(x)+u_{2}(x)+\cdots+u_{q}(x)+\cdots
$$
such that by grouping its terms suitably one can make it tend uniformly to any continuous function on $J$.}\\

\textbf{57.}
According to our definition of limit, Arzelà's theorem on continuous functions (which is deduced in the previous chapter from our general theorems) allows us to assert that the necessary and sufficient condition for a set of uniformly continuous functions on $J$ to be compact is that these functions be, at each point of $J$, bounded and equally continuous. We now have no further difficulty in applying our general theorems to sets of continuous functions on $J$, \textit{remembering that the limit in question in what follows is the \textsc{uniform} limit.}

\textsc{\textbf{Theorem.}} --- \textit{Given any set $E$ of uniformly continuous functions on $J$, one can extract from $E$ a countable sequence of elements such that every element of $E$ belongs to this sequence or is one of its limit elements} ($\mathrm{n}^{\circ}$ 45). One also knows that the derived set of $E$ is closed\,\footnote{\,We know ($\mathrm{n}^{\circ}$ 22) that this would no longer be so if one took the definition of limit in its ordinary sense, i.e. without supposing it necessarily uniform.}.

\textsc{\textbf{Theorem.}} --- \textit{If this same set $E$ is non-countable, it gives rise to at least one condensation element} ($\mathrm{n}^{\circ}$ 43). Let us note that the general definition of condensation elements can here be translated as follows: an element $f(x)$ is a condensation element of $E$ if, whatever $\varepsilon>0$, there is a non-countable infinity of elements $\varphi(x)$ of $E$ such that $f(x)-\varepsilon<\varphi(x)<f(x)+\varepsilon$.

\textsc{\textbf{Theorem.}} --- \textit{Let $F$ be a closed set of uniformly continuous functions on an interval $J$. $1^{\circ}$ One can always decompose this set into a countable set and a perfect set or the null set} ($\mathrm{n}^{\circ}$ 44). \textit{$2^{\circ}$ If $F$ is part of a set $B$ of functions which are, at each point of $J$, equally continuous and bounded, one can extract from $B$ a sequence of elements $f_{1}(x), f_{2}(x), \ldots$ and form a sequence of positive numbers $\varepsilon_{1}, \varepsilon_{2}, \ldots$ such that one obtains $F$ by removing from $B$ the functions $f(x)$ that satisfy \textsc{one} of the following inequalities:}
$$
f_{1}(x)-\varepsilon<f(x)<f_{1}(x)+\varepsilon, \ldots, f_{n}(x)-\varepsilon<f(x)<f_{n}(x)+\varepsilon, \ldots \quad \left(\mathrm{n}^{\mathrm{o}} \;34\right).
$$

\textsc{\textbf{Theorem.}} --- \textit{Let $B$ be a \textsc{closed} set of functions equally continuous and bounded at every point of an interval $J$. If there exists a family $H$ of sets $I$ of functions of $B$, such that every function of $B$ is interior in the strict sense to at least one of these sets, one can extract from $H$ a finite number of sets $I$ forming a family $G$ enjoying the same property as $H$. The converse is true} ($\mathrm{n}^{\circ}$ 42).\\

\textbf{58.}
By specializing the question solved in $\mathrm{n}^{\circ}$ 39, one obtains the problem posed by Mr. \textsc{Hadamard} in these terms: ``Let $E$ be the set of continuous functions between 0 and 1, having fixed values at the endpoints, or satisfying analogous conditions. Divide $E$ into partial sets $k_{\varepsilon}$ such that any two functions belonging to $E$ have distance less than $\varepsilon$. Considering the $k_{\varepsilon}$ as so many individuals, one may say that the set $\mathfrak{K}_{\varepsilon}$, which has these $k_{\varepsilon}$ as elements, \textit{enumerates} the set $E$. It is this set $\mathfrak{K}_{\varepsilon}$ whose properties, and in particular whose power, should be studied.''

The general theory applied here allows us to give this problem a complete solution as regards power, and indeed it is immaterial whether we assume or not that function values are prescribed at the endpoints. The answer is as follows: \textit{Whatever a set $E$ of uniformly continuous functions on an interval $J$, one can always choose its decomposition into sets $k_{\varepsilon}$ so that $\mathfrak{K}_{\varepsilon}$ is countable} ($\mathrm{n}^{\circ}$ 40). \textit{For $\mathfrak{K}_{\varepsilon}$ not only to be countable but to consist of a finite number of terms whatever $\varepsilon$, it is necessary and sufficient that the functions belonging to $E$ be, at each point of $J$, bounded and equally continuous} ($\mathrm{n}^{\circ}$ 41). In both cases, one can choose the $k_{\varepsilon}$ so that each element of $E$ is interior in the strict sense to at least one of them. Moreover, we know an effective means of determining the $k_{\varepsilon}$ in the general case, independently of $E$.\\

\textbf{59.}
Operations acting on continuous functions are specifically called \textit{functionals} by Mr. \textsc{Hadamard} (\textsc{xi}). An important example of a continuous functional is the linear functional:
$$
U(f)=\int_{a}^{b} m(x) f(x)\, d x
$$
where $m(x)$ is a bounded integrable function on the interval $(a,b)$.

We may again apply here our general theorems on operations.

\textsc{\textbf{Theorem.}} --- Let $U$ be a functional continuous on a closed set $B$ of functions of $x$ that are bounded and equally continuous at every point of an interval $J$. This functional: $1^{\circ}$ is bounded on $B$; $2^{\circ}$ attains there at least once its upper limit ($\mathrm{n}^{\circ}$ 11). \textit{Conversely, if this circumstance occurs for every continuous functional on $B$, $B$ is a closed set of functions equally continuous and bounded at every point of $J$} ($\mathrm{n}^{\circ}$ 51).

\textsc{\textbf{Theorem.}} --- For a sequence of functionals $U_{1}, U_{2}, \ldots$, continuous on an arbitrary field $E$ of functions of $x$ uniformly continuous on $J$, to converge to a continuous functional on $E$, it is necessary and sufficient that convergence be quasi-uniform on every closed set formed of equally continuous and bounded functions belonging to $E$ ($\mathrm{n}^{\circ}$ 14).

\textsc{\textbf{Theorem.}} --- Consider a family $\mathfrak{K}_\varepsilon$ of functionals continuous on a closed set of functions equally continuous and bounded at every point of an interval $J$. In order that from every infinite subset of functionals of $\mathfrak{K}_\varepsilon$ one can extract a sequence that converges uniformly, it is necessary and sufficient that the functionals of $\mathfrak{K}_\varepsilon$ be bounded and equally continuous at every element of the function set ($\mathrm{n}^{\circ}$ 48).

\subsection*{Chapter V.}
\phantomsection\addcontentsline{toc}{subsection}{Chapter V.}
\subsubsection*{Sets and functions of points of space of countably infinite dimension.}
\phantomsection\addcontentsline{toc}{subsubsection}{Sets and functions of points of space of countably infinite dimension.}

\textbf{60.}
A great number of the elements that occur in mathematics are each completely determined by an infinite sequence of real or complex numbers. For example, a \textsc{Taylor} series is determined by the sequence of its coefficients in the form:
\begin{equation*}
f(x)=a_{0}+a_{1}x+\cdots+a_{n}(x)+\cdots \tag{\textsc{i}}
\end{equation*}

A continuous function of $x$ between $0$ and $1$ is determined by the sequence of values it takes at points with rational abscissas, by Mr. \textsc{Borel}'s interpolation formula (\textsc{ii}, p.~80):
$$
\varphi(x)=\lim _{q=\infty} \sum_{p=0}^{p=q} \varphi\left(\cfrac{p}{q}\right) M_{p, q}(x) \,.
$$

Likewise, an irrational number is defined by a sequence of rational numbers; and so on.

One may therefore regard the numbers in the sequence that defines each of these elements as the coordinates of that element considered as a point of a space $(E_{\omega})$ of countably infinite dimension. There are several advantages in proceeding thus. First, the advantage that always appears when one uses geometric language, so conducive to intuition through the analogies it gives rise to. Next, this allows one to obtain general properties applicable to all these categories of elements: those resulting not from their own nature, but from their determination by a countable set of numbers. It remains understood, moreover, that not every infinite sequence of numbers necessarily corresponds to one of these elements. For example, expression (\textsc{i}) defines $f(x)$ only if the sequence $a_{0}, a_{1}, \ldots, a_{n}, \ldots$ satisfies certain conditions.\\

\textbf{61.}
We shall now therefore abstract from the origin of the sequence of numbers that defines one of these elements, and simply consider this sequence:
$$
x_{1}, x_{2}, \ldots, x_{n}, \ldots
$$
as defining a determined point $x$ of space.

We shall say that two points $x, x^{\prime}$ of $(E_{\omega})$ coincide when their coordinates are respectively equal, and only in this case.

In order to apply in the present case the general theorems obtained above, it will suffice to show that \textit{the class of points of $(E_{\omega})$ defines a normal class $(E)$.
}\\

\textbf{62.}
\textit{\textbf{Distance between two points of}} $\left(E_{\omega}\right)$. --- One could define in many ways the distance between two points:
$$
\begin{array}{cll}
x: & x_{1}, & x_{2}, \ldots, x_{n}, \ldots \\
x^{\prime}: & x_{1}^{\prime}, & x_{2}^{\prime}, \ldots, x_{n}^{\prime}, \ldots
\end{array}
$$

But in order to generalize usefully the case of space of finite dimension, one must arrange things so that, if the distance $(x,x^{\prime})$ tends to zero, each of the quantities $x_{1}-x_{1}^{\prime},\, x_{2}-x_{2}^{\prime},\, \ldots, x_{n}-x_{n}^{\prime}, \ldots$ tends to zero.

An important question arises, namely whether it would be useful to choose this definition so that this infinite sequence of differences tends \textit{uniformly} to zero. We shall see later ($\mathrm{n}^{\circ}$ 68) that it is preferable not to impose this condition, natural though it may seem. Be that as it may, we shall adopt the following definition of distance, whose choice will be justified later.

We shall call the distance of the two points $x, x^{\prime}$ the well-defined number:
$$
\left(x, x^{\prime}\right)=\cfrac{\left|x_{1}-x_{1}^{\prime}\right|}{1+\left|x_{1}-x_{1}^{\prime}\right|}+\cfrac{1}{2!}\, \cfrac{\left|x_{2}-x_{2}^{\prime}\right|}{1+\left|x_{2}-x_{2}^{\prime}\right|}+\cdots+\cfrac{1}{n!}\, \cfrac{\left|x_{n}-x_{n}^{\prime}\right|}{1+\left|x_{n}-x_{n}^{\prime}\right|}+\cdots
$$

The right-hand side is obviously convergent whatever $x_{1}, x_{1}^{\prime}; x_{2}, x_{2}^{\prime}; \ldots$ One sees that it is always $>0$ whenever $x$ and $x^{\prime}$ are distinct. Moreover, for every $n$:
$$
\cfrac{\left|x_{n}^{\prime}-x_{n}^{\prime \prime}\right|}{1+\left|x_{n}^{\prime}-x_{n}^{\prime \prime}\right|} \le \cfrac{\left|x_{n}-x_{n}^{\prime}\right|}{1+\left|x_{n}-x_{n}^{\prime}\right|}+\cfrac{\left|x_{n}-x_{n}^{\prime \prime}\right|}{1+\left|x_{n}-x_{n}^{\prime \prime}\right|}\,,
$$
as is easy to verify.

Hence for any three points $x, x^{\prime}, x^{\prime\prime}$:
$$
\left(x^{\prime}, x^{\prime \prime}\right) \le\left(x, x^{\prime}\right)+\left(x, x^{\prime \prime}\right)\,.
$$

Thus the number we have just defined does satisfy the two general conditions we had imposed on the definition of distance ($\mathrm{n}^{\circ}$ 49).\\

\textbf{63.}
\textit{\textbf{Limit of a point of}} $\left(E_{\omega}\right)$. --- From the definition of distance follows the definition of the limit of a point in space $(E_{\omega})$.

One says that a point $x^{(n)}$ of space $(E_{\omega})$ tends to the point $x$ of $(E_{\omega})$, as $n$ increases indefinitely, if the distance $\left(x^{(n)},x\right)$ tends to zero with $\cfrac{1}{n}$. But one may give an equivalent definition involving the coordinates directly, by means of the following theorem:

\textsc{\textbf{Theorem.}} --- \textit{The necessary and \textsc{sufficient} condition for a point of $(E_{\omega})$}
$$x^{(n)}:  x_{1}^{(n)}, \, x_{2}^{(n)}, \ldots, x_{p}^{(n)}, \ldots$$
\textit{to tend to a point}
$$x: x_{1}, \, x_{2}, \ldots, x_{p}, \ldots,$$
\textit{is that, for every $p$, $x_{p}^{(n)}$ tends to $x_{p}$ as $n$ increases indefinitely.}

Indeed, suppose first that this latter condition is satisfied. I shall show that $(x,x^{(n)})$ tends to zero. For this, observe that:
\begin{equation*}
\left(x, x^{(n)}\right) \le \cfrac{\left|x_{1}-x_{1}^{(n)}\right|}{1+\left|x_{1}-x_{1}^{(n)}\right|}+\cdots+\cfrac{1}{p!}\, \cfrac{\left|x_{p}-x_{p}^{(n)}\right|}{1+\left|x_{p}-x_{p}^{(n)}\right|}+r_{p}, \tag{2}
\end{equation*}
where $r_{p}$ denotes the expression independent of $n$:
$$
r_{p}=\cfrac{1}{(p+1)!}+\cfrac{1}{(p+2)!}+\cdots
$$

This last quantity tends to zero with $\cfrac{1}{p}$; one can therefore fix a number $p$ independent of $n$ such that $r_{p}<\cfrac{\varepsilon}{2}$. With $p$ thus \textit{fixed}, it is clear that the sum of the $p$ preceding terms in inequality (2) tends to zero as $n$ increases indefinitely. One can therefore find a number $q$ such that their sum is less than $\cfrac{\varepsilon}{2}$ for $n>q$. In summary, we obtain a number $q$ such that:
$$
\left(x^{(n)}, x\right)<\varepsilon, \quad \text { for } n>q ;
$$
which proves that the assumed condition was indeed sufficient.

Let us prove it is necessary. The expression of $(x,x^{(n)})$ shows that, for every value of $p$:
$$
\cfrac{\left|x_{p}-x_{p}^{(n)}\right|}{1+\left|x_{p}-x_{p}^{(n)}\right|}<p!\left(x, x^{(n)}\right) .
$$

Hence the left-hand side tends to zero when, with $p$ fixed, $n$ increases indefinitely. This can happen only if $x_{p}^{(n)}$ tends to $x_{p}$ for every fixed value of $p$.\\

\textbf{64.}
This proposition shows that the class $(E_{\omega})$ admits a generalization of Cauchy's theorem:

\textsc{\textbf{Theorem.}} ---
\textit{The necessary and sufficient condition for an infinite sequence of points of space $(E_{\omega})$:}
$$
x^{(1)}, \quad x^{(2)}, \ldots, x^{(n)}, \ldots
$$
\textit{to have a limit is that, for every number $\varepsilon>0$, one can determine a number $n$ such that one has}
$$
\left(x^{(n+p)}, x^{(n)}\right)<\varepsilon,
$$
\textit{whatever the integer $p$.}

The condition is obviously necessary.

Conversely, suppose the condition satisfied. One then has:
$$
\cfrac{\left|x_{q}^{(n)}-x_{q}^{(n+p)}\right|}{1+\left|x_{q}^{(n)}-x_{q}^{(n+p)}\right|}<\varepsilon q!
$$
Whence:
$$
\left|x_{q}^{(n)}-x_{q}^{(n+p)}\right|<\cfrac{\varepsilon q!}{1-\varepsilon q!}\,,
$$
assuming $\varepsilon<\cfrac{1}{q!}$. Once $q$ is fixed, one can choose $\varepsilon$ so that
$$
\varepsilon<\cfrac{1}{q!} \quad \text { and } \quad \cfrac{\varepsilon q!}{1-\varepsilon q!}<\omega\,,
$$
$\omega$ being given in advance; then $\varepsilon$ determines $n$, and one has:
$$
\left|x_{q}^{(n)}-x_{q}^{(n+p)}\right|<\omega,
$$
whatever $p$. By a well-known theorem of \textsc{Cauchy}, it follows that the sequence of numbers:
$$
x_{q}^{(1)}, \quad x_{q}^{(2)}, \ldots, x_{q}^{(n)}, \ldots
$$
has a limit $x_{q}$ as $n$ increases indefinitely. And this holds for every $q$. By the preceding theorem, this proves that the sequence $x^{(1)}, \ldots, x^{(n)}, \ldots$ has a determined limit $x: x_1, \ldots, x_q, \ldots$

\subsubsection*{Sets of points of space $(E_{\omega})$.}
\phantomsection\addcontentsline{toc}{subsubsection}{Sets of points of space \texorpdfstring{$(E_{\omega})$}.}

\textbf{65.}
From the fact that one can define the distance between two points of space $(E_{\omega})$ of countably infinite dimension, it follows that the points of this space form a class $(E)$. We further complete the assimilation to the general theory by the following theorem:

\textsc{\textbf{Theorem.}} --- \textit{The points of space $(E_{\omega})$ form a normal class $(E)$.}

We already know they form a class $(E)$ admitting a generalization of Cauchy's theorem on convergent sequences. This class $(E)$ is evidently perfect. It remains only to show that it may be regarded as the derived set of a countable sequence of its elements. For this, note first that the set of rational numbers of either sign is a countable set that can be written as a sequence:
$$
c_{1}, \, c_{2}, \ldots, c_{n}, \ldots
$$

Consider then the set $D$ of points of space $(E_{\omega})$ having coordinates of the form:
$$
x_{1}=c_{n_{1}}, \quad x_{2}=c_{n_{2}}, \ldots, x_{p}=c_{n_{p}}, \quad x_{p+1}=0, \quad x_{p+2}=0, \ldots,
$$
where $p, n_{1}, n_{2}, \ldots, n_{p}$ are arbitrary positive integers. This set is obviously countable, and one easily sees that every point of $(E_{\omega})$ is the limit of a sequence of elements of $D$.\\

\textbf{66.}
\textit{\textbf{Compact sets.}} --- The notion of compact set takes on a very simple geometric meaning in the case of a set of points of $(E_{\omega})$. To indicate this, we first say that \textit{a set of points of $(E_{\omega})$ is contained in a finite domain} if one can find a sequence of numbers $\ge 0$: $M_{1}, M_{2}, \ldots$ such that for every point $(x_{1}, \ldots, x_{n}, \ldots)$ of the set:
$$
\left|x_{1}\right| \le M_{1}, \quad\left|x_{2}\right| \le M_{2}, \ldots,\left|x_{n}\right| \le M_{n}, \ldots
$$

\textsc{\textbf{Theorem.}} --- \textit{The necessary and sufficient condition for a set $E$ of points of space $(E_{\omega})$ to be compact is that this set be contained in a finite domain.}

Indeed, suppose $E$ compact; if it were not in a finite domain, one could find $n$ such that the set of coordinates of rank $n$ of points of $E$ is unbounded. In other words, one would have that, for every $p$, one can find a point $x^{(p)}$ of $E$ such that, calling $x_{n}^{(p)}$ its coordinate of rank $n$, one has:
$$
\left|x_{n}^{(p)}\right|>p .
$$

Now this is impossible if $E$ is compact; for from the sequence of points of $E$: $x^{(1)}, x^{(2)}, \ldots$ one could extract a sequence $x^{\left(p_{1}\right)}, x^{\left(p_{2}\right)}, \ldots$ having a limit $x$. And then the numbers $x_{n}^{\left(p_{1}\right)}, x_{n}^{\left(p_{2}\right)}, \ldots$ which increase indefinitely like $p_{1}, p_{2}, \ldots$ would have a finite limit $x_{n}$.

Conversely, suppose $E$ is in a finite domain. Then either there is only a finite number of elements, and in that case $E$ is compact by definition, or there is an infinite number of distinct elements. Consider an infinite set $E_{1}$ contained in $E$. The set of its coordinates of rank $1$ is bounded by hypothesis. Therefore: either infinitely many of them are equal to one same number $x_{1}$, or one can find infinitely many that are distinct and tend to a determined number $x_{1}$. In both cases, one can extract from $E_{1}$ a sequence of points of $\left(E_{\omega}\right)$:
$$
x^{(1)}, \quad x^{(2)}, \ldots, x^{(n)}, \ldots,
$$
such that the sequence $x_{1}^{(1)}, x_{1}^{(2)}, \ldots, x_{1}^{(n)}, \ldots$ has a determined limit $x_{1}$. From this first sequence, one can extract another sequence of points arranged in the same order:
$$
x^{(1,2)}, \quad x^{(2,2)}, \ldots, x^{(n, 2)}, \ldots,
$$
such that the second-rank coordinates have a determined limit $x_{2}$, while first-rank coordinates continue tending to $x_{1}$. And so on. One arrives at forming a sequence of points of $E_{1}$:
$$
x^{(1, p)}, \quad x^{(2, p)}, \ldots, x^{(n, p)}, \ldots
$$
contained in the sequences previously formed and such that coordinates of rank at most $p$ have respectively determined limits: $x_{1}, x_{2}, \ldots, x_{p}$.

Now consider the sequence of distinct points of $E_{1}$:
\begin{equation*}
x^{(1)}, \quad x^{(2,2)}, \ldots, x^{(n, n)}, \ldots \tag{$\Sigma$}
\end{equation*}
From rank $p$ onward, and this for every $p$, this sequence is formed of elements of the sequence $x^{(1,p)}, x^{(2,p)}, \ldots, x^{(n,p)}, \ldots$ taken in the same order. Therefore its coordinates of rank $p$ tend, for every $p$, to a determined number $x_{p}$. In other words, sequence ($\Sigma$) is a sequence of elements of $E_{1}$ that are distinct and have a determined limit: the point of coordinates
$$
x_{1}, \quad x_{2}, \ldots, x_{p}, \ldots
$$

The proposition is thus proved\,\footnote{\,Mr. \textsc{Montel} has long possessed the sufficient condition, which served him to establish certain propositions stated in a Note in the Comptes Rendus (\textsc{xxv}).}.

\subsubsection*{Applications of the general theorems.}
\phantomsection\addcontentsline{toc}{subsubsection}{Applications of the general theorems.}

\textbf{67.}
There is now no difficulty in applying to sets of points of space $(E_{\omega})$ the theorems stated for sets drawn from a normal class $(E)$.

\textsc{\textbf{Theorem.}} --- \textit{Consider an arbitrary set $E$ of points of space $(E_{\omega})$ of countably infinite dimension: $1^{\circ}$ one can extract from $E$ a countable set of points, $D$, such that every point of $E$ belongs either to $D$ or to its derived set $D^{\prime}$; $2^{\circ}$ the derived set $E^{\prime}$ of $E$ is closed; $3^{\circ}$ the set $E$ is countable or else gives rise to at least one condensation point.}

\textsc{\textbf{Theorem.}} --- \textit{If a set $E$ of points of space $\left(E_{\omega}\right)$ is in a finite domain, the set of those of its points that are not limit points of $E$ is countable.}

\textsc{\textbf{Theorem.}} --- \textit{Let $E$ be a set of points of space $(E_{\omega})$ that is closed and contained in a finite domain $D$: $1^{\circ}$ there exists a countable set $G$ of spheroids such that one obtains $E$ by removing from $D$ the points interior to each of these spheroids; $2^{\circ}$ $E$ can be decomposed into a countable set of points and a perfect set or null set without common points with the former; $3^{\circ}$ there exists at least one set $K$ of points of $(E_{\omega})$ whose derived set coincides with $E$.}

\textsc{\textbf{Theorem.}} --- \textit{For a set $E$ of points of space $(E_{\omega})$ to be closed and contained in a finite domain, it is necessary and sufficient that from every family $G$ of sets $I$ of points of $E$, such that every point of $E$ is interior in the strict sense to at least one of the sets $I$, one can extract a finite number of these sets $I$ forming a family $H$ enjoying the same property as $G$.}

\textsc{\textbf{Theorem.}} --- \textit{Every perfect set of points of $(E_{\omega})$ has the power of the continuum.}

Here one may also solve a question analogous to that posed by Mr. \textsc{Hadamard} for sets of continuous functions ($\mathrm{n}^{\circ}$ 39).

Let $E$ be a set of points of space $(E_{\omega})$; decompose $E$ into partial sets $k_{\varepsilon}$ such that any two points of one of them have distance less than $\varepsilon$, and let $\mathfrak{K}_\varepsilon$ be the set of the $k_{\varepsilon}$.
\textit{Whatever $E$ and $\varepsilon$, one can always carry out this decomposition so that $\mathfrak{K}_\varepsilon$ is countable. For it to be possible, whatever $\varepsilon$, to decompose $E$ into a finite number of sets $k_{\varepsilon}$, it is necessary and sufficient that $E$ be contained in a finite domain.}\\

\textbf{68.}
\textsc{\textbf{Remark.}} --- This is the place to observe that one might have been led to choose another definition of the limit of a sequence of points of space $(E_{\omega})$. At first sight, it seems that it would have been more natural, while also better suited to applications, to adopt the following definition: a sequence of points $x^{(1)}, x^{(2)}, \ldots, x^{(n)}, \ldots$ tends to a point $x$ when, given a number $\varepsilon>0$, one can find an integer $q$ such that:
$$
\left|x_{p}^{(n)}-x_{p}\right|<\varepsilon, \quad \text { for } \quad n>q,
$$
whatever the rank $p$ of the coordinates considered of $x^{(n)}$ and of $x$.

There would certainly be reason to study a theory of sets and continuity in space $\left(E_{\omega}\right)$ founded on this definition, which may sometimes present advantages (\textsc{xviii}). However, it lends itself less well to applications of our general theorems. Indeed, one sees that in order to deduce this definition from a suitable definition of distance, one is led to call the distance between two points $x, x^{\prime}$ the upper limit of $\left|x_{p}-x_{p}^{\prime}\right|$ when $p$ takes all possible integer values. This definition seems much simpler than the one we adopted; yet it has the drawback of not applying to any two arbitrary elements. More exactly, it could give an infinite value to the distance between two points whose coordinates are finite (but unbounded).

Another drawback of this method is that one no longer sees clearly what geometric meaning should be attached to the notion of compact set. Indeed, it is easy to form an example of a set of points of $(E_{\omega})$ that is contained in a finite domain and yet has no limit element in the sense of the second definition. One might perhaps hope that there is a limit element when one requires the set to be not merely infinite but non-countable. Or again by restricting the definition we gave of a finite domain and saying that the domain will be finite if all coordinates are, in absolute value, less than one \textit{same} fixed number. None of these restrictions can, however, restore validity to the theorem in question. It suffices indeed to consider the set of all distinct points whose coordinates are always equal to $0$ or $1$. This set has the power of the continuum (it is therefore not countable), and all coordinates of its points are bounded as a whole; yet it has no limit point in the sense of the second definition. On the contrary, it has limit points in the sense of the first.

\subsubsection*{Functions of an infinite sequence of independent variables.}
\phantomsection\addcontentsline{toc}{subsubsection}{Functions of an infinite sequence of independent variables.}

\textbf{69.}
In accordance with our general definitions, we shall say that a function of an infinite sequence of variables $x_{1}, x_{2}, \ldots, x_{n}, \ldots$:
$$
f\left(x_{1}, x_{2}, \ldots, x_{n}, \ldots\right)
$$
is continuous if $f\left(x_{1}^{(p)}, x_{2}^{(p)}, \ldots, x_{n}^{(p)}, \ldots\right)$ has as limit $f\left(x_{1}, x_{2}, \ldots, x_{n}, \ldots\right)$ when $x_{1}^{(p)}, x_{2}^{(p)}, \ldots, x_{n}^{(p)}, \ldots$ tend simultaneously, but independently, to $x_{1}, x_{2}, \ldots, x_{n}, \ldots$ as $p$ increases indefinitely. Here again we may apply our general theorems:

\textsc{\textbf{Theorem.}} --- \textit{If a function of a point of space $(E_{\omega})$ of countably infinite dimension is continuous on a bounded and closed set of points of this space, this function: $1^{\circ}$ is bounded; $2^{\circ}$ attains at least at one point of $E$ its upper limit. Conversely, if this property holds for all continuous functions on a set $E$, this set is bounded and closed.}

\textsc{\textbf{Theorem.}} --- \textit{Every function continuous on an extremal set of points of space $(E_{\omega})$ is uniformly continuous on this set.}

\textsc{\textbf{Theorem.}} --- \textit{The necessary and sufficient condition for a series of functions continuous at every point of an extremal set of points of space $(E_{\omega})$ to converge to a continuous function is that the convergence of this series be quasi-uniform.}

\textsc{\textbf{Theorem.}} --- \textit{Let $\mathfrak{K}_\varepsilon$ be a family of continuous functions on a bounded and closed set of points of space $(E_{\omega})$. In order that from every infinite subset of functions of $\mathfrak{K}_\varepsilon$ one can extract a sequence that converges uniformly, it is necessary and sufficient that the functions of $\mathfrak{K}_\varepsilon$ be equally continuous and bounded at every point of the set.}

\subsection*{Chapter VI.}
\phantomsection\addcontentsline{toc}{subsection}{Chapter VI.}
\subsubsection*{Holomorphic functions inside the same domain.}
\phantomsection\addcontentsline{toc}{subsubsection}{Holomorphic functions inside the same domain.}

\textbf{70.}
Let us consider, in the complex $z$-plane, an arbitrary domain $\mathfrak{A}$ with simple or multiple boundary. We shall study sets whose elements are functions holomorphic at every point \textit{interior} to $\mathfrak{A}$. To avoid restrictions in the statements we are going to obtain, we shall not concern ourselves with what happens on the boundary of $\mathfrak{A}$. In other words, we shall regard as identical two functions holomorphic inside $\mathfrak{A}$ that have the same value at every interior point of $\mathfrak{A}$.

To apply our general theorems, we shall show that one can define a distance. Suppose one has determined a sequence of bounded domains $\mathfrak{A}_{1}, \mathfrak{A}_{2}, \ldots, \mathfrak{A}_{n}, \ldots$, each interior to the next and to $\mathfrak{A}$, in such a way that every domain $\mathfrak{A}'$ interior to $\mathfrak{A}$ is also interior to the domains of this sequence from some index onward\,\footnote{\,For example, if $\mathfrak{A}$ is bounded by a simple convex contour, one would take for $\mathfrak{A}_{n}$ the homothetic image of $\mathfrak{A}$ with ratio $\cfrac{n}{n+1}$ with respect to a fixed point taken independently of $n$ inside $\mathfrak{A}$.}. To define the distance between two functions $f$ and $g$ holomorphic inside $\mathfrak{A}$, let $u_{n}$ be the finite maximum of $|f(z)-g(z)|$ when $z$ varies in $\mathfrak{A}_{n}$ or on its boundary. The series
$$
\cfrac{u_{1}}{1+u_{1}}+\cfrac{1}{2!}\, \cfrac{u_{2}}{1+u_{2}}+\cdots+\cfrac{1}{n!}\, \cfrac{u_{n}}{1+u_{n}}+\cdots
$$
is certainly convergent; its sum shall by definition be the distance $(f,g)$ of $f$ and $g$ in $\mathfrak{A}$. One sees that if $|f(z)-g(z)|$ has a finite upper bound $u$ when $z$ varies arbitrarily inside $\mathfrak{A}$, the distance will be less than $\cfrac{eu}{1+u}$. But the distance has a finite value $<e$ even in the opposite case.

Using the remarks made for points of space $(E_{\omega})$ in $\mathrm{n}^{\circ}$ 62, one again sees that the definition just given does satisfy the conditions imposed on distance ($\mathrm{n}^{\circ}$ 49).

In accordance with the general theory, we shall therefore say that, given two functions holomorphic inside $\mathfrak{A}$: $f(z)$ and $f_{n}(z)$, the function $f(z)$ is the limit of $f_{n}(z)$ if their distance tends to zero with $\cfrac{1}{n}$.

To understand the meaning of this definition, it suffices to note that \textit{the necessary and sufficient condition for the distance $(f,f_{n})$ to tend to zero is that $f_{n}(z)$ tend uniformly to $f(z)$ in every domain $\mathfrak{A}'$ interior to $\mathfrak{A}$.}

Indeed, if this condition is fulfilled, the maximum $u_{p}^{(n)}$ of $|f(z)-f_{n}(z)|$ in $\mathfrak{A}_{p}$ tends to zero with $\cfrac{1}{n}$, whatever fixed integer $p$. By an argument used above ($\mathrm{n}^{\circ}$ 63), this suffices for the distance
$$
\left(f, f_{n}\right)=\cfrac{u_{1}^{(n)}}{1+u_{1}^{(n)}}+\cdots+\cfrac{1}{p!}\, \cfrac{u_{p}^{(n)}}{1+u_{p}^{(n)}}+\cdots
$$
also to have limit zero. Conversely, if the distance tends to zero, the same is true of $u_{p}^{(n)}$ for every fixed integer $p$, by the inequality
$$
\cfrac{u_{p}^{(n)}}{1+u_{p}^{(n)}}<p!\left(f, f_{n}\right)\,.
$$
Hence $f_{n}(z)$ tends uniformly to $f(z)$ in each domain $\mathfrak{A}_{p}$ and therefore in every domain $\mathfrak{A}'$ interior to $\mathfrak{A}$. But it is not certain that the convergence of $f_{n}(z)$ to $f(z)$ is uniform \textit{in} $\mathfrak{A}$. One would see an example by taking for $\mathfrak{A}$ the disk $|z|\le1$ and setting $f_{n}(z)\equiv z^{n}$, $f(z)\equiv0$. It nevertheless seems preferable to adopt our definition rather than the one according to which one would regard $f(z)$ as limit of $f_{n}(z)$ only when $f_{n}(z)$ tends to $f(z)$ uniformly \textit{throughout the whole domain} $\mathfrak{A}$. For one would be led not always to regard a function such as $\cfrac{1}{1-z}$ as the limit of its \textsc{Taylor} expansion in its disk of convergence.\\

\textbf{71.}
Thus, we shall establish our theory of sets of functions holomorphic inside $\mathfrak{A}$ by considering one function as the limit of another when the value of the second tends to the first uniformly inside every domain interior to $\mathfrak{A}$.

In order to apply our general theorems of the \textsc{First Part}, we shall prove that \textit{the class of functions holomorphic inside a same domain $\mathfrak{A}$ is a normal class $(E)$.}

We know it can be viewed as a class $(E)$. Let us show it admits a generalization of \textsc{Cauchy}'s theorem.

\textsc{\textbf{Theorem.}} --- \textit{For a sequence of functions}
$$
f_{1}(z), \quad f_{2}(z), \ldots, f_{n}(z), \ldots
$$
\textit{holomorphic inside a domain $\mathfrak{A}$ to be uniformly convergent in every domain $\mathfrak{A}'$ completely interior to $\mathfrak{A}$, it is necessary and sufficient that to every number $\varepsilon>0$ one can make correspond an integer $n$ such that, whatever $p$:}
$$
\left(f_{n}, f_{n+p}\right)<\varepsilon\,.
$$

There is no difficulty in showing the condition is necessary. To show it is sufficient, let $u_{q}^{(n,r)}$ be the maximum modulus of $|f_{n}(z)-f_{r}(z)|$ in $\mathfrak{A}_{q}$. Now take a domain $\mathfrak{A}'$ interior to $\mathfrak{A}$. One can take $q$ large enough so that $\mathfrak{A}'$ is also completely interior to $\mathfrak{A}_{q}$. With this number $q$ fixed, choose $\varepsilon>0$ less than $\cfrac{\omega}{q!(1+\omega)}$.

By hypothesis, one can find an integer $n$ such that
$$
\left(f_{n+p}, f_{n}\right)<\varepsilon,
$$
whatever $p$. Now,
$$
\cfrac{1}{q!}\, \cfrac{u_{q}^{(n,n+p)}}{1+u_{q}^{(n,n+p)}}<\left(f_{n}, f_{n+p}\right).
$$
Hence
$$
\cfrac{1}{q!}\, \cfrac{u_{q}^{(n,n+p)}}{1+u_{q}^{(n,n+p)}}<\cfrac{1}{q!}\, \cfrac{\omega}{1+\omega}\,,
$$
or
$$
u_{q}^{(n,n+p)}<\omega\,.
$$
In other words, given $\omega>0$, one can find an integer $n$ such that at every point of $\mathfrak{A}_{q}$:
$$
\left|f_{n}(z)-f_{n+p}(z)\right|<\omega,
$$
whatever integer $p$. Therefore the sequence $f_{1}(z), f_{2}(z), \ldots$ is indeed uniformly convergent in $\mathfrak{A}'$.\\

\textbf{72.}
Let us now prove that functions holomorphic inside the same domain $\mathfrak{A}$ form a separable class.

\textsc{\textbf{Theorem.}} --- \textit{One can determine, once and for all, a sequence $S$ of polynomials $P_{1}(z), P_{2}(z), \ldots, P_{n}(z), \ldots$ such that every function holomorphic inside $\mathfrak{A}$ is the limit of at least one sequence extracted from $S$, and this uniformly in every bounded domain $\mathfrak{A}'$ interior to $\mathfrak{A}$.}

We rely on the well-known theorem that every function $f(z)$ holomorphic inside $\mathfrak{A}$ is the sum of a series of polynomials (with uniform convergence in every domain $\mathfrak{A}'$ interior to $\mathfrak{A}$). Then, whatever $n$, one can determine a polynomial $Q_{n}(z)$ such that \textit{throughout the whole domain} $\mathfrak{A}_{n}$, boundary included,
$$
\left|f(z)-Q_{n}(z)\right|<\cfrac{1}{2n}\,.
$$

But by replacing the real and imaginary parts of the coefficients of $Q_{n}(z)$ by rational approximations within $\cfrac{1}{q}$, one can always form (for $q$ large enough) a polynomial $R_{n}(z)$ with rational coefficients such that in the bounded domain $\mathfrak{A}_{n}$ and on its boundary,
$$
\left|Q_{n}(z)-R_{n}(z)\right|<\cfrac{1}{2n}\,.
$$

Hence one has $|f(z)-R_{n}(z)|<\cfrac{1}{n}$ in $\mathfrak{A}_{n}$, and therefore $R_{n}(z)$ tends uniformly to $f(z)$ in every bounded domain $\mathfrak{A}'$ interior to $\mathfrak{A}$. Moreover, the sequence $R_{1}(z), R_{2}(z), \ldots$ is extracted from the set $E$ of polynomials of the form:
$$
\cfrac{a_{0}+a_{1}z+a_{2}z^{2}+\cdots+a_{k}z^{k}}{q}+i\,\cfrac{b_{0}+b_{1}z+b_{2}z^{2}+\cdots+b_{k}z^{k}}{q},
$$
where $q$ and $k$ are positive integers; $a_{0}, a_{1}, \ldots, a_{k}, b_{0}, b_{1}, \ldots, b_{k}$ are positive, negative, or zero integers. It plainly suffices to prove that the set $E$ is countable. Moreover this set is the sum of the sets $E_{q,k}$ consisting of those polynomials corresponding to fixed values of $q$ and $k$. It therefore suffices to prove (\textsc{ii}, p.~2) that each set $E_{q,k}$ is countable. But this follows from the fact that the polynomials of $E_{q,k}$ for which
$$
|a_{0}|+|a_{1}|+\cdots+|a_{k}|+|b_{0}|+\cdots+|b_{k}|=r
$$
are finite in number for each value of $r$, and that one obtains $E_{q,k}$ by taking successively $r=1,2,\ldots$.\\

\textbf{73.}
\textsc{\textbf{Theorem.}} --- \textit{The necessary and sufficient condition for a set of functions holomorphic inside $\mathfrak{A}$ to be a compact set is that the functions of this set be bounded in modulus as a whole in every bounded domain $\mathfrak{A}'$ interior to $\mathfrak{A}$.}

The condition is necessary by an argument entirely analogous to that of $\mathrm{n}^{\circ}$ 18, considering the bounded domain $\mathfrak{A}'$ as an extremal set of points of the plane.

To prove that the condition is sufficient, it suffices to prove that if one considers an infinity $I$ of functions holomorphic inside $\mathfrak{A}$ and bounded in every domain $\mathfrak{A}'$ interior to $\mathfrak{A}$, one can extract from $I$ a sequence that converges uniformly in each of these domains $\mathfrak{A}'$\,\footnote{\,This theorem was proved by Mr. \textsc{Arzelà} (\textsc{x}, p.~6) when one adds to our hypothesis the conditions: $1^{\circ}$ that the functions of $I$ are bounded not only in the $\mathfrak{A}'$, but throughout $\mathfrak{A}$ (which is less general); $2^{\circ}$ that the derivatives of these functions are also bounded in $\mathfrak{A}$, which is unnecessary. Mr. \textsc{Montel} stated without proof (\textsc{xxv}) that if functions are regular in $\mathfrak{A}$, \textit{continuous on the boundary}, and bounded as a whole in $\mathfrak{A}$, one can extract a sequence converging uniformly in $\mathfrak{A}$.}.

Now one can extract from $\mathfrak{A}$ a sequence $S$ of points \textit{interior} to $\mathfrak{A}$: $\alpha_{1}, \alpha_{2}, \ldots, \alpha_{n}, \ldots$, such that every point interior to $\mathfrak{A}$ belongs to $S$ or to its derived set ($\mathrm{n}^{\circ}$ 45). Let $r_{n}$ be the lower limit of the distance from $\alpha_{n}$ to points not interior to $\mathfrak{A}$. Let $\boldsymbol{\Gamma}_{n}$ be a circle centered at $\alpha_{n}$ with radius $\rho_{n}=r_{n}\left(\cfrac{n}{n+1}\right)$. Every point interior to $\mathfrak{A}$ is interior to at least one of the circles $\boldsymbol{\Gamma}_{n}$; and conversely, every point of $\boldsymbol{\Gamma}_{n}$, or of its boundary, is \textit{interior} to $\mathfrak{A}$. In each circle $\boldsymbol{\Gamma}_{n}$ any function $f(z)$ of $I$ admits a \textsc{Taylor} expansion converging uniformly in $\boldsymbol{\Gamma}_{n}$, boundary included:
$$
f(z)=a_{0,n}+a_{1,n}(z-\alpha_{n})+\cdots+a_{p,n}(z-\alpha_{n})^{p}+\cdots
$$

Then to each function $f(z)$ one can associate a point $A$ of space $(E_{\omega})$ of countably infinite dimension, namely the point whose coordinates are the countable set of numbers
\begin{equation*}
a_{0,1}, a_{1,1}, a_{2,1}, \ldots ; a_{0,2}, a_{1,2}, a_{2,2}, \ldots ; a_{0,n}, a_{1,n}, \ldots . \,\footnote{\,These numbers are not arranged according to the order of a single index as in Chapter V. But it is enough to know that one could do so.} \tag{1}
\end{equation*}

Hence to the set $I$ of functions $f(z)$ corresponds a set $G$ of points $A$ of $(E_{\omega})$. This set is bounded, for one has, whatever point $A$,
$$
\left|a_{p,n}\right|<\cfrac{M_{n}}{\left(\rho_{n}^{\prime}\right)^{p}}\,,
$$
where $M_{n}$ denotes a number exceeding the modulus of every function $f(z)$ of $I$ at every point of a circle $C_{n}'$ centered at $\alpha_{n}$ with radius $\rho_{n}'>\rho_{n}$ and $<r_{n}$. Therefore the set $G$ is compact ($\mathrm{n}^{\circ}$ 66), and one can take a sequence of its points $A_{1}, A_{2}, \ldots$ having a limit $B$. Denoting
$$
a_{0,1}^{(q)},\, a_{1,1}^{(q)}, \ldots, a_{p,n}^{(q)}, \ldots
$$
the coordinates of $A_{q}$ and
$$
b_{0,1},\, b_{1,1}, \ldots, b_{p,n}, \ldots
$$
those of $B$, one has:
$$
b_{p,n}=\lim_{q=\infty}a_{p,n}^{(q)},
$$
whatever $p,n$. In particular:
$$
\left|b_{p,n}\right|\le\cfrac{M_{n}}{\left(\rho_{n}^{\prime}\right)^{p}}\,;
$$
hence, for each value of $n$, the series
$$
b_{0,n}+b_{1,n}(z-\alpha_{n})+\cdots+b_{p,n}(z-\alpha_{n})^{p}+\cdots
$$
converges to a function $\varphi_{n}(z)$ holomorphic inside $C_{n}'$. Let, on the other hand, $f_{1}(z), f_{2}(z), \ldots$ be the functions of $I$ corresponding to $A_{1}, A_{2}, \ldots$. I shall prove that this sequence of functions has a limit. Indeed, in every circle $\boldsymbol{\Gamma}_{n}$:
$$
\begin{gathered}
\left|f_{q}(z)-\varphi_{n}(z)\right|<\left|a_{0,n}^{(q)}-b_{0,n}\right|+\cdots+\left|a_{p,n}^{(q)}-b_{p,n}\right|\left(\rho_{n}\right)^{p} \\
+2M_{n}\left[\left(\cfrac{\rho_{n}}{\rho_{n}^{\prime}}\right)^{p}+\left(\cfrac{\rho_{n}}{\rho_{n}^{\prime}}\right)^{p+1}+\cdots\right] .
\end{gathered}
$$
Taking $p$ sufficiently large, the last term is $<\cfrac{\varepsilon}{2}$; with this $p$ chosen and fixed, the sum of the first $p+1$ terms tends to zero with $\cfrac{1}{q}$, so one can find $h$ such that this sum remains below $\cfrac{\varepsilon}{2}$ for $q>h$. In summary, $f_{q}(z)$ tends uniformly to $\varphi_{n}(z)$ in $\boldsymbol{\Gamma}_{n}$, boundary included.

Since every point interior to $\mathfrak{A}$ is interior to at least one circle $\boldsymbol{\Gamma}_{n}$, one first sees that one has indeed extracted from $I$ a sequence $f_{1}, f_{2}, \ldots$ convergent at every interior point of $\mathfrak{A}$. Moreover, the limit $f(z)$ is holomorphic at every interior point of $\mathfrak{A}$ since it coincides with $\varphi_{n}(z)$ in $\boldsymbol{\Gamma}_{n}$ for every $n$. Finally, convergence is uniform in every bounded domain $\mathfrak{A}'$ interior to $\mathfrak{A}$.
Indeed, the points of this domain $\mathfrak{A}'$ (boundary included) form a bounded and closed set, each point of which is interior to at least one of the circles $\boldsymbol{\Gamma}_{n}$; one can therefore extract a finite number of these circles having the same property (see $\mathrm{n}^{\circ}$ 36). Since $f_{q}(z)$ converges uniformly in a \textit{finite} number of circles $\boldsymbol{\Gamma}_{n}$, it also converges uniformly in the domain $\mathfrak{A}'$ they cover.

It would moreover be incorrect to assert that one can choose the sequence $f_{1}(z), f_{2}(z), \ldots$ so that it converges uniformly \textit{throughout the whole domain} $\mathfrak{A}$ (see the preceding note). This would be incorrect even if one supposed that the functions of $I$ are both holomorphic and bounded as a whole in $\mathfrak{A}$, \textit{boundary included}. To see this, it suffices again to take $\mathfrak{A}$ to be the disk $|z|\le1$ and $I$ the functions $z^{n}$.\\

\textbf{74.}
\textit{\textbf{Application of the general theorems.}} --- We know that functions holomorphic inside the same domain $\mathfrak{A}$ form a normal class $(E)$. We may therefore apply to them the general results of the \textsc{First Part}, taking care to keep for the definition of the limit of a sequence of holomorphic functions in $\mathfrak{A}$ the sense specified above ($\mathrm{n}^{\circ}$ 71).

\textsc{\textbf{Theorem.}} --- \textit{Let $E$ be any set of functions holomorphic inside one and the same domain $\mathfrak{A}$ with simple or multiple boundary: $1^{\circ}$ the set of limit elements of $E$ is closed; $2^{\circ}$ either $E$ is countable, or $E$ gives rise to at least one condensation element ($\mathrm{n}^{\circ}$ 8); $3^{\circ}$ one can extract from $E$ a countable sequence $S$ of functions such that every function of $E$ belongs to $S$ or is the limit of at least one sequence extracted from $S$.}

\textsc{\textbf{Theorem.}} --- \textit{If the functions of the preceding set $E$ are bounded in modulus as a whole in every bounded domain $\mathfrak{A}'$ interior to $\mathfrak{A}$, the set of functions of $E$ that are not limits of functions of $E$ is countable.}

\textsc{\textbf{Theorem.}} --- \textit{Let $F$ be a closed set of holomorphic functions bounded in modulus as a whole in every bounded domain $\mathfrak{A}'$ interior to $\mathfrak{A}$: $1^{\circ}$ there exists at least one set $H$ of functions holomorphic inside $\mathfrak{A}$ whose set of limit elements coincides with $F$; $2^{\circ}$ one can decompose $F$ into a countable set and a perfect set or null set.}

\textsc{\textbf{Theorem.}} --- \textit{Every perfect set of functions holomorphic inside $\mathfrak{A}$ has the power of the continuum.}

One may also answer a question analogous to that of Mr.~\textsc{Hadamard} ($\mathrm{n}^{\circ}$ 39, 58) and give a generalization of Mr.~\textsc{Borel}'s theorem ($\mathrm{n}^{\circ}$ 42) for sets of holomorphic functions.\\

\textbf{75.}
Likewise, let us consider functional operations ($\mathrm{n}^{\circ}$ 4) whose variable is a holomorphic function (see \textsc{xvi, xxvi}).

\textsc{\textbf{Theorem.}} --- \textit{Let $F$ be the previously considered set of holomorphic functions. Every real continuous operation on holomorphic functions of $F$: $1^{\circ}$ is uniformly continuous on $F$; $2^{\circ}$ is bounded and attains its upper limit for at least one function of $F$.}

\textit{Conversely, if every real continuous operation on a set $E$ of functions holomorphic inside one and the same domain $\mathfrak{A}$ attains its upper limit for at least one function of $E$, the set $E$ is closed and compact.}

\textsc{\textbf{Theorem.}} --- \textit{Consider a series of continuous operations on a closed set $F$ of holomorphic functions bounded in every bounded domain $\mathfrak{A}'$ interior to $\mathfrak{A}$. For the sum of this series to be a continuous operation on $F$, it is necessary and sufficient that convergence be quasi-uniform.}

\textsc{\textbf{Theorem.}} --- \textit{Let $\mathfrak{F}$ be a family of continuous operations on the same set $F$ of holomorphic functions. In order that from every infinite subset of operations of $\mathfrak{F}$ one can extract a sequence converging uniformly, it is necessary and sufficient that the operations of $\mathfrak{F}$ be equally continuous and bounded at every element of $F$.}

\subsection*{Chapter VII.}
\phantomsection\addcontentsline{toc}{subsection}{Chapter VII.}
\subsubsection*{Sets of continuous curves and line functions.}
\phantomsection\addcontentsline{toc}{subsubsection}{Sets of continuous curves and line functions.}

\textbf{76.}
\textit{\textbf{Definition of a curve-arc.}} --- Let $f,g,h$ be three functions of $t$, continuous on the same interval $I: a\le t\le b$, and which are not simultaneously constant on any interval contained in $I$, or else are simultaneously constant on \textit{the whole} interval $I$. The formulas
\begin{equation*}
x=f(t), \quad y=g(t), \quad z=h(t) \qquad\qquad (a\le t\le b) \tag{1}
\end{equation*}
define, in three-dimensional space, a set $E$ of points $(x,y,z)$. In the second case this set $E$ contains only one point; in the first, it is uncountable and then also perfect, bounded, and connected. If one takes into account the order in which the points of $E$ appear when $t$ increases, one obtains an ordered sequence (\textsc{iii}, p.~27) called a \textit{curve-arc}. Thus there is between the set $E$ and the curve-arc $\gamma$ it generates the same difference as between a combination and an arrangement. In other words, consider a system of three functions $F(u),G(u),H(u)$ satisfying the same conditions as $f,g,h$; the formulas
\begin{equation*}
x=F(u), \quad y=G(u), \quad z=H(u) \qquad\qquad A \le u \le B \tag{2}
\end{equation*}
define a curve-arc $\boldsymbol{\Gamma}$, which is to be regarded as coinciding with $\gamma$ only if not only $\gamma$ and $\boldsymbol{\Gamma}$ are formed by the same set of points, but these points are also encountered in the same order when one makes either $t$ or $u$ increase in (1) or (2). This latter condition is independent of the first; to see it, one need only choose, as is easy, formulas (1) and (2) so as to represent two figure-eight arcs: \textit{QRSQTPQ} and \textit{QSRQTPQ}, formed by the same points but not traversed in the same order.

Note that, as in the preceding example, one and the same geometric point may be encountered twice on a curve-arc, that is, may correspond to two different values of $t$. We shall nevertheless regard these two values of $t$ as corresponding to two distinct points of $\gamma$ (multiple points). This convention being made, we may say that, for two curves $\gamma$ and $\boldsymbol{\Gamma}$ to coincide, it is necessary and sufficient that one can establish between their points a one-to-one reciprocal correspondence such that corresponding points coincide and their order is preserved\,\footnote{\,Note that this definition does satisfy the fundamental conditions: $1^{\circ}$ if $\gamma$ is identical with $\boldsymbol{\Gamma}$, then $\boldsymbol{\Gamma}$ is identical with $\gamma$; $2^{\circ}$ if two curves are identical with a same third one, they coincide.}.\\

\textbf{77.}
In general, one is content to say that two curves (1) and (2) coincide if it is possible to find a function $u=\varphi(t)$ continuous and constantly increasing, going from $A$ to $B$ as $t$ increases from $a$ to $b$, and such that
\begin{equation*}
f(t) \equiv F[\varphi(t)], \quad g(t) \equiv G[\psi(t)], \quad h(t)=H[\varphi(t)] \qquad\qquad(a \le t \le b) . \tag{3}
\end{equation*}
This definition has the drawback of not having a direct relation with the intuitive notion we have of identity of two curves, a notion that seems clearly expressed in the definition I gave above. Moreover, note that if condition (3) is fulfilled, the curves do indeed coincide in the first sense I indicated. But the converse is less evident. To prove it, observe that if one can establish between (1) and (2) this correspondence in which order is preserved, one can evidently obtain identities such as (3), in which $\varphi(t)$ is a well-determined function for each value of $t$, and which goes without ever decreasing from $A$ to $B$ as $t$ increases from $a$ to $b$. It remains to show that $\varphi(t)$ is a continuous increasing function. Now if it were otherwise: either $\varphi(t)$ would be constant on some interval, and then $f(t),g(t),h(t)$ would be simultaneously constant on that interval, or $\varphi(t)$ would jump abruptly from values $<\varphi(t_{0})-k$ to values $>\varphi(t_{0})+k$ when $t$ passes through $t_{0}$, and then order would not be preserved.

Thus, the geometric definition we have given completely coincides with the ordinary definition. This one allows, in particular, always to suppose that the parameter interval of variation is $(0,1)$; it suffices to make the transformation $u=\cfrac{t-a}{b-a}$. We shall always assume henceforth that this has been done.\\

\textbf{78.}
\textit{\textbf{Definition of distance (\'ecart).}} --- Now that we can distinguish distinct curves, we may consider the set of all these curves as forming a class of elements to which we shall try to apply our general theorems. To this end, we shall prove that one can give for this class a definition of distance (\'ecart).

Consider indeed two arbitrary representations:
\begin{equation*}
x=f(t), \quad y=g(t), \quad z=h(t) \qquad\qquad\qquad\; (0 \le t \le 1), \tag{4}
\end{equation*}
\begin{equation*}
x=F(u), \quad y=G(u), \quad z=H(u) \qquad \qquad \quad (0\le u \le 1),\tag{5}
\end{equation*}
of two curves $\gamma$ and $\boldsymbol{\Gamma}$, and form the expression
$$
\delta(t)=\sqrt{[f(t)-F(t)]^{2}+[g(t)-G(t)]^{2}+[h(t)-H(t)]^{2}} \quad\quad(0 \le t \le 1) ;
$$
this is a continuous function of $t$ with a maximum $d\ge0$. We shall call \textit{distance (\'ecart) of the two curves} the lower limit $e\ge0$ of the set of values of $d$ obtained by taking for (4) and (5) all possible representations of $\gamma$ and $\boldsymbol{\Gamma}$. One can simplify the computation of the distance by noting that it suffices to take for $e$ the lower limit of the values of $d$ obtained by taking for $\gamma$ one fixed arbitrary representation (4) and varying only the representation of $\boldsymbol{\Gamma}$. Indeed, any arbitrary simultaneous representation of $\gamma$ and $\boldsymbol{\Gamma}$ can be written, from the above,
$$
\begin{array}{llll}
x=f[\varphi(t)], & y=g[\varphi(t)], & z=h[\varphi(t)] \qquad\qquad & (0 \le t \le 1), \\
x=F[\psi(u)], & y=G[\psi(u)], & z=H[\psi(u)] \qquad\qquad & (0 \le u \le 1),
\end{array}
$$
where $\varphi(t)$ and $\psi(u)$ are two continuous increasing functions. Then, for such a representation, the value of $\delta$ is
$$
\delta_{1}(t)=\sqrt{\{f[\varphi(t)]-F[\psi(t)]\}^{2}+\{g[\varphi(t)]-G[\psi(t)]\}^{2}+\{h[\varphi(t)]-H[\psi(t)]\}^{2}}
$$
and if $\varphi_{1}(u)$ is the inverse function of $\varphi(t)$, setting $\theta(u)=\psi\left[\varphi_{1}(u)\right]$, one has
$$
\delta_{1}\left[\varphi_{1}(u)\right]=\sqrt{\left\{f(u)-F[\theta(u)]\right\}^{2}+\left\{g(u)-G[\theta(u)]\right\}^{2}+\{h(u)-H[\theta(u)]\}^{2}} .
$$

But the maximum of $\delta_{1}(t)$ is obviously the same as that of $\delta_{1}\left[\varphi_{1}(u)\right]$, which takes the same values. Now the latter is indeed obtained by keeping the same representation of $\gamma$ and varying only that of $\boldsymbol{\Gamma}$.\\

\textbf{79.}
This being established, it is easy to prove that \textit{the definition just given does satisfy the general conditions a), b) for distance} ($\mathrm{n}^{\circ}$ 49). Indeed, it is first obvious that if $\gamma$ and $\boldsymbol{\Gamma}$ coincide, one of the values of $d$ is zero and therefore $(\gamma,\boldsymbol{\Gamma})=0$. Conversely, if this condition is satisfied: either one of the values of $d$ is zero, and then $\gamma$ and $\boldsymbol{\Gamma}$ surely coincide; or else, whatever $n$, one can determine a function $\varphi_{n}(t)$ continuous and increasing from $0$ to $1$, such that for every value of $t$ between $0$ and $1$,
\begin{equation*}
\sqrt{\left\{f(t)-F\left[\varphi_{n}(t)\right]\right\}^{2}+\left\{g(t)-G\left[\varphi_{n}(t)\right]\right\}^{2}+\left\{h(t)-H\left[\varphi_{n}(t)\right]\right\}^{2}}<\cfrac{1}{n} . \tag{6}
\end{equation*}
Hence
$$
f(t)=\lim _{n=\infty} F\left[\varphi_{n}(t)\right], \, g(t)=\lim _{n=\infty} G\left[\varphi_{n}(t)\right], \, h(t)=\lim _{n=\infty} H\left[\varphi_{n}(t)\right] ; \, (0 \le t \le 1),
$$
with uniform convergence. To prove the theorem, it suffices to prove that one can extract from the sequence $\varphi_{1}(t), \varphi_{2}(t), \ldots$ a sequence converging uniformly to a function $\varphi(t)$ (necessarily continuous and increasing from $0$ to $1$).
For this, we shall show that the functions $\varphi_{n}(t)$ are equally continuous and bounded ($\mathrm{n}^{\circ}$ 57).
They are obviously bounded: $0\le\varphi_{n}(t)\le1$. If they were not equally continuous, one could determine a number $\varepsilon>0$ such that, whatever $n$, there are a function $\varphi_{p_{n}}$ and two numbers $t_{n}, t_{n}'$ in the interval $(0,1)$ satisfying
\begin{equation*}
\left|\varphi_{p_{n}}\left(t_{n}\right)-\varphi_{p_{n}}\left(t_{n}^{\prime}\right)\right|>\varepsilon, \quad 0<\left|t_{n}-t_{n}^{\prime}\right|<\cfrac{1}{n} . \tag{7}
\end{equation*}
And since the numbers $t_{n}, \varphi_{p_{n}}(t_{n}), \varphi_{p_{n}}(t_{n}')$ are bounded as a whole, one may suppose they are chosen so as to converge (as $n$ increases indefinitely) to respective limits denoted $\tau_{0}, u_{0}, u_{0}'$. Then inequalities (7) show that $t_{n}'$ also tends to $\tau_{0}$ and that $|u_{0}-u_{0}'|\ge\varepsilon$. This being so, let $\lambda$ be any number between $u_{0}$ and $u_{0}'$; for $n$ large enough, $\varphi_{p_{n}}(t_{n})$ and $\varphi_{p_{n}}(t_{n}')$, which tend to $u_{0}$ and $u_{0}'$, will contain $\lambda$ between them. The continuous function $\varphi_{p_{n}}(t)$ therefore takes the value $\lambda$ at least once for a value $\tau_{n}$ of $t$ between $t_{n}$ and $t_{n}'$. One therefore has, by (6),
$$
\left|f\left(\tau_{n}\right)-F(\lambda)\right|<\cfrac{1}{p_{n}}, \quad\left|g\left(\tau_{n}\right)-G(\lambda)\right|<\cfrac{1}{p_{n}}, \quad\left|h\left(\tau_{n}\right)-H(\lambda)\right|<\cfrac{1}{p_{n}} \,.
$$
Since $\tau_{n}$ obviously tends to $\tau_{0}$, it follows that
$$
f\left(\tau_{0}\right)=F(\lambda), \quad g\left(\tau_{0}\right)=G(\lambda), \quad h\left(\tau_{0}\right)=H(\lambda),
$$
whatever value of $\lambda$ between $u_{0}$ and $u_{0}'$. We thus reach the announced contradiction, since $F(u),G(u),H(u)$ are not simultaneously constant on any interval. One should make an exception for the case where $\boldsymbol{\Gamma}$ reduces to a point, but then the converse is obvious.\\

\textbf{80.}
To show that condition b) of $\mathrm{n}^{\circ}$ 49 is satisfied, consider three curves $\gamma$, $\boldsymbol{\Gamma}$, and $C$, having representations (4), (5), and
$$
x=a(v), \quad y=b(v), \quad z=c(v) \qquad\qquad(0 \le v \le 1).
$$
To obtain their respective distances, it suffices to take the maxima of the quantities
$$
\begin{aligned}
\delta(t) & =\sqrt{\{f[\varphi(t)]-F[\psi(t)]\}^{2}+\{g[\varphi(t)]-G[\psi(t)]\}^{2}+\{h[\varphi(t)]-H[\psi(t)]\}^{2}}\,, \\
\delta_{1}(t) & =\sqrt{\{a(t)-f[\varphi(t)]\}^{2}+\{b(t)-g[\varphi(t)]\}^{2}+\{c(t)-h[\varphi(t)]\}^{2}}\,, \\
\delta_{2}(t) & =\sqrt{\{a(t)-F[\psi(t)]\}^{2}+\{b(t)-G[\psi(t)]\}^{2}+\{c(t)-H[\psi(t)]\}^{2}}\,,
\end{aligned}
$$
as $t$ increases from $0$ to $1$, then vary the form of $\varphi(t)$ and $\psi(t)$, continuous increasing functions of $t$. One obviously has
$$
\delta(t) \le \delta_{1}(t)+\delta_{2}(t) ;
$$
one easily deduces for the corresponding maxima $d,d_{1},d_{2}$:
$$
d \le d_{1}+d_{2}
$$
and, by varying $\varphi$ and $\psi$ independently:
$$
(\gamma, \boldsymbol{\Gamma}) \le (C,\gamma)+(C,\boldsymbol{\Gamma}) .
$$

\textbf{81.}
\textit{\textbf{Limit of a sequence of curves.}} --- The definition of distance yields ($\mathrm{n}^{\circ}$ 49, 27) a definition of limit: \textit{The curve $\gamma_{n}$ tends to $\gamma$ when the distance $(\gamma_{n},\gamma)$ tends to zero with $\cfrac{1}{n}$}. This definition can be expressed analytically as follows:

The necessary and sufficient condition for a curve $\gamma_{n}$ to have as limit a curve $\gamma$, as $n$ increases indefinitely, is that there exist a representation of $\gamma_{n}$:
\begin{equation*}
x=f_{n}(t), \quad y=g_{n}(t), \quad z=h_{n}(t) \qquad\qquad(0 \le t \le 1), \tag{8}
\end{equation*}
and a representation of $\gamma$:
\begin{equation*}
x=f(t), \quad y=g(t), \quad z=h(t) \qquad\qquad (0 \le t \le 1) \tag{4}
\end{equation*}
such that $f_{n},g_{n},h_{n}$ converge uniformly to $f,g,h$.

Indeed, if this condition is satisfied, it is obvious that $(\gamma_{n},\gamma)$ tends to zero. Conversely, choose a fixed representation (4) of $\gamma$; for each $n$, one may choose a representation (8) of $\gamma_{n}$ such that
$$
\sqrt{\left[f(t)-f_{n}(t)\right]^{2}+\left[g(t)-g_{n}(t)\right]^{2}+\left[h(t)-h_{n}(t)\right]^{2}}<\left(\gamma_{n}, \gamma\right)+\cfrac{1}{n} \,.
$$

Then, if $(\gamma_{n},\gamma)$ tends to zero, one sees that $f_{n},g_{n},h_{n}$ converge uniformly to $f,g,h$.

We are thus led back to the ordinary definition of the limit of a curve. In particular, it follows that if $\gamma_{n}$ tends to $\gamma$, then for $n$ large enough it lies entirely in the volume generated by a sphere of radius $\varepsilon$ whose center runs along $\gamma$.\\

\textbf{82.}
The definition we gave of distance seems rather arbitrary; let us try to show why it is preferable to others that may appear more natural.

For a long time one has had the idea of defining a number determined by two curves $C_{1}$ and $C_{2}$ in such a way that this number is zero when the two curves coincide and is very small when the two curves are infinitely close.

Mr.~\textsc{Arzelà} thus considers what he calls \textit{the shortest distance of the two curves} (\textsc{v}, p.~343), namely the lower limit $\lambda$ of the distance between two points, one arbitrary on $C_{1}$, the other arbitrary on $C_{2}$. Such a number indeed satisfies the two preceding conditions. But it does not satisfy the converse conditions: for $\lambda$ to be zero it is not necessary that the two curves coincide, it is enough that they intersect. The two curves may be very different in shape and the number $\lambda$ as small as one wishes, even zero. Hence the consideration of $\lambda$ is not sufficient for Mr.~\textsc{Arzelà} to define the limit of a curve.

Before Mr.~\textsc{Arzelà}, \textsc{Weierstrass} had introduced a quantity he called the \textit{neighborhood of two curves}. The definition he gave escapes the previous objection, in the sense that neighborhood is always small when the two curves are close, and only then. But it is defined only for neighboring curves. \textsc{Weierstrass} called neighborhood of two neighboring curves any number $\ge0$ smaller than the maximum distance of two corresponding points with the same abscissa or very close abscissas. His definition, perfectly sufficient for his aim\,\footnote{\,Namely, to bring greater rigor into the calculus of variations.}, lacked precision from our point of view. Moreover, he could not think of using it in set theory, which was still in its beginnings. On the contrary, Mr.~\textsc{Arzelà} seems to have been the first to perceive the importance of defining a well-determined number such as his \textit{shortest distance}, defined for any two curves.

Instead of the preceding definitions whose insufficiency we have recognized, one might also have thought of introducing a number $\mu_{1,2}$, the upper limit of the minimum distance from a point of $C_{1}$ to all points of $C_{2}$; or again a number $\rho_{1,2}$, the lower limit of the maximum distance from a point of $C_{1}$ to all points of $C_{2}$. These numbers satisfy part of the conditions we require of distance, but not all.

First, they are well defined even when $C_{1}$ and $C_{2}$ are not very close. Moreover, if $C_{1}$ coincides with $C_{2}$, one has $\mu_{1,2}=0$. But it may happen that $\mu_{1,2}=0$ without $C_{1}$ coinciding with $C_{2}$; it suffices that their set of points be the same. Conversely, it may happen that $C_{1}$ coincides with $C_{2}$ without $\rho_{1,2}=0$, for in the latter case $C_{2}$ would reduce to a point. These numbers $\mu_{1,2},\rho_{1,2}$ therefore cannot define a distance of two curves. But since they are easier to compute than distance $e$, they may sometimes be useful to give an idea of the value of $e$. One sees easily that one always has
$$
\lambda \le \mu_{1,2} \le e \quad \text{and} \quad \lambda \le \mu_{2,1} \le e.
$$

These inequalities are useful in certain questions (see, for example, $\mathrm{n}^{\circ}$ 85).\\

\textbf{83.}
\textit{\textbf{Limit of a curve.}} --- The theorem given in $\mathrm{n}^{\circ}$ 81 on the limit of a sequence of curves may be stated as follows. For $\gamma_{n}$ to have $\gamma$ as limit, it is necessary and sufficient that, for each value of $n$, one can establish between $\gamma_{n}$ and $\gamma$ a one-to-one reciprocal correspondence preserving order, in such a way that if $M_{n}$ on $\gamma_{n}$ corresponds to $M$ on $\gamma$, then $M_{n}$ tends to $M$, and uniformly. Assuming this condition holds for one correspondence, one may ask whether it would also hold for any other one-to-one order-preserving correspondence. In other words, let
\begin{equation*}
x=\varphi_{n}(t), \quad y=\psi_{n}(t), \quad z=\chi_{n}(t) \qquad\qquad (0 \le t \le 1) \tag{$8'$}
\end{equation*}
be any representation of $\gamma_{n}$. For a same value of $t$, according to (8), ($8'$), (4), there correspond two points $M_{n},M_{n}'$ on $\gamma_{n}$ and one point $M$ on $\gamma$. The question is what happens to $M_{n}'$ in the limit.

It is easy to see that, with $M$ fixed and $n$ increasing indefinitely: $1^{\circ}$ the corresponding point $M_{n}'$ has at least one limit point; $2^{\circ}$ every limit point of $M_{n}'$ is a point of $\gamma$ (which may be distinct from $M$). But the converse is not true. In other words, when $M$ takes all possible positions on $\gamma$, the set of limit points of the corresponding points $M_{n}'$ lies on $\gamma$, but does not always fill curve $\gamma$. If, for example, one takes
$$
\varphi_{n}(t)\equiv f_{n}\left[\alpha_{n}(t)\right], \quad \psi_{n}(t)\equiv g_{n}\left[\alpha_{n}(t)\right], \quad \chi_{n}(t)\equiv h_{n}\left[\alpha_{n}(t)\right]
$$
with
$$
\alpha_{n}(t)=t e^{n(t-1)},
$$
formulas ($8'$) do indeed give a representation of $\gamma_{n}$ since $\alpha_{n}(t)$ is continuous and strictly increasing from $0$ to $1$. However, one sees that for $t\neq1$, $\varphi_{n}(t),\psi_{n}(t),\chi_{n}(t)$ tend to $f(0),g(0),h(0)$, and for $t=1$, $\varphi_{n}(1),\psi_{n}(1),\chi_{n}(1)$ tend to $f(1),g(1),h(1)$. Thus the points $M_{n}'$ all tend to one or the other endpoint of $\gamma$.

This shows in particular that if one considers a set $E$ of curves and takes for each an arbitrary parametric representation, then when a sequence of curves of $E$ has a limit, the corresponding analytic representations need not have a limit.\\

\textbf{84.}
\textit{\textbf{The set of curves forms a normal class $(E)$.}} --- To reach application of our general theorems, it remains only to prove that the class of curves, which from the foregoing is already a class $(E)$, is also a normal class $(E)$.

I. I say it is separable. Indeed, we saw ($\mathrm{n}^{\circ}$ 56) that one can form, once and for all, a sequence $S$ of continuous functions
$$
A_{1}(t), A_{2}(t), \ldots, A_{n}(t), \ldots,
$$
such that every continuous function is one of the limit elements of this sequence. Then, if one considers a representation (4) of a curve $\gamma$, one can find three functions of this sequence $A_{p_{n}}(t),A_{q_{n}}(t),A_{r_{n}}(t)$ such that
$$
\left|f(t)-A_{p_{n}}(t)\right|<\cfrac{1}{n}, \quad\left|g(t)-A_{q_{n}}(t)\right|<\cfrac{1}{n}, \quad\left|h(t)-A_{r_{n}}(t)\right|<\cfrac{1}{n} \qquad(0 \le t \le 1)
$$
and these functions can always be chosen not simultaneously constant on a same interval. This proves that one can regard $\gamma$ as the limit curve of a sequence of curves extracted from the set $D$ of curves
$$
x=A_{p}(t), \quad y=A_{q}(t), \quad z=A_{r}(t) \qquad\qquad(0 \le t \le 1),
$$
where $A_{p},A_{q},A_{r}$ are three continuous functions extracted from sequence $S$, not simultaneously constant on a same interval. Now $D$ is clearly countable. We thus reach the following conclusion, which (besides its later usefulness) is very curious in itself:

\textsc{\textbf{Theorem.}} --- \textit{One can draw, once and for all, in three-dimensional space, a countable sequence of continuous curves}\,\footnote{\,From the method of construction of sequence $S$, one may even suppose that these continuous lines are all unicursal curves, or all polygonal lines each with a finite number of sides.}
$$
\gamma_{1}, \gamma_{2}, \ldots, \gamma_{n}, \ldots,
$$
\textit{such that every curve in space is the limit of at least one suitably extracted sequence from the first.}

II. Let us now show that \textsc{Cauchy}'s theorem on convergence of a sequence extends to a sequence of curves:

\textsc{\textbf{Theorem.}} --- \textit{The necessary and sufficient condition for a sequence of curves $C_{1},C_{2},\ldots$ to tend to a limit curve is that, for every $\varepsilon>0$, one can find an integer $n$ satisfying $\left(C_{n},C_{n+p}\right)<\varepsilon$ for every integer $p$.}

The condition is obviously necessary. To prove it is sufficient, it suffices to show that if it is satisfied, the set is compact. Indeed, if one can extract from the \textsc{Cauchy} sequence $C_{1},C_{2},\ldots$ a sequence $C_{n_{1}},C_{n_{2}},\ldots$ (with increasing indices) tending to a curve $C$, then the given sequence also converges to $C$.

To form the sequence $C_{n_{1}},C_{n_{2}},\ldots$, note that by hypothesis one can always choose $n_{q}$ large enough so that, for every integer $p$,
$$
\left(C_{n_{q}}, C_{n_{q+p}}\right)<\cfrac{1}{q^{2}}\,.
$$

Given a representation
$$
x=f_{q}(t), \quad y=g_{q}(t), \quad z=h_{q}(t) \qquad\quad\qquad\qquad(0 \le t \le 1)
$$
of $C_{n_{q}}$, we know one can always choose that of $C_{n_{q+1}}$
$$
x=f_{q+1}(t), \quad y=g_{q+1}(t), \quad z=h_{q+1}(t) \qquad\qquad (0 \le t \le 1),
$$
so that
$$
\sqrt{\left[f_{q}(t)-f_{q+1}(t)\right]^{2}+\left[g_{q}(t)-g_{q+1}(t)\right]^{2}+\left[h_{q}(t)-h_{q+1}(t)\right]^{2}}<\left(C_{n_{q}}, C_{n_{q+1}}\right)+\varepsilon,
$$
with $\varepsilon$ as small as one wishes. Take in particular $\varepsilon=\cfrac{1}{q^{2}}$. One then sees one can choose successively the representations of $C_{n_{1}},C_{n_{2}},\ldots$ so that, for every $q$,
$$
\left|f_{q}(t)-f_{q+1}(t)\right|<\cfrac{2}{q^{2}}, \quad\left|g_{q}(t)-g_{q+1}(t)\right|<\cfrac{2}{q^{2}}, \quad\left|h_{q}(t)-h_{q+1}(t)\right|<\cfrac{2}{q^{2}} \,.
$$

Then, given $\omega>0$, if one chooses $q_{1}$ such that
$$
\cfrac{2}{q_{1}^{2}}+\cfrac{2}{\left(q_{1}+1\right)^{2}}+\cdots<\omega
$$
(as is always possible), one easily sees that, for every integer $p$,
$$
\left|f_{q_{1}}(t)-f_{q_{1}+p}(t)\right|<\omega, \quad\left|g_{q_{1}}(t)-g_{q_{1}+p}(t)\right|<\omega, \quad\left|h_{q_{1}}(t)-h_{q_{1}+p}(t)\right|<\omega \,.
$$
This shows that the functions $f_{q}(t),g_{q}(t),h_{q}(t)$ converge uniformly to three continuous functions $f(t),g(t),h(t)$ as $q$ increases indefinitely. If these three functions are constant from $0$ to $1$, then $C_{n_{1}},C_{n_{2}},\ldots$ tend to a point. If these three functions are not constant on $(0,1)$, then to show $C_{n_{1}},C_{n_{2}},\ldots$ tend to a limit curve, one would have to prove that $f(t),g(t),h(t)$ are not simultaneously constant on any interval. Or rather, one would have to prove that one can choose the representations of $C_{n_{1}},C_{n_{2}},\ldots$ so that they converge to continuous functions not simultaneously constant on any interval.

For this, we use the fact that if $f,g,h$ are not constant from $0$ to $1$, one can form a function $a(t)$ increasing (never decreasing) from $0$ to $1$, and constant only on intervals where $f,g,h$ are simultaneously constant\,\footnote{\,See \textsc{Note I} at the end of the Memoir.}.

Then, setting $u=a(t)$, to each value of $u$ corresponds exactly one triple of values of $f(t),g(t),h(t)$, say $F(u),G(u),H(u)$. Thus there are three functions $F,G,H$, determined for each value of $u$, such that
$$
f(t)\equiv F[a(t)], \quad g(t)\equiv G[a(t)], \quad h(t)\equiv H[a(t)] \qquad(0 \le t \le 1) .
$$
First let us show that the formulas
$$
x=F(u), \quad y=G(u), \quad z=H(u) \qquad\qquad\qquad\qquad(0 \le u \le 1)
$$
define a curve. By the hypothesis on $a(t)$, $F,G,H$ cannot be simultaneously constant on a same interval. I claim they are continuous; that is, if there exists a sequence of values of $u$: $u_{1},u_{2},\ldots,u_{n},\ldots$ converging to a limit $u_{0}$, one has for example
$$
F\left(u_{0}\right)=\lim_{n=\infty}F\left(u_{n}\right) .
$$
Indeed, from the bounded set of numbers $F(u_{1}),F(u_{2}),\ldots$, one can form a sequence $F(u_{1}'),F(u_{2}'),\ldots$ converging to a determined limit $\alpha_{0}$; it suffices to prove that however this sequence is formed, one always has $\alpha_{0}=F(u_{0})$. Now, to each number $u_{n}'$, the equation $u=a(t)$ corresponds at least one number $t_{n}'$ such that $u_{n}'=a(t_{n}')$. Since numbers $t_{1}',t_{2}',\ldots$ are bounded, one can extract from their sequence a sequence $t_{p_{1}}',t_{p_{2}}',\ldots$ with limit $t_{0}$; and since $a(t)$ is continuous:
$$
a\left(t_{0}\right)=\lim_{n=\infty}a\left(t_{p_{n}}'\right)=\lim_{n=\infty}u_{p_{n}}'=u_{0}.
$$

But $f(t)$ is also continuous, hence
$$
F\left(u_{0}\right)=F\left[a\left(t_{0}\right)\right]=f\left(t_{0}\right)=\lim_{n=\infty}f\left(t_{p_{n}}'\right)=\lim_{n=\infty}F\left(u_{p_{n}}'\right)=\alpha_{0}.
$$

Thus $F(u)$ is continuous; likewise $G,H$.

Now set
$$
a_{n}(t)=\left(1-\cfrac{1}{n}\right)a(t)+\cfrac{t}{n}.
$$
For each $n$, $a_{n}(t)$ is continuous and increasing in $t$ [as $n$ tends to infinity, this function converges uniformly to $a(t)$]. Therefore the inverse function $t=b_{n}(u)$ is also continuous and increasing. Hence, setting
$$
F_{n}(u)=f_{n}\left[b_{n}(u)\right], \quad G_{n}(u)=g_{n}\left[b_{n}(u)\right], \quad H_{n}(u)=h_{n}\left[b_{n}(u)\right],
$$
curve $C_{n}$ is also represented by
$$
x=F_{n}(u), \quad y=G_{n}(u), \quad z=H_{n}(u) \qquad\qquad (0 \le u \le 1)\,.
$$

It remains only to prove that $F_{n}(u),G_{n}(u),H_{n}(u)$ converge uniformly to $F(u),G(u),H(u)$. Now
$$
\left|F\left[a_{n}(t)\right]-f_{n}(t)\right| \le \left|F\left[a_{n}(t)\right]-F[a(t)]\right|+\left|f(t)-f_{n}(t)\right|\,.
$$

The two terms on the right converge uniformly to zero; hence so does the left. But setting $u=a_{n}(t)$, this becomes $|F(u)-F_{n}(u)|$, and its maximum, for a given $n$, remains the same. Therefore the maximum of the left-hand member tends to zero. The same holds for $|G(u)-G_{n}(u)|$ and for $|H(u)-H_{n}(u)|$.\\

\textbf{85.}
\textit{\textbf{Compact sets of curves.}} --- We saw ($\mathrm{n}^{\circ}$ 35) that every compact set formed of elements of a class $(V)$ is bounded. We can here put this statement in a more geometric form by introducing the notion of finite domain; we shall call this a set of points such that the distance between any two of them remains less than a fixed number. Then one easily sees that the necessary and sufficient condition for a set of curves to be bounded (that is, for the distance between any two curves of the set to remain below a fixed number) is that all curves of the set be contained in a same finite domain. To prove this, it is enough to use that the distance of two arbitrary curves is at most the upper limit of the distance from an arbitrary point on one to an arbitrary point on the other, and at least the minimum of the distance from an arbitrary fixed point on one to a variable point on the other ($\mathrm{n}^{\circ}$ 82). Thus we see that \textit{every compact set of curves is contained in a finite domain}. But we now come to the greatest difference between point-sets (even in countably infinite dimensions) and sets of curves. Indeed, contrary to what happens for point-sets, \textit{a set of curves lying in one same finite domain is not necessarily compact}. To see this it suffices, for example, to consider the set $E$ of curves
$$
C_{n} \qquad\qquad x=2\pi t,\quad y=\sin 2^{n}\pi t,\quad z=0 \qquad\qquad(0 \le t \le 1)\,,
$$
all contained in the same circle $x^{2}+y^{2}\le5\pi^{2}$.

It is impossible to extract from this set a sequence of curves having a limit, since the distance between any two of them is at least $1$. Indeed, in any continuous correspondence between any two of them $C_{n}$ and $C_{p}$, one can always find a point $M$ on $C_{n}$ whose distance to the corresponding point is $\ge1$. To see this, note that on sinusoid $C_{n}$ there are $2^{n-1}$ points $A_{1},A_{2},\ldots,A_{2^{n-1}}$ with ordinate $1$, and $2^{n-1}$ points $A_{1}',A_{2}',\ldots,A_{2^{n-1}}'$ with ordinate $-1$. Suppose $n>p$; it suffices to show that, in any continuous correspondence, at least one of points $A_{1},A_{2},\ldots$ of $C_{n}$ corresponds to a point of $C_{p}$ with nonpositive ordinate, or at least one of points $A_{1}',A_{2}',\ldots$ of $C_{n}$ corresponds to a point of $C_{p}$ with nonnegative ordinate; then there are indeed two corresponding points at distance at least $1$.
Now assume there exists a correspondence $G$ where this is not so, and let $b_{1},b_{2},\ldots,b_{1}',b_{2}',\ldots$ be the points of $C_{p}$ corresponding to $A_{1},A_{2},\ldots$ and to $A_{1}',A_{2}',\ldots$ in correspondence $G$.
The ordinates of $b_{1},b_{2},\ldots$ would be positive, those of $b_{1}',b_{2}',\ldots$ negative, and one would meet these points in the order $b_{1},b_{1}',b_{2},b_{2}',\ldots$ while traversing $C_{p}$.
Hence there would be an intersection point of $C_{p}$ with $Ox$ at both ends and on each of the distinct arcs $b_{1}b_{1}', b_{1}'b_{2}, b_{2}b_{2}', b_{2}'b_{3}, \ldots, b_{2^{n-1}}b_{2^{n-1}}'$, i.e. at least $2^{n}+1$ intersections of $C_{p}$ with $Ox$; impossible, since $C_{p}$ meets $Ox$ at $2^{p}+1$ points and $p<n$.\\

\textbf{86.}
\textit{\textbf{Condition for a set of curves to be compact.}} --- The foregoing shows that one must seek an additional condition allowing one to assert that a set of curves contained in a finite domain is compact. Mr.~\textsc{Arzelà} obtained this sufficient condition by immediate application of his theorem on equally continuous functions (\textsc{v}, p.~344). It follows indeed from that theorem that, when one considers a system $R$ of simultaneous representations
$$
C \qquad\qquad\qquad\qquad x=f(t), \quad y=g(t), \quad z=h(t) \qquad\qquad\qquad(0 \le t \le 1)
$$
of a set $E$ of curves $C$, one may assert that this set of curves is compact if $R$ is formed of functions $f,g,h,\ldots$ equally continuous and bounded as a whole.

This extremely important theorem lends itself to many applications. But it should be completed from two points of view. First, it would be good to state, in geometric form, a proposition applying to geometric elements such as curves. Next, Mr.~\textsc{Arzelà} gave no converse to his theorem. Yet this converse could be understood in two ways: one might first think that if one considers a system $R$ of simultaneous representations of curves of a set $E$, the functions of $R$ will necessarily be bounded as a whole and equally continuous when $E$ is compact. \textit{This statement is false}. Indeed, consider a nonclosed curve $C$:
\begin{equation*}
x=f(t), \quad y=g(t), \quad z=h(t) \tag{$0\le t\leq1$}
\end{equation*}
and the system $R$ of representations
$$
x=\cfrac{1}{n}+f\left(e^{-n\left(\frac{1-u^{2}}{u^{2}}\right)}\right), \quad y=\cfrac{1}{n}+g\left(e^{-n\left(\frac{1-u^{2}}{u^{2}}\right)}\right), \quad z=\cfrac{1}{n}+h\left(e^{-n\left(\frac{1-u^{2}}{u^{2}}\right)}\right),
$$
which gives, for each $n$, an admissible representation of a curve $C_{n}$. The set $E$ of curves $C_{n}$ is compact, because another admissible representation of $C_{n}$ is
$$
x=\cfrac{1}{n}+f(t), \quad y=\cfrac{1}{n}+g(t), \quad z=\cfrac{1}{n}+h(t) \qquad\qquad (0 \le t \le 1)
$$
and one sees that $C_{n}$ tends to $C$ as $n$ increases indefinitely.

Yet the functions of $R$ are not equally continuous, for whatever positive numbers $\varepsilon$ and $\eta$, one can always find a number $u$ and an integer $p$ such that
$$
0<1-u<\eta \qquad\left|f_{n}(u)-f_{n}(1)\right|>\varepsilon,
$$
for $n>p$, setting
$$
f_{n}(u)\equiv\cfrac{1}{n}+f\left(e^{-n\left(\frac{1-u^{2}}{u^{2}}\right)}\right)\,.
$$

The second way of stating the converse one might be tempted to give to \textsc{Arzelà}'s theorem (and which this time is correct) is the following: \textit{if a set of curves is compact, there always exists at least one system of simultaneous representations of these curves}
$$
x=f(t),\; y=g(t),\; z=h(t) \qquad\qquad(0 \le t \le 1)
$$
\textit{such that the functions $f,g,h,\ldots$ are bounded as a whole and equally continuous.}

One could derive this proposition directly, but we shall split the proof into two parts, one of which will then allow us to state, \textit{in geometric form}, the condition for a set of curves to be compact.\\

\textbf{87.}
\textit{\textbf{Sets of uniformly divisible curves.}} --- To reach the announced proposition, we begin by transforming Mr.~\textsc{Arzelà}'s condition into another that can be stated geometrically, i.e. without introducing any fixed parametric representation of the curves in the set.

We first call \textit{oscillation of an arc} (\textit{oscillation d'un arc}) $PQ$ of a curve $AB$ the maximum distance between any two points belonging to that arc. We then say that curves of a set $E$ are \textit{uniformly divisible} when, for every $\varepsilon>0$, one can find an integer $n$ such that every curve of $E$ can be decomposed into $n$ consecutive arcs whose oscillations are all less than $\varepsilon$.

It is easy to see that \textit{if a set of curves is compact, the curves of that set are uniformly divisible}. Indeed, otherwise one could find $\varepsilon>0$ such that, for every $n$, there exists at least one curve $C_{n}$ of the set that cannot be decomposed into $n$ arcs whose oscillations are less than $\varepsilon$. But since the set is compact, one can extract from $C_{1},C_{2},\ldots$ a sequence $C_{n_{1}},C_{n_{2}},\ldots,C_{n_{p}},\ldots$ having a limit $C$, and one knows ($\mathrm{n}^{\circ}$ 81) one can choose representations of $C_{n_{p}}$ and $C$:
$$
\left.\begin{array}{ll}
x=F_{p}(t), \quad y=G_{p}(t), \quad z=H_{p}(t) \\
x=F(t), \;\quad y=G(t), \;\,\quad z=H(t)
\end{array}\;\right\} \qquad\qquad 0 \le t \le 1
$$
so that $F_{p},G_{p},H_{p}$ converge uniformly to $F,G,H$. Then, for every $\varepsilon$, one can choose $p$ large enough such that for all $t$:
$$
\left|F_{p}(t)-F(t)\right|<\cfrac{\varepsilon}{9}\,, \quad\left|G_{p}(t)-G(t)\right|<\cfrac{\varepsilon}{9}\,, \quad\left|H_{p}(t)-H(t)\right|<\cfrac{\varepsilon}{9} \,.
$$

On the other hand, divide $C_{n_{p}}$ into $n_{p}$ successive arcs determined by values of $t$:
$$
0,\, \cfrac{1}{n_{p}},\, \cfrac{2}{n_{p}},\, \cfrac{3}{n_{p}}, \ldots, 1.
$$
If $t,t'$ belong to one of these intervals, then $|t-t'|\le\cfrac{1}{n_{p}}$. Now one can choose $p$ large enough so that $|t-t'|\le\cfrac{1}{n_{p}}$ entails
$$
\left|F(t)-F\left(t^{\prime}\right)\right|<\cfrac{\varepsilon}{9}\,, \quad\left|G(t)-G\left(t^{\prime}\right)\right|<\cfrac{\varepsilon}{9}\,, \quad\left|H(t)-H\left(t^{\prime}\right)\right|<\cfrac{\varepsilon}{9} \,;
$$
then for $p$ large enough, in each of the $n_{p}$ arcs into which $C_{n_{p}}$ is divided:
$$
\left|F_{p}(t)-F_{p}\left(t^{\prime}\right)\right|<\left|F_{p}(t)-F(t)\right|+\left|F\left(t^{\prime}\right)-F_{p}\left(t^{\prime}\right)\right|+\left|F(t)-F\left(t^{\prime}\right)\right|<\cfrac{\varepsilon}{3}
$$
and similarly for $G,H$; hence
$$
\sqrt{\left|F_{p}(t)-F_{p}\left(t^{\prime}\right)\right|^{2}+\left|G_{p}(t)-G_{p}\left(t^{\prime}\right)\right|^{2}+\left|H_{p}(t)-H_{p}\left(t^{\prime}\right)\right|^{2}}<\varepsilon .
$$

Thus one reaches a contradiction, since $C_{n_{p}}$ would then be divided into $n_{p}$ arcs whose oscillations are at most $\varepsilon$.\\

\textbf{88.}
So, to prove the converse by which we complete Mr.~\textsc{Arzelà}'s theorem, it will suffice to prove the following proposition:

\textsc{\textbf{Theorem.}} --- \textit{The necessary and sufficient condition for the curves of a set $E$ to be uniformly divisible and contained in one same finite domain is that there exist at least one system of simultaneous representations of the curves of $E$:}
\begin{equation*}
x=f(t), \quad y=g(t), \quad z=h(t) \qquad\qquad(0 \le t \le 1) \tag{9}
\end{equation*}
\textit{such that the functions $f,g,h,\ldots$ are equally continuous and bounded as a whole.}

It is easy to see the condition is sufficient. Let us show it is necessary. For this, start from a first system of simultaneous representations of the curves of the set, namely for curve $C$:
\begin{equation*}
x=F(v), \quad y=G(v), \quad z=H(v) \qquad\qquad (0 \le v \le 1) \tag{10}
\end{equation*}

We may suppose $F,G,H$ are constant on no interval, because for curves in the set reduced to points, oscillation is obviously zero. This being so, let us try to form representation (9) from representation (10). In other words, let us determine for each curve $C$ a function $v=\psi(t)$, continuous and increasing from $0$ to $1$, so that all functions $F[\psi(t)],G[\psi(t)],H[\psi(t)]$ are equally continuous [they will always be bounded as a whole whatever $\psi(t)$].

Now, by hypothesis, for each curve $C$ one can determine an increasing sequence of $v$-values:
$$
0,\quad u_{1}^{(p)},\quad u_{2}^{(p)},\ldots,u_{n_{p}}^{(p)},\quad 1,
$$
such that oscillations of the corresponding arcs of $C$ are all below $\cfrac{1}{p}$, with quantities $u_{q}^{(p)}$ varying with $C$, but their number $n_{p}$ remaining the same for all curves $C$. By inserting, if necessary, new numbers in the sequence $S_{p}:0,u_{1}^{(p)},u_{2}^{(p)},\ldots,1$, one sees easily that one may suppose the following additional conditions: sequence $S_{p+1}$ contains sequence $S_{p}$ and has the same number of terms in each interval determined by $S_{p}$.

This done, let each number $u_{q}^{(p)}$ correspond to $\tau_{q}^{(p)}=\cfrac{q}{n_{p}}$; it is easy to see that if two numbers $u:u_{q}^{(p)},u_{q'}^{(p')}$ ($p\neq p'$ necessarily) are equal, so are the corresponding $\tau$'s, and that if $u_{q}^{(p)}<u_{q'}^{(p')}$ then $\tau_{q}^{(p)}<\tau_{q'}^{(p')}$.\\

\textbf{89.}
We now define, for each curve $C$, the continuous function $t=\varphi(v)$ inverse to the desired function $v=\psi(t)$. I claim it suffices to define it for values of $v$ that are limits of quantities $u_{q}^{(p)}$. In other words, if $v$ is any number in interval $(0,1)$, one can find a sequence of numbers $u_{q_{1}}^{(p_{1})},u_{q_{2}}^{(p_{2})},\ldots$ tending to $v$. Indeed, otherwise there would be $\varepsilon$ such that, for every $p$, each arc $(v-\varepsilon,v)$ and $(v,v+\varepsilon)$ is contained in one of the arcs $(u_{q}^{(p)},u_{q+1}^{(p)})$. Then oscillations of arcs $(v-\varepsilon,v)$ and $(v,v+\varepsilon)$ would be below $\cfrac{1}{n_{p}}$ for every $p$; consequently $F(v),G(v),H(v)$ would be constant from $v-\varepsilon$ to $v+\varepsilon$, contrary to the hypothesis.

Now let $u_{q}^{\prime(p)},u_{q}^{\prime\prime(p)}$ be the $u$'s respectively smaller and larger than $v$. The corresponding numbers $\tau_{q}^{\prime(p)},\tau_{q}^{\prime\prime(p)}$ form two \textsc{Dirichlet} classes; that is, every number of the first class is less than every number of the second, and one can find two numbers, one in each class, whose difference is as small as one wishes. There is therefore one and only one number $t$ between these two classes. This is the number we take as value of $\varphi(v)$. In the case where $v$ is one of the numbers $u_{q}^{(p)}$, the value of $\varphi(v)$ is obviously the corresponding number $\tau_{q}^{(p)}$.\\

\textbf{90.}
One sees immediately that this function $\varphi(v)$, well-defined for every value of $v$ between $0$ and $1$, is increasing and rises from $0$ to $1$. It is also continuous. Indeed, let $\omega$ be any positive number; taking $p$ large enough, one has $n_{p}>\cfrac{2}{\omega}$. With this $p$ chosen, let $u_{q}^{(p)}$ be the greatest number of sequence $S_{p}$ that is less than $v$; then the two numbers $v-u_{q}^{(p)}$, $u_{q+2}^{(p)}-v$ are positive; let $\eta$ be the greater of the two. I claim that for $|v-v'|<\eta$:
$$
\left|\varphi(v)-\varphi\left(v^{\prime}\right)\right|<\omega .
$$

Indeed, $v$ and $v'$ then lie between $u_{q}^{(p)}$ and $u_{q+2}^{(p)}$, hence:
$$
\left|\varphi(v)-\varphi\left(v^{\prime}\right)\right|<\varphi\left(u_{q+2}^{(p)}\right)-\varphi\left(u_{q}^{(p)}\right)=\cfrac{2}{n_{p}}<\omega .
$$

Thus we have indeed formed, for each curve $C$, a continuous function $t=\varphi(v)$ that constantly increases from $0$ to $1$. Then the inverse function $v=\psi(t)$ is also increasing and continuous, and one can form a system $R$ of simultaneous representations:
$$
x=f(t), \quad y=g(t), \quad z=h(t) \qquad\qquad (0\le t \le 1),
$$
by setting:
$$
f(t) \equiv F[\psi(t)], \quad g(t) \equiv G[\psi(t)], \quad h(t) \equiv H[\psi(t)] .
$$

The functions of system $R$ are indeed equally continuous; indeed, from the construction of $\psi(t)$, the oscillation of the curve is less than $\cfrac{1}{p}$ in each arc bounded by the values of $t: 0, \cfrac{1}{n_{p}}, \cfrac{2}{n_{p}}, \ldots, 1$, and the oscillations of $f(t),g(t),h(t)$ are less than those of $C$. Hence one sees that, for every $p$, one can determine a number $\cfrac{1}{n_{p}}$ independent of the curve $C$ considered, such that the inequality $|t'-t''|<\cfrac{1}{n_{p}}$ entails:
$$
\left|f\left(t^{\prime}\right)-f\left(t^{\prime \prime}\right)\right|<\cfrac{1}{p}\,, \quad\left|g\left(t^{\prime}\right)-g\left(t^{\prime \prime}\right)\right|<\cfrac{1}{p}\,, \quad\left|h\left(t^{\prime}\right)-h\left(t^{\prime \prime}\right)\right|<\cfrac{1}{p} \,.
$$

\textbf{91.}
Combining the preceding statements, one obtains the necessary and sufficient condition for a set of curves to be compact in a form that involves no parametric representation of curves in that set.

\textsc{\textbf{Theorem.}} --- \textit{The necessary and sufficient condition for a set of curves to be compact is that the curves of this set be contained in one and the same finite domain and be equally divisible.}\\

\textbf{92.}
\textit{\textbf{Application to rectifiable curves.}} --- Call a polygonal line inscribed in a curve a broken line having the same endpoints as the curve and whose vertices are points of the curve encountered in the same order when one traverses the curve or the broken line. One proves (\textsc{iv}, p.~59) that, when the maximum length of sides of an inscribed polygonal line in an arc $AB$ tends to zero, the perimeter of that polygonal line tends to the upper limit of lengths of polygonal lines inscribed in $AB$, and this limit is called \textit{the length of the curve}. One says the curve is \textit{rectifiable} when this limit is finite.

The preceding theorem allows one to assert that a set of rectifiable curves contained in one and the same finite domain is compact whenever lengths of these curves are bounded as a whole\,\footnote{\,A particular case of this theorem was proved directly by Mr.~\textsc{Hilbert} (\textsc{xx}). The generalization was then carried out by Mr.~\textsc{Lebesgue} in his Thesis (\textsc{xvii}).}. Indeed, suppose one divides each curve of the set into $n$ arcs of equal length. The length of an arbitrary curve $C$ in the set is less than a fixed number $L$ independent of $C$; therefore each equal arc has length less than $\cfrac{L}{n}$. Now oscillation of an arc is obviously less than the length of that arc. Therefore, by taking $n$ large enough $\left(n>\cfrac{L}{\varepsilon}\right)$, each curve $C$ will be divided into arcs of length less than $\varepsilon$. Then curves $C$ are uniformly divisible and their set is compact.\\

\textbf{93.}
One might ask whether the sufficient condition just stated is also necessary. The answer is negative: \textit{one can give an example of rectifiable curves lying in one and the same finite domain and forming a compact set, without their lengths being bounded as a whole.}

Let indeed
$$
x=f(t), \quad y=g(t), \quad z=0 \qquad\qquad\qquad (0 \le t \le 1)\,,
$$
be a fixed representation of a non-rectifiable curve $C$, but whose arcs $\cfrac{1}{n}\le t\le1$ are rectifiable for every $n$. It will suffice to consider curves $C_{n}$:
$$
\begin{gathered}
x=\cfrac{1}{n}+f\left[\cfrac{1}{n}+u\left(1-\cfrac{1}{n}\right)\right], \quad y=\cfrac{1}{n}+g\left[\cfrac{1}{n}+u\left(1-\cfrac{1}{n}\right)\right], \quad z=0 \\
(0 \le u \le 1) .
\end{gathered}
$$

These curves form a compact set, for it is obvious that $C_{1},C_{2},\ldots$ tend to $C$. They are rectifiable and contained in a same finite domain. Their lengths, which are those of arcs $\cfrac{1}{n}\le t\le1$ of curve $C$, are nevertheless not bounded as a whole.

Moreover, it is very easy to construct a curve such as the one whose existence we assumed; one may take, for example, the hyperbolic spiral obtained by setting:
$$
f(t)\equiv t\cos\cfrac{1}{t}, \quad g(t)\equiv t\sin\cfrac{1}{t}\,.
$$

\textit{One can even give an example of curves contained in one and the same finite domain and forming a compact set, without any one of them being rectifiable.}

It suffices to take again the curve $C$ just considered, and consider curves $C_{n}$:
$$
x=\cfrac{1}{n}+f(t), \quad y=\cfrac{1}{n}+g(t), \quad z=0 \qquad\qquad(0 \le t \le 1) .
$$

\subsubsection*{Application of the general theorems.}
\phantomsection\addcontentsline{toc}{subsubsection}{Application of the general theorems.}
We may now apply results of the First Part to sets of curves and line functions.

\textbf{94.}
\textit{\textbf{Sets of curves.}}

\textsc{\textbf{Theorem.}} --- \textit{Let $E$ be an arbitrary set of arcs of continuous curves. $1^{\circ}$ The derived set of $E$ is closed. $2^{\circ}$ One can extract from $E$ a countable sequence $D$ of curves of $E$, such that every curve of $E$ belongs to $D$, or is a limit curve of $D$. $3^{\circ}$ Either $E$ is countable, or there exists a curve $C_{1}$ of $E$ such that, for every value of $\varepsilon$, there is an uncountable infinity of curves of $E$ entirely contained in the volume generated by a sphere of radius $\varepsilon$ whose center runs along $C_{1}$.}

\textsc{\textbf{Theorem.}} --- \textit{Let $G$ be a set of uniformly divisible curves ($\mathrm{n}^{\circ}$ 87) contained in one and the same finite domain: the elements of $G$ that are not limits of curves of $G$ form a countable set.}

\textsc{\textbf{Theorem.}} --- \textit{If the preceding set $G$ is also closed, in other words if $G$ is an arbitrary extremal set: $1^{\circ}$ one can decompose it into a countable set and a perfect set or null set with no common elements; $2^{\circ}$ there exists a set of curves whose derived set coincides with $E$.}

\textsc{\textbf{Theorem.}} --- \textit{Every perfect set of curves has the power of the continuum.}

One may also state the theorems of $\mathrm{n}^{\circ}$ 40, 42, 43, 45, \ldots taking continuous curves as elements.\\

\textbf{95.}
\textit{\textbf{Line functions.}} --- Operations acting on arcs of curves have received from Mr.~\textsc{Volterra} the name \textit{line functions} (\textit{fonctions de lignes}).

Mr.~\textsc{Volterra} succeeded in showing how one might define the derivative of a line function, and from this viewpoint obtained extremely interesting results (\textsc{xv}).

In an article in the Annales scientifiques de l'École Normale (\textsc{xxvii}), I gave some properties of line functions whose derivatives, in Mr.~Volterra's sense, have a given special form.

I shall leave aside here this type of investigation, which departs somewhat from our present object: the study of properties common to \textit{the most general} continuous operations.

General results apply again to line functions.

\textsc{\textbf{Theorem.}} --- Every line function continuous on an extremal set of curves: $1^{\circ}$ is bounded on this set, $2^{\circ}$ attains each extremum there. \textit{Conversely, if all line functions continuous on one and the same set $E$ of curves, $1^{\circ}$ are each bounded, $2^{\circ}$ each attain their extremum on at least one curve of the set, this set is necessarily extremal.}

\textsc{\textbf{Theorem.}} --- \textit{Let $E$ be a closed set of curves contained in one and the same finite domain and uniformly divisible: $1^{\circ}$ for a sequence of line functions $U_{1},U_{2},\ldots$ continuous on $E$ to converge to a line function continuous on $E$, it is necessary and sufficient that the sequence converge quasi-uniformly. $2^{\circ}$ for line functions continuous on $E$ to form a family $\mathfrak{F}$ such that from every infinite subset of $\mathfrak{F}$ one can extract a sequence of line functions converging uniformly on $E$, it is necessary and sufficient that line functions of $\mathfrak{F}$ be bounded as a whole and equally continuous on $E$.}

\subsubsection*{Sets and functions of surfaces.}
\phantomsection\addcontentsline{toc}{subsubsection}{Sets and functions of surfaces.}

\textbf{96.}
One may develop for surfaces a theory that would lead to propositions entirely analogous to those above. But at the beginning there are precautions to take, leading to some length. Therefore, in order not to lengthen this Memoir, we defer this point to a later publication.

\subsection*{NOTE I.}
\phantomsection\addcontentsline{toc}{subsection}{NOTE I.}
\subsubsection*{Construction of a nondecreasing continuous function that is constant only on given intervals.}
\phantomsection\addcontentsline{toc}{subsubsection}{Construction of a nondecreasing continuous function that is constant only on given intervals.}

\textbf{97.}
Consider a set $G$ of intervals $I$ without common points, situated on the $x$-axis in the fundamental segment $(0,1)$. We shall show that one can form a continuous function $\varphi(x)$ increasing without ever decreasing from $0$ to $1$, and constant only on each of the intervals $I$\,\footnote{\,We use this proposition in $\mathrm{n}^{\circ}$ 84.}.

Indeed, first note that set $G$ is countable, for it is the union of sets $G_{n}$ each formed by intervals $I$ whose lengths are greater than $\cfrac{1}{n+1}$ and at most equal to $\cfrac{1}{n}$. Now there is obviously a finite number ($\le n$) of intervals $I$ in $G_{n}$, since the $I$ have no common point.

One can therefore arrange these intervals into a countable sequence $I_{1},I_{2},I_{3},\ldots$; from the foregoing, length of $I_{n}$ tends to zero with $\cfrac{1}{n}$, unless there are only finitely many such intervals. In the latter case there is no difficulty in forming $\varphi(x)$.\\

\textbf{98.}
Suppose then there are infinitely many intervals $I$. The set $P$ of points of segment $(0,1)$ that are not interior in the strict sense to any interval $I$ is a perfect set (\textsc{ii}, p.~12).

$1^{\circ}$. First examine the case where $P$ contains points $0$ and $1$ and is dense in no interval between $0$ and $1$. One then knows (\textsc{ii}, p.~14) that one can establish a one-to-one reciprocal correspondence between all intervals $I$, on one hand, and all positive rational numbers $<1$, on the other, in such a way that corresponding elements are arranged in the same manner on $Ox$.
Call $u_{n}$ the rational number thus corresponding to $I_{n}$. Now let $x$ be any point of segment $(0,1)$, other than $0$ and $1$.
Let $I'$ be any interval $I$ not completely to the right of $x$, and $I''$ any interval $I$ not entirely to the left of $x$. If $u'$ and $u''$ denote corresponding rational numbers, one sees that every rational number between $0$ and $1$ must be either a $u'$ or a $u''$, and one always has $u'\le u''$.
Thus one defines two classes of numbers having exactly one number between them, which we call $\varphi(x)$.
If in addition we set $\varphi(0)=0$, $\varphi(1)=1$, we see that $\varphi(x)$ is defined at every point of segment $(0,1)$. In particular, at every point of an interval $I_{n}$, one has $\varphi(x)=u_{n}$.
So function $\varphi(x)$ is constant on each interval $I_{n}$. I now claim that if $x<x'$, one necessarily has $\varphi(x)<\varphi(x')$, provided $x$ and $x'$ do not belong to a same interval $I$. Indeed, it suffices to note that under this hypothesis there is at least one interval $I_{n}$ entirely contained in $(x,x')$. Then, by construction of $\varphi(x)$:
$$
\varphi(x)<u_{n}<\varphi\left(x^{\prime}\right) .
$$

Thus $\varphi(x)$ is nondecreasing, goes from $0$ to $1$, and is constant only on each interval $I$. Hence one can define, at each point $x$, the upper limit $\varphi(x-0)$ of values of $\varphi(x')$ for $x'<x$, and one has $\varphi(x-0)\le\varphi(x)$. Now $\varphi(x-0)$ is at least equal to the upper limit of values of $\varphi(x')$ when $x'$ lies in some interval $I$ and remains $<x$. Therefore $\varphi(x)=\varphi(x-0)$. In other words, $\varphi(x)$ is left-continuous; similarly, it is right-continuous. In conclusion, function $\varphi(x)$ is continuous and satisfies the announced conditions.

$2^{\circ}$. Now suppose $P$, still containing $0$ and $1$, may be dense in certain intervals. Then one can form a second sequence of intervals $K_{1},K_{2},\ldots$ whose points, endpoints included, are all points of $P$, and such that set $Q$ of points not belonging in the strict sense to either the $K$ or the $I$ is dense in no interval. Then place in each interval $K_{n}$ a sequence of disjoint intervals: $K_{n}^{(1)},K_{n}^{(2)},\ldots$, such that set of points of $K_{n}$ not belonging to any of intervals $K_{n}^{(1)},K_{n}^{(2)},\ldots$ has measure zero (\textsc{ii}, p.~16)\,\footnote{\,For example, in $K_{n}$ one may place a set of intervals similar to the complement [in segment $(0,1)$] of the Cantor set cited below.}. Finally let $R$ be set of points not belonging in the strict sense to any of intervals $I_{1},I_{2},\ldots; K_{1}^{(1)},K_{1}^{(2)},\ldots; K_{2}^{(1)},K_{2}^{(2)},\ldots$

By their construction, these intervals have no common point (whereas an $I$ and a $K$ could share an endpoint). Set $R$ is therefore perfect, contains $0$ and $1$, and is dense in no interval. One can therefore form a nondecreasing continuous function $\varphi_{1}(x)$ constant only on intervals $I_{n}$ or $K_{n}^{(p)}$. Let, on the other hand, $\varphi_{2}(x)$ be the measure of set of points of $P$ between $0$ and $x$.
This is a nondecreasing continuous function; moreover this function is constant on intervals $I$, and certainly increasing on intervals $K_{n}^{(p)}$.
Therefore function
$$
\varphi(x)=\cfrac{\varphi_{1}(x)+\varphi_{2}(x)}{\varphi_{1}(1)+\varphi_{2}(1)}
$$
is nondecreasing continuous, goes from $0$ to $1$, and is constant only on intervals $I$; thus it answers the question.

$3^{\circ}$. Finally, if $P$ does not contain both $0$ and $1$, in other words if one of intervals $I$ ends at $0$ or at $1$, one is reduced to previous cases by removing that interval.\\

\textbf{99.}
There is an interesting special case where one can form function $\varphi(x)$ explicitly: when $P$ is Cantor's set of measure zero (\textsc{ii}, p.~12). Consider any number $x$ between $0$ and $1$, and write its value in base-$3$ numeration\,\footnote{\,We retain the original French-style decimal comma notation (e.g., $0,a_{1}a_{2}\ldots$) in this translated text.}:
$$
x=0, a_{1}a_{2}a_{3}\ldots
$$

Numbers $a_{1},a_{2},\ldots$ can only take values $0,1,2$, and one may always suppose that a number written as:
$$
0,\, a_{1}\, a_{2}\, \ldots a_{n}\, 0\, 2\, 2\, 2 \ldots
$$
or
$$
0,\, b_{1}\, b_{2}\, \ldots b_{p}\, 2\,0\,0\,0 \ldots
$$
is written in respectively equivalent form:
$$
0,\, a_{1}\, a_{2}\, \ldots a_{n}\, 1\,0\,0\,0 \ldots
$$
or
$$
0,\, b_{1}\, b_{2}\, \ldots b_{p}\, 1\,2\,2\,2 \ldots
$$

This being so, we always set:
$$
\varphi\left(0, a_{1}a_{2}\ldots\right)=0', b_{1}b_{2}\ldots,
$$
where $0',\ldots$ indicates passage to base-$2$ system, taking generally for $b_{n}$ a number equal to $0$ if one of numbers $a_{1},a_{2},\ldots,a_{n-1}$ equals $1$, and otherwise equal to $1$ if $a_{n}=2$, and equal to $a_{n}$ if $a_{n}\neq2$. In other words, we set:
$$
\begin{gathered}
\varphi\left(\cfrac{a_{1}}{3}+\cfrac{a_{2}}{3^{2}}+\cdots+\cfrac{a_{n}}{3^{n}}+\cdots\right) \\
=\cfrac{\alpha(a_{1})}{2}+\cfrac{\alpha(a_{2})\beta(a_{1})}{2^{2}}+\cdots+\cfrac{\alpha(a_{n})\beta(a_{1})\ldots\beta(a_{n-1})}{2^{n}}+\cdots,
\end{gathered}
$$
with conditions:
$$
\alpha(0)=0, \quad \alpha(1)=\alpha(2)=1, \quad \beta(1)=0, \quad \beta(0)=\beta(2)=1 .
$$

It is easy to see that one thus assigns the same value of $\varphi(x)$ to all numbers $x$ such that
$$
x=0, c_{1}c_{2}\ldots c_{n}\,1\,\lambda_{1}\lambda_{2}\ldots,
$$
where $c_{1},c_{2},\ldots,c_{n}$ remain fixed and all differ from $1$.

These points form an interval; the correspondence thus established between this interval and the constant value of $\varphi(x)$ is precisely that indicated by Mr.~\textsc{Cantor} to prove that set $P$ has the power of the continuum.

\subsection*{NOTE II.}
\phantomsection\addcontentsline{toc}{subsection}{NOTE II.}
\subsubsection*{On the effective computation of the distance \textit{(\'ecart)} between two curves.}
\phantomsection\addcontentsline{toc}{subsubsection}{On the effective computation of the distance \textit{(\'ecart)} between two curves.}

\textbf{100.}
The theory developed in the \textsc{First Part} shows that once existence of a distance (\'ecart) is established, the very value of this distance is no longer of very great importance. In other words, it is enough to know one can define a number satisfying general conditions of distance ($\mathrm{n}^{\circ}$ 49), without concern for how it might be computed effectively. This is all the more true since one could choose infinitely many other definitions of distance by means of which the general theory could develop exactly in the same way, and which would not change limit curves of a set of curves. For example, if $e$ is the distance between two curves as we defined it, one could everywhere replace it by $3e$, or by $\sqrt{e}$, or by $\cfrac{e}{1+e}$. But the primitive definition has the advantage of giving, for the distance of two curves reduced to two points, a quantity equal to distance of those two points; and for two curves, one fixed and the other going off to infinity, a number tending to infinity.

The definition of distance we gave does not indicate how one could compute it effectively, and this computation would certainly be very difficult in general. One can nevertheless give simple examples where this computation is possible. In general one should proceed as follows. One proves that, in every continuous correspondence between the two curves, the maximum $d$ of distance between corresponding points is greater than or equal to a fixed number $e$. One then shows there exists at least one such correspondence where this maximum $d$ is exactly equal to $e$.

For example, consider two line segments $\overline{AB},\overline{A'B'}$, and let $\Delta$ be the greater of lengths $AA',BB'$. One always has $d\ge\Delta$, and one has $d=\Delta$ when correspondence is such that two corresponding points $M,M'$ divide $\overline{AB}$ and $\overline{A'B'}$ in the same ratio. Therefore distance between the two segments equals $\Delta$.

One may also be led to define distance between two closed curves. One then considers it as lower limit of all distances obtained by considering them in all possible ways as curve-arcs each with two coincident endpoints. One would then easily see that if one considers two circles of radii $R_{1},R_{2}$, centers $O_{1},O_{2}$, lying in the same plane and oriented in the same way, their distance is
$$
O_{1}O_{2}+\left|R_{1}-R_{2}\right|.
$$

A less particular example is as follows. Consider in one same plane two \textit{convex} curves $C_{1}$ and $C_{2}$, where curve $C_{1}$ is assumed to have continuous tangent and be entirely contained in $C_{2}$. Under these conditions, one can see that distance between $C_{1}$ and $C_{2}$ is equal to the maximum segment intercepted by $C_{2}$ on the outward normal to $C_{1}$.

In these examples, one verifies immediately that vanishing of distance is the necessary and sufficient condition for the two curves to coincide, which often allows prediction of probable value of distance.\\

Paris, April 2, 1906.\\

\textsc{Maurice Fréchet.}

\newpage
\subsection*{CITED AUTHORS.}
\phantomsection\addcontentsline{toc}{subsection}{CITED AUTHORS.}
{\small 

\noindent\textbf{Collection of Monographs on Function Theory}

\noindent Published under the direction of \textsc{Émile Borel} (Paris, Gauthier-Villars):
\begin{enumerate}
    \item[(\textsc{i})] \textsc{Borel}, \emph{Leçons sur la théorie des fonctions. Éléments de la théorie des ensembles et applications}, 1898.
    \item[(\textsc{ii})] \textsc{Borel}, \emph{Leçons sur les fonctions de variables réelles et les développements en séries de polynômes}, 1905.
    \item[(\textsc{iii})] \textsc{Baire}, \emph{Leçons sur les fonctions discontinues}, 1905.
    \item[(\textsc{iv})] \textsc{Lebesgue}, \emph{Leçons sur l'intégration et la recherche des fonctions primitives}, 1904.
\end{enumerate}

\medskip

\noindent\textbf{Articles by Mr.~\textsc{Arzelà}}
\begin{enumerate}
    \item[(\textsc{v})] \emph{Funzioni di linee}, \emph{Rendiconti della R.~Accademia dei Lincei}, 1889, vol.~V, 1\textsuperscript{o} semestre, pp.~342--348.\\
    \emph{Sulle funzioni di linee}, \emph{Memorie della R.~Accademia delle Scienze dell'Istituto di Bologna, sezione delle Scienze Fisiche e Matematiche}, s.~V, t.~V (1895), pp.~55--74.
    
    \item[(\textsc{vi})] \emph{Sull'integrabilità delle equazioni differenziali ordinarie}, \emph{Ibid.}, s.~V, t.~V (1895), pp.~75--88.\\
    \emph{Sull'esistenza degli integrali nelle equazioni differenziali ordinarie}, \emph{Ibid.}, s.~V, t.~VI (1896), pp.~33--42.
    
    \item[(\textsc{vii})] \emph{Sulle serie di funzioni}, \emph{Ibid.}, s.~V, t.~VIII (1899--1900), pp.~3--58, 91--134.
    
    \item[(\textsc{viii})] \emph{Sull'inversione di un sistema di funzioni}, \emph{Rendiconto delle sessioni della R.~Accademia delle Scienze dell'Istituto di Bologna}, nuova serie, vol.~VII (1902--1903), pp.~182--201.\\
    \emph{Sulle serie di funzioni ugualmente oscillanti}, \emph{Ibid.}, nuova serie, vol.~VIII (1903--1904), pp.~143--154.\\
    \emph{Sul principio di Dirichlet}, \emph{Ibid.}, nuova serie, vol.~I (1896--1897), pp.~71--84.
    
    \item[(\textsc{ix})] \emph{Sulle serie di funzioni di variabili reali}, \emph{Ibid.}, nuova serie, vol.~VII (1902--1903), pp.~22--32.
    
    \item[(\textsc{x})] \emph{Sulle serie di funzioni analitiche}, \emph{Ibid.}, nuova serie, vol.~VII (1902--1903), pp.~33--42.
\end{enumerate}

\medskip

\noindent\textbf{Miscellaneous}
\begin{enumerate}
    \item[(\textsc{xi})] \textsc{Hadamard}, \emph{Sur les opérations fonctionnelles}, \emph{Comptes Rendus de l'Académie des Sciences}, t.~CXXXVI (1\textsuperscript{er} semestre 1903), pp.~351--354.
    
    \item[(\textsc{xii})] \textsc{Hadamard}, \emph{Sur certaines applications possibles de la théorie des ensembles}, \emph{Verhandlungen des ersten Internationalen Mathematiker-Kongresses} (1897), pp.~201--202.
    
    \item[(\textsc{xiii})] \textsc{Borel}, \emph{Contribution à l'analyse arithmétique du continu}, \emph{Journal de Mathématiques pures et appliquées}, 5\textsuperscript{e} série, t.~IX (1903), pp.~329--375.
    
    \item[(\textsc{xiv})] \textsc{Borel}, \emph{Sur quelques points de la théorie des fonctions}, \emph{Annales scientifiques de l'École Normale supérieure}, 3\textsuperscript{e} série, t.~XII (1895), pp.~9--55.
    
    \item[(\textsc{xv})] \textsc{Volterra}, \emph{Sur une généralisation de la théorie des fonctions d'une variable imaginaire}, \emph{Acta Mathematica}, t.~XII (1889), pp.~233--286.
    
    \item[(\textsc{xvi})] \textsc{Pincherle}--\textsc{Amaldi}, \emph{Le operazioni distributive e le loro applicazioni all'Analisi} (Bologna, 1901).
    
    \item[(\textsc{xvii})] \textsc{Lebesgue}, \emph{Intégrale, Longueur, Aire}, \emph{Annali di Matematica pura ed applicata}, s.~III, t.~VII (1902), pp.~231--359.
    
    \item[(\textsc{xviii})] \textsc{Le Roux}, \emph{Les fonctions d'une infinité de variables indépendantes}, \emph{Nouvelles Annales de Mathématiques}, 4\textsuperscript{e} série, t.~IV (1904), pp.~448--458.
    
    \item[(\textsc{xix})] \textsc{Le Roux}, \emph{Recherches sur les équations aux dérivées partielles}, \emph{Journal de Mathématiques pures et appliquées}, 5\textsuperscript{e} série, t.~IX (1903), pp.~403--455.
    
    \item[(\textsc{xx})] \textsc{Hilbert}, \emph{Sur le principe de Dirichlet} (traduction de M.~Laugel), \emph{Nouvelles Annales de Mathématiques}, 3\textsuperscript{e} série, t.~XIX (1900), pp.~337--344.
    
    \item[(\textsc{xxi})] \textsc{Drach}, \emph{Essai sur une théorie générale de l'intégration et sur la classification des transcendantes}, \emph{Annales scientifiques de l'École Normale supérieure}, 3\textsuperscript{e} série, t.~XV (1898), pp.~243--384.
    
    \item[(\textsc{xxii})] \textsc{Ascoli}, \emph{Le curve limiti di una varietà data di curve}, \emph{Memorie della classe di Scienze Fisiche, Matematiche e Naturali della R.~Accademia dei Lincei}, vol.~XVIII (1883), pp.~521--586.
    
    \item[(\textsc{xxiii})] \textsc{Zoretti}, \emph{Sur les fonctions analytiques uniformes qui possèdent un ensemble parfait discontinu de points singuliers}, \emph{Journal de Mathématiques pures et appliquées}, 6\textsuperscript{e} série, t.~I (1905), pp.~1--51.
    
    \item[(\textsc{xxiv})] \textsc{de Séguier}, \emph{Éléments de la théorie des groupes abstraits} (Paris, Gauthier-Villars, 1904).
    
    \item[(\textsc{xxv})] \textsc{Montel}, \emph{Sur les suites de fonctions analytiques}, \emph{Comptes Rendus de l'Académie des Sciences}, t.~CXXXVIII (1\textsuperscript{er} semestre 1904), pp.~469--471.
    
    \item[(\textsc{xxvi})] \textsc{Bourlet}, \emph{Sur les opérations en général et les équations différentielles linéaires d'ordre infini}, \emph{Annales scientifiques de l'École Normale supérieure}, 3\textsuperscript{e} série, t.~XIV (1897), pp.~133--190.
    
    \item[(\textsc{xxvii})] \textsc{Fréchet}, \emph{Sur les fonctions de lignes fermées}, \emph{Annales scientifiques de l'École Normale supérieure}, 3\textsuperscript{e} série, t.~XXI (1904), pp.~557--571.\\
    \emph{Sur une extension de la méthode de Jacobi--Hamilton}, \emph{Annali di Matematica pura ed applicata}, s.~III, t.~XI (1905), pp.~187--199.\\
    \emph{Sur les opérations linéaires}, \emph{Transactions of the American Mathematical Society}, t.~V (1904), pp.~493--499; t.~VI (1905), pp.~134--140.
\end{enumerate}
}

%% file: appendix/levy1950_en_body.tex





\textbf{Direct definitions of the distance between two laws.} --- The preceding definition has the drawback that it may be very difficult to determine the distance between two laws given by their total probability functions (\textit{fonctions des probabilités totales}). It is therefore useful to have direct definitions. We shall indicate two of them.\\

$1^{\circ}$ The first 
was the one I indicated, or 
rather used (without pronouncing the word distance), in 1925 
(\emph{Calcul des probabilités}, pp. 199--200). Let us define the law $\bm{L}$ corresponding to a random variable $\bm{X}$ by the curve $\bm{\Gamma}$ representing the total probability function
$$
y=\operatorname{Pr}[\bm{X}<x]
$$
with, however, the convention that to each value of $x$ for which this function is discontinuous, we associate the segment
$$
\operatorname{Pr} [\bm{X}<x] \leqq y \leqq \operatorname{Pr} [\bm{X} \leqq x] .
$$

The curve is then continuous, and is manifestly cut at one point and one point only by any line parallel to the straight line
$$
x+y=0
$$

If two laws $\bm{L}$ and $\bm{L}^{\prime}$ are thus represented by two curves $\bm{\Gamma}$ and $\bm{\Gamma}^{\prime}$ which the line $x+y=c$ cuts respectively at $\bm{A}$ and $\bm{A}^{\prime}$, the distance $\left(\bm{L}, \bm{L}^{\prime}\right)=\left(\bm{\Gamma}, \bm{\Gamma}^{\prime}\right)$ will be the maximum of $\bm{AA}^{\prime}$ as $c$ varies from $-\infty$ to $+\infty$. This maximum is certainly attained, because of the continuity of the curves $\bm{\Gamma}$ and $\bm{\Gamma}^{\prime}$.


To justify this definition, it is enough to observe that the triangle inequality (2) is a consequence of the inequality
\begin{equation*}
\bm{A} \bm{A}^{\prime} \leqq \bm{A}\bm{A}^{\prime\prime} + \bm{A}^{\prime}\bm{A}^{\prime \prime} \, , \tag{3}
\end{equation*}
written for the value of $c$ which makes $\bm{AA}^{\prime}$ a maximum, hence equal to $(\bm{L}, \bm{L}^\prime)$. The line $x+y=c$ cuts at $\bm{A}^{\prime \prime}$ the curve $\bm{\Gamma}^{\prime \prime}$ corresponding to $\bm{L}^{\prime \prime}$, and indeed one has
$$
\bm{A}\bm{A}^{\prime \prime} \leqq\left(\bm{L}, \bm{L}^{\prime \prime}\right), \quad \bm{A}^{\prime} \bm{A}^{\prime \prime} \leqq\left(\bm{L}^{\prime}, \bm{L}^{\prime \prime}\right),
$$
and (2) follows.\\

$2^{\circ}$ Let us now give a second possible definition, which in fact includes an infinity of definitions, since it uses the notion of distance between two points $\bm{A}$ and $\bm{A}^{\prime}$ of the plane, and it is not necessary for this distance to be the Euclidean distance; we shall suppose only that it tends to zero at the same time as the Euclidean distance. In any case, the distance $(\bm{A}, \bm{\Gamma}^{\prime})$ from $\bm{A}$ to $\bm{\Gamma}^{\prime}$ will be the minimum of $\bm{AA}^{\prime}$ as $\bm{A}^{\prime}$ ranges over $\bm{\Gamma}^{\prime}$, and the distance $\left(\bm{L}, \bm{L}^{\prime}\right)=\left(\bm{\Gamma}, \bm{\Gamma}^{\prime}\right)$ will be the greater of the following two numbers: the maximum of $(\bm{A}, \bm{\Gamma}^{\prime})$ as $\bm{A}$ ranges over $\bm{\Gamma}$, and the maximum of $(\bm{A}^{\prime}, \bm{\Gamma})$ as $\bm{A}^{\prime}$ ranges over $\bm{\Gamma}^{\prime}$.

Let us show that inequality (2) is indeed satisfied. Let $\bm{A}$ be any point of $\bm{\Gamma}$, let $\bm{A}^{\prime \prime}$ be the point of $\bm{\Gamma}^{\prime \prime}$ nearest to $\bm{A}$ (or one of those points, if there are several), and let $\bm{A}^{\prime}$ be the point of $\bm{\Gamma}^{\prime}$ nearest to $\bm{A}^{\prime \prime}$. One has
$$
\left(\bm{A}, \bm{\Gamma}^{\prime}\right) \leqq \bm{A} \bm{A}^{\prime} \leqq \bm{A} \bm{A}^{\prime \prime}+\bm{A}^{\prime} \bm{A}^{\prime \prime} \leqq\left(\bm{L}, \bm{L}^{\prime \prime}\right)+\left(\bm{L}^{\prime}, \bm{L}^{\prime \prime}\right)\,.
$$

Since the same upper bound applies to $(\bm{A}, \bm{\Gamma})$, whatever $\bm{A}^{\prime}$ on $\bm{\Gamma}^{\prime}$ may be, inequality (2) follows.

This definition of $(\bm{\Gamma}, \bm{\Gamma}^{\prime})$ can be applied to arbitrary curves, but would not be without drawbacks. It would lead, for example, if an ellipse had a very small minor axis, to regarding the complete ellipse and the half-ellipse situated on a determined side of the major axis as two curves that differ very little. M. Fréchet gave a definition of the distance between two curves which applies to the most general Jordan curves and avoids the drawback we have pointed out. We do not need it here, the preceding definition being sufficient for the very special curves we have to consider.

\textbf{A new comparison between the distance between two laws and that between two random variables.} --- We shall consider here only those definitions of the distance $(\bm{X}, \bm{Y})$ which depend only on the nature of the random variable $\bm{Z=|X-Y|}$; then
\begin{equation*}
(\bm{X}, \bm{Y})=(\bm{0}, \bm{Z}) . \tag{4}
\end{equation*}

Now, we have remarked that, since the notion of correlation disappears when one of the variables ceases to be random, there is no longer any need in that case to distinguish between the nearness of the two variables and that of the two corresponding laws. Perhaps one can even, if $\bm{L}_{0}$ and $\bm{L}_{1}$ denote respectively the laws corresponding to zero and to the nonnegative variable $\bm{Z}$, realize the equality of the two distances in question:
\begin{equation*}
(\bm{0}, \bm{Z})=\left(\bm{L}_{0}, \bm{L}_{1}\right) . \tag{5}
\end{equation*}

This remark leads one to pose the following two problems:\\

\textbf{First problem.} --- \textit{Given an acceptable definition of $\left(\bm{L}, \bm{L}^{\prime}\right)$ (that is, one satisfying the triangle inequality), is the definition of $(\bm{X}, \bm{Y})$ deduced from it by formulas (4) and (5) acceptable?}\\

\textbf{Second problem.} --- \textit{Given an acceptable definition of $(\bm{X}, \bm{Y})$, can one give an acceptable definition of $\left(\bm{L}, \bm{L}^{\prime}\right)$ which, in the case of the distance $\left(\bm{L}_{0}, \bm{~L}_{1}\right)$, reduces to that resulting from formulas (4) and (5)?}\\

Here we shall content ourselves with posing these general questions and indicating one particular way of realizing conditions (4) and (5). One has only to take for $(\bm{A}, \bm{A}^{\prime})$, not the Euclidean distance between these two points, but the sum
$$
\left|x-x^{\prime}\right|+\left|y-y^{\prime}\right|\,,
$$
$x, y$ being the coordinates of $\bm{A}$, and $x^{\prime}$ and $y^{\prime}$ those of $\bm{A}^{\prime}$, and then to define $\left(\bm{L}, \bm{L}^{\prime}\right)$ starting from $\left(\bm{A}, \bm{A}^{\prime}\right)$ as we did in item $2^{\circ}$ of the preceding paragraph. One easily verifies that the distance $(\bm{L}_{0}, \bm{L}_{1})$ is then the distance from the point $x=0,\, y=1$ to the curve $\bm{\Gamma}_{1}$ corresponding to $\bm{L}_{1}$, that is to say the minimum, with $\varepsilon$ varying from $0$ to $+\infty$, of the sum
$$
\varepsilon+\operatorname{Pr} \left[ | \bm{X}-\bm{Y} > \varepsilon \right],
$$
which comes to the same thing as saying the infimum of
$$
\varepsilon+\operatorname{Pr} \left[|\bm{X}-\bm{Y}| \geqq \varepsilon \right]
$$

This is precisely the first of the definitions of $(\bm{X}, \bm{Y})$ indicated in this work (p. 205).\\

\textbf{The space of random variables.} --- We reserved for the end a remark with which it would have been more logical to begin; 
but we did not want to risk discouraging the reader at the outset by a notion rather abstract and difficult to grasp clearly: 
if the notion of the distance between two random variables is quite clear, that of the space of random variables is much less so. 
There exist \textit{species of random variables}; 
but to speak of \textit{the space of random variables}, 
implying by this an allusion to a space containing all conceivable random variables, is as illusory as speaking of the set of all conceivable sets.

I shall explain myself more clearly. When we speak, for example, of the space of continuous functions, we may regard it as pre-existing, each point corresponding to a well-determined function, and we never run the risk of having to consider a continuous function that is not represented by a point of that space. 
On the contrary, whatever random variables we may already have considered, 
however great the cardinality of the set they form, nothing prevents us from considering a new random variable independent of the preceding ones. 
\textit{The space of random variables} is a construction that is never completed.

The definition given by M. Fréchet (p. 203) is, moreover, perfectly correct. In speaking of \textit{a certain category of trials}, he supposes the construction of the space of random variables to be fixed at a determined instant. It seemed to me that it was not useless to indicate more explicitly the difficulty he thus set aside.

I conclude with a remark concerning the case, especially interesting in practice, where one considers only random variables that are functions of a countable infinity of independent variables. It is known (see for example P. Lévy, \emph{Bull. Sc. Math.}, 1931, p. 87) that all these choices reduce to the choice of a single variable $t$ chosen between 0 and 1 (with a uniform distribution of probability over this interval), and all the random variables considered are measurable functions of $t$. One may then consider the space of random variables one wishes to study as mapped onto the space of measurable functions (and take in these two spaces definitions of distance that correspond to one another). 
This mapping, possible in an infinity of ways, does not eliminate the difficulty pointed out; for it is enough to adjoin to the preceding ones a new random variable independent of the others for everything to have to be begun again; the mode of mapping considered will have to be replaced by another.


%% file: appendix/frechet1957_en_body.tex
\noindent SUMMARY: The author indicates an explicit expression for the distance between two probability laws (\textit{lois de probabilit\'e}), according to Paul L\'evy's first definition. He also indicates a convenient modification of that definition.

\subsection*{INTRODUCTION}
\phantomsection\addcontentsline{toc}{subsection}{INTRODUCTION}

Mr. Paul L\'evy solved so many problems that he did not always have time to exploit all of his ideas.

For example, he gave\,\footnote{\,See p.~329 of the 2nd edition of our work: ``Recherches th\'eoriques modernes sur le Calcul des Probabilit\'es'' (First book), published by Gauthier-Villars, 1950.} three interesting definitions of the distance ``between two probability laws,'' leaving to readers the task of using them.

I thought that a good way to pay tribute to the very rich work of Mr. Paul L\'evy would consist in continuing his study.

After a brief comparison of his three definitions, I shall study the first more particularly.

My study was made easier by the results stated in my Note to the C.R.\,\footnote{\,Volume 242, 1956, pp.~2426--2428.} ``On correlation tables whose margins are given'' and by the application made of it by Bass\,\footnote{\,C.R. 21 February 1955, volume 240, no.~8.} and by Dall'Aglio\,\footnote{\,``Sugli estremi dei momenti delle funzione di ripartizione doppia'' (Ann. Scuole Norm. Sup. di Pisa, Sc. Fis. a Mat., series III, Vol.~X, 1956, pp.~35--74).}.

\noindent \underline{\textbf{Reminder}:}

Let us recall that:

A distance is defined on a set $E$ if to every pair $a,b$ of elements of $E$ one associates a real number, called the distance between $a$ and $b$ and denoted by $(a,b)$, in such a way that the following conditions hold:

$1^{\circ}\; (a,b)=(b,a) \ge 0;$

$2^{\circ}\;(a,b) \neq 0$ if $a$ and $b$ are regarded as distinct in $E$, and conversely;

$3^{\circ}\;$ For any three elements $a,b,c$ of $E$, one has the ``triangle inequality'':
$$
(a,b) \le (a,c)+(c,b).
$$

Moreover, to say that $a_n$ of $E$ \underline{converges} to $a$ of $E$ is to say that the distance $(a_n,a)$ tends to zero with $\frac{1}{n}$.

\newpage
\noindent \underline{\textbf{Distinction between two ``distances''}:}

Let us apply this definition to the Calculus of Probabilities. In this theory there are two distances to be distinguished. Let $X,Y$ be two random elements determined \underline{simultaneously} in each trial of a given category.

One may define a distance $(X,Y)$ between $X$ and $Y$ in a given trial. But there is also interest in defining a ``global distance'' (\textit{distance globale}) between $X$ and $Y$ in the \underline{whole set} of trials. We shall denote this global distance by:
$$
(\,[X],\,[Y] \,).
$$

Each time, one must make sure that conditions $1^{\circ},2^{\circ},3^{\circ}$ are satisfied. But in general, as regards condition $2^{\circ}$, one agrees to regard as non-distinct two random elements that are only ``almost surely'' identical. For example, we have proved that
$$
\operatorname{Prob}(X \neq Y)
$$
is a global distance.

A global distance more often considered is recalled in the note below. To confine ourselves to the case where $X,Y$ are two random \underline{numbers}, let us observe that, in order to determine a global distance, it is not enough to know the respective cumulative distribution functions:
$$
F(x)=\operatorname{Prob}(X<x);\; G(y)=\operatorname{Prob}(Y<y)
$$
which determine the probability laws $L,L'$ of $X$ and $Y$, but also the joint cumulative distribution function
$$
H(x,y)=\operatorname{Prob}(X<x \text{ and } Y<y)
$$
of the ordered pair $X,Y$\,\footnote{\,For example, one often takes as the ``global distance'' of $X$ and $Y$ the quantity
$$
\sqrt{\mathfrak{M}_{H}(X-Y)^{2}}=\sqrt{\int_{-\infty}^{+\infty}\int_{-\infty}^{+\infty}(x-y)^2\,d_x d_y H(x,y)}.
$$}.

On the contrary, the distance $(L,L')$ between the two laws $L,L'$ must be determined independently of $H(x,y)$ by $F(x)$ and $G(y)$ alone.

Paul L\'evy's last two definitions of $(L,L')$ are purely analytical. They present themselves as modifications of traditional definitions in Analysis of the distance between two functions $F$ and $G$, modifications motivated by the possible discontinuities of the two functions $F$ and $G$.

The first definition proposed by Mr. Paul L\'evy is, on the contrary, a definition that could legitimately appear arbitrary and artificial (if it were stated in purely analytical form) to a mathematician unfamiliar with Probability Theory. Indeed, it appeals to the notion of random number, and it is precisely when one effectively appeals to this notion that it acquires an intuitive and natural meaning.

It will be curious to observe that this first definition leads to a rather simple explicit expression for the distance, contrary to what one might have expected.

\noindent \underline{\textbf{Reminder}:}

The first definition proposed by Mr. Paul L\'evy is as follows:

The distance between two probability laws $L,L'$, whose cumulative distribution functions are $F(x),G(y)$, is the infimum (\textit{borne inf\'erieure}) of the ``global distances'' of two random numbers $X,Y$ whose joint cumulative distribution function $H(x,y)$ is arbitrary, provided it gives to $X,Y$, separately, the cumulative distribution functions $F(x),G(y)$.

\newpage
\noindent \underline{\textbf{Margins}:}

If one wishes, one may say that $F(x),G(y)$ are the ``margins'' (\textit{marges}) of the ``correlation table'' defined by $H(x,y)$.

It will be valuable for us to use the result stated in my Note to the C.R. of 14 May 1956 recalled above, according to which the set of functions $H(x,y)$ is none other than the set of joint cumulative distribution functions lying between the cumulative distribution functions $H_0(x,y)$ and $H_1(x,y)$ (endpoints included), with $H_0 \le H_1$, defined by
$$
\begin{aligned}
& H_0(x,y)=\operatorname{Sup}[(F(x)+G(y)-1),0],\\
& H_1(x,y)=\operatorname{Inf}[F(x),G(y)].
\end{aligned}
$$

Let us observe that in this first definition there appears a global distance between $X$ and $Y$ that can be interpreted in different ways. We gave a first example on p.~2 and a second in Note (1), p.~2. Another example: in order to obtain a global distance between $X$ and $Y$, namely
$$
(\,[X],\,[Y] \,),
$$
which tends to zero when $X$ tends to $Y$ in probability and conversely, we proposed taking for this distance the infimum, as the positive number $\varepsilon$ varies, of the sum
$$
\varepsilon+\operatorname{Prob}(|X-Y|>\varepsilon).
$$

One may also take as global distance of $X$ and $Y$ the mean value of $f(|X-Y|)$, where $f$ is a continuous, increasing, bounded function defined for $x\ge 0$, equal to zero for $x=0$, and such that $f(a+b)\le f(a)+f(b)$ (more general conditions actually suffice; see Parts $2^{\circ}$ and $3^{\circ}$ of my article published in 1957 in the \emph{Giornale Italiano degli Attuari}). As examples of $f(x)$ I gave the functions $\dfrac{x}{1+x}$, $|\mathrm{th}\,x|$, and $|1-e^{-x}|$.\,\footnote{\,For these two definitions, see pp.~205 and 226 of the 2nd edition of my ``Recherches th\'eoriques modernes sur le Calcul des Probabilit\'es'', Gauthier-Villars (1950).}

\noindent \textbf{\underline{The conditions for the existence of a distance:}}

For his first definition of the distance between two laws $L,L'$, L\'evy shows that this distance satisfies condition $3^{\circ}$ above. It is evident that condition $1^{\circ}$ is satisfied. The question is less simple for condition $2^{\circ}$.

It is nevertheless clear that if two laws $L,L'$ are identical, one of the pairs $X,Y$ satisfying these two laws respectively will be the pair $X,X$, whose distance is zero. The infimum of the distances $([X],[Y])$ will therefore be zero, and hence $(L,L')$ will be zero. The most delicate point is the converse statement. When $(L,L')=0$, are $L$ and $L'$ necessarily identical?

Let us specify by the notation $([X],[Y])_H$ the fact that this is the global distance of $X$ and $Y$ when the joint cumulative distribution function of their pair is $H(x,y)$.

In the case where this distance \underline{attains} its infimum $(L,L')$ for at least one particular function $H$, say $K(x,y)$, then by the definition of $([X],[Y])_H$, if this quantity is zero, $X$ and $Y$ are non-distinct (either identical in every trial or ``almost surely'' equal), and then $F(x)\equiv G(y)$, that is, $L\equiv L'$. But if the infimum is not attained, the reasoning is less simple.

To say that $(L,L')=0$ is to say that for every integer $n$ there exists an \underline{admissible} pair $X_n,Y_n$ (\underline{that is to say}, such that $X_n,Y_n$ have respective laws $L,L'$) for which the distance
\begin{equation*}
\left([X_n],[Y_n]\right)<\frac{1}{n}. \tag{1}
\end{equation*}

But the choice adopted for this last distance was not prescribed in advance. And we have shown that one can conceive many such distances satisfying conditions $1^{\circ},2^{\circ},3^{\circ}$ above and conforming to our intuition.

For example, we have shown that one can define very different distances $([X],[Y])$ having the following common property:

The necessary and sufficient condition for
$$
\lim_{n\to\infty}([X_n],[X])=0
$$
is that $X_n$ converges ``in probability'' to $X$ as $n\to\infty$.

Let us adopt such a definition in stating the definition of the distance between two laws $L,L'$.

Starting from (1), we have, for every $\varepsilon>0$,
\begin{align*}
G(x-\varepsilon) &= \operatorname{Prob}[Y_n<x-\varepsilon] \le \operatorname{Prob}[X_n<x] + \operatorname{Prob}[|X_n-Y_n|>\varepsilon] \tag{2}\\
&=F(x)+\operatorname{Prob}(|X_n-Y_n|>\varepsilon).
\end{align*}

For the different definitions we have given of a distance between two random numbers compatible with convergence in probability, one also has:
\begin{equation*}
([X],[Y])=([X-Y],0). \tag{3}
\end{equation*}

Hence, from (1) and (3), one has
$$
([X_n-Y_n],0)<\frac{1}{n},
$$
therefore $X_n-Y_n$ converges to zero in probability, that is,
\begin{equation*}
0=\lim_{n\to\infty}\operatorname{Prob}(|X_n-Y_n|>\varepsilon) \tag{4}
\end{equation*}
for every $\varepsilon>0$. Then from (2), letting $n$ increase to infinity, one has
$$
G(x-\varepsilon)\le F(x) \qquad \text{for every } \varepsilon>0.
$$
And since $G(x)$ is left-continuous,
$$
G(x)\le F(x).
$$
But in the same way one would obtain
$$
F(x)\le G(x),
$$
therefore $F(x)=G(x)$.

Thus, although it was not evident that condition $2^{\circ}$ should be satisfied, this is now proved, at least when the definition of $(L,L')$ is applied in the case where the distance between two random numbers satisfies the conditions above.

\noindent {\textbf{\uline{Case where $( [X],[Y] )$ is equal to the quadratic mean deviation (\textit{\'ecart quadratique moyen}) of $X$ and $Y$:}}}

In what follows, we first take as global distance of two random numbers $X$ and $Y$ their quadratic mean deviation (\textit{\'ecart quadratique moyen})
$$
\sqrt{\mathfrak{M}_{H}(X-Y)^2}\,.
$$
(To be precise, we write $\mathfrak{M}_H$ to recall that it is the mean when the joint cumulative distribution function of the pair $(X,Y)$ is $H(x,y)$.)

Then the distance $\Delta$ between the probability laws $L,L'$ characterized by the cumulative distribution functions $F(x)$ and $G(y)$ can be represented by the formula
$$
\Delta^2=\underset{H}{\operatorname{Inf}}\,\mathfrak{M}_H(X-Y)^2.
$$

Let us observe that
$$
(X-Y)^2 \le 2(X^2+Y^2), \text{ whence } \mathfrak{M}_H(X-Y)^2 \le 2\mathfrak{M}_H X^2+2\mathfrak{M}_H Y^2.
$$

Hence if $\mathfrak{M}_H(X-Y)^2$ is infinite, then at least one of the quantities
$$
\int_{-\infty}^{+\infty}x^2\,dF(x)=\mathfrak{M}_H X^2, \qquad \int_{-\infty}^{+\infty}y^2\,dG(y)=\mathfrak{M}_H Y^2,
$$
is infinite.

To avoid complications, we shall suppose that $F(x)$ and $G(x)$ are the cumulative distribution functions of random variables each having a \underline{finite quadratic mean deviation}. Then $\mathfrak{M}_H(X-Y)^2$ will necessarily be finite. Moreover, $X$ and $Y$, taken separately, also have, under the same hypothesis, finite and determined means:
$$
a=\mathfrak{M}X \qquad , \qquad a'=\mathfrak{M}Y;
$$
let $\sigma,\sigma'$ also denote their quadratic mean deviations. Then
$$
Z=\frac{X-a}{\sigma} \qquad , \qquad T=\frac{Y-a'}{\sigma'}
$$
will be the corresponding \underline{reduced} random variables for $X$ and $Y$. That is to say, the means of $Z$ and $T$ are zero and their quadratic mean deviations are equal to unity. One has
$$
\begin{gathered}
\mathfrak{M}_H(X-Y)^2=\mathfrak{M}(a-a'+\sigma Z-\sigma' T)^2 \\
=(a-a')^2+2(a-a')(\sigma\mathfrak{M}Z-\sigma'\mathfrak{M}T)+\sigma^2\mathfrak{M}Z^2-2\sigma\sigma'\mathfrak{M}ZT+\sigma'^2\mathfrak{M}T^2
\end{gathered}
$$
which, from what precedes, reduces to\,\footnote{\,This expression had already been given in the case of discrete random variables by Salvemini in: ``L'indice quadratico di dissomiglianza era distribuzione gaussiane''. Atti della XIV Reunione Scientifica, Societa Italiana di Statistica, 1954.}
\begin{equation*}
(a-a')^2+\sigma^2-2\sigma\sigma' r+\sigma'^2
\end{equation*}
where, when $\sigma\sigma'\neq 0$, $r$ denotes the (linear) correlation coefficient of $X$ and $Y$. One knows that $|r|\le 1$. One may therefore write $r=\cos\theta$, whence
\begin{equation*}
\sqrt{\mathfrak{M}_H(X-Y)^2}=\sqrt{(a-a')^2+\left(\sigma^2-2\sigma\sigma'\cos\theta+\sigma'^2\right)}. \tag{5}
\end{equation*}

This quantity may therefore be described as the hypotenuse of a right triangle, one side of which has length $|a-a'|$, the difference between the means of $X$ and $Y$, while the other side is the third side of a triangle whose other two sides have lengths $\sigma$ and $\sigma'$, the angle between them being equal to $\theta$ (or again, the cosine of that angle being equal to the linear correlation coefficient of $X$ and $Y$).

One observes that in expression (5), $a,a',\sigma,\sigma'$ are \underline{determined when $F$ and $G$ are known}, and \underline{only} $r=\cos\theta$ depends on $H(x,y)$. The minimum of $\mathfrak{M}_H(X-Y)^2$ therefore occurs when $r=\cos\theta$ is as large as possible. More precisely, the distance between the two probability laws is therefore
\begin{equation*}
(L,L')=\sqrt{(a-a')^2+\left(\sigma^2-2\sigma\sigma'\rho+\sigma'^2\right)} \tag{6}
\end{equation*}
where $\rho$ is the upper bound of the linear correlation coefficient between $X$ and $Y$ as the probability law of the pair $X,Y$ varies in such a way that the separate probability laws of $X$ and $Y$ remain fixed.

Let us also observe that
$$
r=\frac{\iint (x-a)(y-a')\,dH(x,y)}{\sqrt{\left[\int (x-a)^2\,dF\right]\left[\int (y-a')^2\,dG\right]}}\,.
$$

Set $h(z,t)=H(a+\sigma z,a'+\sigma' t)$. Then $r=\dfrac{\iint \sigma\sigma'zt\,dh(z,t)}{\sqrt{\ldots}}$, whence
\begin{equation*}
r=\iint_{P}zt\,dh(z,t). \tag{7}
\end{equation*}

Thus $r$ is also the linear correlation coefficient of $Z$ and $T$, and is expressed by formula (7), where $h(z,t)$ is a cumulative distribution function whose margins are given and equal to
$$
\varphi(z)\equiv F(a+\sigma z) \qquad \psi(t)\equiv G(a'+\sigma' t).
$$

The whole problem is reduced to finding the upper bound of the expression
\begin{equation*}
r=\int_{-\infty}^{+\infty}\int_{-\infty}^{+\infty}zt\,dh(z,t) \tag{8}
\end{equation*}
when $h$ is a cumulative distribution function whose margins $\varphi(z)$ and $\psi(t)$ are given. It is known that $r=\cos\theta=+1$ in the case where there is an increasing \underline{linear functional relation}
$$
Y=mX+p \qquad \text{with } m>0,
$$
between $X$ and $Y$. In that case one has more simply
\begin{equation*}
(L,L')=\sqrt{(a-a')^2+(\sigma-\sigma')^2}. \tag{9}
\end{equation*}

In the 1954 article cited above in the note, Salvemini had conjectured that the minimum of
$$
J=\int_{-\infty}^{+\infty}\int_{-\infty}^{+\infty}(x-y)^2\,d_x d_y H(x,y)
$$
is attained when $H$ is identical with the function $H_1(x,y)$ cited above.

In 1955, Bass stated in the note cited above the equivalent result (according to formula (6)) that the maximum of the linear correlation coefficient is obtained for $H\equiv H_1$. His proof (which he kindly communicated to me in writing) assumes that $X$ and $Y$ \underline{are bounded}.

The following year, in his article cited above, Dall'Aglio proved that $J$ is minimal for $H=H_1$, under the \underline{more general} hypothesis that
$$
\begin{aligned}
& 0=\lim_{x\to-\infty} x^{\gamma}F(x)=\lim_{y\to-\infty} y^{\gamma}G(y),\\
& 0=\lim_{x\to+\infty} x^{\gamma}[1-F(x)]=\lim_{y\to+\infty} y^{\gamma}[1-G(y)]
\end{aligned}
$$
for $\gamma\ge 2$.

He even gives the following expression for the minimum $M$:
\begin{align*}
M &=2\int_{-\infty}^{+\infty}dy\int_{y}^{+\infty}\operatorname{Sup}\{G(y)-F(x),0\}\,dx \,+ \tag{10}\\
&\quad +\,2\int_{-\infty}^{+\infty}dx\int_{x}^{+\infty}\operatorname{Sup}\{F(x)-G(y),0\}\,dy
\end{align*}

But one can give a more convenient expression for this minimum (which is equal to $\Delta^2$).

One can arrive at this other expression for $\Delta^2$ in two ways. Let us compute $\Delta^2$ in the case where $F(x),G(y)$ are everywhere continuous and neither of them is constant on any interval. Then the equation
$$
F(x)=G(y)
$$
represents a curve $\Gamma$ that is intersected in one and only one point by every line parallel to either axis, and on which $y$ increases with $x$ and conversely. One has $H_1(x,y)=F(x)$ above $\Gamma$ (toward positive $y$), and $H_1(x,y)=G(y)$ below $\Gamma$.

Consider a sequence of numbers $\ldots,x_{-h},\ldots,x_{-1},x_0,x_1,\ldots,x_k,\ldots$ increasing from $-\infty$ to $+\infty$; the corresponding numbers $y_j$, such that
$$
F(x_j)=G(y_j),
$$
will also increase from $-\infty$ to $+\infty$. The curve $\Gamma$ will be contained in the sequence $S$ of rectangles
$$
x_j\le x<x_{j+1} \qquad ; \qquad y_j\le y<y_{j+1}.
$$

Above these rectangles, $H_1(x,y)=F(x)$ and below them $H_1(x,y)=G(y)$. Consequently, the integral
$$
\iint_B (x-y)^2\,dx\,dy\,H_1(x,y)
$$
extended over the region $B$ exterior to the rectangles of $S$ is zero, and one has
$$
\int_{-\infty}^{+\infty}\int_{-\infty}^{+\infty}(x-y)^2\,d^2H_1(x,y)=\iint_S (x-y)^2\,d^2H_1(x,y).
$$

Therefore,
$$
\begin{gathered}
\mathfrak{M}_{H_1}(X-Y)^2=\lim \sum_{i=-\infty}^{i=+\infty}\sum_{j=-\infty}^{j=+\infty}(x_i-y_j)^2\Delta_{ij}H_1 \\
=\lim\sum_i (x_i-y_i)^2\left[H_1(x_{i+1},y_{i+1})-H_1(x_{i+1},y_i)-H_1(x_i,y_{i+1})+H(x_i,y_i)\right].
\end{gathered}
$$
Now,
$$
\begin{aligned}
& H_1(x_{i+1},y_{i+1})=F(x_{i+1})=G(y_{i+1}),\\
& H_1(x_i,y_i)=F(x_i)=G(y_i),\\
& H_1(x_{i+1},y_i)=G(y_i), \qquad H_1(x_i,y_{i+1})=G(y_i)=F(x_i).
\end{aligned}
$$

Hence the bracket is equal to
$$
F(x_{i+1})-F(x_i),
$$
and consequently
$$
\mathfrak{M}_{H_1}(X-Y)^2=\lim\sum_i (x_i-y_i)^2\left[F(x_{i+1})-F(x_i)\right].
$$

Since the equation
\begin{equation*}
F(x)=G(y) \tag{11}
\end{equation*}
has a continuous increasing solution in $y$,
$$
y=\lambda(x), \text{ one sees that }
$$
$$
\Delta^2=\mathfrak{M}_{H_1}(X-Y)^2=\int_{-\infty}^{+\infty}[x-\lambda(x)]^2\,dF(x).
$$
On the other hand, solving (11) for
$$
x=\mu(y)
$$
one could also write
$$
\Delta^2=\int_{-\infty}^{+\infty}[\mu(y)-y]^2\,dG(y).
$$

In the same way, the upper bound $\rho$ of $r$ is determined by the formula
$$
\sigma\sigma'\rho=\int_{-\infty}^{+\infty}(x-a)[\lambda(x)-a']\,dF(x)=\int_{-\infty}^{+\infty}(y-a')[\mu(y)-a] \,dG(y).
$$

One may obtain these expressions even more easily by observing that if the joint cumulative distribution function is $H_1(x,y)$, there is almost surely a functional relation between the two random variables $X$ and $Y$, namely,
$$
Y=\lambda(X).
$$
Hence,
$$
\mathfrak{M}_{H_1}(X-Y)^2=\mathfrak{M}_{H_1}[\lambda(X)-X]^2
$$

\begin{equation*}
(L,L')^2=\int_{-\infty}^{+\infty}[\lambda(x)-x]^2\,dF(x) \tag{12}
\end{equation*}

Setting $W=F(X)=G(Y)$, $X$ and $Y$ are two continuous increasing functions of $W$:
$$
X=l(W) \qquad ; \qquad Y=m(W)
$$
and $W$ has a probability density constantly equal to 1 on $(0,1)$. One then has
\begin{equation*}
(L,L')^2=\int_0^1[l(t)-m(t)]^2\,dt \tag{13}
\end{equation*}

In the previous notation,
$$
\Delta^2=(a'-a)^2+\left(\sigma^2-2\sigma\sigma'\rho+\sigma'^2\right)
$$
where
$$
\rho=\int_{-\infty}^{+\infty}\int_{-\infty}^{+\infty}zt\,dh(z,t)
$$
one will have
$$
\rho=\int_{-\infty}^{+\infty}zv(z)\,d\varphi(z)
$$
if $t=v(z)$ is equivalent to $\varphi(z)=\psi(t)$.

\noindent \underline{\textbf{Remark}:}

Dall'Aglio's formula (10) and formulas (12), (13) solve the problem posed by Salvemini in the article cited above.

\noindent \underline{\textbf{Example}:}

Suppose that $X$ and $Y$ are such that their ``reduced'' variables $Z$ and $T$ have the same probability law. Then among the pairs $X,Y$ there will be a pair for which $Z=T$. Then there is an increasing linear functional relation between $X$ and $Y$ for $H=H_1$, and one again has, as in formula (9),

\begin{equation*}
\Delta=\sqrt{(a-a')^2+(\sigma-\sigma')^2}\,. \tag{14}
\end{equation*}

Thus one obtains a particularly simple result. Formula (14) applies, for example, when the margins have a so-called normal distribution. But it also applies when the margins both satisfy the first Laplace law, or are both uniform, etc.

More generally, it also applies when $X$ and $Y$ have the same reduced probability law with finite quadratic mean deviation.

\noindent \underline{\textbf{Generalization}:}

Dall'Aglio showed that if one takes
$$
([X],[Y])_H=\sqrt[s]{\mathfrak{M}_H|X-Y|^s},
$$
the minimum is, for $s\ge 1$, attained for $H\equiv H_1$.

As above, one sees that if $F(x)$ and $G(y)$ are continuous and increasing, one then has
$$
(L,L')^s=\int_{-\infty}^{+\infty}|\lambda(x)-x|^s\,dF(x) \; \text{ or } \; (L,L')^s=\int_0^1|l(t)-m(t)|^s\,dt.
$$

\subsection*{A FOURTH DEFINITION OF THE DISTANCE BETWEEN TWO PROBABILITY LAWS}
\phantomsection\addcontentsline{toc}{subsection}{A FOURTH DEFINITION OF THE DISTANCE BETWEEN TWO PROBABILITY LAWS}

The first of the three definitions of $(L,L')$ due to L\'evy is very seductive. For it makes explicit, in a very simple and very natural way, the intuitive notion that we may have of the distance $(L,L')$ between two probability laws $L,L'$.

Its greatest drawback is that, since it varies with the definition adopted as starting point for the global distance of two random numbers, it requires each time the \underline{prior} determination of the \underline{minimum} of that distance when the correlation between the two numbers varies.

\noindent \underline{\textbf{A fourth definition}:}

We shall therefore propose a fourth definition of $(L,L')$ in order to \underline{evade} this difficulty. We therefore do not regard our definition as better, but only as being of more immediate application. We shall moreover see later that the two definitions are equivalent in a broad case. They are probably equivalent in still more general cases, if not in all cases.

\noindent \underline{\textbf{New definition of $(L,L')$:}}

Let us once more take as our starting point one of the admissible definitions of the ``global'' distance $([X],[Y])_H$ between two random numbers $X,Y$ determined simultaneously in each trial of a given category, corresponding to the joint cumulative distribution function $H(x,y)$ of the pair $X,Y$.

This being so, we shall take as the new definition of the distance between two laws $L,L'$ the quantity
\begin{equation*}
(L,L')=([X],[Y])_{H_1} \tag{15}
\end{equation*}
where $X,Y$ obey the two laws $L,L'$. Thus we call the distance between two probability laws $L,L'$ the value taken by a ``global'' distance between two random numbers $X,Y$, obeying separately the laws $L,L'$, when the joint cumulative distribution function $H(x,y)$ of the pair $X,Y$ attains its maximum value $H_1(x,y)$ specified by formula (2).

It must now be proved that this quantity $(L,L')$ is indeed ``a distance''; that is, that it satisfies the three well-known conditions recalled on pp.~1--2 under numbers $1^{\circ},2^{\circ},3^{\circ}$. This will provide a partial \emph{a posteriori} justification for the definition we have just laid down dogmatically.

By hypothesis, for every correlation function $H(x,y)$ admitting as margins the cumulative distribution functions $F(x),G(y)$ of the laws $L,L'$, the global distance between $[X]$ and $[Y]$ satisfies the conditions
$$
\begin{aligned}
  &1^{\circ}\; ([X],[Y])_H=([Y],[X])_H \ge 0,\\
  &2^{\circ}\; ([X],[Y])_H \text{ is positive if } X \text{ and } Y \text{ are ``distinct,'' and conversely.}
\end{aligned}
$$
(One generally regards as non-distinct two random numbers that are ``almost surely'' equal, that is, such that $\operatorname{Prob}(X=Y)=1$.)

This is also the necessary and sufficient condition for $F(x)\equiv G(x)$. Hence, applying this to the case $H\equiv H_1$,
$$
\begin{aligned}
& 1^{\circ}\; (L,L')=(L',L) \ge 0,\\
& 2^{\circ}\; (L,L')=0 \text{ if } L\equiv L', \text{ and conversely.}
\end{aligned}
$$

As regards condition $3^{\circ}$, we must bring in three probability laws $L,L',L''$, corresponding to three cumulative distribution functions $F(x),G(y),K(z)$. Let us introduce a particular correlation between three random numbers $X,Y,Z$ obeying respectively the three laws $L,L',L''$. For this purpose, we choose the correlation defined by the joint cumulative distribution function
$$
E(x,y,z)=\min[F(x),G(y),K(z)]\,.
$$

One clearly sees that if the system $X,Y,Z$ has joint cumulative distribution function $E(x,y,z)$, then $X,Y,Z$ indeed have $F(x),G(y),K(z)$ as their respective cumulative distribution functions. For example,
$$
\begin{aligned}
& \operatorname{Prob}(X<x)=\operatorname{Prob}(X<x,Y<+\infty,Z<+\infty)\\
& =E(x,+\infty,+\infty)=\min(F(x),+1,+1)=F(x).
\end{aligned}
$$
One further has
$$
E(x,y,+\infty)=\min[F(x),G(y),+1]=H_1(x,y)
$$
and likewise the maximized cumulative distribution functions of $(X,Z)$ and $(Y,Z)$ are
$$
\begin{aligned}
& H_1'(x,z)=E(x,+\infty,z),\\
& H_1''(y,z)=E(+\infty,y,z)\,.
\end{aligned}
$$
This being so,
\begin{equation*}
([X],[Y])_{H_1}=([X],[Y])_{E(x,y,+\infty)}=([X],[Y])_{E(x,y,z)} \tag{16}
\end{equation*}

Now, by hypothesis, for any three-variable joint cumulative distribution function $U(x,y,z)$ having the given margins $F,G,K$, one has
$$
([X],[Y])_U \le ([X],[Z])_U+([Y],[Z])_U.
$$

Therefore, taking in particular $E$ for $U$ and using the two equalities analogous to (4), one obtains
$$
([X],[Y])_{H_1} \le ([X],[Z])_{H_1'}+([Y],[Z])_{H_1''}
$$
that is,
$$
(L,L')\le (L,L'')+(L',L'').
$$

\noindent \underline{\textbf{Equivalence}:}

It follows from the proofs of Bass and Dall'Aglio that \underline{in the case} where one has taken
$$
([X],[Y])_H=\sqrt{\mathfrak{M}_H(X-Y)^2}
$$
and under the very general restrictions which, moreover, may not even be necessary, the value of $(L,L')$ remains the same whether one uses Paul L\'evy's first definition or our own.

\noindent \underline{\textbf{General computation of $(L,L')$:}}

Consider the case where $F(x)$ and $G(y)$ are functions everywhere \underline{continuous} and \underline{increasing}. Then the curve $\Gamma$:
$$
F(x)=G(y)
$$
separates the region $R$ where $H_1=F(\le G)$ from the region $Q$ where $H_1=G(\le F)$. If one sets
$$
F(x)=G(y)=t,
$$
one sees, as above, that $x$ and $y$ are continuous increasing functions of $t$, namely,
$$
x=l(t) \qquad\qquad y=m(t).
$$

When the joint cumulative distribution function is $H_1(x,y)=\min[F(x),G(y)]$, the second difference of $H_1$ is zero in the interior of $R$ and in the interior of $Q$, that is, the point $(X,Y)$ lies almost surely on $\Gamma$.

In other words, in this case there is an almost certain functional relation between $X$ and $Y$, namely,
$$
F(x)=G(y).
$$

The random number $T$, equal to the common value of $F$ and $G$ on $\Gamma$, remains between 0 and 1 with a probability density constantly equal to 1.

It follows that
$$
(L,L')=([l(T)],[m(T)]).
$$

\noindent \underline{\textbf{Example}:}

For example, if one starts from the expression
$$
([X],[Y])_H=\sqrt[\alpha]{\int_{-\infty}^{+\infty}\int_{-\infty}^{+\infty}|x-y|^\alpha\,d_x d_y H(x,y)}=\sqrt[\alpha]{\mathfrak{M}_H|X-Y|^\alpha}
$$
with
$$
\alpha\ge 1,
$$
one has
$$
(L,L')^\alpha=\mathfrak{M}|l(T)-m(T)|^\alpha, \qquad \text{whence}
$$
$$
(L,L')^\alpha=\int_0^1|l(t)-m(t)|^\alpha\,dt.
$$

One can also solve
$$
F(x)=G(y)
$$
for $y=\lambda(x)$,
where $\lambda(x)$, like $F$ and $G$, will be a continuous increasing function. One may then write
$$
(L,L')^\alpha=\int_{-\infty}^{+\infty}|x-\lambda(x)|^\alpha\,dF(x).
$$

\subsection*{GINI'S DISSIMILARITY INDEX}
\phantomsection\addcontentsline{toc}{subsection}{GINI'S DISSIMILARITY INDEX}

For two equally numerous sequences of nondecreasing numbers,
$$
a_1\le a_2\le \ldots\le a_n \qquad , \qquad b_1\le b_2\le \ldots\le b_n,
$$
this index is defined by
$$
I_\alpha=\sqrt[\alpha]{\frac{1}{n}\sum_{h=1}^{n}|a_h-b_h|^\alpha} \qquad \begin{aligned}
& \alpha=1: \text{ simple index}\\
& \alpha=2: \text{ quadratic index}
\end{aligned}
$$
But one may take any $\alpha\ge 1$.

The $a$'s not being necessarily distinct, their distinct values are repeated $r_1$ times, \ldots, $r_n$ times. They may be regarded as the values taken by a statistical magnitude $X$ in $n$ trials. The distinct values of $X$, $x_1<x_2<\ldots$, have frequencies
$$
f_1=\frac{r_1}{n}, \quad f_2=\frac{r_2}{n}, \ldots \quad \text{The sequence } \left\{\begin{array}{l}
x_1,x_2,\ldots\\
f_1,f_2,\ldots
\end{array}\right.
$$
thus defines a frequency law $L$. Likewise, the sequence of distinct values $y_1,y_2,\ldots$ taken by the $b_j$ with frequencies $\varphi_1=\dfrac{s_1}{n},\varphi_2=\dfrac{s_2}{n},\ldots$ defines a law $L'$.

By permuting the $b_h$, one obtains a sequence $b_1',b_2',\ldots$ that is no longer necessarily nondecreasing. If each trial associated some $b_h'$ with an $a_h$, one would obtain a frequency law $H$ for the pair $X,Y$. By associating $b_h$ with $a_h$, one obtains a particular law $H'$ for this pair. We shall show that $H'=H_1$, so that Gini's dissimilarity index becomes identical with what our distance between $L$ and $L'$ reduces to in the particular case where $L,L'$ are the laws of two statistical magnitudes (each taking only finitely many values).

Indeed, suppose first that
$$
a_1<x\le a_n \qquad , \qquad b_1<y\le b_n.
$$
Then there exist $h$ and $j$ such that
$$
a_h<x\le a_{h+1} \qquad , \qquad b_j<y\le b_{j+1}.
$$
One has
$$
H'(x,y)=\operatorname{Prob}(X<x,Y<y)=\operatorname{Prob}(X<a_{h+1},Y<b_{j+1})
$$
and since $H'$ corresponds to the case where $b_j$ is associated with $a_j$, if $j<h$ then
$$
H'(x,y)=\operatorname{Prob}(Y<b_{j+1})=\operatorname{Prob}(Y<y)=G(y)
$$
and
$$
F(x)=\operatorname{Prob}(X<a_{h+1})\ge \operatorname{Prob}(Y<b_{j+1})=\operatorname{Prob}(Y<y).
$$

Therefore, if $j\le h$,
$$
\begin{gathered}
H'(x,y)=G(y)\le F(x),\\
H'(x,y)=\min[F(x),G(y)]=H_1(x,y)\,.
\end{gathered}
$$

Repeating the same reasoning for $h\le j$, one sees that in both cases
\begin{equation*}
H'(x,y)=H_1(x,y). \tag{1$'$}
\end{equation*}

In the case where $x\le a_1$, one obviously has
$$
H'(x,y)=F(x)=0\le G(y),
$$
and consequently (1$'$) still holds; similarly if $y\le b_1$, and by analogous reasoning for $x>a_n$ or $y>b_n$. Thus in all cases\,\footnote{\,Here we merely reproduce, in somewhat greater detail, Salvemini's reasoning, although he does not deal with the notion of distance.}
\begin{equation*}
H'(x,y)\equiv H_1(x,y).
\end{equation*}

The simple $(\alpha=1)$ or quadratic $(\alpha=2)$ dissimilarity index, cleverly chosen by Gini, thus turns out at the same time to be a ``distance'' between two frequency laws. More precisely, it falls within the more general category of the fourth definition of the distance of two absolutely arbitrary probability laws (discrete, continuous, or intermediate) specified on p.~11. It also has the interest of providing a simpler mode of computation than what would result directly from the formula
$$
I_\alpha=\sqrt[\alpha]{\mathfrak{M}_{H_1}|X-Y|^\alpha}\,.
$$